\title{Differentially Private Continual Releases of Streaming Frequency Moment Estimations}
\author{Alessandro Epasto <aepasto@google.com>,\\
Jieming Mao <maojm@google.com>,\\
Andres Munoz Medina <ammedina@google.com>,\\
Vahab Mirrokni <mirrokni@google.com>,\\
Sergei Vassilvitskii <sergeiv@google.com>\\
Peilin Zhong <peilinz@google.com>
}
\newcommand{\jm}[1]{{\color{blue}{Jieming: #1}}}
\newcommand{\pz}[1]{{\color{red}{Peilin: #1}}}
\newcommand{\lap}{\text{Lap}}
\newcommand{\sm}{\textbf{Sum}}
\newcommand{\dis}{\mathrm{dist}}
\newcommand{\hh}{\textbf{-HH}}
\newcommand{\poly}{\mathrm{poly}}
\newcommand{\xhdr}[1]{\vspace{2mm} \noindent{\bf #1}}
\newtheorem{theorem}{Theorem}[section]
\newtheorem{observation}[theorem]{Observation}
\newtheorem{lemma}[theorem]{Lemma}
\newtheorem*{theorem*}{Theorem}
\newtheorem*{lemma*}{Lemma}
\newtheorem{definition}[theorem]{Definition}
\newtheorem{corollary}[theorem]{Corollary}
\newtheorem{remark}[theorem]{Remark}
\newtheorem{claim}[theorem]{Claim}
\date{April 2022}
\begin{document}

\maketitle

\begin{abstract}
The streaming model of computation is a popular approach for working with large-scale data. In this setting, there is a stream of items and the goal is to compute the desired quantities (usually data statistics) while making a single pass through the stream and using as little space as possible.

Motivated by the importance of data privacy, we develop differentially private streaming algorithms under the continual release setting, where the union of outputs of the algorithm at every timestamp must be differentially private. Specifically, we study the fundamental $\ell_p$ $(p\in [0,+\infty))$ frequency moment estimation problem under this setting, and give an $\varepsilon$-DP algorithm that achieves $(1+\eta)$-relative approximation $(\forall \eta\in(0,1))$ with $\mathrm{poly}\log(Tn)$ additive error and uses $\mathrm{poly}\log(Tn)\cdot \max(1, n^{1-2/p})$ space, where $T$ is the length of the stream and $n$ is the size of the universe of elements.
Our space is near optimal up to poly-logarithmic factors even in the non-private setting.

To obtain our results, we first reduce several primitives under the differentially private continual release model, such as counting distinct elements, heavy hitters and counting low frequency elements, to the simpler, counting/summing problems in the same setting. 
Based on these primitives, we develop a differentially private continual release level set estimation approach to address the $\ell_p$ frequency moment estimation problem.

We also provide a simple extension of our results to the harder sliding window model, where the statistics must be maintained over the past $W$ data items.

\end{abstract}

\section{Introduction}

Data privacy is a central concern in the deployment of real-world computational systems. In the vast literature on privacy in computation~\cite{SURVEY}, the notion of differential privacy (DP)~\cite{dwork2008differential,dwork2014algorithmic} has remained the {\it de facto} standard for more than a decade. The classical formulation of differential privacy assumes that the data is static~\cite{dwork2008differential}, and that the data curator is interested obtaining answers to a \emph{predetermined} number of queries on the dataset.

Real world applications, however, often require the analysis of user datasets that are organically and rapidly growing. This is illustrated by the popular {\it streaming model} of computation~\cite{Morris78a,alon1996space}, where data arrives over time, and at each update, a new solution is output by the algorithm. In such streaming applications, the {\it continual release} model of differential privacy~\cite{DworkNPR10,ChanSS11} promises a rigours guarantee of privacy: an observer of {\it all} the outputs of the algorithm is information-theoretically bounded in the ability to learn about the {\it existence} of an individual data point in the stream.

In this paper, we focus on two fundamental challenges in the field of private streaming algorithms: the insertion only, or streaming, model and the sliding window model. In the former model, a data curator receives a stream of data $a_1,a_2, \ldots$ and at each time $t$ releases a statistical query depending on all data received up to that point. In the latter, the computation depends only on the last $W$ items observed by the data curator. 

The sliding window model may be generally more practically relevant compared to the streaming model as it allows to account for information freshness and in some cases it can be a legal or privacy requirement as well. For instance, in some situations, data privacy laws such as the General Data Protection Regulation (GDPR)\footnote{https://gdpr-info.eu/art-17-gdpr/} do not allow unlimited retention of user data.

Our main contribution is to provide private algorithms for a series of foundational streaming problems under both the streaming and sliding window model. 

\xhdr{Motivating example: Privacy Sandbox.} We present a concrete practical application of our results. As part of the {\em Privacy Sandbox} initiative, Chrome has developed a series of APIs to reduce cross site tracking while supporting the digital advertising ecosystem. A key part of one of the proposals is a {\em k-Anonymity Server}.\footnote{\url{https://github.com/WICG/turtledove/blob/main/FLEDGE_k_anonymity_server.md}} The server ensures that each ad creative that is reported to advertisers has won its respective auction at least $k$ times over a particular time window. Abstracting the specifics, this problem requires computing the number of distinct elements over a sliding window.  
%
Moreover, to further strengthen privacy protections, the computation itself should be made differentially private, which is precisely the setting we consider in this work. 

The previous example elucidates a concrete motivation for the study of sliding window algorithms for counting distinct element problems with differential privacy in the continual release setting. 
The rest of the paper proceeds by formalizing this model and our results in this space.

In particular, we study a more general class of statistics of the input data than the problem of counting distinct elements: the $\ell_p$ frequency moments problem.
The $\ell_p$ frequency moment is the sum of the $p$-th power of the frequencies of the elements. 
The number of distinct elements is a special case for $p=0$.
The $\ell_p$ frequency moment problem is one of the most fundamental problems in the streaming literature. 
The first non-trivial algorithm for $p=1$ is \cite{morris1978counting}. 
Later \cite{flajolet1985probabilistic} is the first to study the case for $p=0$.
\cite{alon1996space}  initiates the study for $p=2$ and other $p\in [0,\infty)$.
After the developments over several decades, the spaces of the current best $\ell_p$ frequency moment estimation algorithms are near optimal for all $p\in[0,\infty)$, i.e. they almost match the proven space lower bounds (see e.g.,~\cite{ganguly2011polynomial,kane2011fast,kane2010exact,flajolet2007hyperloglog,li2013tight}).

However the landscape of DP streaming $\ell_p$ frequency moment is mysterious even in the non-continual release setting.
Most existing work only studied for $p=0,1,2$ (see more discussion in Section~\ref{sec:related}).
The work of~\cite{WangPS22} considered general $p\in (0,1]$.
A recent independent work~\cite{blocki2022make} studied all $p\in [0,\infty)$.
But none of~\cite{blocki2022make,WangPS22} considered continual release setting.
In the DP continual release setting, \cite{Bolot13} studied the count of distinct elements but not in the low space streaming setting.
In the DP streaming continual release setting, existing work~\cite{dwork2008differential,ChanSS11} only studied the case for $p=1$.
No previous algorithm for $p\not=1$ is known in the DP low space streaming continual release setting.

For $\ell_p$ frequency moment in the sliding window model, there are known techniques~\cite{datar2002maintaining,BravermanO07} which can convert the streaming algorithm into sliding window algorithm using some additional small space.
We show how to extend these techniques to convert our DP streaming continual release streaming algorithms to DP sliding window continual release algorithms.

\subsection{Computational Model}
In this paper, we consider a streaming setting with $T$ timestamps. 
At each timestamp $t \in [T]$, we get an input $a_t \in \mathcal{U} \cup \{\perp\}$, where $\mathcal{U}$ represents the universe of all possible input elements, and $\perp$ represents empty. 
We sometimes also consider an input stream of integers, i.e., at each timestamp $t\in[T]$, we get an input $a_t\in\mathbb{Z}$.
The goal is to compute some function $g(\cdot)$ based on the inputs.
Instead of only computing $g$ at the end of the stream, we consider the continual release model throughout the paper:
\begin{itemize}
\item \textbf{Continual Release Model}: At every timestamp $t\in [T]$, we want to output $g(\cdot)$ based on the data $a_1,a_2,\cdots,a_t$.
\end{itemize}

We consider two different streaming models depending on the range of inputs that $g$ is based on.

\begin{itemize}
\item \textbf{(Insertion-only) Streaming Model}: In this model, $g$ depends on all past inputs, i.e. at timestamp $t \in [T]$, we want to compute $g(a_1,...,a_t)$.
Unless otherwise specified, we use streaming model to refer to insertion-only streaming model throughout the paper.
\item \textbf{Sliding Window Model}: In this model, we have a parameter $W \in \mathbb{Z}_{\geq 1}$ for window size. 
$g$ depends on the last $W$ inputs, i.e. at timestamp $t \in [T]$, we want to compute $g(a_{\max(t-W+1, 1)},...,a_t)$.
\end{itemize}

In this paper, we are particularly interested in algorithms that use space 
sub-linear in $T$.
For a (sub-)stream $\mathcal{S}$, we use $\|\mathcal{S}\|_p^p$ to denote the $\ell_p$ frequency moment of $\mathcal{S}$.
In particular, for $p=0$, $\|\mathcal{S}\|_0$ denotes the number of distinct elements.
For $p=1$, $\|\mathcal{S}\|_1$ denotes the number of non-empty elements. 
We refer readers to preliminaries section (Section~\ref{sec:preli}) for detailed notation and definitions.

\subsection{Our Results and Comparison to Prior Work}
In this section, we give a brief overview of our results and comparison with prior work.
We use $n$ to denote the size of the universe $\mathcal{U}$.
We use $(\alpha,\gamma)$-approximation to specify the approximation guarantee with multiplicative factor $\alpha$ and additive error $\gamma$.
In all of our results, we use $\varepsilon\geq 0$ for the DP parameter, and use $\eta\in(0,0.5)$ in the relative error.

\subsubsection{Differentially Private Streaming Continual Release Algorithms}

We developed a series of DP algorithms for solving frequency moments estimation and its related problems in the streaming continual release model. 

\xhdr{$\ell_p$ Frequency moment estimation $(p\in[0,\infty))$.}
Our main result is a general $\ell_p$ frequency moment estimation algorithm which works for all $p\in [0,\infty)$.

\begin{theorem}[$\ell_p$ Frequency moment, informal version of Theorem~\ref{thm:lp_moment}]
There is an $\varepsilon$-DP algorithm in the streaming continual release model such that with probability at least $0.9$, it always outputs an $\left(1+\eta, \left(\frac{\log(Tn)}{\eta\varepsilon}\right)^{O(\max(1,p))}\right)$-approximation to $\|\mathcal{S}\|_p^p$ for every timestamp $t$, where $\mathcal{S}$ denotes the stream up to timestamp $t$.
The algorithm uses space
$
\max(1,n^{1-2/p})\cdot \left(\frac{\log(Tn)}{\eta\varepsilon}\right)^{O(\max(1,p))}.
$
\end{theorem}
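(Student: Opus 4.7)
The plan is to use the standard level-set decomposition, tailored to the continual release setting, and to rely on the differentially private primitives (heavy hitters, counting low-frequency elements, distinct elements) that the paper has already reduced to DP counting/summing in the same setting. Define levels $L_i = \{x \in \mathcal{U} : f_x \in [(1+\eta')^i, (1+\eta')^{i+1})\}$ for $\eta' = \Theta(\eta/\max(1,p))$, so that there are $O(\log(T)/\eta')$ levels and
\[
\|\mathcal{S}\|_p^p \;=\; \sum_x f_x^p \;\in\; (1+\eta')^p \cdot \sum_i |L_i| \cdot (1+\eta')^{ip}.
\]
Any $(1+O(\eta'))$-relative estimate of each $|L_i|$ (plus a small additive term) therefore yields a $(1+\eta)$-relative estimate of $\|\mathcal{S}\|_p^p$ after rescaling, at the cost of an extra $\mathrm{poly}(p,\log(Tn)/\eta)$ factor absorbed into the additive error and space.

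First, I would split the levels at a threshold $\tau$ equal to roughly $(\|\mathcal{S}\|_p^p)^{1/p}$ divided by a $\mathrm{polylog}(Tn)/\eta$ factor. Levels below $\tau$ are handled by the DP low-frequency counting primitive: for each $i$ with $(1+\eta')^i \leq \tau$, directly estimate $|L_i|$ with additive error $\mathrm{polylog}(Tn)/\varepsilon$; since such levels contribute at most $\|\mathcal{S}\|_p^p \cdot \eta$ in total when the threshold is set correctly, the $(1+\eta')$ approximation per level composes. Levels above $\tau$ consist of $\ell_p$-heavy hitters, of which there are at most $O(\max(1,n^{1-2/p}) \cdot \mathrm{polylog})$; I would invoke the DP heavy-hitter primitive to identify them and estimate each frequency, and then directly sum the contributions $\hat f_x^p$. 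The $\max(1,n^{1-2/p})$ factor in the space comes exactly from the size of the heavy-hitter set for $p>2$, matching the non-private bound.

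For privacy, the algorithm is a composition of $O(\log(T)/\eta')$ per-level counting mechanisms plus one heavy-hitter mechanism, each invoked as an $\varepsilon/\mathrm{polylog}(Tn)$-DP continual release algorithm via the primitives the paper develops; basic composition then gives the overall $\varepsilon$-DP guarantee. Accuracy at every one of the $T$ timestamps follows by a union bound over the $O(\log(T)/\eta')$ levels and $T$ timestamps, driving the failure probability to at most $0.1$ at the cost of only an extra logarithmic factor that is absorbed into the additive error.

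The main obstacle is choosing and maintaining the level-threshold $\tau$ privately and adaptively across the continual release timeline, since $\tau$ depends on the unknown, time-varying quantity $\|\mathcal{S}\|_p^p$. I would address this with a geometric guess-and-double scheme driven by a coarse private estimate of $\|\mathcal{S}\|_p^p$ (itself obtainable by a crude version of the same reduction, e.g.\ running $O(\log T)$ parallel instances each tuned to a different scale), switching scales only when the coarse estimate doubles. To avoid leakage from elements crossing the boundary, each sub-mechanism would be configured to cover a slightly wider range of frequencies than strictly necessary, so elements near $\tau$ are tracked by both the heavy-hitter and low-frequency-counting pipelines and the final count is reconciled in post-processing. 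Once this bootstrapping is pinned down, the error analysis reduces to the per-primitive guarantees proven earlier in the paper, and the space and privacy bounds follow by direct accounting.
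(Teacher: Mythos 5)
Your proposal departs from the paper's construction in a way that creates a genuine gap. The paper does \emph{not} use a data-dependent split threshold: $\tau$ there is a \emph{fixed} $\mathrm{polylog}$ quantity determined by the DP noise level of the heavy-hitter primitive (Theorem~\ref{thm:lp_heavyhitters}), and the low/high split happens at $k\approx\tau=\mathrm{polylog}$, so the low-frequency counter only ever tracks frequencies up to $\mathrm{polylog}$. This matters because the low-frequency counting primitive (Theorem~\ref{thm:low_freq_small_universe}) has sensitivity $O(k)$, space $O(m+k\log T)$ and additive error $O(k\,\mathrm{polylog}/\varepsilon)$ --- all of which scale linearly in the largest frequency $k$ it tracks. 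If, as you propose, $\tau\approx(\|\mathcal{S}\|_p^p)^{1/p}/\mathrm{polylog}$ (which can easily be $T^{\Omega(1)}$), the primitive cannot cover frequencies up to $\tau$ within $\mathrm{polylog}$ space and privacy. Moreover the claim that ``such levels contribute at most $\eta\|\mathcal{S}\|_p^p$ in total when the threshold is set correctly'' is simply false: if all $n$ elements have frequency~$1$, then $\|\mathcal{S}\|_p^p=n$, your $\tau\approx n^{1/p}/\mathrm{polylog}\gg 1$, and the single level at frequency~$1$ (below $\tau$) contributes all of $\|\mathcal{S}\|_p^p$. A single threshold cannot simultaneously make the low side negligible and the high side $1/\mathrm{polylog}$-heavy; to get both you would need $\tau^p\leq\eta\|\mathcal{S}\|_p^p/n$, at which point the ``heavy hitters'' above $\tau$ are only $1/(n/\eta)$-heavy and there can be $\Omega(n)$ of them.

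The ingredient you are missing is the Indyk--Woodruff subsampling over geometric rates, which is central to the paper's Algorithm~\ref{alg:lp_moment}. After handling frequencies up to a \emph{fixed} $\mathrm{polylog}$ threshold via low-frequency counting, the paper estimates each remaining level set $|G_q|$ by running the heavy-hitter primitive on $L=O(\log n)$ subsampled sub-streams at rates $2^{-i}$; for every \emph{contributing} level (one that carries at least an $\Omega(\eta/\log T)$ fraction of $\|\mathcal{S}\|_p^p$) there is some rate $2^{-i}$ at which at least $\mathrm{polylog}$ elements of that level survive \emph{and} they are $1/\mathrm{poly}(\log T)$-$\ell_p$ heavy hitters in $\mathcal{S}_i$, so they can be found and counted with $\mathrm{polylog}$ accuracy; non-contributing levels are dropped at a total cost of a $(1+\eta)$ factor (Lemma~\ref{lem:contributing_cost}). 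With this in place the adaptive threshold, the guess-and-double bootstrap, and the reconciliation step in your plan are all unnecessary. Two smaller corrections: the $\max(1,n^{1-2/p})$ factor does \emph{not} arise from the size of the heavy-hitter set (which is $O(2^p k\,\mathrm{polylog})$ by Lemma~\ref{lem:lp_heavyhitter_size_output}); it comes from the $\ell_2\to\ell_p$ reduction inside the heavy-hitter algorithm, since for $p>2$ an $\ell_p$ heavy hitter is only a $1/(k\,n^{1-2/p})$-$\ell_2$ heavy hitter. Also, you omit the random shift $\beta$ (Lemma~\ref{lem:far_away_from_boundary}), which is needed so that noisy frequency estimates place elements consistently into the correct level interval.
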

To the best of our knowledge, we are the first to study the general $\ell_p$ frequency moment estimation problem in the differentially private streaming continual release setting.
\cite{DworkNPR10} and \cite{ChanSS11} studies the summing problem in the same setting, where the summing problem can be seen as a special case for $p=1$. 
\cite{WangPS22} studies the streaming $\ell_p$ frequency moment estimation for $p\in (0,1]$ based on the $p$-stable distribution, and a concurrent independent work~\cite{blocki2022make} studies the case for $p\in(0,1]$, but it is not clear how to generalize their techniques to the continual release model, i.e., their approach only provides the differential privacy guarantee of the output at the end of the stream.
In addition, the approach of \cite{WangPS22} does not achieve $\varepsilon$-DP for an arbitrarily small $\varepsilon>0$, and they also mention that their technique might not be easily extended to the case for $p>1$. 

Our space usage is near optimal up to poly-logarithmic factors even when comparing with the non-private streaming $\ell_p$ frequency moment estimation algorithms: for $p\leq 2$, the space needed for both our algorithm and previous non-private algorithm (see e.g.,~\cite{kane2010exact}) is poly-logarithmic, for $p>2$, the space needed for both our algorithm and previous non-private algorithm~\cite{indyk2005optimal} is $\tilde{O}(n^{1-2/p})$\footnote{We use $\tilde{O}(g)$ to denote $g\cdot\poly(\log(g))$}. 
Note that $\Omega(n^{1-2/p})$ space is a proven lower bound for $p>2$ even in the non-private case~\cite{saks2002space,bar2004information}.

\xhdr{Summing ($\ell_1$ frequency moment estimation).}
The easiest problem that is related to the $\ell_p$ frequency moment estimation problem would be the summing problem: the goal is to compute the summation of the input numbers.
Note that $\ell_1$ frequency moment estimation is a special case of the summing of a binary stream, i.e., we regard $\perp$ as $0$ and all other elements as $1$.

\begin{theorem}[Summing of a non-negative stream, informal version of Theorem~\ref{thm:summing_non_negative}]
There is an $\varepsilon$-DP algorithm for summing problem in the streaming continual release model.
If the input numbers are guaranteed to be non-negative, with probability at least $0.9$, the output is always a $\left(1+\eta, O_{\varepsilon,\eta}(\log T)\right)$-approximation to the sum of all input numbers at any timestamp $t\in [T]$.
The algorithm uses space $O(1)$.
\end{theorem}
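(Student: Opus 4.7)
The plan is to reduce the continual-release sum problem to threshold detection via the sparse vector technique (SVT), exploiting the fact that when all $a_i\ge 0$ the prefix sums $s_t:=\sum_{i\le t}a_i$ are monotone non-decreasing. Fix geometric thresholds $\tau_k:=(1+\eta)^k\cdot C$ for $k=0,1,2,\ldots$, with $C=\Theta(\log(T)/(\eta\varepsilon))$, and initialize $k=0$. The algorithm stores only the exact counter $s_t$ (one $O(\log T)$-bit word), the index $k$ of the next un-crossed threshold, and a single noisy threshold $\tilde{\tau}_k=\tau_k+\mathrm{Lap}(O(K/\varepsilon))$. At each timestamp $t$ it performs one SVT step comparing $s_t+\nu_t$ against $\tilde{\tau}_k$, with fresh $\nu_t\sim\mathrm{Lap}(O(K/\varepsilon))$; on firing, it announces $\tau_k$, increments $k$, and resamples $\tilde{\tau}_k$. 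Between firings it simply re-reports the most recently announced value (or $0$ before the first firing).

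For privacy, since each $a_i\in[0,1]$ (the general bounded case rescales $C$) and $s_t\le T$, the number of firings is bounded by $K:=\lceil\log_{1+\eta}(T/C)\rceil=O(\log(T)/\eta)$. The standard SVT analysis, with Laplace scale $O(K/\varepsilon)$ and one threshold refresh per firing, then yields $\varepsilon$-DP for the full sequence of above/below decisions. Outputs at non-firing timestamps are deterministic functions of this transcript, so the overall continual release remains $\varepsilon$-DP by post-processing.

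For utility, the Laplace noise magnitude is $O(K/\varepsilon)=O(\log(T)/(\eta\varepsilon))$, so a union bound over the $K$ crossings shows that with probability at least $0.9$ every firing happens when $s_t$ is within $\pm O(\log(T)/(\eta\varepsilon))$ of the target $\tau_k$. I would then split into two regimes: if $s_t\le 2C$, both $s_t$ and the released value lie in $[0,O(C)]$, giving additive error $O_{\varepsilon,\eta}(\log T)$; if $s_t>2C$, the noisy firings stay within a $(1\pm O(\eta))$ multiplicative neighborhood of the intended thresholds, and the $(1+\eta)$-geometric spacing supplies the multiplicative factor, which becomes $(1+\eta)$ after rescaling $\eta$ by a constant.

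The space bound is immediate: the algorithm stores $s_t$, the index $k$, and $O(1)$ Laplace samples, totaling $O(1)$ words. The main subtlety I expect is composing SVT correctly across the $K$ firings. A naive restart-per-firing yields $K\varepsilon$-DP; instead one must invoke the standard variant of SVT in which a single global noisy threshold is refreshed exactly once per firing and the total privacy loss telescopes to $\varepsilon$ at noise scale proportional to $K/\varepsilon$. The remaining pieces — the two-regime accuracy calculation and verifying that the stored counter fits in $O(\log T)$ bits when the $a_t$ are bounded — are routine.
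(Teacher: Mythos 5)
Your overall idea---use the sparse vector technique on the (monotone) prefix sums against a fixed set of geometric thresholds, and report the last crossed threshold---is a reasonable and genuinely different route from the paper's. The paper instead runs SVT on \emph{group sums that reset after each firing} with a single \emph{fixed-magnitude} threshold, and at each firing releases a \emph{noisy group sum} rather than the threshold value; the released sequence of noisy group sums is then added up non-privately. This difference is not cosmetic, and it creates two concrete gaps in your argument.

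First, the noise scale and the union bound. Because your queries are the prefix sums $s_t=\sum_{i\le t}a_i$, a single change $|a_q-a_q'|\le 1$ perturbs \emph{every} query with $t\ge q$, so \emph{all} $K$ crossings after time $q$ are affected by a neighboring change. The standard sparse vector mechanism with up to $K$ above-threshold answers then \emph{requires} noise scale $\Theta(K/\varepsilon)$, as you correctly state. But the accuracy bound needs a union bound over all $T$ timestamps (not just over the $K$ crossings): a large positive query noise at any non-crossing time $t$ causes a premature firing, which you must rule out for every $t\in[T]$. Union bounding over $T$ Laplace variables of scale $\Theta(K/\varepsilon)$ yields error $\Theta\!\left(\frac{K\log T}{\varepsilon}\right)=\Theta\!\left(\frac{\log^2 T}{\eta\varepsilon}\right)$, i.e.\ $O_{\varepsilon,\eta}(\log^2 T)$, a full $\log T$ factor worse than the claimed $O_{\varepsilon,\eta}(\log T)$. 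The paper sidesteps this because its SVT queries are partial sums \emph{within a group}: when a group closes, the running sum is reset, so a change at index $q$ affects only one group's queries (plus one noisy group-sum release), and fixed noise scale $O(1/\varepsilon)$ suffices for the entire stream.

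Second, large single increments. The theorem is stated for arbitrary non-negative integer inputs with sensitivity $1$, not for inputs in $[0,1]$ as you assumed in a parenthetical. If some $a_t$ is very large, your algorithm fires exactly once at time $t$ and reports $\tau_k$, while $s_t$ may already be many geometric levels above $\tau_k$; subsequent timestamps advance $k$ by at most one each, so the reported value can lag behind the truth by an unbounded multiplicative factor during the catch-up. The paper does not have this problem: when the group sum blows through the threshold, it releases $\lap(1/\varepsilon_0)+\sum_{j\in G_i}c_j$, which includes the large jump, so the running estimate never falls behind by more than one group's worth. To repair your scheme you would need to fire repeatedly within a single timestamp until the threshold exceeds the noisy query, or release a noisy prefix sum at each firing instead of the deterministic threshold value---and either fix still leaves the $\log^2 T$ gap from the first point.
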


The summing problem was studied by~\cite{DworkNPR10,ChanSS11} in the differentially private streaming continual release model. 
Their approximation has $O(\log^{2.5} T)$ additive error.
In our work, we show that if we allow $(1+\eta)$ relative error and work on the stream with non-negative numbers only, we can reduce the additive error to $O(\log T)$.
This is useful when we cannot avoid the relative error for some problem (such as the number of distinct elements) in the streaming model but we still need summing as a subroutine.

\xhdr{Counting distinct elements ($\ell_0$ frequency moment estimation).}
Counting distinct elements if one of the fundamental problems in the streaming literature.
The goal is to estimate the number of distinct elements that appeared in the stream.
We provide a DP streaming continual release algorithm for counting distinct elements.

\begin{theorem}[Number of distinct elements, informal version of Corollary~\ref{cor:distinct_large_universe_better_additive}]
There is an $\varepsilon$-DP algorithm for the number of distinct elements in the streaming continual release model.
With probability at least $0.9$, the output is always a $\left(1+\eta,O_{\varepsilon,\eta}\left(\log^2(T)\right))\right)$-approximation for every timestamp $t\in [T]$. 
The algorithm uses $\poly\left(\frac{\log(T)}{\eta\min(\varepsilon,1)}\right)$ space.
\end{theorem}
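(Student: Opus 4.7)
The plan is to apply the classical geometric level-set estimator for distinct elements and to privatize each level with the non-negative summing primitive from Theorem~\ref{thm:summing_non_negative}. I would fix $L=\Theta(\log T)$ levels, and using a stored pairwise independent hash assign each universe element $x$ an independent geometric level $\ell(x)\in\{0,\ldots,L\}$ with $\Pr[\ell(x)\geq \ell]=2^{-\ell}$. At every level $\ell$ I maintain (i) a hash table of capacity $B=\poly(\log(T)/(\eta\varepsilon))$ storing the distinct values seen so far with $\ell(x)\geq\ell$, and (ii) a DP running-sum counter $\widehat{c}_\ell(t)$ fed with the $0/1$ stream that flags when the arriving $a_t$ is new at level $\ell$. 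Levels whose table overflows are marked saturated and no longer updated. At time $t$ the algorithm outputs $\widehat{N}(t)=2^{\ell^*}\widehat{c}_{\ell^*}(t)$, where $\ell^*$ is the smallest non-saturated level whose private count lies in a target window $[\tau,\,B/2]$ with $\tau=\Theta(\log^2 T/(\eta^2\varepsilon))$; if no level meets the threshold, we fall back to the raw level-$0$ counter.

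For privacy, observe that on two streams differing in one $a_t$ the true survivor-count sequence at each level differs by at most one at every timestamp, which is exactly the partial-sum sensitivity required by the summing mechanism. Running $L$ copies of the mechanism with budget $\varepsilon/L$ therefore composes to $\varepsilon$-DP, and the dynamic level choice plus rescaling are pure post-processing. Storage is $O(L\cdot B)=\poly(\log(T)/(\eta\min(\varepsilon,1)))$ as required. For accuracy I would union-bound over $t\in[T]$ and $\ell\leq L$: the summing primitive makes each $\widehat{c}_\ell(t)$ a $(1+\eta,\,O(\log^2 T/\varepsilon))$-approximation of the true $c_\ell(t)$, and a Chernoff bound over the hash gives $c_\ell(t)=(1\pm\eta)\cdot 2^{-\ell}\|\mathcal{S}\|_0$ once $2^{-\ell}\|\mathcal{S}\|_0\gtrsim \log(T)/\eta^2$. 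Combining these yields $\widehat{N}(t)=(1\pm O(\eta))\|\mathcal{S}\|_0$ whenever $\|\mathcal{S}\|_0$ is large enough to clear the threshold, and when it is not the fallback at level~$0$ contributes only the $O_{\varepsilon,\eta}(\log^2 T)$ additive error.

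The main technical hurdle is the dynamic per-timestamp choice of $\ell^*$ from private data: privacy is automatic (it is post-processing), but accuracy needs a simultaneous union bound over all times and levels to ensure the selected $\ell^*$ always sits in the multiplicatively accurate regime. A secondary subtlety is that neighboring $a$-streams can flip the 0/1 indicator at several timestamps, because which occurrence of a repeated value is counted as \emph{first among survivors} may shift when $a_t$ changes; nevertheless the partial sums of these indicators still differ by at most one, which is precisely the hypothesis Theorem~\ref{thm:summing_non_negative} requires, so the reduction goes through once the sensitivity argument is phrased at the level of partial sums rather than individual inputs.
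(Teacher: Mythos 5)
Your plan is close in spirit to the paper's (subsample geometrically, hash, and reduce to private summing), but it has two genuine gaps, one in the privacy argument and one in the error accounting.

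\paragraph{Privacy.} You argue privacy by noting that on neighboring $a$-streams ``the partial sums of these indicators still differ by at most one, which is precisely the hypothesis Theorem~\ref{thm:summing_non_negative} requires.'' That is not what Theorem~\ref{thm:summing_non_negative} requires. The summing mechanism (and the binary-tree mechanism of Theorem~\ref{thm:dp_summing}) is $\varepsilon$-DP with respect to \emph{neighboring number streams in the sense of Section~\ref{sec:preli}}: two streams that differ in at most one coordinate, and there by at most $1$. A single change to $a_t$ can flip the ``new element'' indicator at \emph{several} positions (as you yourself observe: the changed position, plus the second occurrences of both the old and new element), so the indicator stream fed into the summing primitive is \emph{not} within edit distance one of its neighbor. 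The fact that the \emph{prefix sums} of these indicator streams differ by at most one does not make them neighboring in the sense the summing mechanism's guarantee is stated for, and the mechanism's proof is not given as a statement about prefix‑sum sensitivity. The paper handles this concretely: Lemma~\ref{lem:DP_distinct_small_universe} shows the indicator stream has (edit‑distance) sensitivity at most~$5$, and then composition (Theorem~\ref{thm:composition}) is invoked so the overall guarantee is $5\varepsilon$‑DP, which one then rescales. Your write-up omits this constant-factor composition step and substitutes an argument that does not go through as stated.

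\paragraph{Budget across levels / additive error.} You draw a geometric level $\ell(x)$ per element with $\Pr[\ell(x)\ge \ell]=2^{-\ell}$ and feed all levels $\ell\le\ell(x)$, i.e., a \emph{nested} construction. Then a single arrival updates up to $L+1$ of the per-level counters, so composition really does force you to run each level at privacy $\varepsilon/L$, as you say. But that costs an extra $\Theta(\log T)$ in the additive error: combined with the union bound over $T$ timestamps / median-trick repetition, you land at $O_{\varepsilon,\eta}(\log^3 T)$ additive error, not the stated $O_{\varepsilon,\eta}(\log^2 T)$. The paper avoids this $\log T$ loss by making the levels \emph{disjoint}: its Algorithm~\ref{alg:distinct_large_universe} uses $g:\mathcal{U}\to[L]\cup\{\perp\}$ with $\Pr[g(a)=i]=2^{-i}$, so each arrival is appended to exactly one sub-stream $\mathcal{S}_i$. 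Then $(\mathcal{S}_1,\dots,\mathcal{S}_L)$ has joint sensitivity $1$ (the ``disjoint sub-streams'' case of Theorem~\ref{thm:composition}), the per-level mechanisms can each run at privacy $\varepsilon$, and no $1/L$ splitting is needed. Without this disjoint decomposition your approach cannot reach the $O(\log^2 T)$ bound claimed in the statement.

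These two issues are independent: fixing the first (account for the factor-$5$ indicator sensitivity) still leaves the second, and vice versa. Switching from nested to disjoint level assignment, and explicitly carrying the bounded-edit-distance sensitivity of the indicator stream through Theorem~\ref{thm:composition}, would bring your argument in line with the paper's.
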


In the non-private streaming setting, counting distinct element can be solved via sketching algorithms of~\cite{flajolet1985probabilistic} and its variants e.g.,~\cite{flajolet2007hyperloglog}.
Some recent work~\cite{Choi20,Smith0T20} extends these sketching techniques for counting distinct element in a DP streaming setting.
However, it is not clear how to extend these techniques to the continual release setting.
Continual release of counts of distinct elements is studied by \cite{Bolot13}. However, \cite{Bolot13} is not in the low space streaming setting.

\xhdr{Estimation of frequencies and $\ell_2$ frequency moments.}
The goal of $\ell_2$ frequency moment estimation is to estimate the sum of square of frequencies of elements.
We present a DP streaming continual release CountSketch~\cite{charikar2002finding} algorithm and use it for estimating $\ell_2$ frequency moments and the frequency of each element.

\begin{theorem}[Frequency and $\ell_2$ frequency moments, informal version of Theorem~\ref{thm:count_sketch}]
There is an $\varepsilon$-DP algorithm in the streaming continual release model such that with probability at least $0.9$, it always outputs for every timestamp $t\in [T]$:
\begin{enumerate}
    \item $\hat{f}_a$ for every $a\in\mathcal{U}$ such that $|f_a-\hat{f}_a|\leq \eta \|\mathcal{S}\|_2+\tilde{O}_{\varepsilon,\eta}\left(\log^{3.5}(Tn)\right)$, where $\mathcal{S}$ denotes the stream up to timestamp $t$ and $f_a$ denotes the frequency of $a$ in $\mathcal{S}$,
    \item $\hat{F}_2$ such that $|\hat{F}_2-\|\mathcal{S}\|_2^2|\leq \eta \|\mathcal{S}\|_2^2+\tilde{O}_{\varepsilon,\eta}\left(\log^7(Tn)\right)$
\end{enumerate}
The algorithm uses $O\left(\frac{\log(Tn)}{\eta^2}\cdot \log(T)\right)$ space.
\end{theorem}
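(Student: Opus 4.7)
The plan is to privatize the classical CountSketch of \cite{charikar2002finding} by replacing each bucket counter with a DP continual-release summing mechanism. Concretely, I would instantiate $L = \Theta(\log(Tn))$ rows of width $W = \Theta(1/\eta^2)$, with independent $4$-wise independent hash pairs $(h_i, s_i)$, where $h_i : \mathcal{U} \to [W]$ and $s_i : \mathcal{U} \to \{-1, +1\}$. On arrival of $a_t$, for each $i \in [L]$ we add $s_i(a_t)$ to bucket $(i, h_i(a_t))$; each bucket maintains its cumulative signed sum via a DP continual-release summing algorithm with privacy budget $\varepsilon' = \varepsilon/L$ (e.g.\ the binary tree mechanism of \cite{DworkNPR10,ChanSS11}). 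At time $t$ we read the noisy bucket value $\tilde{C}[i,j] = C[i,j] + Z[i,j]$ and output
\[
\hat{f}_a = \mathrm{median}_i\, s_i(a)\, \tilde{C}[i, h_i(a)], \qquad \hat{F}_2 = \mathrm{median}_i\, \sum_{j \in [W]} \tilde{C}[i,j]^2.
\]

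For privacy I would argue as follows: within a single row $i$, each stream item touches exactly one bucket, so by parallel composition releasing row $i$'s entire output stream is $\varepsilon'$-DP; sequential composition across the $L$ rows then yields $L \varepsilon' = \varepsilon$-DP overall, since hash functions are data independent. For the frequency estimate, I would first bound $|Z[i,j]|$ uniformly over all $(t,i,j)$ by $\tilde{O}_{\varepsilon,\eta}(\log^{O(1)}(Tn))$ using the standard tail bound for the continual-release summing mechanism combined with a union bound; then apply the standard CountSketch row-wise analysis (variance $\le \|\mathcal{S}\|_2^2/W \le \eta^2 \|\mathcal{S}\|_2^2$ via $4$-wise independence and Chebyshev) to get $|s_i(a) C[i,h_i(a)] - f_a| \le \eta \|\mathcal{S}\|_2$ with probability $\ge 3/4$ per row; finally the median over $L = \Theta(\log(Tn))$ rows drives the failure probability below $(Tn)^{-\Omega(1)}$ so that a union bound over all timestamps $t$ and elements $a \in \mathcal{U}$ gives overall success probability $\ge 0.9$.

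For $\hat{F}_2$, I would expand
\[
\sum_j \tilde{C}[i,j]^2 \;=\; \sum_j C[i,j]^2 \;+\; 2\sum_j C[i,j] Z[i,j] \;+\; \sum_j Z[i,j]^2
\]
and analyze term by term: the first is the standard AMS/CountSketch estimator of $\|\mathcal{S}\|_2^2$, which is $(1\pm\eta)\|\mathcal{S}\|_2^2$ after median over rows (using $4$-wise independence to bound its variance); the third is deterministically at most $W \cdot \max_j Z[i,j]^2 = \tilde{O}_{\varepsilon,\eta}(\log^{O(1)}(Tn))$; the cross term I would control by Cauchy--Schwarz,
\[
\Bigl|2\sum_j C[i,j] Z[i,j]\Bigr| \;\le\; 2 \Bigl(\sum_j C[i,j]^2\Bigr)^{1/2} \Bigl(\sum_j Z[i,j]^2\Bigr)^{1/2} \;\le\; \eta \|\mathcal{S}\|_2^2 + \tilde{O}_{\varepsilon,\eta}(\log^{O(1)}(Tn))
\]
by AM--GM. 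Taking the median over rows and union-bounding over timestamps gives the stated additive error $\tilde{O}_{\varepsilon,\eta}(\log^7(Tn))$.

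The main obstacle I expect is the clean analysis of the cross term $2\sum_j C[i,j] Z[i,j]$ in the $F_2$ expansion: a naive per-bucket bound $\|C\|_\infty \cdot \sum_j |Z[i,j]|$ is too lossy and injects an extra factor of $W = 1/\eta^2$, whereas the Cauchy--Schwarz/AM--GM argument above exploits the global identity $\sum_j C[i,j]^2 \approx \|\mathcal{S}\|_2^2$ to trade an $\eta \|\mathcal{S}\|_2^2$ multiplicative slack for a purely polylogarithmic additive term. A secondary, mostly bookkeeping, challenge is calibrating the privacy split $\varepsilon' = \varepsilon/L$, the median width $L$, and the uniform noise tail bound so that a single union bound over all $T$ timestamps, all $n$ possible query elements, and all $LW$ buckets yields the claimed $0.9$ success probability within the stated space $O(\log(Tn)\log(T)/\eta^2)$.
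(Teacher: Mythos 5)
Your proposal matches the paper's approach essentially exactly: privatize each CountSketch bucket via a DP continual-release summing mechanism (Theorem~\ref{thm:dp_summing}), bound every bucket's noise uniformly via a tail bound plus a union bound over timestamps, buckets, and elements, and boost via a median over $\Theta(\log(Tn))$ independent hash copies; your Cauchy--Schwarz/AM--GM treatment of the $F_2$ cross term $2\sum_j C[i,j]Z[i,j]$ is algebraically equivalent to the per-bucket convexity bound the paper uses in Lemma~\ref{lem:countsketch_frequency_momement}, and both yield the same $O(k\gamma^2/\eta)$ additive error. One small slip worth flagging: under the substitution notion of neighboring streams used here, changing $a_t$ to $a_t'$ can perturb \emph{two} bucket streams within a row (namely $h_i(a_t)$ and $h_i(a_t')$), so the per-row privacy cost is $2\varepsilon'$, not $\varepsilon'$ as a plain parallel-composition argument would suggest --- the paper accounts for this by the explicit sensitivity-$2$ argument in Lemma~\ref{lem:countsketch_dp_guarantee}; this only changes constants and does not affect the conclusion.
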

Although DP $\ell_2$ frequency moment was studied by a line of work (see e.g., \cite{BlockiBDS12,Sheffet17,BuGKLST21}), none of them considers the streaming continual release setting, and it is not clear how to extend previous techniques to the the continual release setting.

\xhdr{$\ell_p$ Heavy hitters.}
In the $\ell_p$ heavy hitters problem, we are given a parameter $k$, and the goal is to find elements whose frequency to the $p$-th power is at least $1/k$ fraction of the $\ell_p$ frequency moment.
By extending our DP streaming continual release CountSketch algorithm, we obtain a DP streaming continual release $\ell_p$ heavy hitters algorithm.
\begin{theorem}[$\ell_p$ Heavy hitters for all $p\in[0,\infty)$, informal version of Theorem~\ref{thm:lp_heavyhitters}]
There is an $\varepsilon$-DP algorithm in the streaming continual release model such that with probability at least $0.9$, it always outputs a set $H\subseteq \mathcal{U}$ and a function $\hat{f}: H\rightarrow \mathbb{R}$ for every timestamp $t\in [T]$ satisfying
\begin{enumerate}
    \item $\forall a\in H$, $\hat{f}(a)\in (1\pm \eta)\cdot f_a$ where $f_a$ is the frequency of $a$ in the stream $\mathcal{S}$ up to timestamp $t$,
    \item $\forall a\in \mathcal{U},$ if $f_a\geq \frac{1}{\varepsilon\eta}\cdot \poly\left(\log\left(\frac{T\cdot k\cdot n}{\eta}\right)\right)$ and $f_a^p\geq \|\mathcal{S}\|_p^p/k$ then $a\in H$,
    \item The size of $H$ is at most $O\left(\log(Tn)\cdot2^p\cdot k \right)$.
\end{enumerate}
The algorithm uses $\max(1,n^{1-2/p})\cdot\frac{k^3}{\eta^2}\cdot \poly\left(\log\left(T\cdot k\cdot n\right)\right)$ space.
\end{theorem}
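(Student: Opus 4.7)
The plan is to reduce the $\ell_p$ heavy-hitter task to the $\ell_2$ case and invoke the DP continual-release CountSketch of Theorem~\ref{thm:count_sketch}, combined with the DP $\ell_p$-moment estimator of Theorem~\ref{thm:lp_moment} to supply a running threshold. At each timestamp $t$, I would maintain in parallel, with independent DP noise, a private estimate $\hat F_p$ of $F_p:=\|\mathcal{S}\|_p^p$ together with one or more DP CountSketch instances producing estimates $\hat f(a)$. The output set $H$ is the set of elements whose CountSketch estimate exceeds a cutoff of the form $(\hat F_p/k)^{1/p}$ minus a safety margin of order $\frac{1}{\varepsilon\eta}\poly\log(Tkn/\eta)$, which absorbs the DP noise in $\hat F_p$ and the additive error of the CountSketch; this is precisely the $\ell_\infty$ threshold appearing in condition~2.

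For $p\le 2$, a single DP CountSketch with width $B=\Theta(k^{2/p}/\eta^2)$ suffices. Because vector $\ell_q$ norms are non-increasing in $q$, we have $\|\mathcal{S}\|_2\le\|\mathcal{S}\|_p=F_p^{1/p}$, while an $\ell_p$-heavy hitter satisfies $f_a\ge(F_p/k)^{1/p}$, so $\|\mathcal{S}\|_2\le k^{1/p}f_a$ and the $\eta\|\mathcal{S}\|_2$ error of Theorem~\ref{thm:count_sketch} (after tuning $\eta$ by a factor $k^{-1/p}$ via the choice of $B$) becomes $O(\eta f_a)$, giving condition~1. For $p>2$, I would use the standard Indyk--Woodruff--style subsampling framework: maintain $\Theta(\log n)$ substreams in parallel, where in the $j$-th substream every element of $\mathcal{U}$ is retained with probability $2^{-j}$ by a data-oblivious hash, and run a DP CountSketch on each substream. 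For any $\ell_p$-heavy hitter $a$ there is a level at which $a$ is retained and becomes an $\ell_2$-heavy hitter of the substream; cycling through all levels and taking the union of reported candidates produces the target set, with total space of order $n^{1-2/p}\cdot k^3/\eta^2$ after accounting for sketch width per level.

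Privacy follows by composition across $\hat F_p$ and the $O(\log n)$ CountSketch instances, each already $\varepsilon$-DP in the continual-release sense; the subsampling hash is data-oblivious and costs nothing in privacy. The bound $|H|=O(\log(Tn)\cdot 2^p\cdot k)$ uses a dyadic level-set decomposition: a candidate in the dyadic frequency bucket $[2^i,2^{i+1})$ must satisfy $2^{ip}\ge F_p/(2^p k)$, so at most $F_p/2^{ip}\le 2^p k$ elements can sit in that bucket, and summing over the $O(\log T)$ buckets yields the claim.

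The main obstacle is obtaining the \emph{relative} accuracy $\hat f(a)\in(1\pm\eta)f_a$ uniformly over all heavy hitters and all $t\in[T]$, because the CountSketch error is stated in terms of the global quantity $\|\mathcal{S}\|_2$ while heavy-hitter frequencies can vary widely. I would address this by conditioning on the joint high-probability success of $\hat F_p$ and of every CountSketch copy, taking a union bound over dyadic frequency levels and over timestamps, and exploiting $\|\mathcal{S}\|_2\le F_p^{1/p}$ (respectively its subsampled analogue for $p>2$) to convert each additive $\ell_2$ guarantee into a multiplicative $\eta$-error on $f_a$. A secondary subtlety is tuning the safety margin so that condition~2 certifies every sufficiently large true heavy hitter without admitting spurious ones; once the margin is set, the dyadic level-set bound above delivers condition~3 automatically.
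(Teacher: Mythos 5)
There is a critical circularity in the proposal. You invoke the DP $\ell_p$-moment estimator of Theorem~\ref{thm:lp_moment} to supply the running threshold $\hat{F}_p$, but in the paper that theorem is \emph{proved by calling} the $\ell_p$ heavy hitter algorithm you are trying to construct: Algorithm~\ref{alg:lp_moment} explicitly takes as a black box the sets $H_i$ and estimates $\hat{f}_i$ guaranteed by Theorem~\ref{thm:lp_heavyhitters}. The paper resolves this by never comparing against a global $\ell_p$ moment at all. Instead, Algorithm~\ref{alg:lp_heavyhitters} pairwise-hashes the universe into $m=10k^2$ buckets, runs the DP CountSketch of Theorem~\ref{thm:count_sketch} on each bucket, and admits $a$ to the candidate set when its estimated squared frequency dominates the \emph{local bucket $\ell_2$ moment} $\hat{F}_{2,h(a)}$, divided by $\phi k$ where $\phi=\max(1,n^{1-2/p})$. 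The $\phi$ factor for $p>2$ comes from a direct application of H\"older's inequality $\|\mathcal{S}\|_2\le |\mathcal{U}|^{1/2-1/p}\|\mathcal{S}\|_p$ inside Lemma~\ref{lem:acc_heavy_hitter}, not from an Indyk--Woodruff subsampling scheme; you have imported the subsampling machinery from the moment-estimation algorithm, where it does appear, into the wrong proof.

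A second gap is the bound $|H|=O(\log(Tn)\cdot 2^p\cdot k)$. Your dyadic level-set argument counts the number of \emph{true} heavy hitters and presumes the cutoff $(\hat{F}_p/k)^{1/p}-\text{margin}$ is both positive and informative. When $F_p/k$ is below $(\operatorname{poly}\log)^p$, the cutoff is nonpositive and the threshold test admits arbitrarily many spurious elements whose CountSketch estimates are inflated purely by DP noise; nothing in your argument caps that. The paper handles this in two ways: the membership condition $\hat{f}_a^2\ge\frac{\hat{F}_{2,h(a)}+\gamma_1}{25\phi k}+\frac{512\gamma_2^2}{\eta^2}$ has a strictly positive noise floor $\frac{512\gamma_2^2}{\eta^2}$ that can never be crossed by pure noise, and on top of that Algorithm~\ref{alg:lp_heavyhitters} explicitly truncates $\hat{H}$ to its top $\bigl(\frac{1+\eta}{1-\eta}\bigr)^p k$ entries (Lemma~\ref{lem:top_k_operation} shows true heavy hitters survive the truncation), so the size bound holds deterministically after a union over the $O(\log(Tn))$ independent copies. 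Without an explicit cap of this kind, your condition~3 does not follow.

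The remaining parts of your outline (DP CountSketch as the frequency primitive, tuning the sketch parameter so that the global-$\ell_2$ error becomes a $(1\pm\eta)$ relative error via $\|\mathcal{S}\|_2\le k^{1/p}f_a$ for $p\le 2$, composition over parallel sketches for privacy, and median boosting with a union bound over $t\in[T]$ and frequency levels) align with what the paper actually does, but the two issues above are load-bearing: one is a logical dependency error and the other leaves condition~3 unproved.
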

To the best of our knowledge, though DP streaming continual release $\ell_1$ heavy hitters problem is studied by~\cite{chan2012differentially}, $\ell_p$ (for $p\not=1$) heavy hitters problem has not been studied in the DP streaming continual release setting before.
Note that $\Omega(n^{1-2/p})$ for $p>2$ is a lower bound of space needed for $\ell_p$ heavy hitters even in the non-private setting~\cite{saks2002space,bar2004information}.

\subsubsection{Differentially Private Sliding Window Continual Release Algorithms}
Smooth histogram~\cite{BravermanO07} is a general algorithmic framework which can convert a relative-approximate streaming algorithm into a relative-approximate sliding window algorithm if the objective function that we want to compute has some nice properties.

We generalize the smooth histogram to make it support converting an approximate streaming algorithm with both relative error and additive error into an approximate sliding window algorithm with both relative error and additive error if the objective function has good properties.
In addition, we show that if the streaming algorithm is DP in the continual release setting, then the converted sliding window algorithm is also DP in the continual release setting.

By applying our generalized smooth histogram approach and paying a $\poly\left(\frac{\log T}{\eta}\right)$ more factor than our DP streaming continual release algorithms in both additive error and space usage, we show DP sliding window continual release algorithms for 
\begin{enumerate}
\item $\ell_p$ Moment estimation (see Corollary~\ref{cor:sliding_window_lp_moment}),
\item Summing (see Corollary~\ref{cor:sliding_window_summing}),
\item Counting distinct elements (see Corollary~\ref{cor:sliding_window_distinct_elements}),
\item $\ell_2$ Moment estimation (see Corollary~\ref{cor:sliding_window_l2}).
\end{enumerate}

\vspace{-0.1in}
\subsection{Our Techniques}
\vspace{-0.1in}
In this section, we briefly discuss the high level ideas of our algorithms.
We present a set of techniques to reduce almost all problems that we considered in the DP streaming continual release setting to the summing problem in the DP streaming continual release setting.

\vspace{-0.1in}
\xhdr{Summing with better additive error via grouping.}
To illustrate the intuition of using grouping for differentially private streaming continual release algorithms, we start with the following simple problem: given a stream of numbers $c_1,c_2,\cdots,c_T$ where each $c_i$ is at least $10\cdot\ln(10\cdot T)/\varepsilon$, the goal is to output a $(1\pm 0.1)$-approximation to $\sum_{j=1}^t c_j$ for every prefix $t$ with probability at least $0.9$, and we want the set of all outputs to be $\varepsilon$-DP, i.e., the continual released results to be $\varepsilon$-DP.
A simple way to solve the above problem is that we release a stream of noisy numbers $\hat{c}_1,\hat{c}_2,\cdots,\hat{c}_T$ where $\forall t\in[T],\hat{c}_t=c_t+\lap(1/\varepsilon)$, and we report $\sum_{j=1}^t \hat{c}_j$ for every prefix $t\in [T]$.
It is easy to see that $(\hat{c}_1,\hat{c}_2,\cdots,\hat{c}_T)$ is $\varepsilon$-DP.
Since the reported approximate prefix sums only depend on $(\hat{c}_1,\hat{c}_2,\cdots,\hat{c}_T)$, the continual released results are $\varepsilon$-DP.
Furthermore, with probability at least $0.9$, $\forall t\in[T], |c_t-\hat{c}_t|\leq \ln(10\cdot T)/\varepsilon$.
Since $c_t$ is at least $10\ln(10\cdot N)/\varepsilon$, we have $0.9c_t\leq \hat{c}_t\leq 1.1c_t$ which implies that every reported approximate prefix sum is a $(1\pm 0.1)$-approximation.

To generalize the above idea, we propose a grouping approach in to group the consecutive numbers in the stream in a differentially private way such that the total count of each group is large enough.
To implement grouping, we need to apply the sparse vector technique (see e.g.,~\cite{dwork2014algorithmic}) iteratively. 
The similar idea also appeared in \cite{DworkNRR15} which shows a better additive error guarantee for the summing problem than \cite{DworkNPR10} when the stream is sparse.
In contrast, our additive error guarantee is always better than \cite{DworkNPR10} and \cite{DworkNRR15} while we allow an additional $(1+\varepsilon)$ relative approximation.

\vspace{-0.1in}
\xhdr{Counting distinct elements.} 
We explain how to reduce counting distinct elements problem to the summing problem.
Suppose the element universe is small, we are able to track the set of elements that already appeared during the stream.
Then, we can create a binary stream of $\{0,1\}$ where $1$ denotes that we see a new element and $0$ denotes that the input element already appeared or it is empty.
Therefore, the sum of the binary stream at timestamp $t$ is exactly the number of distinct elements.
Furthermore, if we change an element in the input stream from $a $ to $b$, there are only constant number of positions of the binary stream will flip: consider the change $a\rightarrow \perp \rightarrow b$.
If $a$ is not its first appearance in the input stream, changing $a$ to $\perp$ does not cause any change in the binary stream.
If $a$ is its first appearance in the input stream, changing $a$ to $\perp$ will make the corresponding $1$ in the binary stream be $0$ and make the $0$ corresponding to the original second appearance of $a$ in the input stream to be $1$.
Thus, it will affect at most $2$ entries of the binary stream.
Similarly, changing $\perp$ to $b$ will cause the change of at most $2$ entries of the binary stream.
Thus, the binary stream has low sensitivity which implies that a DP streaming continual release summing algorithm gives a good approximation to the number of distinct elements with a small additive error.
Next, we discuss how to handle the large universe.
For large universe, we can try different sampling rate $1/2,1/4,1/8\cdots,1/T$.
There should be a sampling rate such that (1) if we hash the sampled elements into hashing buckets, there is no collision with a good probability, (2) the number of samples is much larger than the additive error caused by the summing subroutine so we can have a good relative approximation of the number of distinct sampled elements.
Then we can use the number of distinct sampled elements to estimate the number of distinct elements in the input stream.

\vspace{-0.1in}
\xhdr{CountSketch and $\ell_p$ heavy hitters.}
Let $h:\mathcal{U}\rightarrow[k]$ be a hash function which uniformly hash elements into $k$ hash buckets.
Let $g:\mathcal{U}\rightarrow\{-1,1\}$ randomly map each element to $-1$ or $1$ with equal probability.
The CountSketch is a tuple of $k$ numbers $(z_1,z_2,\cdots,z_k)$ where $z_i$ is the sum of weighted frequencies of elements hashed to the bucket $i$, and the weight of the frequency of element $a$ is $g(a)$.
 Changing $a$ to $b$ in the input stream will change at most $2$ buckets: the bucket $i$ contains $a$ and the bucket $j$ contains $b$.
 Since $|g(a)|\leq 1$, $z_i$ and $z_j$ will be changed by at most $1$. 
 We can use DP streaming continual release summing algorithm of~\cite{DworkNPR10,ChanSS11} to estimate $(z_1,z_2,\cdots,z_k)$ such that each estimation $\hat{z}_i$ of $z_i$ only has $\poly(\log T)$ additive error, and $(\hat{z}_1,\hat{z}_2,\cdots,\hat{z}_k)$ is DP under the streaming continual release model.
 Suppose the $\ell_2$ frequency moment is much larger than $\poly(\log T)$, then the additive error becomes the relative error, and we can use $\hat{z}_1,\hat{z}_2,\cdots,\hat{z}_k$ to obtain a good relative approximation of the $\ell_2$ frequency moment.
 Similarly, if an element has frequency much larger than $\poly(\log(T))$, then $\poly(\log(T))$ becomes small relative error of the frequency and we are able to check whether it is an $\ell_2$ heavy hitter by the standard analysis of CountSketch.
 Thus, we can use this DP streaming continual release CountSketch to estimate $\ell_2$ frequency moment with $(1+\eta)$-relative error and $\poly(\log T)$-additive error, and we can use such CountSketch to find all elements which are at least $\poly(\log T)$ and are $\ell_2$ heavy hitters.
 
 Note that for $p\leq 2$, if $a$ has the largest frequency and it is an $(1/k)$-$\ell_p$ heavy hitter, then $a$ must be an $(1/k)$-$\ell_2$ heavy hitter.
 For $p>2$, if $a$ is an $1/k$-$\ell_p$ heavy hitter, than $a$ must be an $1/(kn^{1-2/p})$-$\ell_2$ heavy hitter.
 Therefore, by some hashing technique, we can use $\ell_2$ heavy hitters algorithm to construct $\ell_p$ heavy hitters algorithm.
 But since $\ell_2$ heavy hitters can only report the elements with frequency larger than $\poly(\log T)$, the obtained $\ell_p$ heavy hitters algorithm can only report the elements with frequency larger than $\poly(\log T)$ as well.
 
\vspace{-0.1in}
 \xhdr{$\ell_p$ Frequency moment estimation.}
 In high level we want to simulate the level set estimation idea of~\cite{indyk2005optimal} in the DP streaming continual release setting.
 In particular, let $\alpha=1+\eta$, let $f_a$ denote the frequency of $a$ and let $G_i=\{a\mid f_a\in(\alpha^i,\alpha^{i+1}]\}$. 
 Then $\sum_{i} |G_i|\cdot (\alpha^i)^p$ is a good approximation to the $\ell_p$ frequency moment.
 We say $G_i$ is contributing, if $|G_i|\cdot (\alpha^i)^p$ is at least $\Omega_{\alpha}(1/\log(T))$ fraction of the $\ell_p$ moment.
 Since non-contributing elements only contributes a small total amount to the $\ell_p$ frequency moment, it is easy to see that $\sum_{\text{contributing } G_i} |G_i|\cdot (\alpha^i)^p$ is still a good approximation to the $\ell_p$ frequency moment.
 Thus, we only need to estimate the size of each contributing $G_i$.
 Due to the definition of contributing, it is easy to see that if $G_i$ is contributing, either $\alpha^i$ is large or $|G_i|$ is large.
 In fact, as observed by~\cite{indyk2005optimal}, for each contributing level set $G_i$, there must be a proper sampling probability such that after sampling, there are at least $\poly(\log T)$ elements from $G_i$ sampled and all of the sampled elements from $G_i$ are at least $1/\poly(\log T)$-$\ell_p$ heavy hitters among the set of all sampled elements from the universe $\mathcal{U}$.
 Ideally, we can try different sampling rate $1,1/2,1/4,1/8,\cdots,1/T$ and use our $\ell_p$ heavy hitters algorithm to report the heavy hitters and estimate $|G_i|$ for each $i$.
 However, for $i$ with $\alpha^i \ll \poly(\log T)$, our DP streaming continual release $\ell_p$ heavy hitters algorithm does not report any element from $G_i$.
 We must find another way to estimate $|G_i|$ instead of using heavy hitters.

Similar to counting distinct elements, let us start with the case that the universe size is small so we can track the sets $A_1,A_2,\cdots,A_k$ for some $k=\poly(\log T)$, where $A_i$ is the set of all elements whose frequency is exactly $i$.
We can construct streams $\mathcal{S}_1,\mathcal{S}_2,\cdots,\mathcal{S}_k$ with numbers in $\{-1,0,1\}$.
During the stream, when we see an input element $a$, and if $a$ is in the set $A_i$, then we move $a$ to the set $A_{i+1}$ due to the increase of the frequency of $a$.
At the same time, we append $-1$ to the stream $\mathcal{S}_i$, append $1$ to the stream $\mathcal{S}_{i+1}$ and append $0$ to $\mathcal{S}_j$ for $j\not=i,i+1$.
It is easy to check that the sum of $\mathcal{S}_l$ is always the same as $|A_l|$, i.e., the number of elements which have frequency exactly $l$.
Furthermore, similar to the analysis for counting distinct elements, if we change an element in the input stream, each $\mathcal{S}_l$ might be affected by at most $4$ entries.
Thus, the total sensitivity of $(\mathcal{S}_1,\mathcal{S}_2,\cdots,\mathcal{S}_k)$ is at most $O(k)$.
Therefore, we can use the DP continual release summing to estimate the sum of each $\mathcal{S}_l$ with additive error $\poly(\log T)$.
Thus, we can estimate $|G_i|$ for each $i$ with $\alpha^i\ll \poly(\log T)$ with additive error $\poly(\log T)$ and will only introduce at most $\poly(\log T)$ additive error in approximating the $\ell_p$ frequency moment.

Now let us go back to the case that the size of the universe is large.
In this case, we can use the similar hashing and subsampling technique discussed for counting distinct elements to estimate $|A_l|$ for each $l\in [k]$.
 
\subsection{Related Work}

\label{sec:related}

\cite{DworkNPR10} and \cite{ChanSS11} initiated the study of differential privacy in the continual release model, and proposed the binary tree mechanism for computing summations. \cite{Bolot13} and \cite{PerrierAK19} generalized their results to decayed summations, counting distinct elements without space constraints and summations with real-valued data. \cite{Song18,FichtenbergerHO21} studied graph problems under the differentially private continual release model. \cite{Jain12,AdamA13,Agarwal17a} studied differentially private online learning. \cite{Jain21} gave the first polynomial separation in terms of error between the continual release model and the batch model under differential privacy. \cite{Upadhyay19a} studied heavy hitters in the differentially private sliding window model.

Differentially private frequency moment estimation for $p = 0,1,2$ (without continual releases) has been well-studied \cite{mir2011pan, dwork2010pan, BlockiBDS12,Sheffet17,Choi20,Smith0T20,BuGKLST21}. \cite{WangPS22} studied frequency moment estimation (without continual releases) for $p\in (0,1]$ with low space complexity.
Recent concurrent independent work~\cite{blocki2022make} studies $p\in[0,\infty)$ with low space complexity but not in continual release setting as well.
The differentially private $\ell_1$ heavy hitters problem is studied by~\cite{mir2011pan, dwork2010pan} in the low space streaming setting but not in the continual release setting.
\cite{chan2012differentially} studied differentially private $\ell_1$ heavy hitters problem in the low space continual release streaming setting. 
But it is not clear how to extend their techniques to $l_p$ case for  $p\not =1$.

$\ell_p$ Frequency moment estimation and $\ell_p$ heavy hitters are heavily studied in the non-private streaming literature.
For $\ell_p$ frequency moment estimation, the problem can be solved by e.g.~\cite{flajolet1985probabilistic,flajolet2007hyperloglog,durand2003loglog} for $p=0$, ~\cite{alon1996space,charikar2002finding,thorup2004tabulation} for $p=2$, ~\cite{kane2010exact,indyk2006stable,li2008estimators,kane2011fast} for $p\in(0,2)$ and \cite{indyk2005optimal, andoni2011streaming,andoni2017high} for $p>2$.
For $\ell_p$ heavy hitters, the problem can be solved by e.g.,~\cite{cormode2005improved,misra1982finding} for $p=1$, ~\cite{charikar2002finding} for $p=2$, ~\cite{jowhari2011tight} for $p\in (0,2)$, and \cite{indyk2005optimal,andoni2011streaming} for $p>2$.

\section{Preliminaries}\label{sec:preli}
\subsection{Notation}
In this paper, for $n\geq 1$, we use $[n]$ to denote the set $\{1,2,\cdots,n\}$.
If there is no ambiguity, for $i\leq j\in \mathbb{Z}$, we sometimes use $[i,j]$ to denote the set of integers $\{i,i+1,\cdots,j\}$ instead of the set of real numbers $\{a\in\mathbb{R}\mid i\leq a \leq j\}$.
We use $f_a(a_1,a_2,\cdots,a_k)$ to denote the frequency of $a$ in the sequence $(a_1,a_2,\cdots,a_k)$, i.e., $f_a(a_1,a_2,\cdots,a_k)=|\{i\in[k]\mid a_i = a\}|$.
If the sequence $(a_1,a_2,\cdots,a_k)$ is clear in the context, we use $f_a$ to denote the frequency of $a$ for short.
We use $\mathbf{1}_{\mathcal{E}}$ to denote a indicator, i.e., $\mathbf{1}_{\mathcal{E}}=1$ if condition $\mathcal{E}$ holds and $\mathbf{1}_{\mathcal{E}}=0$ if condition $\mathcal{E}$ does not hold.

For $\alpha\geq 1,\gamma\geq 0$, if $\frac{1}{\alpha}\cdot x-\gamma\leq y \leq \alpha\cdot x + \gamma$, then $y$ is an $(\alpha,\gamma)$-approximation to $x$.
If $y$ is a $(1,\gamma)$-approximation to $x$, we say $y$ is an approximation to $x$ with additive error $\gamma$.
If $y$ is an $(\alpha,0)$-approximation to $x$, we say $y$ is an $\alpha$-approximation to $x$.
We use $a\pm b$ to denote the real number interval $[a-|b|,a+|b|]$.
For a set $S$ of real numbers, we use $S\pm b$ to denote the set $\bigcup_{a\in S} a\pm b$, and use $S\cdot c$ to denote the set $\bigcup_{a\in S}a \cdot c$.
We use $\lap(b)$ to denote the Laplace distribution with scale $b$, i.e., $\lap(b)$ has density function given by  $\frac{1}{2b}\exp(|x|/b)$

\subsection{Functions to Compute}
We study several fundamental functions in the streaming literature.
When the inputs are integers, we consider the summing problem over a (sub-)stream $(a_i,\cdots,a_j)$:
\begin{itemize}
    \item Sum of numbers: $\sm(a_i,\cdots,a_j) := \sum_{k=i}^j a_k$
\end{itemize}

When the inputs are from $\mathcal{U}\cup \{\perp\}$, we consider the functions $g(a_i,\cdots,a_j)$ that are based on the frequencies of the elements in a (sub-)stream $(a_i,\cdots,a_j)$.
\begin{itemize}
    \item Count of non-empty elements: $\|(a_i,...,a_j)\|_1 := \sum_{a\in \mathcal{U}} f_a(a_i,...,a_j)$.
 

    \item The number of distinct elements: $\|(a_i,...,a_j)\|_0 := \sum_{a\in \mathcal{U} } \mathbf{1}_{f_a(a_i,...,a_j) > 0}$.
    \item $\ell_p$-Frequency moment:  $\|(a_i,...,a_j)\|_p^p := \sum_{a\in \mathcal{U}} f_a(a_i,...,a_j)^p$.
    \item $\ell_p$-Heavy hitters: $(1/k)$-$\ell_p\hh(a_i,...,a_j) := \{a\in\mathcal{U}\mid f_a(a_i,\cdots,a_j)^p\geq \|(a_i,\cdots,a_j)\|_p^p/k\}$.

\end{itemize}
Note that $\|(a_i,...,a_j)\|_1$ is a special case of $\sm(a_i,\cdots,a_j)$ with binary inputs.

\subsection{Differential Privacy}

\xhdr{Neighboring streams:} Consider two streams $\mathcal{S}=(a_1,a_2,\cdots,a_T)$ and $\mathcal{S}'=(a_1',a_2',\cdots,a_T')$. 
If there is at most one timestamp $t\in[T]$ such that (1). $|a_t-a_t'|\leq 1$ (only required when the inputs are treated as integers) (2). $\forall i\not= t,a_i=a_i'$, then we say $\mathcal{S}$ and $\mathcal{S}'$ are neighboring streams.

\begin{definition} [Differential Privacy]~\label{def:dp}
We say algorithm $A$ is $\varepsilon$-DP, if for any two neighboring streams $\mathcal{S}, \mathcal{S}'$, and any output set $\mathcal{O}$,
\[
\Pr[A(\mathcal{S}) \in \mathcal{O}] \leq e^{\varepsilon} \cdot \Pr[A(\mathcal{S}') \in \mathcal{O}].
\]
\end{definition}
Note that in the continual release model, the output $A(\mathcal{S})$ mentioned in Definition~\ref{def:dp} is the entire output history of the algorithm $A$ over stream $\mathcal{S}$ at every timestamp.

\begin{definition}[Distance between streams]
Consider two streams $\mathcal{S}$ and $\mathcal{S}'$.
If $d$ is the minimum number such that there exists a sequence of streams $\mathcal{S}_0,\mathcal{S}_1,\cdots,\mathcal{S}_d$ where $\mathcal{S}_0=\mathcal{S},\mathcal{S}_d=\mathcal{S}'$ and $\forall i\in[d]$, $\mathcal{S}_i$ and $\mathcal{S}_{i-1}$ are neighboring streams, then the distance between $\mathcal{S}$ and $\mathcal{S}'$ is $\dis(\mathcal{S},\mathcal{S}')=d$.
\end{definition}

\begin{definition}[Sensitivity of a stream mapping]\label{def:sensitivity}
Let $\mathcal{F}$ be a mapping which maps a given input stream $\mathcal{S}$ to a tuple of streams $(\mathcal{F}_1(\mathcal{S}),\mathcal{F}_2(\mathcal{S}),\cdots,\mathcal{F}_k(\mathcal{S}))$.
The sensitivity of $\mathcal{F}$ is the minimum value $s$ such that for any two neighboring streams $\mathcal{S}$ and $\mathcal{S}'$, $\sum_{i\in[k]} \dis(\mathcal{F}_i(S),\mathcal{F}_i(S'))\leq s$.
\end{definition}

\begin{theorem}[Composition~\cite{dwork2014algorithmic}]\label{thm:composition}
Let $\mathcal{F}$ be a mapping which maps a given input stream $\mathcal{S}$ to a tuple of streams $(\mathcal{F}_1(\mathcal{S}),\mathcal{F}_2(\mathcal{S}),\cdots,\mathcal{F}_k(\mathcal{S}))$.
Let $A_1,A_2,\cdots,A_k$ be $k$ $\varepsilon$-DP algorithms.
Let $A$ be an algorithm such that $A(\mathcal{S})=M(A_1(\mathcal{F}_1(\mathcal{S})),A_2(\mathcal{F}_2(\mathcal{S})),\cdots,A_k(\mathcal{F}_k(\mathcal{S})))$ for some function $M(\cdot)$.
Then the algorithm $A$ is $(s\varepsilon)$-DP.
\end{theorem}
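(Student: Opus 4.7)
The plan is to combine three standard ingredients: group privacy for each individual $A_i$, independence of the randomness across the $k$ algorithms, and closure of differential privacy under post-processing by $M$. The key quantity is $s$, the sensitivity of $\mathcal{F}$ from Definition~\ref{def:sensitivity}, and our job is to show that a one-step change in $\mathcal{S}$ gets amplified by at most a factor $s$ when measured at the joint output $(A_1(\mathcal{F}_1(\mathcal{S})),\ldots,A_k(\mathcal{F}_k(\mathcal{S})))$.

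First I would establish a group privacy statement for a single $\varepsilon$-DP algorithm $A_i$: for any two streams $U,U'$ with $\dis(U,U')=d$ and any output set $\mathcal{O}_i$, $\Pr[A_i(U)\in\mathcal{O}_i]\le e^{d\varepsilon}\Pr[A_i(U')\in\mathcal{O}_i]$. This follows by a straightforward induction on $d$: the definition of $\dis$ supplies a chain $U=U_0,U_1,\ldots,U_d=U'$ of pairwise neighbors, and applying Definition~\ref{def:dp} along the chain multiplies the ratio by $e^\varepsilon$ at each step.

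Next, fix neighboring streams $\mathcal{S},\mathcal{S}'$ and let $d_i:=\dis(\mathcal{F}_i(\mathcal{S}),\mathcal{F}_i(\mathcal{S}'))$, so that $\sum_{i=1}^k d_i\le s$ by the definition of sensitivity. Consider the intermediate algorithm $\tilde A$ that outputs the tuple $(A_1(\mathcal{F}_1(\mathcal{S})),\ldots,A_k(\mathcal{F}_k(\mathcal{S})))$. Because the $A_i$'s use independent internal randomness, for any product output set $\mathcal{O}_1\times\cdots\times\mathcal{O}_k$ we have
\begin{align*}
\Pr[\tilde A(\mathcal{S})\in\mathcal{O}_1\times\cdots\times\mathcal{O}_k] &= \prod_{i=1}^k \Pr[A_i(\mathcal{F}_i(\mathcal{S}))\in\mathcal{O}_i] \\
&\le \prod_{i=1}^k e^{d_i\varepsilon}\Pr[A_i(\mathcal{F}_i(\mathcal{S}'))\in\mathcal{O}_i] \\
&= e^{(\sum_i d_i)\varepsilon}\Pr[\tilde A(\mathcal{S}')\in\mathcal{O}_1\times\cdots\times\mathcal{O}_k] \\
&\le e^{s\varepsilon}\Pr[\tilde A(\mathcal{S}')\in\mathcal{O}_1\times\cdots\times\mathcal{O}_k].
\end{align*}
A standard extension from product sets to arbitrary measurable output sets (via a disjoint-cover/union-of-rectangles argument, or equivalently by integrating the pointwise density-ratio bound) shows $\tilde A$ is $(s\varepsilon)$-DP.

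Finally, since $A(\mathcal{S})=M(\tilde A(\mathcal{S}))$ for a fixed function $M$ that does not look at $\mathcal{S}$ directly, for any output set $\mathcal{O}$ we have $\Pr[A(\mathcal{S})\in\mathcal{O}]=\Pr[\tilde A(\mathcal{S})\in M^{-1}(\mathcal{O})]$, so the $(s\varepsilon)$-DP bound for $\tilde A$ transfers to $A$ — this is just the post-processing property. The main place to be careful is the independence assumption in the product step above, which I would state explicitly: the $A_i$'s must draw their randomness independently (otherwise only a union-bound style bound would hold). The group privacy induction and the post-processing step are routine; the substantive content is that sensitivity of $\mathcal{F}$ in the sum-of-distances sense is exactly the right quantity to govern the overall privacy loss.
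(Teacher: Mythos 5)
Theorem~\ref{thm:composition} is stated in the paper with a citation to \cite{dwork2014algorithmic} and carries no proof in the manuscript, so there is no in-paper argument to compare against. Your proof is nevertheless a correct and standard derivation: the decomposition into (i) group privacy for each $A_i$ along a chain of length $d_i = \dis(\mathcal{F}_i(\mathcal{S}),\mathcal{F}_i(\mathcal{S}'))$, (ii) independent composition to get the factor $e^{(\sum_i d_i)\varepsilon}\le e^{s\varepsilon}$ for the joint tuple $\tilde A$, and (iii) post-processing through $M$ is exactly how this lemma is proved in the DP literature, and the use of Definition~\ref{def:sensitivity} to make $\sum_i d_i\le s$ the operative bound is the right way to tie the argument to this paper's notion of stream-mapping sensitivity.

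One small caution worth making explicit, which you already flag: the product step requires that $A_1,\ldots,A_k$ draw their internal randomness independently — the theorem statement does not say this, but it is the intended (and, in all the paper's applications, actual) setting. Also, your ``disjoint-cover/union-of-rectangles'' phrasing is only literally applicable when the output spaces are countable; for general output spaces the right statement is the density-ratio version you give in parallel, namely that the DP inequality on all measurable sets for each coordinate yields an a.e.\ pointwise bound $p_{A_i(\mathcal{F}_i(\mathcal{S}))}(o_i)\le e^{d_i\varepsilon}p_{A_i(\mathcal{F}_i(\mathcal{S}'))}(o_i)$, whose product integrates over any event. With those two remarks, the argument is complete and sound.
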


\xhdr{Example usage of Theorem~\ref{thm:composition}:} 
Some example usage of Theorem~\ref{thm:composition} in our paper are presented as the following:
\begin{itemize}
\item \textbf{Composition of multiple algorithms over the input stream:} Suppose each of $A_1,A_2,\cdots,A_k$ is $\varepsilon$-DP, then for any function $M(\cdot)$, $M(A_1(\mathcal{S}),A_2(\mathcal{S}),\cdots,A_k(\mathcal{S}))$ is $(k\varepsilon)$-DP.
\item \textbf{Composition of algorithms on disjoint sub-streams:} For an input stream $\mathcal{S}=(a_1,a_2,\cdots,a_T)$, we partition $\mathcal{S}$ into $\mathcal{S}_1=(a_{1,1},a_{1,2},\cdots,a_{1,T}),\mathcal{S}_2=(a_{2,1},a_{2,2},\cdots,a_{2,T}),\cdots,\mathcal{S}_k=(a_{k,1},a_{k,2},\cdots,a_{k,T})$, i.e., $\forall t\in [T]$, there is only one $i\in[k]$ such that $a_{i,t}=a_t$, and $\forall i'\not=i,a_{i,t}=\perp\text{ (or $0$)}$.
It is clear that such partitioning has sensitivity $1$.
Thus, if each of $A_1,A_2,\cdots,A_k$ is an $\varepsilon$-DP algorithms, then for any function $M(\cdot)$, $M(A_1(\mathcal{S}_1),A_2(\mathcal{S}_2),\cdots,A_k(\mathcal{S}_k))$ is $\varepsilon$-DP.
\end{itemize}

\subsection{Streaming Continual Release Summing and Counting}
For summing problem in the streaming continual release model, the binary tree mechanism was proposed in \cite{DworkNPR10,ChanSS11}. It gets poly-logarithmic additive error and uses logarithmic space. 
Furthermore, it can handle negative numbers in the stream.

\begin{theorem}[\cite{DworkNPR10,ChanSS11}]\label{thm:dp_summing}
Let $\varepsilon\geq 0,\xi\in(0,0.5)$, there is an $\varepsilon$-DP algorithm for summing in the streaming continual release model. 
With probability $1-\xi$, the additive error of the output for every timestamp $t\in [T]$ is always at most $O\left(\frac{1}{\varepsilon} \log^{2.5}(T)\log\left(\frac{1}{\xi}\right)\right)$. 
The algorithm uses $O(\log(T))$ space.
\end{theorem}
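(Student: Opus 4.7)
The plan is to build the classical binary tree mechanism. Conceptually, lay out the $T$ timestamps as leaves of a balanced binary tree of depth $\lceil \log_2 T \rceil$. For every internal node $v$, let $s_v$ denote the true sum of the inputs at the leaves of its subtree, i.e., the sum over a dyadic interval, and publish a noisy version $\hat{s}_v = s_v + \lap(b)$ with noise scale $b = O(\log(T)/\varepsilon)$. Answering a prefix-sum query at time $t$ amounts to writing the interval $[1,t]$ as a disjoint union of at most $\lceil\log_2 T\rceil$ dyadic intervals (directly read off the binary expansion of $t$) and summing the corresponding $\hat{s}_v$ values.

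For privacy, note that modifying a single input $a_t$ by at most $1$ shifts exactly the $\lceil\log_2 T\rceil + 1$ node-sums on the root-to-leaf path of position $t$, and each shifts by at most $1$. Viewing the tree as a mapping from the input stream to a tuple of depth-many sub-streams (one per level of the tree, each being a stream of finalized dyadic sums), this mapping has sensitivity $O(\log T)$. Releasing each node-sum via an $(\varepsilon/\log T)$-DP Laplace mechanism and invoking Theorem~\ref{thm:composition} then yields $\varepsilon$-DP for the entire collection $\{\hat{s}_v\}$, and hence for any post-processing such as the reported prefix sums.

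For the accuracy guarantee, the error at timestamp $t$ is a sum of at most $\lceil\log_2 T\rceil$ independent Laplace random variables of scale $b = O(\log(T)/\varepsilon)$. Using a standard sub-exponential tail bound for sums of Laplaces together with a union bound over all $T$ timestamps (replacing the per-query failure probability by $\xi/T$), every prefix-sum estimate simultaneously has additive error at most $O(b \cdot \log T \cdot \log(T/\xi)) = O(\frac{1}{\varepsilon}\log^{2.5}(T)\log(1/\xi))$ with probability at least $1-\xi$; the stated bound follows after absorbing $\log T$ factors.

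Finally, for space, the algorithm does not need to hold the entire tree in memory. Processing inputs left-to-right, at any point in time only one partially-accumulated node-sum per level is still "open" (the rightmost unfinished subtree at each depth), while finished subtree sums can be noised, emitted, and discarded; this is $O(\log T)$ words of state. The only real subtlety in the write-up is choosing tail inequalities that yield the $\log^{2.5}$ exponent cleanly after the union bound over $T$ queries, but this is routine Laplace concentration and not a genuine obstacle.
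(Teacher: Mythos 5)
The paper does not prove Theorem~\ref{thm:dp_summing}; it cites it directly from \cite{DworkNPR10,ChanSS11}, and your proposal reconstructs exactly the binary tree mechanism those works use. The tree construction, the identification of the $O(\log T)$ sensitivity via the single root-to-leaf path, the invocation of composition with per-node noise scale $b=O(\log T/\varepsilon)$, and the $O(\log T)$-space streaming implementation (keeping one open partial sum per level) are all right.

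The one place your accounting is loose is the accuracy step. You write that the error at a fixed timestamp is a sum of at most $\lceil\log_2 T\rceil$ independent $\lap(b)$ variables and bound it by $O(b\cdot\log T\cdot\log(T/\xi))$ after the union bound. That is the bound you would get by crude-bounding each of the $\log T$ Laplaces separately at tail level $\xi/(T\log T)$; it evaluates to $O\bigl(\tfrac{1}{\varepsilon}\log^2(T)\log(T/\xi)\bigr)$, which for constant $\xi$ is $O(\log^3 T/\varepsilon)$, and that does \emph{not} simplify to the stated $O\bigl(\tfrac{1}{\varepsilon}\log^{2.5}(T)\log(1/\xi)\bigr)$ by ``absorbing $\log T$ factors'' --- you need a strictly smaller exponent, not a larger one. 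To recover the $\log^{2.5}$ you must use a concentration inequality for the \emph{sum} of independent Laplaces rather than a per-coordinate tail: for $k$ i.i.d.\ $\lap(b)$ variables, with probability $1-\delta$ the sum is $O\bigl(b\sqrt{k}\,\log(1/\delta)\bigr)$ (this is Corollary~2.9 of \cite{ChanSS11}). Plugging in $b=O(\log T/\varepsilon)$, $k=O(\log T)$, and $\delta=\xi/T$ gives $O\bigl(\tfrac{1}{\varepsilon}\log^{1.5}(T)\log(T/\xi)\bigr)=O\bigl(\tfrac{1}{\varepsilon}\log^{2.5}(T)\log(1/\xi)\bigr)$. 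So the approach is sound; only the tail inequality in the accuracy step needs to be the $\sqrt{k}$-scale sum bound rather than the crude union bound over individual Laplaces.
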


\subsection{Probability Tools}
\begin{lemma}[\cite{bellare1994randomness}]\label{lem:concentration}
Let $\lambda\geq 4$ be an even integer. 
Let $X$ be the sum of $n$ $\lambda$-wise independent random variables which take values in $[0, 1]$.
Let $\mu = E[X]$ and $A > 0$. 
Then we have
\begin{align*}
\Pr\left[\left|X-\mu\right| > A\right] \leq 8 \cdot \left(\frac{\lambda\mu + \lambda^2}{A^2}\right)^{\lambda/2}
\end{align*}
\end{lemma}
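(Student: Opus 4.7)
The plan is to prove this via the moment method: apply Markov's inequality to $(X-\mu)^\lambda$ and then bound the $\lambda$-th central moment using $\lambda$-wise independence together with the boundedness of the summands.

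First, since $\lambda$ is even, $(X-\mu)^\lambda = |X-\mu|^\lambda \geq 0$, so by Markov,
$$\Pr[|X-\mu| > A] = \Pr[(X-\mu)^\lambda > A^\lambda] \leq \frac{E[(X-\mu)^\lambda]}{A^\lambda}.$$
It therefore suffices to show $E[(X-\mu)^\lambda] \leq 8\,(\lambda\mu + \lambda^2)^{\lambda/2}$.

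Next, write $Y_i = X_i - E[X_i]$, so $X-\mu = \sum_i Y_i$, each $Y_i \in [-1,1]$ with $E[Y_i]=0$, and $E[Y_i^2] \leq E[X_i^2] \leq E[X_i] =: \mu_i$ (using $X_i \in [0,1]$). I would then expand
$$E\left[\left(\sum_i Y_i\right)^\lambda\right] = \sum_{c_1+\cdots+c_n=\lambda} \binom{\lambda}{c_1,\ldots,c_n}\, E\!\left[\prod_i Y_i^{c_i}\right],$$
and note that any monomial involves at most $\lambda$ distinct indices, so $\lambda$-wise independence factorizes the expectation into $\prod_{i:c_i>0} E[Y_i^{c_i}]$. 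The crucial simplification is that whenever any $c_i = 1$, the factor $E[Y_i] = 0$ kills the term. Hence only monomials whose nonzero exponents are all $\geq 2$ survive, and consequently at most $\lambda/2$ distinct indices appear. For each surviving $c_i \geq 2$, boundedness gives $|E[Y_i^{c_i}]| \leq E[Y_i^2] \leq \mu_i$.

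Grouping by the number $k$ of distinct indices used (with $1 \le k \le \lambda/2$), each surviving term's contribution is at most $\prod_{j=1}^k \mu_{i_j}$, and summing over ordered tuples of distinct indices gives $\leq (\sum_i \mu_i)^k = \mu^k$. So
$$|E[(X-\mu)^\lambda]| \leq \sum_{k=1}^{\lambda/2} N(\lambda,k)\, \mu^k,$$
where $N(\lambda,k)$ counts the number of ways to distribute $\lambda$ labeled positions into $k$ labeled blocks each of size $\geq 2$. A standard count (or a crude bound via $N(\lambda,k) \leq \binom{\lambda/2}{k}\cdot \lambda^{\lambda-2k}$ type estimates, isolating the ``all blocks of size $2$'' regime from the ``few blocks with high multiplicity'' regime) then lets me invoke the elementary inequality $\sum_{k} \binom{\lambda/2}{k} (\lambda\mu)^k (\lambda^2)^{\lambda/2-k} \leq (\lambda\mu + \lambda^2)^{\lambda/2}$, yielding the claim up to the absolute constant $8$.

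The main obstacle is the combinatorial bookkeeping in the last step: I must separate the $\mu$-dependent contribution (scaling like $(\lambda\mu)^k$ from matching-like partitions of the $\lambda$ positions into $k$ blocks of size exactly $2$) from the ``concentrated'' contribution (scaling like $(\lambda^2)^{\lambda/2-k}$ from blocks of size larger than $2$) cleanly enough that the two fold together into a single $(\lambda\mu + \lambda^2)^{\lambda/2}$ factor with only a constant loss. The rest of the proof is a routine combination of Markov's inequality with this moment bound.
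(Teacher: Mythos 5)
The paper does not prove this lemma itself---it is imported as a black box from Bellare--Rompel (\cite{bellare1994randomness})---so there is no internal proof to compare against; I am evaluating your argument on its own merits and against the standard proof in that reference.

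Your overall plan is the right one and matches the moment-method argument in the cited source: Markov on $(X-\mu)^\lambda$, expand, use $\lambda$-wise independence to factorize, kill all monomials with some exponent equal to $1$, bound surviving factors by $E[Y_i^2]\le\mu_i$ via $|Y_i|\le 1$, and then count partitions of the $\lambda$ slots into blocks of size $\ge 2$. Everything up to and including the bound $S(\lambda,k,\ge 2)\cdot\mu^k$ per term is sound (being careful that the relevant count is of \emph{unordered} set partitions of $[\lambda]$ into $k$ blocks each of size $\ge 2$, matched injectively to indices in $[n]$, which you pay for once, not $k!$ times).

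The genuine gap is in the final combinatorial step, which you yourself flag as the main obstacle. The specific bound you float, $N(\lambda,k)\le\binom{\lambda/2}{k}\lambda^{\lambda-2k}$, fails already at the extreme $k=\lambda/2$: there the left side is the number of perfect matchings (either $(\lambda-1)!!$ for unordered blocks, or $\lambda!/2^{\lambda/2}$ for labeled blocks as you defined $N$), while the right side is $\binom{\lambda/2}{\lambda/2}\lambda^0=1$. So the inequality you propose to feed into the binomial theorem does not hold, and the binomial identity $\sum_k\binom{\lambda/2}{k}(\lambda\mu)^k(\lambda^2)^{\lambda/2-k}=(\lambda\mu+\lambda^2)^{\lambda/2}$ then has nothing to dominate. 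What actually makes the argument close is a more careful partition count: for $k=\lambda/2$ one uses $(\lambda-1)!!\le\lambda^{\lambda/2}$ so the matching term is directly $\le(\lambda\mu)^{\lambda/2}$, and for $k<\lambda/2$ one needs a bound such as $S(\lambda,k,\ge 2)\le\binom{\lambda}{k}k^{\lambda-k}$ (choose a leader per block, then assign the remaining $\lambda-k$ slots), together with a case split on whether $\lambda\mu\gtrless\lambda^2$ to absorb everything into $8(\lambda\mu+\lambda^2)^{\lambda/2}$. That bookkeeping is precisely what is missing from your attempt, so as written the proof is an outline of the correct strategy rather than a complete argument.
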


\begin{lemma}[Median trick to boost success probability]\label{lem:median_trick}
Suppose $X$ is a random estimator of value $v$ such that with probability at least $2/3$, $X$ is an $(\alpha,\gamma)$-approximation to $v$.
Then, for $\xi\in(0,0.5)$, if we draw $k=\lceil50\log(1/\xi)\rceil$ independent copies $X_1,X_2,\cdots,X_k$ of $X$, with probability at least $1-\xi$, the median of $X_1,X_2,\cdots,X_k$ is an $(\alpha,\gamma)$-approximation to $v$.
\end{lemma}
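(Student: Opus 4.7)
\medskip

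\noindent\textbf{Proof plan for Lemma~\ref{lem:median_trick}.}
The plan is to exploit the fact that the set of ``good'' outputs forms an interval, so taking a median turns a $2/3$ per-trial success probability into exponentially small failure probability via a standard Chernoff argument.

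First I would make the key structural observation: by definition, $y$ is an $(\alpha,\gamma)$-approximation to $v$ iff $y\in I:=[v/\alpha-\gamma,\ \alpha v+\gamma]$, and this set is an interval. Consequently, if strictly more than $k/2$ of the samples $X_1,\dots,X_k$ land in $I$, then the median of $X_1,\dots,X_k$ must also lie in $I$, i.e.\ it is an $(\alpha,\gamma)$-approximation to $v$. So it suffices to show that with probability at least $1-\xi$, a majority of the $X_i$ are good.

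Next I would define $Y_i=\mathbf{1}_{X_i\in I}$ and $Z=\sum_{i=1}^k Y_i$. By hypothesis the $Y_i$ are independent Bernoulli random variables with $\mathbb{E}[Y_i]\ge 2/3$, so $\mu:=\mathbb{E}[Z]\ge 2k/3$. The bad event for the median is $\{Z\le k/2\}$, which is contained in $\{Z\le (1-1/4)\mu\}$ since $k/2=(3/4)\cdot(2k/3)\le (3/4)\mu$. Applying the multiplicative Chernoff bound,
\[
\Pr[Z\le k/2]\ \le\ \Pr\!\left[Z\le (1-\tfrac{1}{4})\mu\right]\ \le\ \exp\!\left(-\tfrac{\mu}{32}\right)\ \le\ \exp\!\left(-\tfrac{k}{48}\right).
\]
Plugging in $k=\lceil 50\log(1/\xi)\rceil$ (natural log, matching the notation in the earlier probability lemma) gives $\exp(-k/48)\le \xi$, which completes the argument.

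There is no real obstacle here; the only thing to be careful about is the interval observation (without it, medians of ``close-to-correct'' estimators need not be close-to-correct for arbitrary notions of closeness), and the constant matching between $50$ and the Chernoff exponent, which is comfortably satisfied. If one preferred $\log_2$ or base-$e$ conventions, one could instead invoke a Hoeffding-style bound $\Pr[Z\le k/2]\le \exp(-2k(1/6)^2)=\exp(-k/18)$, which also yields the claim with the stated constant.
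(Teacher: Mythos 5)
Your proof is correct and follows essentially the same approach as the paper: reduce the median being bad to at most $k/2$ of the copies being good, then apply a Chernoff bound to show this happens with probability at most $e^{-k/48}\le\xi$. The only difference is that you spell out the interval observation (that $(\alpha,\gamma)$-approximations to $v$ form an interval, so a majority of good samples forces a good median), which the paper's proof leaves implicit; this is a nice clarification but not a different argument.
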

\begin{proof}
We say $X_i$ is good if $X_i$ is an $(\alpha,\gamma)$-approximation to $v$.
If the median is not good, then $\sum_{i\in[k]}\mathbf{1}_{X_i\text{ is good}}< \frac{k}{2}$.
By Chernoff bound, $\Pr[\text{median is not good}]\leq \Pr[\sum_{i\in[k]}\mathbf{1}_{X_i\text{ is good}}< \frac{k}{2}]\leq \Pr[E[\sum_{i\in[k]}\mathbf{1}_{X_i\text{ is good}}] - \frac{k}{2} > \frac{k}{6}]\leq e^{-k/48}\leq \xi.$
\end{proof}

\section{Continual Released Summing with Better Additive Error}\label{sec:count}
In this section, we show that if we allow a relative approximation and the input stream only contains non-negative numbers, then we can have a continual released summing algorithm with additive error better than Theorem~\ref{thm:dp_summing}.


\begin{algorithm}[H]
	\KwIn{A stream of non-negative numbers $c_1,c_2,\cdots,c_T$ DP parameter $\varepsilon>0$, approximation parameter $\eta\in(0,0.5)$, and failure probability $\xi\in(0,1)$.}
	\KwOut{A stream of groups with grouped noisy counts 
	$(\hat{c}_1,\hat{c}_2,\cdots,\hat{c}_T)$
	}
	Let $\varepsilon_0\gets \varepsilon / 2$.~\\
	Initialize a group index $i\gets1$, current group $G_1\gets \emptyset$, and threshold $\tau_1\gets \left(\frac{1}{\eta}+1\right)\cdot \frac{7}{\varepsilon_0}\cdot\ln\left(3\cdot T/\gamma\right) + \lap(2/\varepsilon_0)$. \hfill{//$G_i$ is used for analysis only.}\\
	\For{$t = 1$ to $T$} {
	    $G_i\gets G_i\cup \{t\}$.~\\
	    Let $\nu_t\gets \lap(4/\varepsilon_0)$.~\\
	    \uIf {$\nu_t+\sum_{j\in G_i} c_j\geq \tau_i$} {
	        $\hat{c}_t\gets \lap(1/\varepsilon_0)+\sum_{j\in G_i}c_j$. \\
	        $i\gets i + 1$.~\\
	        $\tau_i\gets \left(\frac{1}{\eta}+1\right)\cdot \frac{7}{\varepsilon_0}\cdot\ln\left(3\cdot T/\gamma\right) + \lap(2/\varepsilon_0)$.~\\
	        $G_i\gets \emptyset$.
	    }
	    \Else {
	        $\hat{c}_t\gets 0$.
	    }
  	}
    \caption{Grouping Stream of Counts.}
    \label{alg:DPgroup}
\end{algorithm}

\begin{lemma}\label{lem:dp_output_stream}
The output stream $\hat{c}_1,\hat{c}_2,\cdots,\hat{c}_T$ of Algorithm~\ref{alg:DPgroup} is $\varepsilon$-DP.
\end{lemma}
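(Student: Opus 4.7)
I plan to prove $\varepsilon$-DP by decomposing the output density across the data-dependent groups and observing that a single input change affects only one group's sub-mechanism. Fix neighboring streams $\mathcal{S} = (c_1, \ldots, c_T)$ and $\mathcal{S}' = (c_1', \ldots, c_T')$ that differ only at a single position $t^*$ (with $|c_{t^*} - c_{t^*}'| \leq 1$). Any realized output $(\hat{c}_1, \ldots, \hat{c}_T)$ uniquely determines the group structure produced by Algorithm~\ref{alg:DPgroup}: the positions $T_1 < \cdots < T_m$ where $\hat{c}_{T_i} \neq 0$ are exactly the group boundaries, the value $\hat{c}_{T_i}$ is the released noisy sum for group $G_i = \{T_{i-1}+1, \ldots, T_i\}$, and every other $\hat{c}_t$ equals $0$. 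Since the thresholds $\tau_i$, the per-step query noises $\nu_t$, and the release noises are drawn independently and each group's sub-output depends only on its own slice of these variables, I will factorize the output density as $p(\hat{c}_{1:T} \mid \mathcal{S}) = \prod_{i=1}^{m} p_i(\mathcal{S})$, where $p_i(\mathcal{S})$ depends only on the inputs $c_j$ for $j \in G_i$.

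\textbf{Per-group ratio.} For every $i$ with $t^* \notin G_i$ the inputs seen by that group are identical under $\mathcal{S}$ and $\mathcal{S}'$, so $p_i(\mathcal{S}) = p_i(\mathcal{S}')$ and those factors cancel in the ratio. For the unique group $i^*$ with $t^* \in G_{i^*}$, the sub-mechanism is a sparse-vector AboveThreshold on the within-group partial sums $q_t = \sum_{j \in G_{i^*},\, j \leq t} c_j$ (each of sensitivity $1$) against the noisy threshold $\tau_{i^*}$, followed by a single Laplace release of $\sum_{j \in G_{i^*}} c_j$ at the stopping step. With threshold noise $\lap(2/\varepsilon_0)$ and per-query noise $\lap(4/\varepsilon_0)$, the standard AboveThreshold density-ratio calculation (see, e.g., \cite{dwork2014algorithmic}) bounds the detection component's contribution by $e^{\varepsilon_0}$; the terminal release then perturbs a sensitivity-$1$ sum by $\lap(1/\varepsilon_0)$, contributing another factor of $e^{\varepsilon_0}$ by the Laplace mechanism. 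Hence $p_{i^*}(\mathcal{S})/p_{i^*}(\mathcal{S}') \leq e^{2\varepsilon_0} = e^{\varepsilon}$, and integrating the pointwise density ratio over any measurable output set yields the $\varepsilon$-DP guarantee of Definition~\ref{def:dp}.

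\textbf{Main obstacle.} The subtlety I expect to handle carefully is the factorization itself. The groups are data-dependent, and under a naive coupling of noises a single change at $t^*$ could cascade and shift every later boundary, making it seem as though many groups are affected. The resolution is to condition on the observed output, which pins down the boundaries $T_1, \ldots, T_m$ and the released values simultaneously; only after this conditioning does the independence of the per-group noises produce the clean product factorization, and only then is it well-defined to say that groups disjoint from $\{t^*\}$ contribute identical factors on both sides. A secondary point is verifying that the AboveThreshold analysis remains valid even though the mechanism does not halt after the above-threshold event but launches a fresh SVT for the next group; because later groups use freshly sampled, independent $\tau_{i^*+1}, \nu_{t}, \ldots$, the affected group $i^*$ can be isolated and analyzed exactly as a stand-alone AboveThreshold instance followed by a Laplace release, so the $2\varepsilon_0 = \varepsilon$ privacy loss never compounds across groups.
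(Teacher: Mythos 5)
Your proposal is correct and takes essentially the same approach as the paper: condition on the observed output to pin down the data-dependent group boundaries, observe that groups disjoint from the changed index contribute identical density factors, and bound the affected group's factor by composing an AboveThreshold step ($\varepsilon_0$, via threshold noise $\mathrm{Lap}(2/\varepsilon_0)$ and query noise $\mathrm{Lap}(4/\varepsilon_0)$) with a Laplace release of a sensitivity-$1$ sum ($\varepsilon_0$). The paper organizes the same computation as a two-stage composition (the grouping is $\varepsilon_0$-DP, and the released counts conditioned on the grouping are $\varepsilon_0$-DP), whereas you collapse both stages into the single affected group's factor; it also explicitly treats the final unfinished group, where no above-threshold event fires and no Laplace release occurs — a case your stand-alone AboveThreshold framing covers implicitly but would be worth stating.
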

The proof idea is to iteratively apply sparse vector technique.
We put the proof into Appendix~\ref{sec:proof_of_grouping_dp}.

\begin{lemma}\label{lem:acc_output_stream}
Let $\hat{c}_1,\hat{c}_2,\cdots,\hat{c}_T$ be the output stream of Algorithm~\ref{alg:DPgroup}.
Then with probability at least $1-\xi$, $\forall l,r$ satisfying $1\leq l\leq r\leq T$,
\begin{align*}
(1-\eta)    \sum_{j=l}^r c_j - \left(\frac{1}{\eta}+4\right)\cdot \frac{7}{\varepsilon_0}\cdot \ln\left(3\cdot T/\xi\right)\leq \sum_{j=l}^{r} \hat{c}_{j} \leq (1+\eta) \sum_{j=l}^r c_j + \left(\frac{1}{\eta}+4\right)\cdot \frac{7}{\varepsilon_0}\cdot \ln\left(3\cdot T/\xi\right).
\end{align*}
\end{lemma}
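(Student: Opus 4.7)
The plan is to condition on a single high-probability event that simultaneously controls every Laplace noise the algorithm draws, and then carry out a structural analysis of where the query interval $[l,r]$ sits relative to the groups $G_i$ produced by the algorithm.

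Set $L := \ln(3T/\xi)$. The algorithm draws at most $T$ comparison noises $\nu_t \sim \lap(4/\varepsilon_0)$, at most $T$ threshold noises $\mu_i \sim \lap(2/\varepsilon_0)$, and at most $T$ output noises $N_i \sim \lap(1/\varepsilon_0)$. Combining the tail bound $\Pr[|\lap(b)| > bL] \leq \xi/(3T)$ with a union bound over these $\leq 3T$ variables yields, with probability $\geq 1-\xi$, simultaneous uniform bounds $|\nu_t|\leq 4L/\varepsilon_0$, $|\mu_i|\leq 2L/\varepsilon_0$, and $|N_i|\leq L/\varepsilon_0$; in particular every realized threshold $\tau_i$ lies in $\bigl((7/\eta+5)L/\varepsilon_0,\ (7/\eta+9)L/\varepsilon_0\bigr)$ and every closed group satisfies $\sum_{j\in G_i} c_j \geq \tau_i - \nu_{t_i} \geq (7/\eta+1)L/\varepsilon_0$.

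Next, I fix $l,r$ and let $a$ (resp.\ $b$) be the smallest (resp.\ largest) index $i$ with $t_i\in[l,r]$, where $t_i$ is the closing step of $G_i$. If no such $i$ exists, then $[l,r]$ sits inside a single still-open group $G_c$, so $\sum_{j=l}^r \hat c_j = 0$; the non-triggered comparison at step $r$ gives $\sum_{j=l}^r c_j \leq \tau_c - \nu_r \leq (7/\eta+13)L/\varepsilon_0$, and both lemma inequalities hold trivially. Otherwise, the only nonzero $\hat c_j$'s in $[l,r]$ occur at $t_a,\ldots,t_b$, and
\[
\sum_{j=l}^r \hat c_j \;=\; \sum_{j=s_a}^{t_b} c_j + \sum_{i=a}^b N_i \;=\; \sum_{j=l}^r c_j + X - Y + \sum_{i=a}^b N_i,
\]
with boundary corrections $X := \sum_{j=s_a}^{l-1} c_j$ and $Y := \sum_{j=t_b+1}^r c_j$; the non-triggered comparisons at step $l-1$ (inside $G_a$) and at step $r$ (inside the group that opens after $G_b$) give $X, Y < (7/\eta+13)L/\varepsilon_0$.

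To convert the un-concentrated sum $\sum_{i=a}^b N_i$ into a relative error, I use the per-group lower bound to conclude $b-a+1 \leq (\eta/7)\varepsilon_0 \sum_{j=s_a}^{t_b} c_j / L$, which gives $|\sum_{i=a}^b N_i| \leq (b-a+1)\cdot L/\varepsilon_0 \leq (\eta/7)\sum_{j=s_a}^{t_b} c_j$. Hence $\sum_{j=l}^r \hat c_j = (1\pm\eta/7)(\sum_{j=l}^r c_j + X - Y)$; dropping the favorable signs (using $X,Y\geq 0$) and invoking $\eta<1/2$ to check the inequality $(1+\eta/7)(7/\eta+13)\leq 7/\eta+15$ delivers the claimed two-sided bound $(1\pm\eta)\sum_{j=l}^r c_j \pm (1/\eta+4)\cdot 7L/\varepsilon_0$. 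The main obstacle is precisely this constant-tracking step: the per-group lower bound must be just large enough to convert $b-a+1$ naive output noises into a clean $\eta/7$-relative error, while the boundary corrections $X$ and $Y$ — each already of order $(7/\eta)L/\varepsilon_0$, almost as large as the additive budget itself — must still fit inside $(7/\eta+28)L/\varepsilon_0$ after the $1\pm\eta/7$ multiplication.
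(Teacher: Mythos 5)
Your proof is correct and takes essentially the same approach as the paper's: condition on a single high-probability event bounding every Laplace noise, identify the completed groups whose closing times fall in $[l,r]$, control the two boundary corrections via the non-trigger comparisons at steps $l-1$ and $r$, and use the per-group lower bound $\sum_{j\in G_i}c_j\ge \tfrac{7}{\eta\varepsilon_0}\ln(3T/\xi)$ to turn the output noise into an $O(\eta)$-relative error. The only cosmetic difference is that the paper packages the per-group bound as a $(1\pm\eta)$-approximation of each noisy group count (Lemma~\ref{lem:group_auc}, property~3) and then sums, whereas you bound the number of groups $b-a+1$ and aggregate the output noise in one step; the underlying inequalities and constant-tracking are the same, both landing comfortably inside the additive budget $(\tfrac1\eta+4)\cdot\tfrac7{\varepsilon_0}\ln(3T/\xi)$.
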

We put the proof of Lemma~\ref{lem:acc_output_stream} into Appendix~\ref{sec:proof_of_acc_grouping}

\begin{theorem}[Summing of a non-negative stream]\label{thm:summing_non_negative}
Let $\varepsilon\geq 0,\xi\in(0,0.5)$, there is an $\varepsilon$-DP algorithm for summing in the streaming continual release model.
If the input numbers are guaranteed to be non-negative, with probability at least $1-\xi$, the output is always a $\left(1+\eta, O\left(\frac{\log(T/\xi)}{\varepsilon\eta}\right)\right)$-approximation to the summing problem at any timestamp $t\in [T]$.
The algorithm uses space $O(1)$.
\end{theorem}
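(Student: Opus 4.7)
The plan is to establish Theorem~\ref{thm:summing_non_negative} by directly instantiating Algorithm~\ref{alg:DPgroup} and having the streaming algorithm output, at each timestamp $t$, the running prefix sum $\hat{S}_t := \sum_{j=1}^{t}\hat{c}_j$ of the grouped noisy counts produced by that algorithm. Since the theorem statement is essentially a packaging of the two previously stated lemmas, the proof should be a short synthesis rather than new technical work.

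First I would verify the privacy guarantee. By Lemma~\ref{lem:dp_output_stream}, the full output stream $(\hat{c}_1,\hat{c}_2,\ldots,\hat{c}_T)$ of Algorithm~\ref{alg:DPgroup} is $\varepsilon$-DP. The mechanism I propose for the summing problem computes, at each timestamp $t$, the partial sum $\hat{S}_t = \sum_{j=1}^{t}\hat{c}_j$, which is a deterministic function of $(\hat{c}_1,\ldots,\hat{c}_t)$. Hence, by post-processing of differential privacy, the entire sequence of released prefix sums $(\hat{S}_1,\ldots,\hat{S}_T)$ is $\varepsilon$-DP as well, which is precisely the continual release privacy guarantee required by Definition~\ref{def:dp}.

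Next I would verify the accuracy. Apply Lemma~\ref{lem:acc_output_stream} with $l=1$ and $r=t$: with probability at least $1-\xi$, for every $t\in[T]$ simultaneously,
\begin{align*}
(1-\eta)\sum_{j=1}^{t} c_j - C \;\le\; \hat{S}_t \;\le\; (1+\eta)\sum_{j=1}^{t}c_j + C,
\end{align*}
where $C = \left(\tfrac{1}{\eta}+4\right)\cdot \tfrac{7}{\varepsilon_0}\cdot \ln(3T/\xi) = O\!\left(\tfrac{\log(T/\xi)}{\varepsilon\eta}\right)$ since $\varepsilon_0 = \varepsilon/2$. Because $1-\eta \ge 1/(1+\eta)$ for $\eta\in(0,0.5)$, this is a $(1+\eta,C)$-approximation to $\sm(c_1,\ldots,c_t) = \sum_{j=1}^{t}c_j$ in the sense defined in the preliminaries, which matches the claimed error bound.

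Finally I would address the space bound, which is the only point needing genuine (though minor) care. Algorithm~\ref{alg:DPgroup} as written only needs, at any time, the current group index $i$, the running sum $\sum_{j\in G_i}c_j$ over the currently open group (a single accumulator, not the list $G_i$ itself), the current threshold $\tau_i$, and scratch space for a single Laplace draw; all of these are $O(1)$ words. To output prefix sums I maintain an additional running accumulator $\hat{S}_t$ which updates as $\hat{S}_t = \hat{S}_{t-1}+\hat{c}_t$, again $O(1)$ words. So the total working space is $O(1)$ as required. The main ``obstacle'' is really just making sure that one reads Algorithm~\ref{alg:DPgroup} as maintaining $\sum_{j\in G_i} c_j$ incrementally rather than storing $G_i$; everything else follows immediately from the two lemmas plus post-processing.
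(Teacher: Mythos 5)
Your proposal is correct and follows essentially the same route as the paper's own proof: cite Lemma~\ref{lem:dp_output_stream} for privacy (via post-processing of prefix sums), instantiate Lemma~\ref{lem:acc_output_stream} with $l=1$, $r=t$ for accuracy, and observe that Algorithm~\ref{alg:DPgroup} only needs to keep the running group sum rather than $G_i$ itself, giving $O(1)$ space. The paper states these points more tersely but uses the same decomposition.
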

\begin{proof}
According to Lemma~\ref{lem:dp_output_stream}, the output stream of Algorithm~\ref{alg:DPgroup} is $\varepsilon$-DP, thus, we only need to solve non-private summing over $(\hat{c}_1,\hat{c}_2,\cdots,\hat{c}_T)$.
The approximation guarantee is given by Lemma~\ref{lem:acc_output_stream}.
Note that we do not need to store $G_i$, we only need to maintain the sum of numbers in $G_i$ at any timestamp.
Thus, the total space needed is $O(1)$.
\end{proof}

\section{Continual Released Number of Distinct Elements}
In this section, we show how to use $\varepsilon$-DP streaming continual release summing to solve $\varepsilon$-DP streaming continual release number of distinct elements.
In Section~\ref{sec:dist_small_universe}, we show how to estimate the number of distinct elements if the universe is small.
In Section~\ref{sec:dist_large_universe}, we reduce the number of distinct elements of a large universe to the number of distinct elements of a small universe via subsampling.

\subsection{Number of Distinct Elements for Small Universe}\label{sec:dist_small_universe}

\begin{algorithm}[H]\label{alg:distinct_small_universe}
\small
	\KwIn{A stream $\mathcal{S}$ of elements $a_1,a_2,\cdots,a_T\in \mathcal{U}\cup\{\perp\}$ with guarantee that $|\mathcal{U}|\leq m$.}
	\Parameter{Relative approximation factor $\alpha\geq 1$ and additive approximation factor $\gamma\geq 0$ depending on the streaming continual release summing algorithm.\\ \hfill{//See Theorem~\ref{thm:dp_summing}}. }
	\KwOut{Estimation of the number of distinct elements at every timestamp $t$.}
	Initialize an empty stream $\mathcal{C}$.\\
	Let $S\gets \emptyset$.\\
	\For {each $a_t$ in the stream $\mathcal{S}$}{
	    \If{$a_t\not\in S$ and $a_t\not=\perp$}{
	        $S\gets S\cup \{a_t\}$.\\
	        Append $1$ to the end of the stream $\mathcal{C}$.\\
	    }
	    \Else{
	        Append $0$ to the end of the stream $\mathcal{C}$.\\
	    }
	    Output an $(\alpha,\gamma)$-approximation to the total counts of $\mathcal{C}$.
	}
    \caption{Number of Distinct Elements for Small Universe}
\end{algorithm}

\begin{lemma}\label{lem:approx_distinct_small_universe}
At the end of any time $t\in[T]$, the output of Algorithm~\ref{alg:distinct_small_universe} is an $(\alpha,\gamma)$-approximation to the number of distinct elements.
\end{lemma}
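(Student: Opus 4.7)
The plan is to establish the key invariant that the sum of the binary stream $\mathcal{C}$ up to time $t$ is \emph{exactly} equal to the number of distinct (non-$\perp$) elements in $a_1, a_2, \ldots, a_t$. Once this is shown, the lemma follows immediately from the fact that the algorithm outputs an $(\alpha, \gamma)$-approximation to the total count of $\mathcal{C}$, which is the same as an $(\alpha, \gamma)$-approximation to the number of distinct elements.

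First I would argue the invariant by induction on $t$. At $t = 0$, both quantities are $0$. For the inductive step, consider the incoming element $a_t$. If $a_t = \perp$, then the number of distinct non-$\perp$ elements does not change, and the algorithm appends $0$ to $\mathcal{C}$, so the partial sum does not change either. If $a_t \in S$, then $a_t$ has already been seen and the number of distinct elements again does not change, and again $0$ is appended to $\mathcal{C}$. If $a_t \notin S$ and $a_t \neq \perp$, then this is the first appearance of $a_t$, so the number of distinct elements increases by exactly $1$; meanwhile the algorithm appends $1$ to $\mathcal{C}$, and $S$ is updated to include $a_t$ so that subsequent appearances of $a_t$ will hit the other branch. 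This matches the update to the partial sum of $\mathcal{C}$.

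Having established that the partial sum of $\mathcal{C}$ up to time $t$ equals $\|(a_1, \ldots, a_t)\|_0$, the conclusion follows directly: the algorithm, by construction, outputs a value $y$ satisfying $\tfrac{1}{\alpha}\cdot \sum_{j \leq t}\mathcal{C}_j - \gamma \leq y \leq \alpha \cdot \sum_{j \leq t}\mathcal{C}_j + \gamma$, and substituting the invariant gives $\tfrac{1}{\alpha}\cdot \|(a_1,\ldots,a_t)\|_0 - \gamma \leq y \leq \alpha\cdot \|(a_1,\ldots,a_t)\|_0 + \gamma$, i.e., $y$ is an $(\alpha, \gamma)$-approximation to the number of distinct elements.

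There is no real obstacle here: the claim is essentially a bookkeeping identity together with an invocation of the approximation guarantee of the underlying summing subroutine. The slightly subtle point worth writing out carefully is the case analysis ensuring that the stream $\mathcal{C}$ does not double-count repeated occurrences of the same element, which the maintenance of $S$ handles cleanly. Note that this lemma only addresses approximation accuracy; the privacy guarantee of the overall algorithm will be treated separately (via the privacy of the underlying continual-release summing procedure from Theorem~\ref{thm:dp_summing} together with the fact that the mapping $\mathcal{S} \mapsto \mathcal{C}$ has bounded sensitivity, as discussed in the techniques overview).
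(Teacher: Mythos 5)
Your proof is correct and takes essentially the same approach as the paper: the key observation that the running sum of $\mathcal{C}$ equals the number of distinct elements, followed by the direct transfer of the $(\alpha,\gamma)$-approximation guarantee. You merely spell out the induction in more detail than the paper's one-line justification.
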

\begin{proof}
Since we append $1$ to the stream $\mathcal{C}$ if and only if we see a new non-empty element, the total counts in $\mathcal{C}$ is always equal to the number of distinct elements at the end of any time $t\in[T]$.
Thus, an $(\alpha,\gamma)$-approximation to the total counts of $\mathcal{C}$ is an $(\alpha,\gamma)$-approximation to the number of distinct elements.

\end{proof}

\begin{lemma}\label{lem:DP_distinct_small_universe}
If the algorithm to continually release the approximate total counts of $\mathcal{C}$ in Algorithm~\ref{alg:distinct_small_universe} is $\varepsilon$-DP, Algorithm~\ref{alg:distinct_small_universe} is $5\varepsilon$-DP in the continual release model.
\end{lemma}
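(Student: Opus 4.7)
}
The plan is to view Algorithm~\ref{alg:distinct_small_universe} as the composition of (i) a deterministic mapping $\mathcal{F}$ that sends the input stream $\mathcal{S}=(a_1,\ldots,a_T)$ over $\mathcal{U}\cup\{\perp\}$ to the binary stream $\mathcal{C}(\mathcal{S})\in\{0,1\}^T$ defined by $\mathcal{C}(\mathcal{S})_t=1$ iff $a_t\neq \perp$ and $a_t\notin\{a_1,\ldots,a_{t-1}\}$, followed by (ii) the $\varepsilon$-DP continual release summing algorithm applied to $\mathcal{C}(\mathcal{S})$. The output at every timestamp of Algorithm~\ref{alg:distinct_small_universe} is a post-processing of this summing algorithm's output, so by Theorem~\ref{thm:composition} it suffices to bound the sensitivity of $\mathcal{F}$ as a stream mapping, after which the overall algorithm is $s\varepsilon$-DP where $s$ is the sensitivity bound.

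The core of the proof is therefore a case analysis showing $\dis(\mathcal{C}(\mathcal{S}),\mathcal{C}(\mathcal{S}'))\leq 5$ for any neighboring $\mathcal{S},\mathcal{S}'$ that differ at a single timestamp $t$, say with $a_t=a$ and $a_t'=b$. I will adopt the decomposition $a\to\perp\to b$ and argue that each of the two atomic edits flips at most $2$ entries of the binary stream $\mathcal{C}$. Concretely, consider the edit $a\to\perp$: if $a$ is not a first occurrence at time $t$ in $\mathcal{S}$, the binary stream is unchanged; otherwise $\mathcal{C}_t$ flips from $1$ to $0$ and the entry at the earliest later index $t'>t$ with $a_{t'}=a$ (if any) flips from $0$ to $1$, because $t'$ becomes the new first occurrence of $a$. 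This yields at most $2$ flips. The symmetric argument for $\perp\to b$ (using the earliest $t''>t$ with $a_{t''}=b$ in $\mathcal{S}$, when it exists) gives at most $2$ further flips. Summing over the two atomic edits and accounting conservatively for the possibility that $a=\perp$ or $b=\perp$, one obtains $\dis(\mathcal{C}(\mathcal{S}),\mathcal{C}(\mathcal{S}'))\leq 5$.

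Once the sensitivity bound $s\leq 5$ is established, the lemma follows immediately: the algorithm that continually releases the approximate prefix sums of $\mathcal{C}$ is $\varepsilon$-DP by assumption, so by Theorem~\ref{thm:composition} applied to the single-stream mapping $\mathcal{F}$ the end-to-end algorithm is $5\varepsilon$-DP in the continual release model. The internal state $S$ tracked by the algorithm does not leak any additional information, since it is never released and influences outputs only through $\mathcal{C}$.

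The main obstacle is the case analysis for the sensitivity bound, particularly being careful about the possible cancellations or coincidences at index $t$ when decomposing $a\to\perp\to b$ (for example, when $a$ or $b$ first appears at $t$, or when the first occurrence of $b$ in $\mathcal{S}$ lies earlier than $t$). I would organize the casework by whether $t$ is the first occurrence of $a$ in $\mathcal{S}$ (resp.\ of $b$ in $\mathcal{S}'$) and whether $a,b$ have later occurrences, verifying in each case that the decomposition argument overcounts by at most a small constant so that the $\leq 5$ bound holds uniformly. This is bookkeeping rather than substantive mathematics, but it is where a sloppy argument could silently give the wrong constant in front of $\varepsilon$.
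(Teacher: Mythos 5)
Your proposal is correct and establishes the claimed sensitivity bound, but it argues via a genuinely different decomposition than the paper's proof of Lemma~\ref{lem:DP_distinct_small_universe}.  The paper's official proof works directly with $\mathcal{S}$ and $\mathcal{S}'$ (no intermediate stream) and enumerates the five positions of $\mathcal{C}$ that could possibly differ: the timestamp $t$ itself and the positions of the first and second occurrences of each of $a_t$ and $a'_t$, observing that any later occurrence of either symbol contributes $c_i = c_i' = 0$ in both count streams.  You instead insert the intermediate neighboring stream $\mathcal{S}''$ obtained by replacing $a_t$ with $\perp$, bound $\dis(\mathcal{C}(\mathcal{S}),\mathcal{C}(\mathcal{S}''))\leq 2$ and $\dis(\mathcal{C}(\mathcal{S}''),\mathcal{C}(\mathcal{S}'))\leq 2$ separately, and apply the triangle inequality; this is the style of argument sketched in the paper's ``Our Techniques'' section, and it in fact yields the tighter bound $\leq 4$.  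One small glitch: your ``accounting conservatively for the possibility that $a=\perp$ or $b=\perp$'' does not explain going from $4$ to $5$, since those cases only make one of the two atomic edits trivial and therefore reduce the flip count; the bound you actually prove is $4$, which of course implies the $5\varepsilon$-DP guarantee stated in the lemma, so the conclusion is unaffected.  Your invocation of Theorem~\ref{thm:composition} with $k=1$ and post-processing of the summing subroutine's output, and the remark that the internal state $S$ never leaves the algorithm, are both correct and match the paper's concluding step.
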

\begin{proof}
Consider two neighboring stream $\mathcal{S}=(a_1,a_2,\cdots,a_T)$ and $\mathcal{S}'=(a_1',a_2',\cdots,a_T')$ of elements in $\mathcal{U}$ where they only differ at timestamp $t$, i.e., $a_t\not=a_{t}'$.
Let us consider the difference between the generated count stream $\mathcal{C}=(c_1,c_2,\cdots,c_T)$ and $\mathcal{C}'=(c_1',c_2',\cdots,c'_T)$.

Consider any timestamp $i\in [T]$, if $a_i\not=a_t$ and $a_i\not=a_t'$, it is easy to verify that $c_i=c_i'$.
Suppose $i\not=t$.
If $a_i=a_t=u\in\mathcal{U}$ but $a_i$ is at least the third appearance of $u$ in $\mathcal{S}$, then $a_i'=a_i=u$ is at least the second appearance of $u$ in $\mathcal{S}'$ which implies that $c_i=c_i'=0$.
Similarly, if $a_i=a_t'=u\in\mathcal{U}$ is at least third appearance of $u$ in $\mathcal{S}$, we can show $c_i=c_i'=0$ as well.
Thus the sensitivity of the stream $\mathcal{C}$ is at most $5$: only when $i=t$ or $a_i$ is the first/second appearance of $a_t,a_{t'}$, $c_i$ might be different from $c_i'$.
Therefore, if we use an $\varepsilon$-DP algorithm to continually release the total counts of $\mathcal{C}$, the continually released output of Algorithm~\ref{alg:distinct_small_universe} is $(5\cdot\varepsilon)$-DP.
\end{proof}

\begin{theorem}\label{thm:dp_distinct_small_universe}
Let $\varepsilon\geq 0,\xi\in(0,0.5)$, suppose there is an $\varepsilon$-DP streaming continual release summing algorithm (for stream of non-negative numbers) which uses space $J$ and with probability at least $1-\xi$ always outputs an $(\alpha,\gamma)$-approximation for every timestamp.
There is a $(5\varepsilon)$-DP algorithm for the number of distinct elements of streams with universe size at most $m$ in the streaming continual release model.
With probability at least $1-\xi$, the algorithm always outputs an $(\alpha,\gamma)$-approximation for every timestamp $t\in [T]$. 
The algorithm uses $O(m+J)$ space.
\end{theorem}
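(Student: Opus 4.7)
The plan is to instantiate Algorithm~\ref{alg:distinct_small_universe} using the assumed $\varepsilon$-DP streaming continual release summing algorithm as the subroutine that continually releases approximations to the total counts of the auxiliary stream $\mathcal{C}$. Once this is done, the three claims of the theorem (privacy, accuracy, and space) will follow directly from the two lemmas already established plus a short bookkeeping argument.

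First I would establish the privacy guarantee by appealing to Lemma~\ref{lem:DP_distinct_small_universe}. That lemma shows that for any two neighboring input streams $\mathcal{S}, \mathcal{S}'$, the induced binary streams $\mathcal{C}, \mathcal{C}'$ differ in at most $5$ positions (i.e., the mapping $\mathcal{S} \mapsto \mathcal{C}$ has sensitivity at most $5$), and hence feeding $\mathcal{C}$ into the $\varepsilon$-DP summing algorithm produces continual outputs that are $(5\varepsilon)$-DP by the group privacy / composition principle encoded in Theorem~\ref{thm:composition}. Since the algorithm's output at every timestamp is a deterministic function of what the summing subroutine has released so far, post-processing preserves $(5\varepsilon)$-DP for the entire output history.

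Next I would combine Lemma~\ref{lem:approx_distinct_small_universe} with the assumed accuracy of the summing subroutine to get the utility guarantee. Lemma~\ref{lem:approx_distinct_small_universe} says that the true total count of $\mathcal{C}$ up to timestamp $t$ equals the number of distinct elements in $\mathcal{S}$ up to time $t$. By assumption, with probability at least $1-\xi$, the summing algorithm outputs an $(\alpha, \gamma)$-approximation to the true total count of $\mathcal{C}$ at every timestamp simultaneously. Composing these two facts, with probability at least $1-\xi$ the output of Algorithm~\ref{alg:distinct_small_universe} is an $(\alpha, \gamma)$-approximation to $\|(a_1,\ldots,a_t)\|_0$ at every $t \in [T]$.

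Finally, for the space bound, I would observe that the only state maintained by Algorithm~\ref{alg:distinct_small_universe} beyond the summing subroutine is the set $S \subseteq \mathcal{U}$ of elements already seen, which uses $O(m)$ space since $|\mathcal{U}| \leq m$. The summing subroutine uses $J$ space by assumption, and note that the stream $\mathcal{C}$ is generated on the fly and passed one bit at a time to the subroutine, so it does not need to be stored. Thus the total space is $O(m + J)$, completing all three parts. The proof is essentially a composition of the two preceding lemmas, so there is no serious obstacle; the only point that requires a bit of care is noting that the sensitivity bound of $5$ in Lemma~\ref{lem:DP_distinct_small_universe} is what produces the factor of $5$ in the privacy parameter via Theorem~\ref{thm:composition}, rather than a naive factor that would blow up with the stream length.
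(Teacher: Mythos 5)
Your proposal is correct and follows essentially the same route as the paper: invoke Lemma~\ref{lem:approx_distinct_small_universe} for accuracy, Lemma~\ref{lem:DP_distinct_small_universe} for the $5\varepsilon$-DP bound, and account for the $O(m)$ set plus the $J$-space summing subroutine. The extra remarks about post-processing and on-the-fly generation of $\mathcal{C}$ are sound and only spell out what the paper leaves implicit.
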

\begin{proof}
Consider Algorithm~\ref{alg:distinct_small_universe}.
The approximation guarantee is proven by Lemma~\ref{lem:approx_distinct_small_universe}.
The DP guarantee is proven by Lemma~\ref{lem:DP_distinct_small_universe}.
In the remaining of the proof, we only need to prove the space usage.
Since $|\mathcal{U}|\leq m$, the space needed to maintain set $S$ is at most $m$.
The space needed to continually release an $(\alpha,\gamma)$-approximation to the summing problem over $\mathcal{C}$ is at most $\mathcal{J}$.
Thus, the total space needed is at most $O(m+J)$.
\end{proof}

By combining the above theorem with Theorem~\ref{thm:dp_summing}, we obtain the following corollary.
\begin{corollary}[Streaming continual release distinct elements for small universe]\label{cor:distinct_small_universe}
There is an $\varepsilon$-DP algorithm for the number of distinct elements of streams with universe size at most $m$ in the streaming continual release model.
With probability at least $1-\xi$, the additive error of the output is always at most $O\left(\frac{1}{\varepsilon}\log^{2.5}(T)\log\left(\frac{1}{\xi}\right)\right)$ for every timestamp $t\in [T]$. 
The algorithm uses $O(m+\log(T))$ space.
\end{corollary}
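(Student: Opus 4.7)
The plan is to derive this as a direct instantiation of Theorem~\ref{thm:dp_distinct_small_universe} using the binary tree mechanism of Theorem~\ref{thm:dp_summing} as the summing subroutine, and to track the constants so that the overall privacy budget comes out to exactly $\varepsilon$.

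First, I would invoke Theorem~\ref{thm:dp_summing} with privacy parameter $\varepsilon' = \varepsilon/5$ and failure probability $\xi$. This yields an $(\varepsilon/5)$-DP streaming continual release algorithm for summing that, with probability at least $1-\xi$, outputs at every timestamp a value with additive error at most $O\!\left(\tfrac{1}{\varepsilon'}\log^{2.5}(T)\log(1/\xi)\right) = O\!\left(\tfrac{1}{\varepsilon}\log^{2.5}(T)\log(1/\xi)\right)$, using $O(\log T)$ space. Note that this summing algorithm already handles non-negative integer streams (in fact arbitrary integer streams), so it is a valid choice for the $(\alpha,\gamma)$-approximate summing primitive required by Theorem~\ref{thm:dp_distinct_small_universe} with $\alpha = 1$ and $\gamma = O\!\left(\tfrac{1}{\varepsilon}\log^{2.5}(T)\log(1/\xi)\right)$.

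Next, I would plug this summing algorithm into Theorem~\ref{thm:dp_distinct_small_universe}. That theorem guarantees that if the summing subroutine is $(\varepsilon/5)$-DP, then the resulting distinct elements algorithm is $5\cdot(\varepsilon/5) = \varepsilon$-DP in the continual release model. The approximation guarantee transfers verbatim: with probability at least $1-\xi$, at every timestamp $t\in[T]$ the output is a $(1,\gamma)$-approximation to the number of distinct elements, i.e., the additive error is at most $O\!\left(\tfrac{1}{\varepsilon}\log^{2.5}(T)\log(1/\xi)\right)$. The space bound follows from Theorem~\ref{thm:dp_distinct_small_universe}: $O(m+J) = O(m + \log T)$, where the $m$ term accounts for maintaining the set $S$ of previously seen elements and the $\log T$ term is the working space of the binary tree mechanism.

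There is no real obstacle here; the corollary is a mechanical composition. The only mild point worth stating explicitly is the factor-$5$ scaling of the privacy parameter passed to the summing subroutine, which is forced by the sensitivity-$5$ analysis of the induced binary stream $\mathcal{C}$ in Lemma~\ref{lem:DP_distinct_small_universe}; this scaling changes the additive error only by a constant factor and is absorbed into the $O(\cdot)$ notation.
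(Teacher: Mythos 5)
Your proposal is correct and matches the paper's intended argument exactly: instantiate Theorem~\ref{thm:dp_distinct_small_universe} with the binary-tree summing mechanism of Theorem~\ref{thm:dp_summing}, rescaling the privacy budget passed to the subroutine by $1/5$ to offset the sensitivity-$5$ factor from Lemma~\ref{lem:DP_distinct_small_universe}. The paper compresses this into a one-line remark before the corollary, but the composition you spell out, including the constant-factor absorption in the additive error, is precisely what is meant.
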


By combining Theorem~\ref{thm:dp_distinct_small_universe} with Theorem~\ref{thm:summing_non_negative}, we obtain the following corollary:
\begin{corollary}[Streaming continual release distinct elements for small universe, better additive error]\label{cor:distinct_small_universe_better_additive}
There is an $\varepsilon$-DP algorithm for the number of distinct elements of streams with universe size at most $m$ in the streaming continual release model.
With probability at least $1-\xi$, the additive error of the output is always an $\left(1+\eta,O\left(\frac{\log(T/\xi)}{\varepsilon\eta}\right)\right)$-approximation to the number of distinct elements for every timestamp $t\in [T]$. 
The algorithm uses $O(m)$ space.
\end{corollary}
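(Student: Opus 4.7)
The plan is to obtain the corollary by a direct substitution: use the non-negative summing algorithm of Theorem~\ref{thm:summing_non_negative} as the summing subroutine required by Theorem~\ref{thm:dp_distinct_small_universe}. The only bookkeeping is to calibrate the privacy budget so that, after the sensitivity-$5$ amplification incurred by the reduction in Algorithm~\ref{alg:distinct_small_universe}, the final guarantee is $\varepsilon$-DP rather than $5\varepsilon$-DP.

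Concretely, I would first instantiate Theorem~\ref{thm:summing_non_negative} with privacy parameter $\varepsilon/5$ and failure probability $\xi$. Since the count stream $\mathcal{C}$ produced by Algorithm~\ref{alg:distinct_small_universe} is binary (in particular non-negative), the preconditions of Theorem~\ref{thm:summing_non_negative} are met. This yields an $(\varepsilon/5)$-DP continual-release summing algorithm whose output, with probability at least $1-\xi$, is always a
$
\bigl(1+\eta,\; O(\log(T/\xi)/((\varepsilon/5)\eta))\bigr)
= \bigl(1+\eta,\; O(\log(T/\xi)/(\varepsilon\eta))\bigr)
$
approximation to the running sum, and uses $J = O(1)$ space.

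Next, I would plug this summing subroutine into Theorem~\ref{thm:dp_distinct_small_universe} with $\alpha = 1+\eta$ and $\gamma = O(\log(T/\xi)/(\varepsilon\eta))$. By that theorem the resulting continual-release distinct-elements algorithm is $5\cdot(\varepsilon/5) = \varepsilon$-DP, inherits the $(1+\eta,\gamma)$-approximation with probability $1-\xi$ at every timestamp, and uses space $O(m+J) = O(m+1) = O(m)$, giving exactly the claimed bounds.

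There is essentially no obstacle: the accuracy and privacy analyses (including the key observation that flipping a single input element changes at most $5$ entries of $\mathcal{C}$) are already encapsulated in Theorem~\ref{thm:dp_distinct_small_universe}, and the non-negativity hypothesis needed by Theorem~\ref{thm:summing_non_negative} is automatic because $\mathcal{C}$ takes values in $\{0,1\}$. The only mild subtlety is remembering the constant factor $5$ when setting the privacy budget of the summing subroutine, which is what absorbs the sensitivity of the reduction.
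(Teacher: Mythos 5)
Your proposal is correct and is exactly the paper's route: the paper obtains Corollary~\ref{cor:distinct_small_universe_better_additive} by combining Theorem~\ref{thm:dp_distinct_small_universe} with Theorem~\ref{thm:summing_non_negative}, and the factor-$5$ rescaling of the privacy budget (to absorb the sensitivity of the reduction) is precisely the implicit bookkeeping that makes the final guarantee $\varepsilon$-DP. Your instantiation with $\alpha = 1+\eta$, $\gamma = O(\log(T/\xi)/(\varepsilon\eta))$, and $J = O(1)$ reproduces the stated bounds.
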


\subsection{Number of Distinct Elements for General Universe}\label{sec:dist_large_universe}

\begin{algorithm}[H]\label{alg:distinct_large_universe}
\small
	\KwIn{A stream $\mathcal{S}$ of elements $a_1,a_2,\cdots,a_T\in \mathcal{U}\cup \{\perp\}$, and a error parameter $\eta\in (0,0.5)$.}
	\Parameter{Relative approximation factor $\alpha\geq 1$ and additive approximation factor $\gamma\geq 0$ depending on the streaming continual release algorithm for number of distinct elements of streams with small universe of elements. \hfill{//See Theorem~\ref{thm:dp_distinct_small_universe}}.}
	\KwOut{Estimation of the number of distinct elements $\|\mathcal{S}\|_0$.}
	$L\gets\lceil \log \min(|\mathcal{U}|,T) \rceil,\lambda\gets 2\log (1000 L), m\gets 100L\cdot \left(16\alpha\max\left(\gamma/\eta, 32\alpha\lambda/\eta^2\right)\right)^2$.\\
	Let $h:\mathcal{U}\rightarrow [m]$ be a pairwise independent hash function. \\\hfill{//Here we treat $[m]$ as a universe of elements with size $m$ instead of a set of integers.}\\
	Let $g:\mathcal{U}\rightarrow [L]\cup \{\perp\}$ be a $\lambda$-wise independent hash function and $\forall a\in \mathcal{U},i\in[L],\Pr[g(a)=i]=2^{-i},\Pr[g(a)=\perp]=2^{-L}$.\\
	Initialize empty streams $\mathcal{S}_1,\mathcal{S}_2,\cdots,\mathcal{S}_L$.\\
	\For{ each $a_t$ in the stream $\mathcal{S}$}
	{
	    \For {$i\in[L]$} 
	    {
	    \If{$a_t\not=\perp$ and $g(a_t)=i$}{
	        Append $h(a_t)$ to the end of the stream $\mathcal{S}_{i}$.
	    }
	    \Else {
	        Append $\perp$ to the end of the stream $\mathcal{S}_{i}$.
	    }
	    }
	    $\forall i\in [L],$ compute $\hat{s}_i$ which is an $(\alpha,\gamma)$-approximation to  $\|\mathcal{S}_i\|_0$.\\
	Find the largest $i\in [L]$ such that $\hat{s}_i\geq \max\left(\gamma/\eta, 32\alpha\lambda/\eta^2\right)$, and output $\hat{s}_i\cdot 2^i$.\\
	If such $i$ does not exist, output $0$.
	}
    \caption{Number of Distinct Elements via Subsampling}
\end{algorithm}

\begin{lemma}\label{lem:correctness_distinct_large}
Consider any timestamp $t\in[T]$.
Let $v$ be the output of Algorithm~\ref{alg:distinct_large_universe}.
With probability at least $0.9$, $v$ is a $((1+O(\eta))\alpha,O(\alpha^2\max(\gamma/\eta,\alpha\log(L)/\eta^2)))$-approximation to $\|(a_1,a_2,\cdots,a_t)\|_0$.
\end{lemma}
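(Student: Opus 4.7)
Fix a timestamp $t$ and let $N=\|(a_1,\dots,a_t)\|_0$, with $D\subseteq\mathcal{U}$ the set of distinct non-$\perp$ elements so that $|D|=N$. Write $\tau:=\max(\gamma/\eta,32\alpha\lambda/\eta^2)$ for the threshold used by the algorithm, and for each $i\in[L]$ set $X_i:=|\{a\in D:g(a)=i\}|$, which is a sum of $\lambda$-wise independent Bernoulli$(2^{-i})$ indicators with mean $\mu_i=N\cdot 2^{-i}$. Let $i^*$ denote the largest level with $\mu_{i^*}\geq 16\alpha\tau$ (if no such level exists then $N<16\alpha\tau$, a case the $O(\alpha^2\tau)$ additive error will subsume).

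The plan is to establish three good events with joint probability at least $0.9$ and then analyze the algorithm's chosen level $\hat{i}$. First, the \emph{concentration event} $\mathcal{E}_1$: at every $i\in[L]$, $|X_i-\mu_i|\leq A_i:=\max(\eta\mu_i/4,\tau/(4\alpha))$. I would apply Lemma~\ref{lem:concentration} separately at each level with $\lambda=2\log(1000L)$; both in the multiplicative regime ($A_i=\eta\mu_i/4$, $\mu_i$ large) and the additive regime ($A_i=\tau/(4\alpha)$, $\mu_i$ small) the bound $\tau\geq 32\alpha\lambda/\eta^2$ forces $(\lambda\mu_i+\lambda^2)/A_i^2\leq 1/4$, so per-level failure is at most $8(1000L)^{-2}$ and a union bound gives $\Pr[\mathcal{E}_1]\geq 1-o(1)$. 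Second, the \emph{no-collision event} $\mathcal{E}_2$: for every $i\geq i^*$, $h$ is injective on $\{a\in D:g(a)=i\}$. Under $\mathcal{E}_1$ this set has size at most $2\cdot 16\alpha\tau=32\alpha\tau$, and since $h$ is pairwise independent into $[m]$ with $m=100L(16\alpha\tau)^2$, the birthday bound gives per-level collision probability $\leq(32\alpha\tau)^2/m=1/(25L)$; a union bound over the $\leq L$ relevant levels yields $\Pr[\mathcal{E}_2\mid\mathcal{E}_1]\geq 1-1/25$, so $\|\mathcal{S}_i\|_0=X_i$ for every $i\geq i^*$. Third, the \emph{estimator-accuracy event} $\mathcal{E}_3$: by running each of the $L$ sub-stream estimators with individual failure probability $\leq 1/(100L)$, $\hat{s}_i$ is an $(\alpha,\gamma)$-approximation to $\|\mathcal{S}_i\|_0$ for every $i$, with probability $\geq 1-1/100$. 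The three events hold simultaneously with probability at least $0.9$.

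Conditioning on $\mathcal{E}_1\cap\mathcal{E}_2\cap\mathcal{E}_3$, I would bracket $\hat{i}$ in two steps. From $X_{i^*}\geq(1-\eta/4)\mu_{i^*}\geq 14\alpha\tau$ together with no collisions at level $i^*$, $\hat{s}_{i^*}\geq X_{i^*}/\alpha-\gamma\geq 14\tau-\gamma>\tau$, so $\hat{i}\geq i^*$. Conversely, any $i$ with $\mu_i\leq\tau/(5\alpha)$ satisfies $X_i\leq\mu_i+\tau/(4\alpha)\leq 9\tau/(20\alpha)<\tau/(2\alpha)$ under $\mathcal{E}_1$, so $\hat{s}_i\leq\tau/2+\gamma\leq\tau$ and $i$ cannot be picked; hence $\mu_{\hat{i}}\geq\tau/(5\alpha)$ and in particular $2^{\hat{i}}\leq 5\alpha N/\tau$. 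The output $v=\hat{s}_{\hat{i}}\cdot 2^{\hat{i}}$ then obeys
\[
\tfrac{1}{\alpha}N-\bigl(\tfrac{A_{\hat{i}}}{\alpha}+\gamma\bigr)2^{\hat{i}}\ \leq\ v\ \leq\ \alpha N+\bigl(\alpha A_{\hat{i}}+\gamma\bigr)2^{\hat{i}},
\]
via $\mu_{\hat{i}}\cdot 2^{\hat{i}}=N$ combined with the two-sided bounds on $X_{\hat{i}}$ and $\hat{s}_{\hat{i}}$. A case split then finishes: if $\hat{i}$ is in the multiplicative regime ($A_{\hat{i}}=\eta\mu_{\hat{i}}/4$), the error collapses to $\eta\alpha N/4+\eta^2\alpha N=O(\eta\alpha N)$; if in the additive regime ($A_{\hat{i}}=\tau/(4\alpha)$), the product $\tau\cdot 2^{\hat{i}}$ is controlled either by $\tau\cdot 2^{\hat{i}}\leq 5\alpha N$ (absorbed into the multiplicative factor $(1+O(\eta))\alpha$) or, for small $N$, by the allowed additive slack $O(\alpha^2\tau)$, and the $N<16\alpha\tau$ branch where $i^*$ is undefined is handled directly since the algorithm either outputs $0$ or an estimate within $O(\alpha^2\tau)$.

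The main obstacle is the intermediate range $\mu_{\hat{i}}\in[\tau/(5\alpha),\tau/(\alpha\eta)]$, where the additive concentration error $\tau/(4\alpha)$ is amplified by $2^{\hat{i}}$ into a term of order $\tau\cdot 2^{\hat{i}}=\Theta(\alpha N)$ in the estimate. Verifying that this contribution is absorbed either into the multiplicative factor $(1+O(\eta))\alpha$ (naturally in downstream uses where $\alpha$ is close to $1$) or into the $O(\alpha^2\tau)$ additive term—by distinguishing the regime $N\lesssim\alpha^2\tau$ (where the additive allowance dominates) from $N\gtrsim\alpha^2\tau$ (where $\mu_{\hat{i}}$ is pushed above $\tau/(\alpha\eta)$ so tight multiplicative concentration is in force)—is the most delicate bookkeeping in the proof.
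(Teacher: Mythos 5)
Your overall plan — concentration, no-collision, estimator accuracy, bracket the selected level $\hat i$, then amplify by $2^{\hat i}$ — is the same as the paper's, but a single choice at the first step creates a gap that you notice and do not close.

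You define the concentration event as $|X_i-\mu_i|\leq\max(\eta\mu_i/4,\tau/(4\alpha))$ where $\tau=\max(\gamma/\eta,32\alpha\lambda/\eta^2)$. The additive threshold $\tau/(4\alpha)\geq 8\lambda/\eta^2$ is much larger than necessary, so your ``multiplicative regime'' only begins once $\mu_i\geq\tau/(\alpha\eta)$. Your bracketing argument, however, only yields $\mu_{\hat i}\geq\tau/(5\alpha)$, and the algorithm (which takes the \emph{largest} level clearing $\tau$) genuinely lands in the window $\mu_{\hat i}\in[\tau/(5\alpha),\tau/(\alpha\eta)]$ with $\mu_{\hat i}=\Theta(\tau/\alpha)$, for \emph{every} value of $N$, not just small ones. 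In that window your additive concentration error is $\tau/(4\alpha)$, and amplifying by $2^{\hat i}\leq 5\alpha N/\tau$ gives an error of order $(\tau/(4\alpha))\cdot 5\alpha N/\tau=\tfrac{5}{4}N=\Theta(N)$ — a \emph{constant}-factor multiplicative error, not $1+O(\eta)$. The two rescues you sketch do not work: a $\Theta(1)$ relative error cannot be ``absorbed into the multiplicative factor $(1+O(\eta))\alpha$,'' and the case $N\lesssim\alpha^2\tau$ does not cover the regime, since $\hat i$ sits at $\mu_{\hat i}=\Theta(\tau/\alpha)$ even when $N$ is enormous.

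The paper avoids this by using a far smaller additive concentration threshold — $4\lambda/\eta$ (Claim~\ref{cla:good_concentration}, event $\mathcal{E}$) — so its multiplicative regime begins already at $\mu_i\geq 4\lambda/\eta^2$. It then uses the selection condition $\hat s_{i'}\geq\tau$ together with no-collision and the $(\alpha,\gamma)$ guarantee to deduce $|G_{i'}|\geq 16\lambda/\eta^2$, which (by contradiction with the $\mu_i<4\lambda/\eta^2$ additive bound) forces $\mu_{i'}\geq 4\lambda/\eta^2$, so the \emph{selected} level is always in the $(1\pm\eta)\mu_{i'}$ regime; amplification then contributes only $O(\eta N)$ relative error. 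Note that your own bound $\mu_{\hat i}\geq\tau/(5\alpha)\geq 6\lambda/\eta^2>4\lambda/\eta^2$ would have sufficed \emph{with the paper's event} $\mathcal{E}$ — it simply does not suffice with yours. Replacing $\tau/(4\alpha)$ by $4\lambda/\eta$ (and correspondingly adjusting the per-level size used in the birthday bound for $\mathcal{E}_2$, where you also slightly underestimate $|G_i|$: under $\mathcal{E}_1$ levels $i\geq i^*$ can have $X_i$ up to roughly $(1+\eta/4)\cdot 32\alpha\tau+\tau/(4\alpha)$, not $32\alpha\tau$, though $m=100L(16\alpha\tau)^2$ absorbs this) repairs the argument and recovers the paper's proof.
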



To prove Lemma~\ref{lem:correctness_distinct_large}, we need following intermediate statements.
We consider a timestamp $t\in [T]$.
Let $\mathcal{S}$ denote the input stream at timestamp $t$, i.e., $\mathcal{S}=(a_1,a_2,\cdots,a_t)$.
Let $G_i = \{a_j \mid g(a_j) = i, j\leq t \}$ for $i \in [L]$.
\begin{claim}\label{cla:good_concentration}
$\forall i \in [L]$, if $\|\mathcal{S}\|_0\geq 2^i\cdot 4\lambda/\eta^2, \Pr[|G_i| \in (1\pm \eta) \cdot\|\mathcal{S}\|_0/2^i]\geq 1 - 0.01/L$.
Otherwise, $\Pr[||G_i|-\|\mathcal{S}\|_0/2^i|\leq4\lambda/\eta] \geq 1 - 0.01/L$.
\end{claim}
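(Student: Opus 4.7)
The plan is a direct application of the Bellare-Rompel concentration inequality (Lemma~\ref{lem:concentration}) to the indicator variables arising from the hash function $g$. Let $D \subseteq \mathcal{U}$ be the set of distinct non-$\perp$ elements that appear among $a_1, \ldots, a_t$, so $|D| = \|\mathcal{S}\|_0$. For each $a \in D$ define $X_a := \mathbf{1}_{g(a) = i} \in \{0,1\}$; since $g$ is $\lambda$-wise independent with $\Pr[g(a)=i] = 2^{-i}$, the family $\{X_a\}_{a \in D}$ is $\lambda$-wise independent, and $X := \sum_{a \in D} X_a = |G_i|$ with $\mu := \mathbb{E}[X] = \|\mathcal{S}\|_0 / 2^i$. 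So the hypotheses of Lemma~\ref{lem:concentration} are met.

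I would then split into the two stated regimes, choosing in each the deviation parameter $A$ of Lemma~\ref{lem:concentration} so that $(\lambda\mu + \lambda^2)/A^2 \le 1/2$. In the large-$\mu$ case $\mu \ge 4\lambda/\eta^2$, set $A = \eta\mu$; using $\lambda \le \mu\eta^2/4$ gives $\lambda^2 \le \lambda\mu$ and hence $\lambda\mu + \lambda^2 \le 2\lambda\mu$, so
\[
\frac{\lambda\mu + \lambda^2}{A^2} \;\le\; \frac{2\lambda\mu}{\eta^2\mu^2} \;=\; \frac{2\lambda}{\eta^2\mu} \;\le\; \frac{1}{2}.
\]
In the small-$\mu$ case $\mu < 4\lambda/\eta^2$, set $A = 4\lambda/\eta$; then $\lambda\mu < 4\lambda^2/\eta^2$ and $\lambda^2 \le \lambda^2/\eta^2$ (using $\eta < 1$), so $\lambda\mu + \lambda^2 \le 5\lambda^2/\eta^2$ while $A^2 = 16\lambda^2/\eta^2$, giving ratio $5/16 \le 1/2$.

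Plugging into Lemma~\ref{lem:concentration}, the failure probability in either regime is at most
\[
8 \cdot (1/2)^{\lambda/2} \;=\; 8 \cdot 2^{-\log(1000L)} \;=\; \frac{8}{1000\, L} \;<\; \frac{0.01}{L},
\]
which is exactly the claimed bound (noting $|G_i| \in (1\pm\eta)\mu$ is equivalent to $||G_i|-\mu| \le \eta\mu = A$ in the first regime, and $||G_i|-\mu| \le 4\lambda/\eta = A$ in the second). There is no real obstacle here; the only thing worth flagging is that the two cases of the claim correspond precisely to the two regimes of Lemma~\ref{lem:concentration}, namely whether the $\lambda\mu$ term or the $\lambda^2$ term dominates the numerator, and this is what dictates taking $A = \eta\mu$ (a relative bound) versus $A = 4\lambda/\eta$ (an absolute bound).
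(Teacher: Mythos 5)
Your proof is correct and follows essentially the same route as the paper: apply Lemma~\ref{lem:concentration} to the $\lambda$-wise independent indicators $\mathbf{1}_{g(a)=i}$ with deviation parameter $A = \eta\mu$ in the large-$\mu$ regime and $A = 4\lambda/\eta$ otherwise, then use $\lambda = 2\log(1000L)$ to bound the failure probability. The only difference is that you make explicit the intermediate bound that the ratio $(\lambda\mu+\lambda^2)/A^2 \le 1/2$ in both cases, which the paper leaves implicit in its ``the last inequality follows from'' remarks.
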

\begin{proof}
Suppose $\|\mathcal{S}\|_0\geq 2^i\cdot 4\lambda/\eta^2$.
Due to Lemma~\ref{lem:concentration}, we have:
\begin{align*}
&\Pr\left[\left||G_i| - \|\mathcal{S}\|_0 / 2^i\right| > \eta\cdot \|\mathcal{S}\|_0 / 2^i\right]\\
\leq & 8\cdot \left(\frac{\lambda\cdot \|\mathcal{S}\|_0/2^i + \lambda^2}{\left(\eta\cdot \|\mathcal{S}\|_0/2^i\right)^2}\right)^{\lambda/2}\\
\leq & 0.01/L,
\end{align*}
where the last inequality follows from that $\lambda = 2\cdot \log(1000L)$ and $\|\mathcal{S}\|_0\geq 2^i\cdot 4\lambda / \eta^2$.

Suppose $\|\mathcal{S}\|_0 \leq 2^i \cdot4\lambda/\eta^2$, By applying Lemma~\ref{lem:concentration} again, we have:
\begin{align*}
&\Pr\left[\left||G_i|- \|\mathcal{S}\|_0/2^i\right| >4\lambda/\eta\right]\\
\leq & 8\cdot \left(\frac{\lambda\cdot \|\mathcal{S}\|_0/2^i + \lambda^2}{(4\lambda/\eta)^2}\right)^{\lambda/2}\\
\leq & 0.01/L,
\end{align*}
where the last inequality follows from $\|\mathcal{S}\|_0/2^i \leq 4\lambda / \eta^2$ and $\lambda = 2\cdot \log (1000 L)$.
\end{proof}

\begin{claim}\label{cla:collision}
For $i\in [L]$, conditioning on $|G_i|\leq 16\alpha\max\left(\gamma/\eta, 32\alpha\lambda/\eta^2\right)$, the probability that $|G_i|=\|\mathcal{S}_i\|_0$ is at least $1-0.01/L$.
\end{claim}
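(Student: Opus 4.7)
The plan is to observe that, by construction, the stream $\mathcal{S}_i$ records exactly the hash values $h(a)$ for the distinct elements $a \in G_i$ (padded with $\perp$ tokens that do not contribute to the distinct-element count). Consequently $\|\mathcal{S}_i\|_0 = |h(G_i)|$, and $|G_i| = \|\mathcal{S}_i\|_0$ holds if and only if $h$ restricted to $G_i$ is injective. So the claim reduces to showing that, conditional on $|G_i| \leq M$ where $M := 16\alpha\max(\gamma/\eta, 32\alpha\lambda/\eta^2)$, the restriction of $h$ to $G_i$ has no collision with probability at least $1 - 0.01/L$.

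Next I would invoke the pairwise independence of $h$ together with the fact that $h$ is drawn independently of $g$ (and hence independently of the random set $G_i$). For any fixed pair of distinct elements $a \neq b$ in $\mathcal{U}$, pairwise independence gives $\Pr[h(a) = h(b)] = 1/m$. Conditioning on $G_i$ (which depends only on $g$ and the input stream), a union bound over the $\binom{|G_i|}{2} \leq |G_i|^2/2$ unordered pairs yields
\[
\Pr\bigl[\exists\, a \neq b \in G_i : h(a) = h(b) \,\bigm|\, G_i\bigr] \leq \frac{|G_i|^2}{2m}.
\]

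Finally I would plug in the parameters. Since $m = 100 L \cdot M^2$ and we are conditioning on $|G_i| \leq M$, the right-hand side is at most $M^2 / (2 \cdot 100 L \cdot M^2) = 1/(200L) \leq 0.01/L$, proving the claim. There is really no obstacle here beyond keeping the conditioning clean: the key small observation is that $g$ and $h$ are independently sampled, so the bound on $|G_i|$ (an event in the $\sigma$-algebra generated by $g$ and the stream) does not interfere with the pairwise-independence computation for $h$.
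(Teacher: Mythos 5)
Your proof is correct and follows essentially the same approach as the paper: reduce the claim to $h$ having no collision on $G_i$, apply pairwise independence and a union bound over pairs, and plug in the value of $m$. The only (harmless) cosmetic differences are that you use the tighter $\binom{|G_i|}{2} \le |G_i|^2/2$ where the paper uses $|G_i|^2$, and you spell out more carefully that $h$ is sampled independently of $g$ so the conditioning on $|G_i|$ does not affect the pairwise-independence calculation.
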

\begin{proof}
Since $h$ is pairwise independent, $\forall a,b\in G_i,\Pr[h(a)=h(b)]=1/m$.
Since $m=100L\cdot \left(16\alpha\max\left(\gamma/\eta,32\alpha\lambda/\eta^2\right)\right)^2$, $\Pr[\exists a\not=b\in G_i, h(a)=h(b)]\leq |G_i|^2/m\leq 0.01/L$ by a union bound.
\end{proof}

Let $\mathcal{E}$ be the event that both of the following hold:
\begin{enumerate}
    \item $\forall i \in [L]$ with $2^i\cdot 4\lambda/\eta^2\leq \|\mathcal{S}\|_0$, $|G_i| \in (1\pm \eta)\cdot \|\mathcal{S}\|_0/ 2^i$.
    \item $\forall i \in [L]$ with $2^i\cdot 4\lambda/\eta^2> \|\mathcal{S}\|_0$, $|G_i|\in \|\mathcal{S}\|_0/2^i\pm 4\lambda/\eta$.
\end{enumerate}
According to Claim~\ref{cla:good_concentration}, $\mathcal{E}$ happens with probability at least $0.99$.
Let $\mathcal{E}'$ be the event that $\forall i\in[L]$ with $|G_i|\leq 16\alpha \max(\gamma/\eta,32\alpha\lambda/\eta^2)$, $|G_i|=\|\mathcal{S}_i\|_0$.
According to Claim~\ref{cla:collision}, $\mathcal{E}'$ happens with probability at least $0.99$.

Next, we are going to prove Lemma~\ref{lem:correctness_distinct_large}.
\begin{proof}[Proof of Lemma~\ref{lem:correctness_distinct_large}]
In this proof, we condition on both events $\mathcal{E}$ and $\mathcal{E}'$.
Note that the probability that both $\mathcal{E}$ and $\mathcal{E}'$ happen is at least $0.98$.

Consider the case that $\|\mathcal{S}\|_0\geq 8\alpha\cdot \max\left(\gamma/\eta,32\alpha\lambda/\eta^2\right)$.
Let $i^*\in[L]$ be the largest value such that $\|\mathcal{S}\|_0/2^{i^*}\geq 4\alpha\cdot \max(\gamma/\eta,32\alpha\lambda/\eta^2)$.
According to event $\mathcal{E}$, we have $|G_{i^*}|\in (1\pm \eta)\cdot \|\mathcal{S}\|_0/2^{i^*}$.
Due to our choice of $i^*$, we have $\|\mathcal{S}\|_0/2^{i^*}\leq 8\alpha\cdot \max\left(\gamma/\eta,32\alpha\lambda/\eta^2\right)$.
Thus, $|G_{i^*}|\leq 16\alpha \max\left(\gamma/\eta,32\alpha\lambda/\eta^2\right)$.
According to event $\mathcal{E}'$, we have $\|\mathcal{S}_{i^*}\|_0=|G_{i^*}|$.
Therefore, we have
$
\hat{s}_{i^*} \geq \|\mathcal{S}_{i^*}\|_0/\alpha-\gamma\geq |G_{i^*}|/\alpha - \gamma.
$
Since $|G_{i^*}|\geq 2\alpha \cdot \max\left(\gamma/\eta,32\alpha\lambda/\eta^2\right)$, we have $\hat{s}_{i^*}\geq \max\left(\gamma/\eta,32\alpha\lambda/\eta^2\right)$.
Therefore, Algorithm~\ref{alg:distinct_large_universe} will output $\hat{s}_{i'}\cdot 2^{i'}$ for some $i'\geq i^*$.
Due to event $\mathcal{E}$, we know $|G_{i'}|\leq \max(2\cdot \|\mathcal{S}\|_0/2^{i^*}, \|\mathcal{S}\|_0/2^{i^*}+4\lambda/\eta)\leq 16\alpha\max(\gamma/\eta,32\alpha\lambda/\eta^2)$.
According to event $\mathcal{E}'$, we have $\|\mathcal{S}_{i'}\|_0=|G_{i'}|$.
Since the algorithm outputs $\hat{s}_{i'}\cdot 2^{i'}$, we know that $\hat{s}_{i'}\geq \max(\gamma/\eta,32\alpha\lambda/\eta^2)$ which implies that
\begin{align*}
|G_{i'}|&=\|\mathcal{S}_{i'}\|_0\\
&\geq (\hat{s}_{i'}-\gamma)/\alpha\\
&\geq (1-\eta)\hat{s}_{i'}/\alpha\\
&\geq 16\lambda/\eta^2.
\end{align*}
According to event $\mathcal{E}$, we have $|G_{i'}|\in (1\pm \eta)\cdot \|\mathcal{S}\|_0/2^{i'}$.
Thus, we have
\begin{align*}
\hat{s}_{i'}&\leq \alpha \|\mathcal{S}_{i'}\|_0 +\gamma\\
&=\alpha |G_{i'}| +\gamma\\
&\leq \frac{\alpha}{1-\eta}\cdot |G_{i'}|\\
&\leq \frac{1+\eta}{1-\eta}\cdot \alpha \cdot \|\mathcal{S}\|_0/2^{i'}\\
&\leq (1+4\eta)\alpha\|\mathcal{S}\|_0/2^{i'},
\end{align*}
where the second inequality follows from $\hat{s}_{i'}\geq \gamma/\eta$ and the last inequality follows from $\eta\leq 0.5$.
Similarly, we have:
\begin{align*}
\hat{s}_{i'}&\geq \|\mathcal{S}_{i'}\|_0/\alpha - \gamma\\
&=|G_{i'}|/\alpha - \gamma\\
&\geq |G_{i'}|/((1+\eta)\alpha)\\
&\geq \frac{1-\eta}{1+\eta}\cdot \frac{1}{\alpha}\cdot \|\mathcal{S}\|_0/2^{i'}\\
&\geq (1-4\eta)/\alpha\cdot \|\mathcal{S}\|_0 / 2^{i'},
\end{align*}
where the second inequality follows from $\hat{s}_{i'}\geq \gamma/\eta$.

Next, consider the case that $\|\mathcal{S}\|_0<8\alpha\cdot \max(\gamma/\eta,32\alpha\lambda/\eta^2)$.
Algorithm~\ref{alg:distinct_large_universe} either outputs $0$ or outputs $\hat{s}_{i'}\cdot 2^{i'}$ for some $i'\in [L]$.
Suppose it outputs $\hat{s}_{i'}\cdot 2^{i'}$.
We have $\hat{s}_{i'}\geq \max\left(\gamma/\eta,32\alpha\lambda/\eta^2\right)$, which implies that
\begin{align*}
|G_{i'}|\geq \|\mathcal{S}_{i'}\|_0 \geq (\hat{s}_{i'}-\gamma)/\alpha\geq (1-\eta)\hat{s}_{i'}/\alpha\geq 16\lambda/\eta^2.
\end{align*}
According to event $\mathcal{E}$, we have $|G_{i'}|\leq 2\cdot \|\mathcal{S}\|_0/2^{i'}$.
Therefore
\begin{align*}
\hat{s}_{i'}\cdot 2^{i'} &\leq (\alpha\|\mathcal{S}_{i'}\|_0+\gamma)\cdot 2^{i'}\\
&\leq \frac{\alpha}{1-\eta} \cdot \|\mathcal{S}_{i'}\|_0 \cdot 2^{i'}\\
&\leq \frac{\alpha}{1-\eta} \cdot |G_{i'}| \cdot 2^{i'}\\
&\leq 4\alpha\|\mathcal{S}\|_0\\
&\leq 32\alpha^2\max(\gamma/\eta,32\alpha\lambda/\eta^2).
\end{align*}
\end{proof}

\begin{theorem}\label{thm:distinct_elements}
Let $\varepsilon\geq 0,\xi,\xi'\in(0,0.5),\eta\in(0,0.5)$, suppose there is an $\varepsilon$-DP algorithm for the number of distinct elements of streams with element universe size at most $ 100\log(\min(|\mathcal{U}|,T))\cdot (16\alpha\max(\gamma/\eta, 32\alpha\cdot 2\log(1000\log(\min(|\mathcal{U}|,T)))/\eta^2))^2$ in the streaming continual release model which uses space $J$ and with probability at least $1-\xi$ always outputs an $(\alpha,\gamma)$-approximation for every timestamp. 
There is an $(\varepsilon'=\lceil50\log(T/\xi')\rceil\varepsilon)$-DP algorithm for the number of distinct elements of streams with element universe $\mathcal{U}$ in the streaming continual release model.
With probability at least $1-\xi' -\log(\min(|\mathcal{U}|,T))\cdot \lceil50\log(T/\xi')\rceil\cdot \xi$, the algorithm always outputs an $((1+O(\eta))\alpha,O(\alpha^2\max(\gamma/\eta,\alpha\log\log(\min(|\mathcal{U}|,T))/\eta^2)))$-approximation for every timestamp $t\in [T]$. 
The algorithm uses $O(J\cdot \log(\min(|\mathcal{U}|,T))\cdot \log(T/\xi'))$ space.
\end{theorem}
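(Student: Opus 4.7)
The plan is to run $k = \lceil 50\log(T/\xi')\rceil$ independent copies of Algorithm~\ref{alg:distinct_large_universe} in parallel (each with its own independently drawn pair of hash functions $h$, $g$) and, at every timestamp $t$, output the median of the $k$ reported estimates. This is essentially the median-trick of Lemma~\ref{lem:median_trick} applied to the constant-probability correctness guarantee of Lemma~\ref{lem:correctness_distinct_large}.

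For privacy, the first step is to show that a single copy of Algorithm~\ref{alg:distinct_large_universe} is $O(\varepsilon)$-DP. Fix $h,g$ and consider the mapping $\mathcal{F}\colon\mathcal{S}\mapsto(\mathcal{S}_1,\dots,\mathcal{S}_L)$ defined by the algorithm. For neighboring streams differing only at position $t$, only the substreams $\mathcal{S}_{g(a_t)}$ and $\mathcal{S}_{g(a_t')}$ can differ, and each differs only at position $t$; hence $\mathcal{F}$ has sensitivity at most $2$. Since each small-universe subroutine is $\varepsilon$-DP on streams over a universe of size $m$, Theorem~\ref{thm:composition} yields that the joint output is $2\varepsilon$-DP conditioned on any fixed $h,g$, and therefore $2\varepsilon$-DP unconditionally. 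Composing $k$ independent copies then gives at most $2k\varepsilon$-DP, which matches the bound $\lceil 50\log(T/\xi')\rceil\,\varepsilon$ claimed in the theorem up to absorbing the factor $2$ (or equivalently by halving the privacy budget passed to each small-universe subroutine).

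For correctness, I would argue in two layers. First, condition on the joint event that every one of the $L\cdot k$ small-universe subroutines gives an $(\alpha,\gamma)$-approximation at every timestamp; by a union bound this holds with probability at least $1 - L k\xi = 1 - \log(\min(|\mathcal{U}|,T))\cdot\lceil 50\log(T/\xi')\rceil\cdot\xi$. On this event, Lemma~\ref{lem:correctness_distinct_large} applies to each of the $k$ copies at each timestamp $t$, so each copy independently outputs a $((1+O(\eta))\alpha,\,O(\alpha^2\max(\gamma/\eta,\alpha\log L/\eta^2)))$-approximation with probability at least $0.9$. Lemma~\ref{lem:median_trick} with $k=\lceil 50\log(T/\xi')\rceil$ upgrades this to probability at least $1-\xi'/T$ for the median at a single timestamp, and a union bound over the $T$ timestamps gives failure probability at most $\xi'$, for the total failure probability claimed.

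For space, each of the $k$ copies stores the state of $L$ small-universe instances (space $L\cdot J$) plus $O(L)$ bits for the representations of the limited-independence hash functions $h$ and $g$, and generates each $\mathcal{S}_i$ on the fly rather than storing it. Summing over the $k$ copies gives $O(J\cdot\log(\min(|\mathcal{U}|,T))\cdot\log(T/\xi'))$, matching the stated bound. The only genuinely delicate step is the sensitivity computation for $\mathcal{F}$ in the privacy argument: it is crucial that $g$ deterministically routes each element to a unique index $i\in[L]\cup\{\perp\}$, so that a single change in $\mathcal{S}$ touches at most two of the $L$ substreams rather than all of them; the rest of the proof is a routine union-bound-and-median-trick combination.
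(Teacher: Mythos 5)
Your proof takes essentially the same route as the paper's: split the stream into $L$ substreams via the deterministic bucketing $g$, invoke the composition theorem (Theorem~\ref{thm:composition}) for a single copy of Algorithm~\ref{alg:distinct_large_universe}, run $k=\lceil 50\log(T/\xi')\rceil$ independent copies and take the median, and close with a union bound over the $L\cdot k$ small-universe subroutine calls for the accuracy event. The one place you differ is actually a point in your favor. You compute the sensitivity of the mapping $\mathcal{S}\mapsto(\mathcal{S}_1,\dots,\mathcal{S}_L)$ as $2$ (when $g(a_t)\neq g(a'_t)$, two substreams each change at one position), and handle the resulting factor $2$ by halving the budget or absorbing it into $\varepsilon'$. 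The paper's own proof of Theorem~\ref{thm:distinct_elements} asserts this mapping has sensitivity $1$, which is in tension with its own treatment of the structurally identical mapping in Lemma~\ref{lem:dp_guarantee_low_freq_large_universe}, Lemma~\ref{lem:countsketch_dp_guarantee}, and Lemma~\ref{lem:lp_heavyhitter_dp_guarantee}, all of which correctly give sensitivity $2$. So your sensitivity accounting is the careful one; the discrepancy is only a constant factor in $\varepsilon'$ and has no downstream effect on the corollaries. Everything else — the union bound giving $1-Lk\xi$ for the subroutines, the median trick with $k$ independent hash-function draws per timestamp, the space tally $O(JLk)$ — matches the paper.
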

\begin{proof}
According to our construction of $\mathcal{S}_1,\mathcal{S}_2,\cdots,\mathcal{S}_L$ and the definition of the sensitivity (Definition~\ref{def:sensitivity}), $(\mathcal{S}_1,\mathcal{S}_2,\cdots,\mathcal{S}_L)$ has sensitivity $1$.
Since the algorithm to report $\hat{s}_i$ for each $i\in [L]$ is $\varepsilon$-DP, Algorithm~\ref{alg:distinct_large_universe} is $\varepsilon$-DP.
According to Lemma~\ref{lem:correctness_distinct_large}, for any $t\in[T]$, condition on that $\hat{s}_i$ is an $(\alpha,\gamma)$-approximation to $\|\mathcal{S}\|_0$, with probability at least $0.9$, the output is a $((1+O(\eta))\alpha,O(\alpha^2\max(\gamma/\eta,\alpha\log\log(\min(|\mathcal{U}|,T))/\eta^2)))$-approximation.
To boost the probability to $1-\xi'$ such that $\forall t\in[T]$ the approximation guarantee always holds, we need to run $\lceil50\log(T/\xi')\rceil$ independent copies of Algorithm~\ref{alg:distinct_large_universe} and take the median of the outputs at every timestamp.
Thus, the overall algorithm is $(\lceil50\log(T/\xi')\rceil\cdot \varepsilon)$-DP.

Next, consider the success probability that every $\hat{s}_i$ is an $(\alpha,\beta)$-approximation to $\|\mathcal{S}_i\|_0$.
By taking a union bound over all $i$ and all independent copies of Algorithm~\ref{alg:distinct_large_universe}, the success probability is at least $1-\log(\min(|\mathcal{U}|,T))\cdot \lceil50\log(T/\xi')\rceil\cdot \xi$.

Finally, consider the space usage.
Consider each running copy of Algorithm~\ref{alg:distinct_large_universe}. 
Hashing function $h(\cdot)$ takes $O(1)$ space. 
Hashing function $g(\cdot)$ takes $O(\lambda) = O(\log\log(\min(|\mathcal{U}|, T)))$ space.
To continually release $\hat{s}_i$ for all $i$, we need to use $O(J\cdot \log(\min(|\mathcal{U}|,T)))$ space.
Thus, the total space needed for all copies is at most $O(J\cdot \log(\min(|\mathcal{U}|,T))\cdot \log(T/\xi'))$.
\end{proof}

By plugging Corollary~\ref{cor:distinct_small_universe} into above theorem with $\xi=\frac{\xi'/2}{\log(\min(|\mathcal{U}|,T))\cdot \lceil 50\log(2T/\xi')\rceil},\varepsilon = \frac{\varepsilon'}{\lceil50\log(2T/\xi')\rceil}$, $\alpha = 1,\gamma = O(\frac{1}{\varepsilon}\log^{2.5}(T)\log(1/\xi))$ and $J=O(\log(T)+100\log(\min(|\mathcal{U}|,T))\cdot (16\alpha\max(\gamma/\eta, 32\alpha\cdot 2\log(1000\log(\min(|\mathcal{U}|,T)))/\eta^2))^2)$, we get the following corollary.
\begin{corollary}[Streaming continual release distinct elements]\label{cor:distinct_large_universe}
For $\eta\in(0,0.5),$ there is an $\varepsilon$-DP algorithm for the number of distinct elements of streams with element universe $\mathcal{U}$ in the streaming continual release model.
With probability at least $1-\xi$, the output is always a $(1+\eta,O\left(\max\left(\frac{\log(T/\xi)\log^{2.5}(T)\log(1/\xi)\log\log(T/\xi)}{\eta\varepsilon},\frac{\log\log T}{\eta^2}\right)\right))$-approximation for every timestamp $t\in [T]$. 
The algorithm uses $\poly\left(\frac{\log(T/\xi)}{\eta\min(\varepsilon,1)}\right)$ space.
\end{corollary}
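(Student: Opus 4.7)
The plan is to obtain Corollary~\ref{cor:distinct_large_universe} by a direct instantiation of Theorem~\ref{thm:distinct_elements}, using Corollary~\ref{cor:distinct_small_universe} as the small-universe black box that Theorem~\ref{thm:distinct_elements} requires. Concretely, I would set the inner DP parameter to $\varepsilon_{\text{in}} = \varepsilon / \lceil 50 \log(2T/\xi)\rceil$ so that the $\lceil 50\log(T/\xi')\rceil$-factor blowup in Theorem~\ref{thm:distinct_elements} gives overall $\varepsilon$-DP; and I would pick the inner failure probability $\xi_{\text{in}} = \xi / \bigl(2 \log(\min(|\mathcal{U}|,T)) \cdot \lceil 50\log(2T/\xi)\rceil\bigr)$ so that the union bound appearing in Theorem~\ref{thm:distinct_elements} (over the $L = O(\log(\min(|\mathcal{U}|,T)))$ sub-streams $\mathcal{S}_i$ and the $\lceil 50\log(T/\xi)\rceil$ independent median copies) contributes at most $\xi/2$, matching $\xi' = \xi$ in the statement.

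With these parameter choices, Corollary~\ref{cor:distinct_small_universe} supplies $\alpha = 1$ and $\gamma = O\!\bigl(\tfrac{1}{\varepsilon_{\text{in}}} \log^{2.5}(T) \log(1/\xi_{\text{in}})\bigr)$ for streams whose universe has size at most the $m$ demanded by Theorem~\ref{thm:distinct_elements} (a polynomial in $\alpha, \gamma, 1/\eta$, and $\log\log(\min(|\mathcal{U}|,T))$). Substituting $\alpha = 1$ into the approximation guarantee of Theorem~\ref{thm:distinct_elements} gives a $\bigl(1+O(\eta),\, O(\max(\gamma/\eta,\, \log\log(\min(|\mathcal{U}|,T))/\eta^2))\bigr)$-approximation; expanding $\gamma$ and $\varepsilon_{\text{in}}$ and absorbing the constant from the $O(\eta)$ multiplicative slack into $\eta$ (by rescaling) yields exactly the additive error
\[
O\!\left(\max\!\left(\frac{\log(T/\xi)\,\log^{2.5}(T)\,\log(1/\xi)\,\log\log(T/\xi)}{\eta\,\varepsilon},\, \frac{\log\log T}{\eta^2}\right)\right)
\]
claimed in the corollary.

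For space, Corollary~\ref{cor:distinct_small_universe} uses $J = O(m + \log T)$; since $m$ is polynomial in $1/\eta$, $\log T$, $\log(1/\xi)$, and $1/\varepsilon_{\text{in}}$, the overall space from Theorem~\ref{thm:distinct_elements}, which is $O(J \cdot \log(\min(|\mathcal{U}|,T)) \cdot \log(T/\xi))$, remains $\poly(\log(T/\xi)/(\eta \min(\varepsilon,1)))$, matching the stated bound. The privacy proof is then immediate because Theorem~\ref{thm:distinct_elements} only relies on the inner algorithm being $\varepsilon_{\text{in}}$-DP under continual release, which is precisely what Corollary~\ref{cor:distinct_small_universe} gives.

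The only genuinely delicate step is the parameter bookkeeping: one must verify that the union bound is taken over both the $L$ sub-streams \emph{and} the $\lceil 50\log(T/\xi)\rceil$ median copies, and that after replacing $\varepsilon$ by $\varepsilon/\lceil 50\log(T/\xi)\rceil$ and $\xi$ by the smaller $\xi_{\text{in}}$ above, the resulting $\gamma$ still collapses to a clean $\poly\log$ expression. Everything else is substitution, so I expect no conceptual obstacle beyond this routine accounting.
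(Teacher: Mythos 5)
Your proposal is correct and follows essentially the same route as the paper's own proof: instantiate Theorem~\ref{thm:distinct_elements} with Corollary~\ref{cor:distinct_small_universe} as the small-universe subroutine, set $\varepsilon_{\text{in}} = \varepsilon/\lceil 50\log(2T/\xi)\rceil$ and $\xi_{\text{in}} = \xi/\bigl(2\log(\min(|\mathcal{U}|,T))\cdot\lceil 50\log(2T/\xi)\rceil\bigr)$, read off $\alpha=1$ and $\gamma = O\bigl(\tfrac{1}{\varepsilon_{\text{in}}}\log^{2.5}(T)\log(1/\xi_{\text{in}})\bigr)$, and substitute into the approximation and space bounds. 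The parameter choices and bookkeeping you describe are precisely what the paper does.
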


By plugging Corollary~\ref{cor:distinct_small_universe_better_additive} into Theorem~\ref{thm:distinct_elements} with $\xi=\frac{\xi'/2}{\log(\min(|\mathcal{U}|,T))\cdot \lceil 50\log(2T/\xi')\rceil},\varepsilon = \frac{\varepsilon'}{\lceil50\log(2T/\xi')\rceil}$, $\alpha = 1+\eta,\gamma = O\left(\frac{\log\left(T/\xi\right)}{\varepsilon\eta}\right)$ and 
\begin{align*}
J=O(100\log(\min(|\mathcal{U}|,T))\cdot (16\alpha\max(\gamma/\eta, 32\alpha\cdot 2\log(1000\log(\min(|\mathcal{U}|,T)))/\eta^2))^2),
\end{align*}
we get the following corollary.
\begin{corollary}[Streaming continual release distinct elements, better dependence in $\log(T)$]\label{cor:distinct_large_universe_better_additive}
For $\eta\in(0,0.5),$ there is an $\varepsilon$-DP algorithm for the number of distinct elements of streams with element universe $\mathcal{U}$ in the streaming continual release model.
With probability at least $1-\xi$, the output is always a $(1+O(\eta),O\left(\frac{\log^2(T/\xi)}{\eta^2\varepsilon}\right))$-approximation for every timestamp $t\in [T]$. 
The algorithm uses $\poly\left(\frac{\log(T/\xi)}{\eta\min(\varepsilon,1)}\right)$ space.
\end{corollary}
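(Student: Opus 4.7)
The plan is to directly instantiate Theorem~\ref{thm:distinct_elements} using the small-universe distinct-elements algorithm from Corollary~\ref{cor:distinct_small_universe_better_additive} as the subroutine, and then chase the parameters. Writing $\varepsilon',\xi'$ for the target DP parameter and failure probability in the statement we wish to prove, I would feed Theorem~\ref{thm:distinct_elements} the inner DP budget $\varepsilon := \varepsilon'/\lceil 50\log(2T/\xi')\rceil$ and the inner failure probability $\xi := \xi'/(2\log(\min(|\mathcal{U}|,T))\cdot \lceil 50\log(2T/\xi')\rceil)$; these are chosen precisely so that the composition factor $\lceil 50\log(T/\xi')\rceil$ in the theorem multiplies $\varepsilon$ back up to $\varepsilon'$, and so that the union bound $\xi' + \log(\min(|\mathcal{U}|,T))\cdot\lceil 50\log(T/\xi')\rceil\cdot \xi$ from the theorem is at most $\xi'$.

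Next I would read off the subroutine's guarantees. Corollary~\ref{cor:distinct_small_universe_better_additive} supplies an $\varepsilon$-DP algorithm on a universe of size $m$ (specified by Theorem~\ref{thm:distinct_elements}) that is a $(1+\eta,\gamma)$-approximation with $\gamma = O(\log(T/\xi)/(\varepsilon\eta))$, using space $O(m)$. Substituting the above choices of $\varepsilon$ and $\xi$, both $1/\varepsilon$ and $\log(T/\xi)$ are $O(\log(T/\xi'))$ multiples of their primed versions, so $\gamma = O(\log^2(T/\xi')/(\varepsilon'\eta))$. Setting $\alpha = 1+\eta$, Theorem~\ref{thm:distinct_elements} then delivers an $\varepsilon'$-DP algorithm that is, with probability at least $1-\xi'$, an $((1+O(\eta))\alpha,\, O(\alpha^2\max(\gamma/\eta,\alpha\log\log(\min(|\mathcal{U}|,T))/\eta^2)))$-approximation. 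Since $\alpha=O(1)$ and since $\log\log(\min(|\mathcal{U}|,T))/\eta^2$ is dominated by $\gamma/\eta$, the overall bound simplifies to $(1+O(\eta),\, O(\log^2(T/\xi')/(\eta^2\varepsilon')))$, as claimed.

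For the space bound, Theorem~\ref{thm:distinct_elements} blows up the subroutine's space by a factor $O(\log(\min(|\mathcal{U}|,T))\log(T/\xi'))$, and the subroutine uses $O(m)$ space where $m = 100L\cdot(16\alpha\max(\gamma/\eta,32\alpha\lambda/\eta^2))^2$ with $L=O(\log(\min(|\mathcal{U}|,T)))$ and $\lambda = O(\log\log(\min(|\mathcal{U}|,T)))$. All of these factors are polynomial in $\log(T/\xi')$, $1/\eta$, and $1/\min(\varepsilon',1)$, so the total is $\poly(\log(T/\xi')/(\eta\min(\varepsilon',1)))$, matching the stated space.

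There is no real obstacle here: the proof is pure parameter substitution. The only points that require mild care are verifying that (i) after rescaling $\varepsilon$ and $\xi$ by the composition and union-bound factors the subroutine's additive error picks up exactly one extra $\log(T/\xi')$ factor (yielding $\log^2(T/\xi')$ rather than a higher power), and (ii) in the final $\max(\gamma/\eta,\alpha\log\log(\min(|\mathcal{U}|,T))/\eta^2)$ the first term dominates so that the bookkeeping of additive errors collapses to a single clean expression.
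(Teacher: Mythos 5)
Your proposal is correct and follows essentially the same approach as the paper's own proof: plug Corollary~\ref{cor:distinct_small_universe_better_additive} (the small-universe distinct-elements algorithm with $O(\log(T/\xi)/(\varepsilon\eta))$ additive error) into Theorem~\ref{thm:distinct_elements}, with exactly the rescaling $\varepsilon=\varepsilon'/\lceil 50\log(2T/\xi')\rceil$, $\xi=\frac{\xi'/2}{\log(\min(|\mathcal{U}|,T))\cdot\lceil 50\log(2T/\xi')\rceil}$, $\alpha=1+\eta$, and the resulting $\gamma$, then chase parameters. The paper states this substitution very tersely; your write-up fills in the arithmetic that the inner $1/\varepsilon$ and $\log(T/\xi)$ each contribute an extra $O(\log(T/\xi'))$ factor, that $\gamma/\eta$ dominates the $\log\log$ term, and that the space factors all remain $\poly(\log(T/\xi')/(\eta\min(\varepsilon',1)))$, all of which is exactly the bookkeeping the paper is implicitly doing.
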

\section{Continual Released $\ell_p$ Heavy Hitters and Frequency Moment Estimation}
In this section, we present $\varepsilon$-DP streaming continual release algorithms for $\ell_p$ heavy hitters and frequency moment estimation.
In Section~\ref{sec:count_sketch}, we present an algorithm for $\varepsilon$-DP CountSketch~\cite{charikar2002finding} in the streaming continual release model.
The CountSketch is used for $\ell_2$ heavy hitters and $\ell_2$ moment estimation.
In Section~\ref{sec:lp_heavyhitters}, we show how to use $\ell_2$ heavy hitters to solve $\ell_p$ heavy hitters.
In Section~\ref{sec:low_freq_elements}, we show how to estimate the number of elements which have low frequencies. 
In Section~\ref{sec:lp_moment}, we show how to use $\ell_p$ heavy hitters and the estimator of low frequency elements to estimate the $\ell_p$ frequency moment.

\subsection{Continual Released CountSketch}\label{sec:count_sketch}

\begin{algorithm}[H]\label{alg:countsketch}
\small
	\KwIn{A stream $\mathcal{S}$ of elements $a_1,a_2,\cdots,a_T\in \mathcal{U}\cup \{\perp\}$, a parameter $k\in\mathbb{Z}_{\geq 1}$.}
	\Parameter{Relative approximation factor $\alpha\geq 1$ and additive approximation factor $\gamma\geq 0$ depending on the streaming continual release summing algorithm.\\ \hfill{//See Theorem~\ref{thm:dp_summing}}. }
	\KwOut{A tuple $(z_1,z_2,\cdots,z_k)$ at every timestamp $t$.}
	Let $h:\mathcal{U}\rightarrow [k]$ be a $4$-wise independent hash function, s.t., $\forall a\in\mathcal{U},i\in[k],\Pr[h(a)=i]=\frac1k.$\\
	Let $g:\mathcal{U}\rightarrow \{-1,1\}$ be a $4$-wise independent hash function, s.t., $\forall a\in \mathcal{U},\Pr[g(a)=1]=\frac12$.\\
	Initialize empty streams $\mathcal{S}_1,\mathcal{S}_2,\cdots,\mathcal{S}_k$.\\
	\For{ each $a_t$ in the stream $\mathcal{S}$}
	{
	    \If{$a_t=\perp$}{
	        Append $0$ to the end of every stream $\mathcal{S}_1,\mathcal{S}_2,\cdots,\mathcal{S}_k$.
	    }
	    \Else{
	        Append $g(a_t)$ to the end of the stream $\mathcal{S}_{h(a_t)}$
	        and append $0$ to the end of every stream $\mathcal{S}_i$ for $i\not = h(a_t)$.
	    }
	    Output a tuple $(z_1,z_2,\cdots,z_k)$ where $z_i$ is an estimation of the total counts of $\mathcal{S}_i$ with additive error at most $\gamma$.
	}
    \caption{Continual Released CountSketch}
\end{algorithm}

\begin{lemma}[DP guarantee]\label{lem:countsketch_dp_guarantee}
If the subroutine of continually releasing the approximate total counts of $\mathcal{S}_i$ for every $i\in[k]$ in Algorithm~\ref{alg:countsketch} is $\varepsilon$-DP, Algorithm~\ref{alg:countsketch} is $2\varepsilon$-DP.
\end{lemma}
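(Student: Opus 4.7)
The plan is to bound the sensitivity of the stream mapping $\mathcal{F}: \mathcal{S} \mapsto (\mathcal{S}_1, \mathcal{S}_2, \ldots, \mathcal{S}_k)$ induced by Algorithm~\ref{alg:countsketch} (for any fixed realization of the hash functions $h$ and $g$) and then invoke the composition theorem (Theorem~\ref{thm:composition}). Concretely, I will show that this mapping has sensitivity at most $2$; combined with the assumption that each of the $k$ summing subroutines applied to $\mathcal{S}_i$ is $\varepsilon$-DP, Theorem~\ref{thm:composition} immediately yields $2\varepsilon$-DP for the overall algorithm.

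To bound the sensitivity, fix two neighboring input streams $\mathcal{S}$ and $\mathcal{S}'$ that differ only at some timestamp $t$, say the value at $t$ is $a$ in $\mathcal{S}$ and $a'$ in $\mathcal{S}'$. I will do a short case analysis on the possible values of $(a, a')$. If $a = a'$, the outputs of $\mathcal{F}$ are identical. If exactly one is $\perp$, say $a = \perp$ and $a' \in \mathcal{U}$, then all output streams agree except $\mathcal{S}_{h(a')}$, whose entry at position $t$ flips between $0$ and $g(a') \in \{-1,+1\}$, contributing distance exactly $1$. If $a, a' \in \mathcal{U}$ and $h(a) = h(a')$, only the bucket stream $\mathcal{S}_{h(a)}$ changes at position $t$, moving between $g(a)$ and $g(a')$, a difference of at most $2$. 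Finally, if $a, a' \in \mathcal{U}$ and $h(a) \neq h(a')$, then $\mathcal{S}_{h(a)}$ changes between $g(a)$ and $0$ while $\mathcal{S}_{h(a')}$ changes between $0$ and $g(a')$, summing to distance $2$. In every case the summed distance across the $k$ output streams is at most $2$, matching Definition~\ref{def:sensitivity}.

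The only subtle point is that $\mathcal{F}$ uses internal randomness (the $4$-wise independent hash functions $h$ and $g$), whereas Theorem~\ref{thm:composition} is stated for deterministic mappings. However, since the sensitivity bound above holds pointwise for every realization of $h$ and $g$, one can either apply Theorem~\ref{thm:composition} conditional on $(h,g)$ and then average, or equivalently absorb the sampling of $(h,g)$ into the randomness of the composed DP subroutines; either way the $2\varepsilon$-DP guarantee transfers unconditionally to the full algorithm. No probabilistic inequalities beyond this are needed, and I do not anticipate any serious obstacle—the only thing to get right is the case where $h(a)=h(a')$ but $g(a)\neq g(a')$, which contributes distance $2$ rather than $1$ and is precisely what prevents a sharper bound of $\varepsilon$.
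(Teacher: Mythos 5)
Your proof is correct and uses essentially the same strategy as the paper: bound the sensitivity of the mapping $\mathcal{S}\mapsto(\mathcal{S}_1,\dots,\mathcal{S}_k)$ to $2$ (for any fixed hash functions) and then invoke Theorem~\ref{thm:composition}. Your case analysis is a bit more careful than the paper's — in particular you explicitly note that when $h(a)=h(a')$ but $g(a)\neq g(a')$ the single bucket stream changes by $2$ at one position, which costs distance $2$ under Definition~2.2; the paper's terser phrasing (``the non-zero number can be only $\pm 1$, the sensitivity is at most $2$'') glosses over this but reaches the same bound. One small inaccuracy in your concluding remark: it is not \emph{only} the $h(a)=h(a')$, $g(a)\neq g(a')$ case that prevents a $1\varepsilon$ bound — the case $h(a)\neq h(a')$ also contributes distance $2$ — but this does not affect the correctness of the argument.
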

\begin{proof}
Consider two neighboring streams $\mathcal{S}$ and $\mathcal{S}'$ where the only difference is the $t$-th element, i.e., $a_t\not = a_t'$.
Consider their corresponding streams $\mathcal{S}_1,\mathcal{S}_2,\cdots,\mathcal{S}_k$ and $\mathcal{S}'_1,\mathcal{S}'_2,\cdots,\mathcal{S}'_k$ in Algorithm~\ref{alg:countsketch}.
For $i\in[k]$ and $j\not=t$, the $j$-th number in $\mathcal{S}_i$ should be the same as the $j$-th number in $\mathcal{S}'_i$.
Thus, we only need to consider the $t$-th number in $\mathcal{S}_i$ and $\mathcal{S}'_i$ for every $i\in [k]$.
Since $a_t$ can only make the $t$-th number of $\mathcal{S}_{h(a_t)}$ be non-zero, $a'_t$ can only make the $t$-th number of $\mathcal{S}'_{h(a'_t)}$ be non-zero, and the non-zero number can be only $\pm 1$, the sensitivity is at most $2$.
Thus, if we use $\varepsilon$-DP algorithm to continually release the total counts of $\mathcal{S}_1,\mathcal{S}_2,\cdots,\mathcal{S}_k$, the continually released output of Algorithm~\ref{alg:countsketch} is $2\varepsilon$-DP.
\end{proof}

\begin{lemma}[Good approximation for frequent elements]\label{lem:countsketch_frequency_elements}
Consider any $a\in \mathcal{U}$ and any timestamp $t\in [T]$.
Let $f_a$ be the frequency of $a$ in $a_1,a_2,\cdots,a_t$.
Let $(z_1,z_2,\cdots,z_k)$ be the output of Algorithm~\ref{alg:countsketch} at timestamp $t$.
Then $\forall \eta\in(0,0.5)$, with probability at least $1-1/(k\eta^2)$, $|f_a-g(a)\cdot z_{h(a)}|\leq \eta\cdot\sqrt{\sum_{b\in\mathcal{U}}f_b^2} + \gamma$.
\end{lemma}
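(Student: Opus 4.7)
The plan is to follow the standard CountSketch analysis and then absorb the additive error coming from the DP summing subroutine. First I would unfold the definition of $z_{h(a)}$. Let $\mathcal{S}_i$ denote the stream associated with bucket $i$ and let $\widetilde{z}_i = \sum_{b \in \mathcal{U}:\, h(b)=i} g(b)\cdot f_b$ be its exact total. By construction of Algorithm~\ref{alg:countsketch}, $z_i$ is an estimate of $\widetilde{z}_i$ with additive error at most $\gamma$, so
\[
g(a)\cdot z_{h(a)} \;=\; g(a)\cdot \widetilde{z}_{h(a)} \;+\; g(a)\cdot \delta,
\]
where $|\delta|\leq \gamma$. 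Since $g(a)^2=1$, splitting off the $b=a$ term gives $g(a)\widetilde{z}_{h(a)} = f_a + X$, where
\[
X \;=\; \sum_{b\in\mathcal{U}\setminus\{a\}} g(a)g(b)\cdot f_b\cdot \mathbf{1}_{h(b)=h(a)}.
\]
Hence $|f_a - g(a)\cdot z_{h(a)}| \leq |X| + \gamma$.

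Next I would bound $|X|$ by Chebyshev's inequality. Using $4$-wise independence of $h$ and $g$ (pairwise suffices here, which is implied), for each $b\ne a$ we have $\mathbb{E}[g(a)g(b)]=0$ and $\mathbb{E}[\mathbf{1}_{h(b)=h(a)}]=1/k$, so $\mathbb{E}[X]=0$. For the variance, the cross terms vanish because for distinct $b,b'\neq a$ the product $g(a)^2 g(b) g(b')$ has expectation $0$ (pairwise independence of $g$), leaving
\[
\mathrm{Var}(X) \;=\; \sum_{b\neq a} f_b^2 \cdot \Pr[h(b)=h(a)] \;=\; \frac{1}{k}\sum_{b\neq a} f_b^2 \;\leq\; \frac{1}{k}\sum_{b\in\mathcal{U}} f_b^2.
\]
By Chebyshev,
\[
\Pr\!\left[|X| > \eta\sqrt{\textstyle\sum_{b}f_b^2}\right] \;\leq\; \frac{\mathrm{Var}(X)}{\eta^2 \sum_b f_b^2} \;\leq\; \frac{1}{k\eta^2}.
\]
Combining with the deterministic bound $|f_a-g(a)z_{h(a)}|\leq |X|+\gamma$ gives the claim.

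There is really no main obstacle: this is the textbook CountSketch second-moment argument, and the only adaptation is checking that the $\gamma$-additive noise from the DP summing subroutine enters additively and independently of the CountSketch randomness, which it does because the summing algorithm operates on the already-formed stream $\mathcal{S}_{h(a)}$. One small point worth stating explicitly in the write-up is that the failure probability $1/(k\eta^2)$ is over the choice of $h$ and $g$ only, while the bound $|\delta|\leq \gamma$ is assumed to hold deterministically from the summing subroutine guarantee, so no additional union bound is needed within this lemma.
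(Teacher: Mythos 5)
Your proof is correct and follows essentially the same route as the paper: decompose $z_{h(a)}$ into the exact bucket total plus the $\gamma$-bounded error from the DP summing subroutine, control the exact part via the standard CountSketch second-moment/Chebyshev bound, and absorb the $\gamma$ term by the triangle inequality. The only difference is presentational — the paper invokes the CountSketch guarantee from \cite{charikar2002finding} as a black box, whereas you re-derive the $\mathrm{Var}(X)\le F_2/k$ bound inline; both are fine, and your explicit remark that the $\gamma$ bound is deterministic from the summing subroutine (so no extra union bound is needed) matches the paper's implicit treatment.
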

\begin{proof}
Let $\hat{z}_{h(a)}$ be the true total counts of stream $\mathcal{S}_{h(a)}$ at timestamp $t$.
According to the original CountSketch~\cite{charikar2002finding}, we have 
$\Pr\left[|f_a-g(a)\cdot \hat{z}_{h(a)}|\leq \eta \sqrt{\sum_{b\in\mathcal{U}}f_b^2}\right]\geq 1-1/(k\eta^2)$.
Since
\begin{align*}
|f_a-g(a)\cdot z_{h(a)}|\leq |f_a-g(a)\cdot \hat{z}_{h(a)}|+|g(a)|\cdot| z_{h(a)}-\hat{z}_{h(a)}|\leq |f_a-g(a)\cdot \hat{z}_{h(a)}|+\gamma,
\end{align*}
we have with probability at least $1-1/(k\eta^2)$, $|f_a-g(a)\cdot z_{h(a)}|\leq \eta\cdot\sqrt{\sum_{b\in\mathcal{U}}f_b^2} + \gamma$.
\end{proof}

\begin{lemma}[$\ell_2$ Frequency moment estimation]\label{lem:countsketch_frequency_momement}
Consider any timestamp $t\in[T]$. 
For $a\in\mathcal{U}$, let $f_a$ be the frequency of $a$ in $a_1,a_2,\cdots,a_t$.
Let $(z_1,z_2,\cdots,z_k)$ be the output of Algorithm~\ref{alg:countsketch} at timestamp $t$.
Then $\forall \eta \in (0,0.5)$, with probability at least $1-100/(k\eta^2)$, $|\sum_{i=1}^k z_i^2- \sum_{a\in \mathcal{U}} f_a^2|\leq \eta \sum_{a\in \mathcal{U}} f_a^2 + 4k\gamma^2/\eta$
\end{lemma}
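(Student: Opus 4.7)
The plan is to decompose the error into a ``sketching error'' (from the hashing in CountSketch) and a ``noise error'' (from the fact that each $z_i$ only approximates the true total count $\hat{z}_i$ of $\mathcal{S}_i$ up to additive $\gamma$), and bound each separately before combining them.

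First I would introduce $\hat{z}_i := \sum_{a\in\mathcal{U}:\,h(a)=i} g(a) f_a$, the exact value of the CountSketch bucket $i$. Since $g$ is $4$-wise independent (hence $2$-wise independent), for each fixed $h$ one has $\mathbb{E}_g[\hat{z}_i^2] = \sum_{a:\,h(a)=i} f_a^2$, so $\mathbb{E}_g\bigl[\sum_{i=1}^k \hat{z}_i^2\bigr] = \sum_a f_a^2 = \|f\|_2^2$. Using $4$-wise independence of both $g$ and $h$, a standard computation (expanding $(\sum_i \hat{z}_i^2)^2$, pairing terms, and keeping only pairings surviving the expectation) gives $\mathrm{Var}\bigl[\sum_i \hat{z}_i^2\bigr] \le \tfrac{2}{k}\|f\|_2^4$. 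Chebyshev's inequality then yields that with probability at least $1 - 8/(k\eta^2)$,
\[
\Bigl|\sum_{i=1}^k \hat{z}_i^2 - \|f\|_2^2\Bigr| \le \tfrac{\eta}{2}\|f\|_2^2.
\]

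Next, I would control the noise error $|\sum_i z_i^2 - \sum_i \hat{z}_i^2|$. Writing $\sum_i z_i^2 - \sum_i \hat{z}_i^2 = \sum_i (z_i - \hat{z}_i)(z_i + \hat{z}_i)$ and using $|z_i-\hat{z}_i|\le\gamma$,
\[
\Bigl|\sum_i z_i^2 - \sum_i \hat{z}_i^2\Bigr| \le \gamma \sum_i \bigl(2|\hat{z}_i| + \gamma\bigr) = 2\gamma \sum_i |\hat{z}_i| + k\gamma^2.
\]
Cauchy--Schwarz gives $\sum_i |\hat{z}_i| \le \sqrt{k}\,\sqrt{\sum_i \hat{z}_i^2}$, and conditional on the event from the previous paragraph, $\sum_i \hat{z}_i^2 \le (1+\eta/2)\|f\|_2^2$. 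Applying the AM--GM inequality $2ab \le \tfrac{\eta}{2} a^2 + \tfrac{2}{\eta} b^2$ with $a = \|f\|_2$, $b = \gamma\sqrt{k(1+\eta/2)}$ then yields
\[
2\gamma\sqrt{k(1+\eta/2)}\,\|f\|_2 \le \tfrac{\eta}{2}\|f\|_2^2 + \tfrac{2(1+\eta/2)k\gamma^2}{\eta}.
\]
Combining with the $k\gamma^2$ term and using $\eta\le 1/2$ gives the noise-error bound $\tfrac{\eta}{2}\|f\|_2^2 + \tfrac{4k\gamma^2}{\eta}$ (after collecting constants). Adding the two bounds produces $|\sum_i z_i^2 - \|f\|_2^2| \le \eta \|f\|_2^2 + 4k\gamma^2/\eta$, as claimed.

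The only genuinely technical step is the variance computation $\mathrm{Var}[\sum_i \hat{z}_i^2] = O(\|f\|_2^4/k)$; this is the standard CountSketch $F_2$-estimation calculation, and I would verify it cleanly by expanding the quadruple sum $\sum_{a,b,c,d} g(a)g(b)g(c)g(d) f_a f_b f_c f_d \mathbf{1}[h(a)=h(b)]\mathbf{1}[h(c)=h(d)]$, noting that under $4$-wise independence of $g$ the only surviving terms are those in which the multiset $\{a,b,c,d\}$ is perfectly paired, and that after subtracting $(\sum_a f_a^2)^2$ only cross-collision pairs remain, each contributing an extra factor of $1/k$ from the $h$-collision probability. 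The slack of a factor of $\approx 50$ between my bound $8/(k\eta^2)$ and the claimed $100/(k\eta^2)$ will comfortably absorb constants arising from absorbing the $(1+\eta/2)$ factor and the separate failure-probability accounting; if not, one can replace $\eta$ by $\eta/4$ throughout the intermediate steps without changing the structure of the argument. The main obstacle is purely bookkeeping: tracking the constants so the final additive term is exactly $4k\gamma^2/\eta$ rather than a larger multiple.
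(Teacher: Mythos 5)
Your proposal is correct and follows essentially the same plan as the paper: introduce the exact bucket values $\hat{z}_i$, show $\sum_i \hat{z}_i^2$ is a good multiplicative estimate of $F_2$, and then absorb the $|z_i - \hat{z}_i|\le\gamma$ perturbation into the $\eta F_2 + O(k\gamma^2/\eta)$ budget. The only differences are cosmetic: the paper invokes the standard CountSketch $F_2$ bound as a black box (with threshold $\eta/4$), whereas you re-derive it via the variance computation and Chebyshev (getting a slightly better constant, $8/(k\eta^2)$ vs.\ $100/(k\eta^2)$); and for the noise term the paper applies the convexity inequality $(\hat z_i+v_i)^2 \le \hat z_i^2/(1-\eta/4) + 4v_i^2/\eta$ term by term, while you factor $z_i^2-\hat z_i^2=(z_i-\hat z_i)(z_i+\hat z_i)$, use Cauchy--Schwarz on $\sum_i|\hat z_i|$, and then AM--GM — both yield the required $\tfrac{\eta}{2}F_2 + O(k\gamma^2/\eta)$ bound, and your constant chase does close exactly at $4k\gamma^2/\eta$, so no further adjustment of $\eta$ is needed.
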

\begin{proof}
Let $F_2=\sum_{a\in\mathcal{U}}f_a^2$.
Let $Z=\sum_{i\in[k]} z_i^2$.
For $i\in[k]$, let $\hat{z}_i$ be the true total counts of stream $\mathcal{S}_i$ at timestamp $t$.
Let $\hat{Z}=\sum_{i\in[k]}\hat{z}_i^2$.
According to the analysis of CountSketch~\cite{charikar2002finding,thorup2004tabulation},
We have $\Pr\left[|F_2-\hat{Z}|\leq \eta/4\cdot F_2\right]\geq 1-100/(k\eta^2)$.
In the following, we condition on $|F_2-\hat{Z}|\leq \eta/4\cdot F_2$.

We have
\begin{align*}
&|F_2-Z|\\
\leq& |F_2-\hat{Z}| + |\hat{Z}-Z|\\
\leq& \eta/4\cdot F_2+ \sum_{i=1}^k |z_i^2-\hat{z}_i^2|.
\end{align*}
Denote $z_i=\hat{z}_i + v_i$ for $i\in[k]$.
We have $|v_i|\leq \gamma$.
Due to convexity, we have:
\begin{align*}
(1-\eta/4)\hat{z}_i^2-4v_i^2/\eta\leq (\hat{z}_i+v_i)^2 \leq \hat{z}_i^2/(1-\eta/4) + 4v_i^2/\eta.
\end{align*}
Since $\eta\in(0,0.5)$, $1/(1-\eta/4)\leq (1+\eta/2)$.
Therefore, $|z_i^2-\hat{z}_i^2|\leq \eta/2\cdot \hat{z}_i^2+4v_i^2/\eta$.
We have:
\begin{align*}
|F_2-Z|\leq \eta/4\cdot F_2 + \eta/2\cdot \hat{Z} + 4k\gamma^2/\eta\leq \eta/4\cdot F_2 + \eta/2 \cdot 3/2 \cdot F_2 + 4k\gamma^2/\eta\leq \eta F_2 + 4k\gamma^2/\eta.
\end{align*}
\end{proof}

\begin{theorem}[Streaming continual release $\ell_2$ frequency estimators]\label{thm:count_sketch}
Let $\varepsilon>0,\eta\in(0,0.5),\xi\in(0,0.5)$.
There is an $\varepsilon$-DP algorithm in the streaming continual release model such that with probability at least $1-\xi$, it always outputs for every timestamp $t\in [T]$:
\begin{enumerate}
    \item $\hat{f}_a$ for every $a\in\mathcal{U}$ such that $|f_a-\hat{f}_a|\leq \eta \|\mathcal{S}\|_2+O\left(\frac{\log(T/\xi)+\log(|\mathcal{U}|)}{\varepsilon}\cdot \log^{2.5}(T)\cdot \log\left(\frac{\log(T/\xi)+\log(|\mathcal{U}|)}{\xi\eta}\right)\right)$, where $\mathcal{S}$ denotes the stream $(a_1,a_2,\cdots,a_t)$ and $f_a$ denotes the frequency of $a$ in $\mathcal{S}$,
    \item $\hat{F}_2$ such that $|\hat{F}_2-\|\mathcal{S}\|_2^2|\leq \eta \|\mathcal{S}\|_2^2+O\left(\frac{(\log(T/\xi)+\log(|\mathcal{U}|))^2}{\varepsilon^2\eta^3}\cdot \log^{5}(T)\cdot \log^2\left(\frac{\log(T/\xi)+\log(|\mathcal{U}|)}{\xi\eta}\right)\right)$
\end{enumerate}
The algorithm uses $O\left(\frac{\log(T/\xi)+\log(|\mathcal{U}|)}{\eta^2}\cdot \log(T)\right)$ space.
\end{theorem}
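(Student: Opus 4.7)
The plan is to run $R=O(\log(T|\mathcal{U}|/\xi))$ independent copies of Algorithm~\ref{alg:countsketch}, each with $k=\Theta(1/\eta^2)$ buckets, and aggregate the per-bucket and per-moment outputs using the median trick of Lemma~\ref{lem:median_trick}. Inside each copy, the continual release summing over each stream $\mathcal{S}_i$ is instantiated by the binary tree mechanism (Theorem~\ref{thm:dp_summing}) with privacy parameter $\varepsilon_0 = \varepsilon/(2R)$ and per-timestamp failure probability $\xi_0 = \xi/\mathrm{poly}(T,|\mathcal{U}|,R)$ chosen so that a union bound over all timestamps and all $Rk$ summing instances yields overall failure probability at most $\xi/4$.

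For privacy, Lemma~\ref{lem:countsketch_dp_guarantee} shows each CountSketch copy is $2\varepsilon_0$-DP, and running $R$ copies on the same input stream composes (Theorem~\ref{thm:composition}) to $2R\varepsilon_0 = \varepsilon$-DP. Conditioning on the high-probability event that every summing subroutine has additive error at most $\gamma = O\!\left(\tfrac{1}{\varepsilon_0}\log^{2.5}(T)\log(1/\xi_0)\right) = O\!\left(\tfrac{\log(T/\xi)+\log|\mathcal{U}|}{\varepsilon}\log^{2.5}(T)\log\!\tfrac{\log(T/\xi)+\log|\mathcal{U}|}{\xi\eta}\right)$ at every timestamp, I apply Lemmas~\ref{lem:countsketch_frequency_elements} and \ref{lem:countsketch_frequency_momement}: with $k=\Theta(1/\eta^2)$ chosen so that the single-copy success probabilities $1-1/(k\eta^2)$ and $1-100/(k\eta^2)$ both exceed $2/3$, each copy gives a $(1,\eta\|\mathcal{S}\|_2+\gamma)$-approximation to $f_a$ and a $(1,\eta\|\mathcal{S}\|_2^2+4k\gamma^2/\eta)$-approximation to $\|\mathcal{S}\|_2^2$ with constant probability. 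The output $\hat f_a$ is the median of $g(a)\cdot z_{h(a)}^{(r)}$ across the $R$ copies and $\hat F_2$ is the median of $\sum_i (z_i^{(r)})^2$.

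The median trick boosts the per-$(a,t)$ failure probability to $\xi/(4T|\mathcal{U}|)$ and per-$t$ failure probability of the $F_2$ estimator to $\xi/(4T)$; union bounding over all $a\in\mathcal{U}$ and $t\in[T]$ yields overall success probability at least $1-\xi$. Substituting $\gamma$ into $4k\gamma^2/\eta = O(\gamma^2/\eta^3)$ recovers the stated bound $\eta\|\mathcal{S}\|_2^2 + O\!\left(\tfrac{(\log(T/\xi)+\log|\mathcal{U}|)^2}{\varepsilon^2\eta^3}\log^5(T)\log^2\!\tfrac{\log(T/\xi)+\log|\mathcal{U}|}{\xi\eta}\right)$.

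For space, each copy stores two $O(1)$-wise independent hash functions and $k$ binary tree summing instances, each of size $O(\log T)$ by Theorem~\ref{thm:dp_summing}, for $O(k\log T)$ per copy and $O(Rk\log T) = O\!\left(\tfrac{\log(T/\xi)+\log|\mathcal{U}|}{\eta^2}\cdot\log T\right)$ in total. The main obstacle I expect is the bookkeeping: carefully choosing $\varepsilon_0$, $\xi_0$, $k$, and $R$ so that the three nested budgets (DP composition, summing-subroutine accuracy, and CountSketch concentration) compose to exactly the quantitative bounds in the statement; in particular verifying that the $\log(\cdot)$ arguments after propagating through the median trick and the $\gamma^2$ in the $F_2$ estimator match the claimed form requires some care but is otherwise mechanical.
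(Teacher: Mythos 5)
Your proposal follows essentially the same route as the paper's own proof: run $R=O(\log(T|\mathcal{U}|/\xi))$ independent copies of the continual-release CountSketch of Algorithm~\ref{alg:countsketch} with $k=\Theta(1/\eta^2)$ buckets, instantiate each bucket's counter via Theorem~\ref{thm:dp_summing} at privacy budget $\varepsilon/(2R)$, invoke Lemmas~\ref{lem:countsketch_dp_guarantee}, \ref{lem:countsketch_frequency_elements}, and \ref{lem:countsketch_frequency_momement}, and boost via the median trick. The decomposition, composition accounting, and space analysis all match the paper.

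One bookkeeping slip: you set the per-summing-instance failure budget as $\xi_0=\xi/\mathrm{poly}(T,|\mathcal{U}|,R)$ ``so that a union bound over all timestamps and all $Rk$ summing instances'' gives $\xi/4$, but Theorem~\ref{thm:dp_summing} already guarantees its error bound \emph{simultaneously over all $T$ timestamps}; the union bound here is only over the $Rk$ summing instances (as the paper does, taking $\xi_0=\Theta\!\left(\xi/(Rk)\right)$). Folding an extraneous factor of $T$ into $\xi_0$ makes $\log(1/\xi_0)=\Theta(\log(T/\xi))$ rather than $\Theta\!\left(\log\frac{\log(T/\xi)+\log|\mathcal{U}|}{\xi\eta}\right)$, so the $\gamma$ you write down at the end does not actually follow from the $\xi_0$ you chose; with the extra $T$ the additive error would inherit a $\log(T/\xi)$ factor rather than the doubly-logarithmic factor in the stated bound. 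Removing the $T$ from the denominator of $\xi_0$ fixes this and recovers the stated theorem exactly.
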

\begin{proof}
Suppose we set $k=400/\eta^2$.
Due to Lemma~\ref{lem:countsketch_frequency_elements} and Lemma~\ref{lem:countsketch_frequency_momement}, the approximation guarantees hold with probability at least $2/3$ for each particular timestamp $t\in [T]$ and $a\in\mathcal{U}$.
To boost the success probability to $1-\xi/2$ for the approximation guarantees and simultaneously for all $t\in [T]$ and all $a\in\mathcal{|U|}$, according to Lemma~\ref{lem:median_trick}, we run $\lceil 50(\log(2T/\xi) + \log(|\mathcal{U}|)) \rceil$ copies of Algorithm~\ref{alg:countsketch} and take the median of each estimator.

We apply Theorem~\ref{thm:dp_summing} for the summing problem of $\mathcal{S}_i$ for each $i\in [k]$.
Since we run $\lceil 50(\log(2T/\xi) + \log(|\mathcal{U}|)) \rceil$ copies of Algorithm~\ref{alg:countsketch}, if we desire $\varepsilon$-DP algorithm in the end, we need each summing subroutine to be $\varepsilon/ (2\cdot \lceil 50(\log(2T/\xi) + \log(|\mathcal{U}|)) \rceil)$-DP according to Lemma~\ref{lem:countsketch_dp_guarantee}.
To simultaneously make the call of each run of the summing subroutine succeeds with probability at least $1-\xi/2$, we need to apply union bound over all calls of summing and thus each run of the summing subroutine should success with probability at least $1-\xi/ (2\cdot \lceil 50(\log(2T/\xi) + \log(|\mathcal{U}|))\rceil \cdot k)=1-\xi/ (2\cdot \lceil 50(\log(2T/\xi) + \log(|\mathcal{U}|))\rceil \cdot 400/\eta^2)$.
Thus, according to Theorem~\ref{thm:dp_summing}, we have $\alpha=1$ and $\gamma = O\left(\frac{\log(T/\xi)+\log(|\mathcal{U}|)}{\varepsilon}\cdot \log^{2.5}(T)\cdot \log\left(\frac{\log(T/\xi)+\log(|\mathcal{U}|)}{\xi\eta^2}\right)\right)$.

Finally, let us consider the total space usage, since we call $(\lceil 50(\log(2T/\xi) + \log(|\mathcal{U}|))\rceil \cdot 400/\eta^2)$ times of summing subroutine, the space needed for executing them is $O\left(\frac{\log(T/\xi)+\log(|\mathcal{U}|)}{\eta^2}\cdot \log(T)\right)$.
Additional space needed is $O(1)$.
Thus, the total space required is $O\left(\frac{\log(T/\xi)+\log(|\mathcal{U}|)}{\eta^2}\cdot \log(T)\right)$
\end{proof}

\subsection{Continual Released $\ell_p$ Heavy Hitters} \label{sec:lp_heavyhitters}
By applying the CountSketch, we are able to develop $\ell_p$ heavy hitters.

\begin{algorithm}[H]\label{alg:lp_heavyhitters}
\small
	\KwIn{A stream $\mathcal{S}$ of elements $a_1,a_2,\cdots,a_T\in \mathcal{U}\cup \{\perp\}$, a parameter $k\in\mathbb{Z}_{\geq 1}$, an error parameter $\eta\in (0,0.5)$.}
	\Parameter{Additive error parameters $\gamma_1,\gamma_2\geq 0$ depending on the streaming continual release CountSketch algorithm. \hfill{//See Theorem~\ref{thm:count_sketch}}.}
	\KwOut{A set $H\subseteq \mathcal{U}$ of elements and their estimated frequencies $\hat{f}:H\rightarrow \mathbb{R}_{\geq 0}$ at every timestamp $t$.}
	Let $\phi \geq \max\left(|\mathcal{U}|^{1-2/p},1\right)$.
	Let $m = 10k^2$. \\
	Let $h:\mathcal{U}\rightarrow [m]$ be a pairwise independent hash function where $\forall a\in\mathcal{U},i\in[m],\Pr[h(a)=i]=1/m$. \\
	Initialize empty streams $\mathcal{S}_1,\mathcal{S}_2,\cdots,\mathcal{S}_m$.
	\For{ each $a_t$ in the stream $\mathcal{S}$ }{
	    \If{$a_t=\perp$}{
	        Append $\perp$ to the end of every stream $\mathcal{S}_1,\mathcal{S}_2,\cdots,\mathcal{S}_m$.
	    }
	    \Else{
	        Append $a_t$ to the end of the stream $\mathcal{S}_{h(a_t)}$ and append $\perp$ to the end of every stream $\mathcal{S}_i$ for $i\not =h(a_t)$.
	    }
	    For $i\in [m]$, compute $\hat{F}_{2,i}$ which is a $(1.1,\gamma_1)$-approximation to $\|\mathcal{S}_i\|_2^2$.\\
	    For $a\in \mathcal{U}$, compute $\hat{f}_a$ which is a $(1,(\eta/16)/(10\sqrt{\phi k})\cdot\|\mathcal{S}_{h(a)}\|_2+\gamma_2)$-approximation to $f_a$, the frequency of $a$ in $\mathcal{S}$ (or equivalently in $\mathcal{S}_{h(a)}$).\\
	        For $a\in\mathcal{U}$, if $\hat{f}_a^2\geq \frac{\hat{F}_{2,h(a)}+\gamma_1}{25\phi k}+\frac{512\gamma_2^2}{\eta^2} $, add $a$ into $\hat{H}$.\\
	    Let $H\subseteq \hat{H}$ only keep the elements $a$ such that $\hat{f}_a$ is one the top-$\left(\left(\frac{1+\eta}{1-\eta}\right)^p\cdot k\right)$ values among $\{\hat{f}_b\mid b\in \hat{H}\}$. 
	    For each $a\in H$, report $\hat{f}(a)\gets \hat{f}_a$.
	}
    \caption{Continual Released $\ell_p$ Heavy Hitters ($p\in [0,\infty)$)}
\end{algorithm}

\begin{lemma}[DP guarantee]\label{lem:lp_heavyhitter_dp_guarantee}
If $\forall i\in[m]$ the subroutine in Algorithm~\ref{alg:lp_heavyhitters} of continually releasing $\hat{F}_{2,i}$ and $\hat{f}_a$ for all $a$ satisfying $h(a)=i$ is $\varepsilon$-DP, Algorithm~\ref{alg:lp_heavyhitters} is $2\varepsilon$-DP.
\end{lemma}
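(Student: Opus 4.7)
The plan is to invoke the composition theorem (Theorem~\ref{thm:composition}) on the partition of the input stream induced by the hash function $h$, exactly as in the streaming continual release CountSketch analysis (Lemma~\ref{lem:countsketch_dp_guarantee}). I condition on the random choice of $h$, so $h$ is fixed during the privacy analysis and the DP guarantee will hold pointwise over the randomness of $h$.

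Consider two neighboring input streams $\mathcal{S}=(a_1,\ldots,a_T)$ and $\mathcal{S}'=(a_1',\ldots,a_T')$ differing only at position $t$. Let $\mathcal{F}(\mathcal{S})=(\mathcal{S}_1,\ldots,\mathcal{S}_m)$ and $\mathcal{F}(\mathcal{S}')=(\mathcal{S}_1',\ldots,\mathcal{S}_m')$ be the substreams constructed inside Algorithm~\ref{alg:lp_heavyhitters}. For every $j\neq t$ and every $i\in[m]$, the $j$-th entry of $\mathcal{S}_i$ and $\mathcal{S}_i'$ agree. At position $t$, only bucket $h(a_t)$ sees a change in $\mathcal{S}_{h(a_t)}$ (from $a_t$ to $\perp$) and only bucket $h(a_t')$ sees a change in $\mathcal{S}_{h(a_t')}'$ (from $\perp$ to $a_t'$); all other buckets have $\perp$ at position $t$ in both streams. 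Hence the partition $\mathcal{F}$ has sensitivity at most $2$ in the sense of Definition~\ref{def:sensitivity}: we only need to change at most two of the $\mathcal{S}_i$'s, each at a single timestamp, to go from $\mathcal{F}(\mathcal{S})$ to $\mathcal{F}(\mathcal{S}')$ (and the sensitivity is $1$ in the degenerate case $h(a_t)=h(a_t')$).

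By hypothesis, for every $i\in[m]$ the subroutine $A_i$ that continually releases $\hat{F}_{2,i}$ together with $\{\hat{f}_a : h(a)=i\}$ is $\varepsilon$-DP on its input substream $\mathcal{S}_i$. Theorem~\ref{thm:composition} then guarantees that the concatenated output $(A_1(\mathcal{S}_1),\ldots,A_m(\mathcal{S}_m))$, viewed as a function of $\mathcal{S}$, is $2\varepsilon$-DP in the continual release sense.

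Finally, the per-timestamp output of Algorithm~\ref{alg:lp_heavyhitters}, namely the set $\hat{H}$, its top-$\left(((1+\eta)/(1-\eta))^p\cdot k\right)$ refinement $H$, and the reported $\hat{f}(a)=\hat{f}_a$ for $a\in H$, is a deterministic function of $(\hat{F}_{2,i})_{i\in[m]}$ and $(\hat{f}_a)_{a\in\mathcal{U}}$ (together with the fixed $h$). Thus by the post-processing property of differential privacy, the continually released output of Algorithm~\ref{alg:lp_heavyhitters} is $2\varepsilon$-DP. The only subtle point, and essentially the only thing one has to check carefully, is that swapping $a_t$ and $a_t'$ perturbs at most two of the $\mathcal{S}_i$'s and each at only a single coordinate; the rest is pure invocation of Theorem~\ref{thm:composition} and post-processing.
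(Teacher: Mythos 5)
Your proof is correct and follows essentially the same route as the paper's: identify that changing one stream element perturbs at most two of the buckets $\mathcal{S}_{h(a_t)},\mathcal{S}_{h(a_t')}$ each in a single coordinate, so the partition has sensitivity $2$, and then apply composition (Theorem~\ref{thm:composition}). The extra remarks about conditioning on $h$ and invoking post-processing for the top-$k$ selection are sound and make explicit what the paper leaves implicit.
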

\begin{proof}
Consider two neighboring streams $\mathcal{S}$ and $\mathcal{S}'$ where the only difference is the $t$-th element, i.e., $a_t\not=a'_t$.
If $a_t\not=\perp$, it only causes the difference of at most one element between $\mathcal{S}_{h(a_t)}$ and $\mathcal{S}'_{h(a_t)}$.
Similarly, if $a'_t\not=\perp$, it only causes the difference of at most one element between $\mathcal{S}_{h(a'_t)}$ and $\mathcal{S}'_{h(a'_t)}$.
Thus if for each $i\in[m]$, the continual release algorithm which releases $\hat{F}_{2,i}$ and $\hat{f}_a$ for every $a\in\mathcal{U}$ with $h(a)=i$ is $\varepsilon$-DP, the overall algorithm is $2\varepsilon$-DP.
\end{proof}

\begin{lemma}\label{lem:heavyhitter_goodapprox}
At any timestamp $t\in[T]$, $\forall a\in \mathcal{U}$, if $a\in \hat{H}$, $(1-\eta)f^2_a\leq \hat{f}^2(a)\leq (1+\eta)f^2_a$ where $f_a$ is the frequency of $a$ in $a_1,a_2,\cdots,a_t$.
\end{lemma}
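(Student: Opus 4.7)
The plan is to show that membership of $a$ in $\hat{H}$ forces $|\hat{f}_a|$ to dominate both sources of error in the approximate frequency $\hat{f}_a$, so that the additive error guarantee converts into a small relative error, which then squares cleanly into the claimed $(1\pm\eta)$ bound on $\hat{f}^2(a)$.

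I would first unpack the two guarantees available. The CountSketch-based estimator gives $|\hat{f}_a - f_a| \le \beta + \gamma_2$ where $\beta := \frac{\eta}{160\sqrt{\phi k}}\,\|\mathcal{S}_{h(a)}\|_2$. Meanwhile, since $\hat{F}_{2,h(a)}$ is a $(1.1,\gamma_1)$-approximation to $\|\mathcal{S}_{h(a)}\|_2^2$, one has $\hat{F}_{2,h(a)} + \gamma_1 \ge \|\mathcal{S}_{h(a)}\|_2^2/1.1$. Combining this with the membership condition
\[
\hat{f}_a^{\,2} \;\ge\; \frac{\hat{F}_{2,h(a)} + \gamma_1}{25\,\phi k} + \frac{512\,\gamma_2^2}{\eta^2}
\]
gives the two separate lower bounds $\hat{f}_a^{\,2} \ge \|\mathcal{S}_{h(a)}\|_2^2/(27.5\,\phi k)$ and $\hat{f}_a^{\,2} \ge 512\,\gamma_2^2/\eta^2$.

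The key step is then to divide each piece of the error by $|\hat{f}_a|$ using the matching lower bound. This yields
\[
\frac{\beta}{|\hat{f}_a|} \;\le\; \frac{\eta}{160\sqrt{\phi k}} \cdot \sqrt{27.5\,\phi k} \;=\; \frac{\eta\sqrt{27.5}}{160} \;<\; \frac{\eta}{20},
\qquad
\frac{\gamma_2}{|\hat{f}_a|} \;\le\; \frac{\eta}{\sqrt{512}} \;<\; \frac{\eta}{20},
\]
so that $|\hat{f}_a - f_a| \le (\eta/10)\,|\hat{f}_a|$. This immediately gives $(1-\eta/10)\,|\hat{f}_a| \le |f_a| \le (1+\eta/10)\,|\hat{f}_a|$, and squaring and rearranging, together with the elementary inequalities $(1+\eta/10)^{-2} \ge 1-\eta$ and $(1-\eta/10)^{-2} \le 1+\eta$ valid for $\eta\in(0,0.5)$, yields $(1-\eta)\,f_a^2 \le \hat{f}_a^{\,2} \le (1+\eta)\,f_a^2$, which is exactly the claim since $\hat{f}(a) = \hat{f}_a$ on $H \subseteq \hat{H}$.

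There is no real obstacle here beyond bookkeeping the constants: the numerical slack $512$, $25$, $160$, $1.1$, $27.5$ have been chosen precisely so that both ratios $\beta/|\hat{f}_a|$ and $\gamma_2/|\hat{f}_a|$ land well under $\eta/10$. The only care needed is to remember that $\hat{f}_a$ can in principle be negative (since $g(a)\in\{-1,1\}$), so the bounds above are stated for $|\hat{f}_a|$ and squared at the end; since the squaring is symmetric, the sign plays no further role.
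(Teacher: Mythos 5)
Your proof is correct, and its central observation coincides with the paper's: the $\hat{H}$-membership condition forces $\hat{f}_a^2$ to dominate both error sources in the frequency estimate, so the additive guarantee $|\hat{f}_a - f_a|\le \frac{\eta}{160\sqrt{\phi k}}\|\mathcal{S}_{h(a)}\|_2 + \gamma_2$ turns into a small relative error. The routes diverge in the arithmetic packaging. The paper works entirely with squared quantities: it first shows $\frac{16}{\eta}\bigl(\frac{\eta/16}{10\sqrt{\phi k}}\|\mathcal{S}_{h(a)}\|_2 + \gamma_2\bigr)^2 \le \frac{\eta}{16}\hat{f}_a^2$, then invokes an AM--GM/convexity-type inequality $(f_a+v)^2 \ge (1-\eta/16)f_a^2 - \frac{16}{\eta}v^2$ (and its upper analogue) applied with $v=\hat{f}_a-f_a$, absorbs the $\frac{16}{\eta}v^2$ term into $\frac{\eta}{16}\hat f_a^2$, and rearranges; this mirrors the technique already used in Lemma~\ref{lem:countsketch_frequency_momement}. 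You instead stay unsquared: you bound each error term by $(\eta/20)|\hat{f}_a|$, conclude $|f_a - \hat{f}_a|\le (\eta/10)|\hat{f}_a|$, observe this forces $\hat{f}_a>0$, and only at the end square and apply the scalar estimates $(1+\eta/10)^{-2}\ge 1-\eta$, $(1-\eta/10)^{-2}\le 1+\eta$. Your version avoids the convexity step and is somewhat more transparent; both give the claimed factors with ample slack from the constants $25$, $512$, $1.1$, and $160$. One minor caveat: your argument that $\hat{f}_a$ could "in principle be negative" is correct to flag, but it is worth noting that under $|\hat f_a - f_a|\le(\eta/10)|\hat f_a|$ with $f_a\ge 0$, a strictly negative $\hat f_a$ is in fact impossible, so the absolute-value bookkeeping collapses harmlessly as you say.
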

\begin{proof}
Since $a\in \hat{H}$, we have: 
\begin{align*}
\hat{f}_a^2\geq \frac{\hat{F}_{2,h(a)}+\gamma_1}{25\phi k}+\frac{512\gamma_2^2}{\eta^2}.
\end{align*}
Thus:
\begin{align*}
\frac{\eta}{16}\cdot \hat{f}_a^2&\geq \frac{16}{\eta}\cdot \left(\frac{2\cdot(\eta/16)^2}{100\phi k}\cdot \left(2\hat{F}_{2,h(a)}+2\gamma_1\right)+2\gamma_2^2\right)\\
&\geq \frac{16}{\eta}\cdot \left(\frac{2\cdot(\eta/16)^2}{100 \phi k}\cdot \|\mathcal{S}_{h(a)}\|_2^2+2\gamma_2^2\right)\\
&= \frac{16}{\eta}\cdot \left(2\cdot\left(\frac{\eta/16}{10\sqrt{\phi k}}\cdot \|\mathcal{S}_{h(a)}\|_2 \right)^2+2\gamma_2^2\right)\\
&\geq \frac{16}{\eta}\cdot \left(\frac{\eta/16}{10\sqrt{\phi k}}\cdot \|\mathcal{S}_{h(a)}\|_2 + \gamma_2\right)^2
\end{align*}
By convexity:
\begin{align*}
\hat{f}_a^2\geq (1-\eta/16)\cdot f_a^2-\frac{16}{\eta}\cdot \left(\frac{\eta/16}{10\sqrt{\phi k}}\cdot \|\mathcal{S}_{h(a)}\|_2 + \gamma_2\right)^2\geq (1-\eta/16)\cdot f_a^2-\eta/16\cdot\hat{f}_a^2
\end{align*}
and
\begin{align*}
\hat{f}_a^2\leq 1/(1-\eta/16)\cdot f_a^2 + \frac{16}{\eta}\cdot \left(\frac{\eta/16}{10\sqrt{\phi k}}\cdot \|\mathcal{S}_{h(a)}\|_2 + \gamma_2\right)^2\leq 1/(1-\eta/16)\cdot f_a^2+\eta/16\cdot\hat{f}_a^2.
\end{align*}
Since $\eta\in(0,0.5)$, we have $(1-\eta)f_a^2\leq \hat{f}_a^2\leq (1+\eta)f_a^2$.
\end{proof}

\begin{lemma}\label{lem:lp_heavyhitter_size_output}
At any timestamp $t$, the output $H$ of Algorithm~\ref{alg:lp_heavyhitters} has size at most $\left(\frac{1+\eta}{1-\eta}\right)^p\cdot k$.
\end{lemma}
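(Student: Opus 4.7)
The plan is very short because the bound is immediate from the definition of $H$ in the algorithm. The algorithm first forms the candidate set $\hat{H}\subseteq\mathcal{U}$ via the thresholding test $\hat{f}_a^2\geq \frac{\hat{F}_{2,h(a)}+\gamma_1}{25\phi k}+\frac{512\gamma_2^2}{\eta^2}$, and then explicitly defines
\[
H \;=\; \Bigl\{\,a\in\hat{H} \;:\; \hat{f}_a \text{ is among the top-}\Bigl\lfloor\bigl(\tfrac{1+\eta}{1-\eta}\bigr)^p k\Bigr\rfloor \text{ values of } \{\hat{f}_b : b\in\hat{H}\}\,\Bigr\}.
\]
So there is essentially nothing to prove: one simply observes that a set obtained by keeping the top-$N$ values of a finite multiset has cardinality at most $N$, yielding $|H|\le \bigl(\tfrac{1+\eta}{1-\eta}\bigr)^p k$ directly.

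The only minor point worth mentioning is the tie-breaking convention, since several elements of $\hat{H}$ may share the same value of $\hat{f}_a$ near the cutoff rank. I would fix any deterministic tie-breaking rule (for instance, break ties by the lexicographic order of $a\in\mathcal{U}$) so that the ``top-$N$'' set has cardinality exactly $\min(N,|\hat{H}|)$; this convention is consistent with differential privacy since it depends only on the already-released estimates $\{\hat{f}_b\}$. The main obstacle, if any, is simply being careful that this tie-breaking is well-defined and that the rank cutoff $\bigl(\tfrac{1+\eta}{1-\eta}\bigr)^p k$ is treated as $\lfloor\cdot\rfloor$ rather than as a real number; neither requires any real work. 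I therefore expect the proof to be one or two lines invoking the definition of $H$.
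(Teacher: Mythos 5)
Your proof is correct and follows exactly the same one-line observation as the paper: $H$ is by construction the top-$\bigl(\frac{1+\eta}{1-\eta}\bigr)^p k$ elements of $\hat H$ by $\hat f$-value, so its cardinality is bounded by that rank cutoff. The remarks about tie-breaking and flooring are harmless pedantry that the paper elides; they do not change the argument.
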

\begin{proof}
Note that $H$ only keeps the top-$\left(\left(\frac{1+\eta}{1-\eta}\right)^p\cdot k\right)$ values from $\hat{H}$.
\end{proof}

\begin{lemma}\label{lem:acc_heavy_hitter}
At any timestamp $t$, consider any $a\in \mathcal{U}$.
Let $f_a$ be the frequency of $a$ in $a_1,a_2,\cdots,a_t$.
If $f_a\geq 4 \sqrt{\gamma_1/(\phi k)+512\gamma_2^2/\eta^2}$ and $f_a^p\geq \|\mathcal{S}\|_p^p/k$, with probability at least $0.9$, $a\in \hat{H}$.
\end{lemma}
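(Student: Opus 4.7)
The plan is to show that, with probability at least $0.9$ over the hash $h$, the bucket $\mathcal{S}_{h(a)}$ containing $a$ carries only $O(\phi f_a^2)$ mass in $\ell_2^2$, so that the algorithm's approximate frequency $\hat{f}_a$ safely exceeds the selection threshold, once the hypothesis $f_a\geq 4\sqrt{\gamma_1/(\phi k)+512\gamma_2^2/\eta^2}$ is invoked at the end.

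First I would control the $\ell_p$ mass of the elements colliding with $a$. Pairwise independence of $h$ gives $\mathbb{E}\bigl[\sum_{b\neq a:\,h(b)=h(a)} f_b^p\bigr] = (\|\mathcal{S}\|_p^p-f_a^p)/m \leq k f_a^p/m = f_a^p/(10k)$, using $\|\mathcal{S}\|_p^p\leq k f_a^p$ from the heavy-hitter hypothesis together with $m=10k^2$. Markov's inequality then yields $\sum_{b\neq a:\,h(b)=h(a)} f_b^p \leq f_a^p/k$ with probability at least $0.9$. I would then convert this $\ell_p$-mass bound inside the bucket into the $\ell_2$-mass bound that the algorithm's selection rule needs: for $p\leq 2$ by the monotonicity $\|\cdot\|_2\leq\|\cdot\|_p$ of (quasi-)norms, and for $p>2$ by the power-mean (Hölder) inequality on the at-most-$|\mathcal{U}|$ nonzero coordinates of the bucket, which costs a factor $|\mathcal{U}|^{1-2/p}\leq \phi$. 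In either case, $\sum_{b\neq a:\,h(b)=h(a)} f_b^2 \leq \phi f_a^2/k^{2/p} \leq \phi f_a^2$, and hence $\|\mathcal{S}_{h(a)}\|_2^2 \leq 2\phi f_a^2$.

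Next I would feed this bucket bound into the two approximation guarantees. The $(1.1,\gamma_1)$-guarantee yields $\hat{F}_{2,h(a)}\leq 2.2\phi f_a^2+\gamma_1$, and therefore $(\hat{F}_{2,h(a)}+\gamma_1)/(25\phi k)+512\gamma_2^2/\eta^2 \leq 0.1\,f_a^2 + \gamma_1/(\phi k) + 512\gamma_2^2/\eta^2$ once $k\geq 1$. The frequency-approximation guarantee combined with $\|\mathcal{S}_{h(a)}\|_2\leq\sqrt{2\phi}\,f_a$ gives $|\hat{f}_a-f_a|\leq (\eta/100)\,f_a+\gamma_2$; squaring and using an AM-GM split of the form $2AB\leq A^2/100+100B^2$ to absorb the cross term produces $\hat{f}_a^{\,2}\geq 0.97\,f_a^2 - 100\,\gamma_2^2$. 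Requiring $\hat{f}_a^{\,2}$ to exceed the threshold then reduces to $\Omega(f_a^2)\geq \gamma_1/(\phi k)+O(\gamma_2^2/\eta^2)$, which is exactly what the hypothesis delivers with a generous multiplicative slack (using $\eta\leq 1/2$ so $512/\eta^2\geq 100$ to absorb the stray $100\gamma_2^2$ term).

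The main obstacle is the case split in the second step: the heavy-hitter hypothesis is an $\ell_p$ statement, while the selection rule compares $\hat{f}_a^{\,2}$ against an $\ell_2$ quantity on the bucket. Passing from one to the other uniformly for all $p\in[0,\infty)$ is exactly what forces the compensating factor $\phi=\max(|\mathcal{U}|^{1-2/p},1)$ into the algorithm and, ultimately, the $n^{1-2/p}$ factor in the overall space bound; once both regimes are folded into the clean inequality $\|\mathcal{S}_{h(a)}\|_2^2\leq 2\phi f_a^2$, the remainder is just careful bookkeeping of the additive terms $\gamma_1,\gamma_2$.
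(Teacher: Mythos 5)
Your proof is correct, and it takes a genuinely different route than the paper at the one step that requires randomness in $h$. The paper takes a combinatorial tack: since there can be at most $k$ elements $b$ with $f_b^p\geq\|\mathcal{S}\|_p^p/k$, a union bound over these heavy hitters shows that with probability $\geq 0.9$ none of them collides with $a$; then for $p\leq 2$ it leverages the fact that $a$ has the largest frequency in its bucket to bound the bucket ratio $\|\mathcal{S}_{h(a)}\|_p^p/\|\mathcal{S}_{h(a)}\|_2^2\geq f_a^{p-2}$ (yielding $\|\mathcal{S}_{h(a)}\|_2^2\leq \phi k f_a^2$), while for $p>2$ it applies H\"older to the \emph{entire} stream (not just the bucket). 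Your route instead applies Markov directly to the $\ell_p$ collision mass $\mathbb{E}\bigl[\sum_{b\neq a:\,h(b)=h(a)}f_b^p\bigr]\leq f_a^p/(10k)$, and then converts the resulting $\ell_p$ bound to an $\ell_2$ bound \emph{within the bucket} via norm monotonicity ($p\leq2$) or H\"older ($p>2$). This gives the tighter conclusion $\|\mathcal{S}_{h(a)}\|_2^2\leq 2\phi f_a^2$ rather than the paper's $\phi k f_a^2$, which makes the downstream arithmetic looser and shorter; both approaches use only pairwise independence of $h$ and both suffice for the constants in the lemma. The only point worth flagging explicitly for a careful write-up is the squaring step: from $\hat f_a \geq (1-\eta/100)f_a-\gamma_2$, concluding $\hat f_a^2 \geq ((1-\eta/100)f_a-\gamma_2)^2$ requires a brief remark that when the right-hand side is negative the bound $\hat f_a^2\geq 0.97f_a^2-100\gamma_2^2$ holds trivially since the target is then nonpositive; this is the same kind of convexity maneuver the paper spells out, and it closes without incident.
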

\begin{proof}
In this proof, we consider all streams and variables at the timestamp $t$.
Suppose $f_a^p\geq \|\mathcal{S}\|_p^p/k$.
Let $B=\{b\in \mathcal{U}\mid f_b^p\geq \|\mathcal{S}\|_p^p/k\}$.
Then with probability at least $0.9$, $\forall b\in B\setminus\{a\},h(a)\not =h(b)$.
In the remaining of the proof, we condition on $\forall b\in B\setminus\{a\},h(a)\not =h(b)$.

\xhdr{Case 1, $p\leq 2$:}
In this case, we have $\phi=1$.
 $\forall x\in\mathcal{U}$ with $h(x)=h(a)$, we have $f_x\leq f_a$ which implies that $f_x^{p-2}\geq f_a^{p-2}$ since $p\leq 2$.
Since $\forall x\in\mathcal{U},\frac{f_x^p}{f_x^2}\leq 1$, we have
\begin{align*}
\frac{\|\mathcal{S}_{h(a)}\|_p^p}{\|\mathcal{S}_{h(a)}\|_2^2}=\frac{\sum_{x\in\mathcal{U}:h(x)=h(a)} f_x^p}{\sum_{x\in\mathcal{U}:h(x)=h(a)} f_x^2}\geq \min_{x\in\mathcal{U}:h(x)=h(a)}\frac{f_x^p}{f_x^2} =\frac{f_a^p}{f_a^2},
\end{align*}
which implies that $f_a^2/\|\mathcal{S}_{h(a)}\|_2^2\geq f_a^p/\|\mathcal{S}_{h(a)}\|_p^p\geq f_a^p/\|\mathcal{S}\|_p^p\geq 1/k$ and thus $f_a^2\geq \|\mathcal{S}_{h(a)}\|_2^2/(\phi k)$.

\xhdr{Case 2, $p>2$:}
In this case, we have $\phi = |\mathcal{U}|^{1-2/p}$.
Since $f_a^p\geq \|\mathcal{S}\|_p^p/k$, we have:
\begin{align*}
f_a\geq \|\mathcal{S}\|_p/k^{1/p}\geq \|\mathcal{S}\|_p/k^{1/2}\geq \|\mathcal{S}\|_2 / ( k^{1/2} \cdot |\mathcal{U}|^{1/2-1/p}),
\end{align*}
where the second inequality follows from $k^{1/2}\geq k^{1/p}$ for $p>2$, and the third inequality follows from Holder's inequality that $\|\mathcal{S}\|_2\leq |\mathcal{U}|^{1/2-1/p}\cdot\|\mathcal{S}\|_p$.
Therefore, $f_a^2\geq \|\mathcal{S}\|_2^2/(\phi k)$.

Therefore, in both above cases, we always have $f_a^2\geq \|\mathcal{S}\|_2^2/(\phi k)$.

By convexity, we have
\begin{align*}
\hat{f}_a^2&\geq (1-\eta/16)\cdot f_a^2 - \frac{16}{\eta}\cdot \left(\frac{\eta/16}{10\sqrt{\phi k}}\cdot \|\mathcal{S}_{h(a)}\|_2+\gamma_2\right)^2\\
&\geq (1-\eta/16)\cdot f_a^2 - \frac{16}{\eta}\cdot \left(\frac{2(\eta/16)^2}{100\phi k}\cdot \|\mathcal{S}_{h(a)}\|_2^2+2\gamma_2^2\right)\\
\end{align*}
Thus we have:
\begin{align}
\hat{f}_a^2 &\geq \frac{1}{2}\cdot \|\mathcal{S}_{h(a)}\|_2^2/(\phi k) - \frac{16}{\eta}\cdot \left(\frac{2(\eta/16)^2}{100\phi k}\cdot \|\mathcal{S}_{h(a)}\|_2^2+2\gamma_2^2\right)\notag\\
&\geq  \frac{1}{2}\cdot \frac{1}{2}\cdot (\hat{F}_{2,h(a)}-\gamma_1)/(\phi k) - \frac{16}{\eta}\cdot \left(\frac{2(\eta/16)^2}{100\phi k}\cdot 2\cdot (\hat{F}_{2,h(a)}+\gamma_1)+2\gamma_2^2\right)\notag\\
&= \left(\frac{1}{4}-\frac{\eta/16}{25}\right)\cdot \frac{\hat{F}_{2,h(a)}}{\phi k} - \left(\frac{1}{4}+\frac{\eta/16}{25}\right)\cdot \frac{\gamma_1}{\phi k} - \frac{32}{\eta} \cdot \gamma_2^2\notag\\
&\geq \frac{1}{5}\cdot \frac{\hat{F}_{2,h(a)}}{\phi k}-\frac{1}{3}\cdot \frac{\gamma_1}{\phi k} - \frac{32}{\eta^2}\cdot \gamma_2^2\notag\\
&\geq 2\cdot \left(\frac{\hat{F}_{2,h(a)}+\gamma_1}{25 \phi k}+\frac{512\gamma_2^2}{\eta^2}\right)-\left(\frac{2048\gamma_2^2}{\eta^2}+\frac{2}{3}\cdot \frac{\gamma_1}{\phi k}-\frac{3}{25\phi k}\cdot \hat{F}_{2,h(a)}\right)\label{eq:fhat_lb1}
\end{align}
On the other hand, by convexity, we have:
\begin{align*}
\hat{f}_a^2&\geq (1-\eta/16)f_a^2-\frac{16}{\eta}\cdot \left(\frac{\eta/16}{10\sqrt{\phi k}}\|\mathcal{S}_{h(a)}\|_2+\gamma_2\right)^2\\
&\geq (1-\eta/16)f_a^2-\frac{16}{\eta}\cdot \left(\frac{2(\eta/16)^2}{100\phi k}\|\mathcal{S}_{h(a)}\|_2^2+2\gamma_2^2\right)
\end{align*}
and thus
\begin{align}
\hat{f}_a^2&\geq \frac{1}{2}\cdot 16\cdot (\gamma_1/(\phi k)+512\gamma_2^2/\eta^2) - \frac{16}{\eta}\cdot \left(\frac{2(\eta/16)^2}{100\phi k}\cdot 2\cdot (\hat{F}_{2,h(a)}+\gamma_1)+2\gamma_2^2\right)\notag\\
&= \left(8 - \frac{\eta/16}{25k}\right)\cdot \frac{\gamma_1}{\phi k} + \left(\frac{4096}{\eta^2}-\frac{32}{\eta}\right)\gamma_2^2 - \frac{\eta/16}{25\phi k}\cdot \hat{F}_{2,h(a)}\notag\\
&\geq \frac{2}{3}\cdot\frac{\gamma_1}{\phi k} + \frac{2048}{\eta^2}\cdot\gamma_2^2 - \frac{3}{25\phi k}\cdot \hat{F}_{2,h(a)}\label{eq:fhat_lb2}
\end{align}
By looking at Equation~\eqref{eq:fhat_lb1} $+$ Equation~\eqref{eq:fhat_lb2}, we have $\hat{f}_a^2\geq \frac{\hat{F}_{2,h(a)}+\gamma_1}{25\phi k}+\frac{512\gamma_2^2}{\eta^2}$.
Thus, $a\in \hat{H}$.
\end{proof}

\begin{lemma}\label{lem:top_k_operation}
At any timestamp $t$, if $f_a^p\geq \|\mathcal{S}\|_p^p/k$ and $a\in\hat{H}$, then $a\in H$.
\end{lemma}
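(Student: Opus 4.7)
The plan is to combine Lemma~\ref{lem:heavyhitter_goodapprox} with the heavy-hitter hypothesis to control how many elements of $\hat{H}$ can outrank $a$ in the estimated frequencies. Lemma~\ref{lem:heavyhitter_goodapprox} gives $(1-\eta) f_b^2 \leq \hat{f}_b^2 \leq (1+\eta) f_b^2$ for every $b \in \hat{H}$, and since $\hat{f}_b, f_b \geq 0$ this yields $\sqrt{1-\eta}\,f_b \leq \hat{f}_b \leq \sqrt{1+\eta}\,f_b$. In particular $\hat{f}_a \geq \sqrt{1-\eta}\,f_a$.

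The main step is then to bound $N := |\{b \in \hat{H} : \hat{f}_b > \hat{f}_a\}|$. For any such $b$, the two-sided approximation gives
\[
\sqrt{1+\eta}\,f_b \;\geq\; \hat{f}_b \;>\; \hat{f}_a \;\geq\; \sqrt{1-\eta}\,f_a,
\]
so $f_b > \sqrt{\tfrac{1-\eta}{1+\eta}}\,f_a$, and hence $f_b^p > \left(\tfrac{1-\eta}{1+\eta}\right)^{p/2} f_a^p \geq \left(\tfrac{1-\eta}{1+\eta}\right)^{p/2} \cdot \|\mathcal{S}\|_p^p / k$ by the heavy-hitter assumption $f_a^p \geq \|\mathcal{S}\|_p^p/k$. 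Since $\sum_{b} f_b^p = \|\mathcal{S}\|_p^p$, a standard counting argument gives $N < k \cdot \left(\tfrac{1+\eta}{1-\eta}\right)^{p/2}$.

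To conclude $a \in H$, I just need $N + 1 \leq \left(\tfrac{1+\eta}{1-\eta}\right)^{p}\cdot k$, i.e., $a$ lies in the top-$\left(\tfrac{1+\eta}{1-\eta}\right)^p k$ of $\hat{H}$. Because $p \geq 0$ and $\eta \in (0, 0.5)$ guarantee $\tfrac{1+\eta}{1-\eta} \geq 1$ and $p/2 \leq p$, we have $\left(\tfrac{1+\eta}{1-\eta}\right)^{p/2} \leq \left(\tfrac{1+\eta}{1-\eta}\right)^p$; this slack cleanly absorbs the $+1$ (and any tie-breaking subtleties between strict and non-strict inequality in the top-$K$ selection). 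So $\hat{f}_a$ survives the top-$K$ pruning, and $a \in H$.

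The main conceptual obstacle is the gap between the $p/2$ exponent that naturally arises from the square root in the CountSketch approximation and the $p$ exponent used in the algorithm's top-$K$ threshold. The observation above — that the algorithm's more generous threshold $\left(\tfrac{1+\eta}{1-\eta}\right)^p$ absorbs the weaker $\left(\tfrac{1+\eta}{1-\eta}\right)^{p/2}$ bound we actually prove — is what makes the argument go through without any additional careful tuning. No other nontrivial calculation is required; the rest is bookkeeping around whether $\hat{f}_b > \hat{f}_a$ versus $\hat{f}_b \geq \hat{f}_a$, which is harmlessly handled by the same slack.
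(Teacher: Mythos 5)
Your proof takes the same essential route as the paper's: use Lemma~\ref{lem:heavyhitter_goodapprox} to convert the ordering of estimated frequencies into a lower bound on the true $\ell_p$ mass of any element that outranks $a$, and then invoke the $\ell_p$ budget $\sum_b f_b^p = \|\mathcal{S}\|_p^p$ to bound how many such elements there can be. The paper phrases it as a proof by contradiction on the set $Q = H$ of top-ranked elements, while you count $N$ directly; that is a cosmetic difference. You also extract the tighter exponent $p/2$, whereas the paper states the slightly looser (but correct, since $\frac{1-\eta}{1+\eta}<1$) exponent $p$ that exactly matches the algorithm's top-$K$ threshold; both are fine.

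However, the step where you claim that the gap between $\left(\frac{1+\eta}{1-\eta}\right)^{p/2}$ and $\left(\frac{1+\eta}{1-\eta}\right)^{p}$ ``cleanly absorbs the $+1$'' is not correct as stated, and this is a genuine (if easily repaired) hole. The slack is $k\left[\left(\frac{1+\eta}{1-\eta}\right)^{p} - \left(\frac{1+\eta}{1-\eta}\right)^{p/2}\right]$, which can be arbitrarily smaller than $1$; for instance with $k=1$, $p=\tfrac12$, $\eta=0.1$ one has $K' := k\left(\frac{1+\eta}{1-\eta}\right)^{p/2} \approx 1.05$ and $K := k\left(\frac{1+\eta}{1-\eta}\right)^{p}\approx 1.11$, so the slack is about $0.05$, and $N<K'$ only gives $N\le 1$, which does \emph{not} yield $N+1\le K$. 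What actually closes the gap is that $a$ itself should be counted in the $\ell_p$ budget: since $f_a^p\ge \|\mathcal{S}\|_p^p/k > \left(\frac{1-\eta}{1+\eta}\right)^{p/2}\|\mathcal{S}\|_p^p/k$, the set of the $N$ outranking elements together with $a$ gives $N+1 < k\left(\frac{1+\eta}{1-\eta}\right)^{p/2} \le K$, and then integrality of $N$ finishes the job. This is exactly what the paper's contradiction structure accomplishes implicitly: $Q$ has size at least $K$, all of $Q$ already saturates the $\|\mathcal{S}\|_p^p$ budget, and $a\notin Q$ contributes strictly positive additional mass, yielding the contradiction. (Both your write-up and the paper are also a bit loose about rounding/ties in the top-$K$ selection when $K$ is not an integer, but that is a shared imprecision and not specific to your argument.) You should also separately note the degenerate case $p=0$, where the strict monotone step $f_b > c f_a \Rightarrow f_b^p > c^p f_a^p$ collapses; the conclusion still holds but via the cardinality bound $\|\mathcal{S}\|_0\le k$ directly.
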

\begin{proof}
We prove the statement by contradiction.
Suppose $a\not\in H$, there is a subset $Q\subseteq H$ with $|Q|\geq \left(\frac{1+\eta}{1-\eta}\right)^p\cdot k$ such that $\forall b\in Q,\hat{f}_b\geq \hat{f}_a$.
According to Lemma~\ref{lem:heavyhitter_goodapprox}, we have $\forall b\in Q, f_b^p \geq \left(\frac{1-\eta}{1+\eta}\right)^p\cdot f_a^p\geq \left(\frac{1-\eta}{1+\eta}\right)^p\cdot \|\mathcal{S}\|_p^p/k$.
Then we have $\sum_{b\in Q}f_b^p\geq \|\mathcal{S}\|_p^p$ which leads to a contradiction.
\end{proof}

\begin{theorem}[$\ell_p$ Heavy hitters for all $p\in[0,\infty)$]\label{thm:lp_heavyhitters}
Let $\varepsilon>0,\eta\in(0,0.5),k\geq 1,\xi \in(0,0.5)$.
Let $\phi = \max(1,|\mathcal{U}|^{1-2/p})$.
There is an $\varepsilon$-DP algorithm in the streaming continual release model such that with probability at least $1-\xi$, it always outputs a set $H\subseteq \mathcal{U}$ and a function $\hat{f}: H\rightarrow \mathbb{R}$ for every timestamp $t\in [T]$:
such that
\begin{enumerate}
    \item $\forall a\in H$, $\hat{f}(a)\in (1\pm \eta)\cdot f_a$ where $f_a$ is the frequency of $a$ in the stream $\mathcal{S}=(a_1,a_2,\cdots,a_t)$,
    \item $\forall a\in \mathcal{U},$ if $f_a\geq \frac{1}{\varepsilon\eta}\cdot \log^C\left(\frac{T\cdot k\cdot |\mathcal{U}|}{\xi\eta}\right)$ for some sufficiently large constant $C>0$ and $f_a^p\geq \|\mathcal{S}\|_p^p/k$ then $a\in H$,
    \item The size of $H$ is at most $O\left((\log(T/\xi)+\log(|\mathcal{U}|))\cdot\left(\frac{1+\eta}{1-\eta}\right)^p\cdot k \right)$.
\end{enumerate}
The algorithm uses $\frac{\phi k^3}{\eta^2}\cdot \poly\left(\log\left(\frac{T\cdot k\cdot |\mathcal{U}|}{\xi}\right)\right)$ space.
\end{theorem}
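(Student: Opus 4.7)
The plan is to plug the streaming continual release CountSketch of Theorem~\ref{thm:count_sketch} into the subroutines of Algorithm~\ref{alg:lp_heavyhitters}, run $R=O(\log(T/\xi)+\log(|\mathcal{U}|))$ parallel copies with independent hash functions, and output the union of the $H_r$'s together with an estimate $\hat f(a)$ drawn from any copy in which $a\in \hat H_r$. For privacy, I first invoke Lemma~\ref{lem:lp_heavyhitter_dp_guarantee}: inside a single copy, changing one stream entry touches at most two of the $m$ bucket-streams $\mathcal{S}_1,\ldots,\mathcal{S}_m$, so if each bucket's CountSketch instance is $\varepsilon'$-DP, the copy is $2\varepsilon'$-DP. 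Since the $R$ copies compose over the same stream, I set $\varepsilon' = \varepsilon/(2R)$ so that the overall algorithm is $\varepsilon$-DP.

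Next I choose the parameters fed into Theorem~\ref{thm:count_sketch}. For the bucket-wise $\ell_2$ estimates $\hat F_{2,i}$, I ask for relative error $0.1$ (giving the required $(1.1,\gamma_1)$ approximation), and for the per-element estimates $\hat f_a$ I set the relative-error parameter to $(\eta/16)/(10\sqrt{\phi k})$ so that it matches the precision demanded in Algorithm~\ref{alg:lp_heavyhitters}. With the target failure probability set to $\xi/(2RT|\mathcal{U}|m)$ and DP budget $\varepsilon/(2R)$ per CountSketch instance, Theorem~\ref{thm:count_sketch} yields additive errors
\[
\gamma_1,\gamma_2 \;\leq\; \frac{1}{\varepsilon\eta}\cdot\poly\!\left(\log\!\frac{T\,k\,|\mathcal{U}|}{\xi\eta}\right),
\]
uniformly over all timestamps $t\in[T]$, all buckets $i\in[m]$, and all elements $a\in\mathcal{U}$, by a union bound across the $O(RT|\mathcal{U}|m)$ estimator calls.

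With these error bounds in hand, property (1) is immediate: $H\subseteq \hat H$ in each copy, and Lemma~\ref{lem:heavyhitter_goodapprox} gives $\hat f(a)\in(1\pm\eta) f_a$ for every $a\in \hat H$. Property (3) follows from Lemma~\ref{lem:lp_heavyhitter_size_output} applied to each copy, so $|H|\leq R\cdot((1+\eta)/(1-\eta))^p\,k$. For property (2), fix a timestamp $t$ and $a\in\mathcal{U}$ with $f_a^p\geq \|\mathcal{S}\|_p^p/k$ and $f_a\geq \tfrac{1}{\varepsilon\eta}\log^C(Tk|\mathcal{U}|/(\xi\eta))$ for a suitably large constant $C$. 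The lower bound on $f_a$ exceeds the threshold $4\sqrt{\gamma_1/(\phi k)+512\gamma_2^2/\eta^2}$ required by Lemma~\ref{lem:acc_heavy_hitter}, which then puts $a$ into $\hat H_r$ in each copy with probability at least $0.9$; Lemma~\ref{lem:top_k_operation} promotes this to $a\in H_r$. Hence $a$ fails to appear in any of the $R$ independent unions with probability at most $0.1^R$, and a union bound over all $T|\mathcal{U}|$ pairs $(t,a)$ of interest absorbs this into the target $\xi$.

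The main obstacle I anticipate is bookkeeping the polylog factors: the CountSketch additive error $\gamma_2$ enters the threshold in Lemma~\ref{lem:acc_heavy_hitter} both through $\gamma_2/\eta$ and through the sensitivity blow-up $R$ that divides the per-instance DP budget, so the exponent $C$ in the frequency lower bound must be chosen large enough to dominate all of these $\log$-factors simultaneously. Once $C$ is fixed, the space bound $\frac{\phi k^3}{\eta^2}\cdot\poly(\log(Tk|\mathcal{U}|/\xi))$ follows because each copy stores $m=10k^2$ CountSketch instances of per-bucket size $\poly(\log)\cdot\phi/\eta^2$ (from Theorem~\ref{thm:count_sketch}), multiplied by $R=O(\log(T/\xi)+\log|\mathcal{U}|)$ copies.
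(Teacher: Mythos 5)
Your proposal follows the paper's proof essentially step for step: run $R=O(\log(T/\xi)+\log|\mathcal{U}|)$ parallel copies of Algorithm~\ref{alg:lp_heavyhitters} with per-instance budgets scaled down by $R$, take the union of the returned $H_r$'s, and then invoke Lemma~\ref{lem:heavyhitter_goodapprox} for property (1), Lemma~\ref{lem:lp_heavyhitter_size_output} for property (3), and Lemmas~\ref{lem:acc_heavy_hitter} and~\ref{lem:top_k_operation} boosted over copies for property (2). This matches the paper's argument, including the two different relative-error settings ($\eta=0.01$ for $\hat F_{2,i}$ versus $(\eta/16)/(10\sqrt{\phi k})$ for $\hat f_a$); using a single fine-$\eta$ CountSketch would inflate $\gamma_1$ by $(\phi k)^{3/2}$ and break the $\frac{1}{\varepsilon\eta}\log^C(\cdot)$ threshold, so the split is not optional.

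One small bookkeeping slip: since each bucket needs \emph{two} CountSketch instances (one per relative-error setting) rather than one, the per-instance DP budget should be $\varepsilon/(4R)$ as in the paper, not $\varepsilon/(2R)$ — your accounting only charges each bucket once. This is a constant-factor issue absorbed by rescaling $\varepsilon$ and does not change any of the stated asymptotics. Likewise, allocating failure probability $\xi/(2RT|\mathcal{U}|m)$ is more conservative than needed (Theorem~\ref{thm:count_sketch} already quantifies over all $t$ and $a$, so $\xi/(Rm\cdot O(1))$ suffices), but this only inflates the polylog factors by constants.
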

\begin{proof}
The first property follows from Lemma~\ref{lem:heavyhitter_goodapprox}.

According to Lemma~\ref{lem:acc_heavy_hitter} and Lemma~\ref{lem:top_k_operation}, the guarantee holds with probability $0.9$ for any particular $t\in[T]$ and $a\in \mathcal{U}$.
To boost the probability to make the guarantee holds simultaneously for all $t\in [T]$ and $a\in\mathcal{U}$ with probability at least $1-\xi/2$, we need to repeat Algorithm~\ref{alg:lp_heavyhitters} $\lceil50(\log(2T/\xi)+\log(|\mathcal{U}|))\rceil$ times, and let final $H$ at timestamp $t$ be the union of all output $H$ at timestamp $t$, and let final $\hat{f}(a)$ at timestamp $t$ be any output $\hat{f}(a)$ at timestamp $t$.
According to Lemma~\ref{lem:lp_heavyhitter_size_output}, the output $H$ of a single running copy of Algorithm~\ref{alg:lp_heavyhitters} is at most $\left(\frac{1+\eta}{1-\eta}\right)^p\cdot k$.
Thus, the size of the final output $H$ is at most $O\left((\log(T/\xi)+\log(|\mathcal{U}|))\cdot\left(\frac{1+\eta}{1-\eta}\right)^p\cdot k \right)$ which proves the second property.

We apply Theorem~\ref{thm:count_sketch} for the frequency and $\ell_2$ frequency moment estimators of $\mathcal{S}_i$ for each $i\in [m]$.
Since we run $\lceil50(\log(2T/\xi)+\log(|\mathcal{U}|))  \rceil$ copies of Algorithm~\ref{alg:lp_heavyhitters}, if we desire $\varepsilon$-DP algorithm in the end, we need each frequency and $\ell_2$ frequency moment subroutine to be $\varepsilon / (4\cdot \lceil50(\log(2T/\xi)+\log(|\mathcal{U}|))  \rceil)$-DP according to Lemma~\ref{lem:lp_heavyhitter_dp_guarantee}.
To simultaneously make the call of each run of the frequency and $\ell_2$ frequency moment subroutine succeeds with probability at least $1-\xi/2$, we need to apply union bound over all calls of the subroutines and thus each run of the subroutine should succeed with probability at least $1-\xi/(4\cdot \lceil 50(\log(T/\xi)+\log(|\mathcal{U}|))\rceil\cdot m)=1-\xi/(4\cdot \lceil 50(\log(T/\xi)+\log(|\mathcal{U}|))\rceil\cdot 10k^2)$.
Notice that we also need to re-scale $\eta$ in Theorem~\ref{thm:count_sketch} to be $(\eta/16)/(10\sqrt{\phi k})$ used in Algorithm~\ref{alg:lp_heavyhitters} for estimation of the frequency of each element and set $\eta = 0.01$ in Theorem~\ref{thm:count_sketch} for the estimation of the $\ell_2$ frequency moment.
Thus, according to Theorem~\ref{thm:count_sketch}, we have 
\begin{align*}
\gamma_1 = \frac{1}{\varepsilon^2}\cdot \poly\left(\log\left(\frac{T\cdot k\cdot |\mathcal{U}|}{\xi}\right)\right)
\end{align*}
and 
\begin{align*}
\gamma_2 = \frac{1}{\varepsilon}\cdot \poly\left(\log\left(\frac{T\cdot k\cdot |\mathcal{U}|}{\xi\eta}\right)\right)
\end{align*}
Therefore, the second property follows from Lemma~\ref{lem:acc_heavy_hitter} and Lemma~\ref{lem:top_k_operation} and our probability boosting argument.

Finally, let us consider the space usage.
Since we run $O(\log(T/\xi)+\log(|\mathcal{U}|))$ copies of Algorithm~\ref{alg:lp_heavyhitters}, and each copy calls $O(k^2)$ frequency estimator and $\ell_2$ frequency moment estimator using the parameters discussed above.
Due to Theorem~\ref{thm:count_sketch}, the total space usage is at most $\frac{\phi k^3}{\eta^2}\cdot \poly\left(\log\left(\frac{T\cdot k\cdot |\mathcal{U}|}{\xi}\right)\right)$.
\end{proof}

\subsection{Differentially Private Continual Released Counting of Low Frequency Elements}\label{sec:low_freq_elements}
In this section, we show a differentially private continual released algorithm for counting the number of elements that have a certain (low) frequency.
Similar to our counting distinct elements algorithm, we first consider the case where the universe of the elements is small.

\subsubsection{Number of Low Frequency Elements for Small Universe}
\begin{algorithm}[H]\label{alg:low_freq_small_universe}
\small
	\KwIn{A stream $\mathcal{S}$ of elements $a_1,a_2,\cdots,a_T\in \mathcal{U}\cup \{\perp\}$ with gaurantee that $|\mathcal{U}|\leq m$, and a target frequency $k$.}
	\Parameter{Relative approximation factor $\alpha\geq 1$ and additive approximation factor $\gamma\geq 0$ depending on the streaming continual release summing algorithm.\\ \hfill{//See Theorem~\ref{thm:dp_summing}}. }
	\KwOut{Estimation of the number of elements with frequency exactly $i$ for each $i\in [k]$ at every timestamp $t$.}
	Initialize empty streams $\mathcal{C}_1,\mathcal{C}_2,\cdots,\mathcal{C}_k$.\\
	For each $a\in \mathcal{U}$, initialize frequency $f(a)\gets 0$\\
	\For {each $a_t$ in the stream $\mathcal{S}$}{
	    If $a_t\not=\perp$, $f(a_t)\gets f(a_t) + 1$. \\
	    \For { each $i\in [k]$}{
	        \If{$a_t\not=\perp$ and $f(a_t) = i + 1$}
	        {
	            Append $-1$ at the end of $\mathcal{C}_i$.
	        }
	        \ElseIf{$a_t\not=\perp$ and $f(a_t)=i$}
	        {
	            Append $1$ at the end of $\mathcal{C}_i$.
	        }
	        \Else{
	            Append $0$ at the end of $\mathcal{C}_i$.
	        }
	    }
	    For each $i\in[k]$, output $\hat{s}_i$ which is an $(\alpha,\gamma)$-approximation to the total counts of $\mathcal{C}_i$.
	}
    \caption{Number of Low Frequency Elements for Small Universe}
\end{algorithm}

\begin{lemma}
At the end of any time $t\in[T]$, $\forall i\in[k]$, the output $\hat{s}_i$ of Algorithm~\ref{alg:low_freq_small_universe} is an $(\alpha,\gamma)$-approximation to  $|\{a\in\mathcal{U}~|~f_a=i\}|,$ the size of the set of elements of which the frequency is exact $i$.
\end{lemma}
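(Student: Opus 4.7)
The plan is to show that at every timestamp $t$, the true running sum $\sum_{s\leq t}(\mathcal{C}_i)_s$ coincides exactly with $|\{a\in\mathcal{U}\mid f_a=i\}|$. Once this identity is established, the conclusion follows immediately from the hypothesis that $\hat{s}_i$ is an $(\alpha,\gamma)$-approximation to the total counts of $\mathcal{C}_i$.

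To establish the identity, I would argue element by element. Fix $a\in\mathcal{U}$ and $i\in[k]$, and track the total contribution of $a$ to the stream $\mathcal{C}_i$ up to time $t$. By inspection of Algorithm~\ref{alg:low_freq_small_universe}, $a$ causes a $+1$ appended to $\mathcal{C}_i$ at exactly the timestamp where its running frequency first reaches $i$ (if ever), and causes a $-1$ appended to $\mathcal{C}_i$ at exactly the timestamp where its running frequency first reaches $i+1$ (if ever). Therefore the net contribution of $a$ to $\sum \mathcal{C}_i$ at time $t$ is $1$ if $f_a(t)=i$, is $1-1=0$ if $f_a(t)>i$, and is $0$ if $f_a(t)<i$. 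Summing over all $a\in\mathcal{U}$ gives precisely $|\{a\in\mathcal{U}\mid f_a=i\}|$.

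Given this exact equality, the $(\alpha,\gamma)$-approximation of the summing subroutine on $\mathcal{C}_i$ is automatically an $(\alpha,\gamma)$-approximation of $|\{a\in\mathcal{U}\mid f_a=i\}|$, which completes the argument.

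I do not anticipate a real obstacle here; the proof is essentially bookkeeping. The only subtle point worth verifying explicitly is that each $a$ contributes at most one $+1$ and at most one $-1$ to $\mathcal{C}_i$, which holds because the frequency $f(a)$ is monotonically nondecreasing in $t$ and passes through each integer value at most once, so the two \textbf{if} branches triggering $\pm 1$ for index $i$ each fire at most once for $a$. Everything else is a direct consequence of the appeal to the approximation guarantee of the underlying summing routine.
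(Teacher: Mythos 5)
Your proof is correct and takes essentially the same approach as the paper: both establish the exact identity that the running sum of $\mathcal{C}_i$ equals $|\{a\in\mathcal{U}\mid f_a = i\}|$ at every timestamp and then invoke the approximation guarantee of the summing subroutine. Your element-by-element bookkeeping (each $a$ contributes one $+1$ when its frequency first reaches $i$ and one $-1$ when it reaches $i+1$, with monotonicity guaranteeing each branch fires at most once) is simply a slightly more granular phrasing of the paper's invariant argument.
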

\begin{proof}
It is easy to observe that Algorithm~\ref{alg:low_freq_small_universe} maintains $f(a)$ such that $f(a)=f_a$ for any $a\in\mathcal{U}$.
For $i\in[k]$, if before adding $a_t$ the frequency of $a_t$ is $i$, we append $-1$ to $\mathcal{C}_i$. 
If after adding $a_t$ the frequency of $a_t$ becomes $i$, we append $1$ to $\mathcal{C}_i$.
Note that if the frequency of $a_t$ has the value which is not in above two cases, it does not affect the number of elements with frequency $i$.
Therefore, the total counts of $\mathcal{C}_i$ is always the number of elements of which frequency is $i$.
Thus, $\hat{s}_i$ is an $(\alpha,\gamma)$-approximation to the number of elements of which frequency is $i$.
\end{proof}

\begin{lemma}\label{lem:low_freq_dp_guarantee}
If the algorithm that continually release the approximate total counts of $\mathcal{C}_i$ for every  $i\in[k]$ is $\varepsilon$-DP, Algorithm~\ref{alg:low_freq_small_universe} is $(8k\varepsilon)$-DP in the continual release model.
\end{lemma}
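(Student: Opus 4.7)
The plan is to apply the composition theorem (Theorem~\ref{thm:composition}) to the mapping $\mathcal{F}(\mathcal{S}) = (\mathcal{C}_1,\ldots,\mathcal{C}_k)$ induced by Algorithm~\ref{alg:low_freq_small_universe}. Once $\mathcal{F}$ is shown to have sensitivity at most $8k$, applying the assumed $\varepsilon$-DP subroutine on the joint stream tuple yields the claimed $(8k\varepsilon)$-DP guarantee for continual release.

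To bound the sensitivity, I would fix neighboring input streams $\mathcal{S}, \mathcal{S}'$ differing only at position $t$ and decompose $a_t \to a_t'$ into two atomic steps $a_t \to \perp$ and $\perp \to a_t'$ through an intermediate stream; by the triangle inequality, it suffices to show each step contributes at most $4k$ to $\sum_{i=1}^k \dis(\mathcal{C}_i(\cdot), \mathcal{C}_i(\cdot))$. Consider the step $u := a_t \to \perp$, and let $p_1 < \cdots < p_r$ be the positions of $u$ in $\mathcal{S}$, with $t = p_l$. Positions outside $\{p_l, \ldots, p_r\}$ produce identical updates to every $\mathcal{C}_i$, so only appearances of $u$ from the $l$-th onward can contribute.

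Recall that Algorithm~\ref{alg:low_freq_small_universe} appends $+1$ to $\mathcal{C}_j$ and $-1$ to $\mathcal{C}_{j-1}$ whenever any element's count advances to $j$, discarding any index outside $[k]$. At $p_l$ itself the nonzero updates triggered by $u$ in $\mathcal{S}$ become zeros in the intermediate stream, contributing a distance of at most $2$. At each later appearance $p_m$ with $m \geq l+1$, what was the $m$-th appearance in $\mathcal{S}$ is only the $(m-1)$-th appearance in the intermediate stream; consequently the nonzero entries shift down by one, perturbing only the streams $\mathcal{C}_m, \mathcal{C}_{m-1}, \mathcal{C}_{m-2}$ by entry-wise differences of $1, 2, 1$ respectively, for a total contribution of at most $4$. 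Critically, this contribution is already zero once $m > k+2$ since all three affected indices fall outside $[k]$, and the edge cases $m = k+1, k+2$ further shave off units. Summing these contributions over the at most $k+1$ relevant values of $m$ (plus the term at $p_l$) yields a bound of $4k$ per atomic step, and the symmetric step $\perp \to a_t'$ contributes another $4k$, giving overall sensitivity $8k$.

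The main obstacle is the bookkeeping in the previous paragraph: one must verify that the ``index shift'' caused by removing one occurrence of $u$ only perturbs a constant number of streams $\mathcal{C}_i$ at each position, and, more importantly, that the total summed over all relevant positions is $O(k)$ independently of the number of appearances $r$ of $u$ (which could be as large as $T$). Once this uniform-in-$r$ bound is nailed down, the invocation of Theorem~\ref{thm:composition} against the $\varepsilon$-DP subroutine is routine.
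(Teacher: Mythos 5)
Your proof is correct and follows essentially the same route as the paper's: introduce the intermediate stream with $a_t$ replaced by $\perp$, observe that deleting one occurrence of $u$ shifts the role of each later occurrence down by one, bound the total change across the tuple $(\mathcal{C}_1,\ldots,\mathcal{C}_k)$ by $4k$ per atomic step, and invoke Theorem~\ref{thm:composition}. The only cosmetic difference is the order of summation: the paper fixes an index $i$ and shows $\dis(\mathcal{C}_i,\mathcal{C}''_i)\leq 4$ by locating the three affected timestamps (the $i$-th, $(i+1)$-th, $(i+2)$-th appearances of $u$) and then sums over $i$, whereas you fix a timestamp $p_m$ and show the joint perturbation across $\mathcal{C}_m,\mathcal{C}_{m-1},\mathcal{C}_{m-2}$ is at most $4$ before summing over $m$; both tallies give $4k$.
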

\begin{proof}
Consider two neighboring stream $\mathcal{S}=(a_1,a_2,\cdots,a_T)$ and $\mathcal{S}'=(a_1',a_2',\cdots,a'_T)$ of elements in $\mathcal{U}$ where they only differ at timestamp $t$, i.e., $a_t\not=a'_t$.
Fix any $i\in[k]$, let us consider the differences between the corresponding count streams $\mathcal{C}_i=(c_1,c_2,\cdots,c_T)$ and $\mathcal{C}'_i=(c'_1,c'_2,\cdots,c'_T)$.

Consider an intermediate neighboring stream $\mathcal{S}'' = (a_1,a_2,\cdots,a_{t-1}, \perp,a_{t+1},\cdots,a_T)$. 
Let $\mathcal{C}''_i=(c''_1,c''_2,\cdots,c''_T)$ be the count stream corresponding to $\mathcal{S}''$.
The total difference between $\mathcal{C}_i$ and $\mathcal{C}'_i$ is bounded by the sum of the total difference between $\mathcal{C}_i$ and $\mathcal{C}''$ and the total difference between $\mathcal{C}'_i$ and $\mathcal{C}''_i$.

Consider the difference between $\mathcal{C}_i$ and $\mathcal{C}''_i$. 
If $a_t=\perp$, $\mathcal{C}''_i$ is exactly the same as $\mathcal{C}_i$.
Suppose $a_t=a\in\mathcal{U}$ is the $j$-th appearance of $a$ in $\mathcal{S}$.
If $j>i$, change $a_t$ to $\perp$ does not affect $\mathcal{C}_i$.
Suppose $j\leq i$.
Let $a_{t_1},a_{t_2},a_{t_3}$ be the $i$-th, $(i+1)$-th, $(i+2)$-th appearances of $a$ in $\mathcal{S}$ respectively.
Then it is easy to verify that $c_{t_1}=1,c''_{t_1}=0,c_{t_2}=-1,c''_{t_2}=1,c_{t_3}=0,c''_{t_3}=-1$, and for any other $p\not=t_1,t_2,t_3$, we have $c_{p}=c''_{p}$.
Thus, the total difference between $\mathcal{C}_i$ and $\mathcal{C}_i''$ is at most $4$.

Similarly, the total difference between $\mathcal{C}'_i$ and $\mathcal{C}_i''$ is at most $4$.
Thus, the total difference between $\mathcal{C}_i$ and $\mathcal{C}'_i$ is at most $8$.
Therefore, the total sensitivity of $\mathcal{C}_1,\mathcal{C}_2,\cdots,\mathcal{C}_k$ is at most $8k$.
If we use an $\varepsilon$-DP algorithm to continually release the total counts of $\mathcal{C}_i$ for every $i\in[k]$, the continually released outputs of Algorithm~\ref{alg:low_freq_small_universe} is $(8k\varepsilon)$-DP.
\end{proof}

\begin{theorem}[Streaming continual release count of low frequency elements for small universe]\label{thm:low_freq_small_universe}
Let $k\geq 1,\varepsilon\geq 0,\xi\in (0,0.5)$.
Suppose the universe $\mathcal{U}$ has size at most $m$.
There is an $\varepsilon$-DP algorithm in the streaming continual release model such that with probability at least $1-\xi$, it always outputs $k$ numbers $\hat{s}_1,\hat{s}_2,\cdots,\hat{s}_k$ for every timestamp $t$, such that $\forall i\in [k], \hat{s}_i$ is an approximation to $|\{a\in\mathcal{U}\mid f_a = i \}|$ with additive error $O\left(\frac{k}{\varepsilon}\cdot\log^{2.5}(T)\log\left(\frac{k}{\xi}\right)\right)$.
The algorithm uses $O(m+k\log(T))$ space.
\end{theorem}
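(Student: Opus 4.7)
The plan is to instantiate Algorithm~\ref{alg:low_freq_small_universe} with the binary tree summing mechanism from Theorem~\ref{thm:dp_summing} as the subroutine for continually releasing approximate total counts of each $\mathcal{C}_i$, and then combine the correctness and privacy lemmas already established for Algorithm~\ref{alg:low_freq_small_universe} with the guarantees of Theorem~\ref{thm:dp_summing}.

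First, I would set the DP parameter. By Lemma~\ref{lem:low_freq_dp_guarantee}, Algorithm~\ref{alg:low_freq_small_universe} is $(8k\varepsilon')$-DP whenever the algorithm used to continually release the approximate total counts of each $\mathcal{C}_i$ is $\varepsilon'$-DP. Since the $k$ streams $\mathcal{C}_1, \ldots, \mathcal{C}_k$ are all functions of the single input stream $\mathcal{S}$, the DP composition applies across all $k$ subroutines and yields this factor of $8k$. Therefore, to obtain the desired overall $\varepsilon$-DP guarantee, I would invoke Theorem~\ref{thm:dp_summing} with privacy parameter $\varepsilon' = \varepsilon/(8k)$ for each of the $k$ binary tree mechanisms run on $\mathcal{C}_1, \ldots, \mathcal{C}_k$.

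Next, I would handle the accuracy. By the correctness lemma preceding Lemma~\ref{lem:low_freq_dp_guarantee}, the true total count of $\mathcal{C}_i$ at every timestamp equals $|\{a \in \mathcal{U} \mid f_a = i\}|$, so the $(\alpha,\gamma)$-approximation $\hat{s}_i$ output by the summing subroutine directly gives an $(\alpha,\gamma)$-approximation to this quantity. Applying Theorem~\ref{thm:dp_summing} with failure probability $\xi/k$ per instance gives $\alpha = 1$ and additive error $\gamma = O(\frac{1}{\varepsilon'}\log^{2.5}(T)\log(k/\xi)) = O(\frac{k}{\varepsilon}\log^{2.5}(T)\log(k/\xi))$ for that instance, and a union bound over the $k$ subroutines ensures all $\hat{s}_i$ achieve this error simultaneously at every timestamp with overall failure probability at most $\xi$.

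Finally, I would bound the space. Maintaining the frequency map $f(a)$ for $a \in \mathcal{U}$ requires $O(m)$ space since $|\mathcal{U}| \le m$, and by Theorem~\ref{thm:dp_summing} each of the $k$ binary tree mechanisms uses $O(\log T)$ space, for a total of $O(m + k\log T)$. No step here is a substantive obstacle; the proof is essentially a bookkeeping combination of Lemma~\ref{lem:low_freq_dp_guarantee}, the correctness lemma, and Theorem~\ref{thm:dp_summing}, with the only mildly delicate point being to correctly track the $8k$ blow-up in privacy against the $k$-fold rescaling of $\varepsilon'$ when stating the final bound.
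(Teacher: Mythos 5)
Your proposal is correct and follows essentially the same route as the paper: instantiate Algorithm~\ref{alg:low_freq_small_universe} with the binary tree mechanism of Theorem~\ref{thm:dp_summing} at privacy $\varepsilon/(8k)$ per stream (justified by the sensitivity-$8k$ bound of Lemma~\ref{lem:low_freq_dp_guarantee}), set per-instance failure probability $\xi/k$ and union-bound over the $k$ streams to get $\alpha=1$, $\gamma = O\bigl(\frac{k}{\varepsilon}\log^{2.5}(T)\log(k/\xi)\bigr)$, and add the $O(m)$ frequency table to the $O(k\log T)$ for the $k$ summing subroutines. This matches the paper's proof step for step.
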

\begin{proof}
Consider Algorithm~\ref{alg:low_freq_small_universe}, we use Theorem~\ref{thm:dp_summing} as the summing subroutine.
According to Lemma~\ref{lem:low_freq_dp_guarantee}, if we want the final algorithm to be $\varepsilon$-DP, then each subroutine must be $(\varepsilon/(8k))$-DP.
Furthermore, if we want the over success probability to be $1-\xi$, the success probability of each summing subroutine should be at least $1-\xi/k$.
By applying Theorem~\ref{thm:dp_summing}, we have $\alpha=1$ and $\gamma = O\left(\frac{k}{\varepsilon}\cdot\log^{2.5}(T)\log\left(\frac{k}{\xi}\right)\right)$.

We need $O(m)$ space to maintain the frequency of the elements.
We need $O(k\log(T))$ space to run $k$ summing subroutines.
Therefore, the total space usage is $O(m+k\log(T))$.
\end{proof}

\subsubsection{Number of Low Frequency Elements for General Universe}

\begin{algorithm}[H]\label{alg:low_freq_large_universe}
\small
	\KwIn{A stream $\mathcal{S}$ of elements $a_1,a_2,\cdots,a_T\in \mathcal{U}\cup \{\perp\}$, an error parameter $\eta\in (0,0.5)$, and a parameter $k\geq 1$.}
		\Parameter{Additive approximation factor $\gamma_1$ depending on the streaming continual release number of distinct elements, relative approximation factor $\alpha_2\geq 1$ and additive approximation factor $\gamma_2\geq 0$ depending on the streaming continual release count of low frequency elements for small universe.\\ \hfill{//See Corollary~\ref{cor:distinct_large_universe} and Theorem~\ref{thm:low_freq_small_universe}}. }
	\KwOut{Estimation of the number of elements with frequency exactly $i$ for each $i\in[k]$ at every timestamp $t$.}
	$L\gets\lceil \log \min(|\mathcal{U}|,T) \rceil,\lambda\gets 2\log (1000 k), m\gets 100\cdot(25600\lambda/\eta^2)^2$.\\
	Let $h:\mathcal{U}\rightarrow [m]$ be a pairwise independent hash function.\\
	Let $g:\mathcal{U}\rightarrow [L]\cup \{\perp\}$ be a $\lambda$-wise independent hash function and $\forall a\in \mathcal{U},i\in[L],\Pr[g(a)=i]=2^{-i},\Pr[g(a)=\perp]=2^{-L}$.\\
	Initialize empty streams $\mathcal{S}_1,\mathcal{S}_2,\cdots,\mathcal{S}_L$.\\
	\For{ each $a_t$ in the stream $\mathcal{S}$}
	{
	    \For {$ i \in L$}{
	    \If{$a_t\not=\perp$ and $g(a_t)=i$}{
	        Append $h(a_t)$ to the end of the stream $\mathcal{S}_{i}$.
	    }
	    \Else{
	        Append $\perp$ to the end of the stream $\mathcal{S}_i$.
	    }
	    }
	    Compute $\hat{d}$ which is a $(1.1,\gamma_1)$-approximation to the number of distinct elements in $\mathcal{S}$.\\
	    For $i\in [L]$, compute $\hat{s}_{i,1},\hat{s}_{i,2},\cdots,\hat{s}_{i,k}$ where $\hat{s}_{i,j}$ is an $(\alpha_2,\gamma_2)$-approximation to the number of elements in $\mathcal{S}_i$ where each element has frequency exact $j$.\\
	    If $\hat{d}\leq \max(3\gamma_1,64\lambda/\eta^2)$, set $\hat{s}_1=\hat{s}_2=\cdots=\hat{s}_k=0$.\\
	    Otherwise, find the largest $i^*\in[L]$ such that $2^{i^*}\cdot (64\lambda/\eta^2) \leq \hat{d}$, and $\forall j\in[k]$, set $\hat{s}_j=\hat{s}_{i^*,j}\cdot 2^i$.\\
	    Output $\hat{s}_1,\hat{s}_2,\cdots,\hat{s}_k$.
	}
    \caption{Number of Low Frequency Elements via Subsampling}
\end{algorithm}

\begin{lemma}\label{lem:dp_guarantee_low_freq_large_universe}
If the subroutine of continually releasing $\hat{d}$ in Algorithm~\ref{alg:low_freq_large_universe} is $\varepsilon$-DP and the subroutine of continually releasing $\{\hat{s}_{i,1},\hat{s}_{i,2},\cdots,\hat{s}_{i,k}\}$ for each $i\in[k]$ is also $\varepsilon$-DP, Algorithm~\ref{alg:low_freq_large_universe} is $3\varepsilon$-DP.
\end{lemma}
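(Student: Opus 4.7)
The plan is to invoke the composition theorem (Theorem~\ref{thm:composition}) with an appropriately chosen stream-mapping $\mathcal{F}$. Define $\mathcal{F}$ so that $\mathcal{F}_0(\mathcal{S}) = \mathcal{S}$ (the stream fed into the distinct-elements subroutine that produces $\hat d$) and $\mathcal{F}_i(\mathcal{S}) = \mathcal{S}_i$ for $i\in[L]$ (the streams fed into the low-frequency-count subroutines). The final output at each timestamp is a deterministic function of $\hat d$ and $\{\hat s_{i,j}\}_{i\in[L], j\in[k]}$, so the entire Algorithm~\ref{alg:low_freq_large_universe} has the form $M(A_0(\mathcal{F}_0(\mathcal{S})), A_1(\mathcal{F}_1(\mathcal{S})), \dots, A_L(\mathcal{F}_L(\mathcal{S})))$ required by Theorem~\ref{thm:composition}, where each $A_i$ is one of the given $\varepsilon$-DP subroutines.

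Next I would bound the sensitivity of $\mathcal{F}$. Fix the hash functions $g,h$ (they are sampled once, independently of the input, so we can condition on them) and take two neighboring streams $\mathcal{S},\mathcal{S}'$ that differ only at position $t$, with elements $a_t$ and $a_t'$. Clearly $\dis(\mathcal{F}_0(\mathcal{S}),\mathcal{F}_0(\mathcal{S}')) = \dis(\mathcal{S},\mathcal{S}') \le 1$. For the derived streams, observe that the only position at which any $\mathcal{S}_i$ can differ from $\mathcal{S}_i'$ is position $t$: if $a_t \ne \perp$, then position $t$ of $\mathcal{S}_{g(a_t)}$ equals $h(a_t)$ (in $\mathcal{S}$) versus $\perp$ (in $\mathcal{S}'$ when $g(a_t')\ne g(a_t)$), and similarly for $a_t'$; all other indices $j\ne t$ produce identical symbols in $\mathcal{S}_i$ and $\mathcal{S}_i'$ for every $i$. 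Hence at most two of the $L$ streams change, each at a single position, giving $\sum_{i\in[L]} \dis(\mathcal{S}_i,\mathcal{S}_i') \le 2$. Summing, $\sum_{i=0}^{L} \dis(\mathcal{F}_i(\mathcal{S}),\mathcal{F}_i(\mathcal{S}')) \le 3$, so $\mathcal{F}$ has sensitivity at most $3$.

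Applying Theorem~\ref{thm:composition} with $s=3$ then yields that the composed algorithm is $(3\varepsilon)$-DP conditioned on any fixed choice of the hash functions $g,h$. Since $g$ and $h$ are sampled independently of the input stream, the marginal distribution over outputs also satisfies $(3\varepsilon)$-DP (integration over a distribution independent of the input preserves the multiplicative DP inequality pointwise on outputs).

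The argument is essentially bookkeeping rather than technical; the only place where one has to be even slightly careful is the sensitivity count for the subsampled streams, which I expect to be the main (though mild) obstacle: one must verify that changing $a_t$ to $a_t'$ really only perturbs position $t$ of the $\mathcal{S}_i$'s (not later positions) because the hash values of all other stream elements are unchanged, and that at most two buckets $g(a_t)$ and $g(a_t')$ are touched. Once this is made explicit, the rest follows immediately from Theorem~\ref{thm:composition}.
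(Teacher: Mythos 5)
Your proposal is correct and matches the paper's own argument: bound the total sensitivity of the mapping $\mathcal{S}\mapsto(\mathcal{S},\mathcal{S}_1,\dots,\mathcal{S}_L)$ by $1+2=3$ (the original stream contributes $1$ to the $\hat d$ subroutine, and a change at position $t$ can alter only the $t$-th entries of at most two subsampled streams $\mathcal{S}_{g(a_t)}$ and $\mathcal{S}_{g(a_t')}$), then invoke Theorem~\ref{thm:composition}. The only differences are cosmetic: you are more explicit about conditioning on the fixed hash functions and about the composition being applied as $M(A_0(\mathcal{F}_0(\mathcal{S})),\dots,A_L(\mathcal{F}_L(\mathcal{S})))$, which is implicit in the paper's proof.
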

\begin{proof}
Consider two neighboring streams $\mathcal{S}$ and $\mathcal{S}'$ where the only difference is the $t$-th element, i.e., $a_t\not=a_t'$.
Consider their corresponding streams $\mathcal{S}_1,\mathcal{S}_2,\cdots,\mathcal{S}_L$ and $\mathcal{S}'_1,\mathcal{S}'_2,\cdots,\mathcal{S}'_L$ in Algorithm~\ref{alg:low_freq_large_universe}.
For $i\in[L]$ and $j\not = t$, the $j$-th element in $\mathcal{S}_i$ should be the same as the $j$-th element in $\mathcal{S}'_i$.
Since $a_t$ can only make the $t$-th element of $\mathcal{S}_{h(a_t)}$ be different from the $t$-th element of $\mathcal{S}'_{h(a_t)}$, and $a'_t$ can only make the $t$-th element of $\mathcal{S}_{h(a'_t)}$ be different from the $t$-th element of $\mathcal{S}'_{h(a'_t)}$, the total sensitivity of $\{\mathcal{S}_1,\mathcal{S}_2,\cdots,\mathcal{S}_L\}$  is at most $2$.
Thus, if the continual release of $\hat{d}$ is $\varepsilon$-DP and for every $i\in[L]$, the continual release of $\hat{s}_{i,1},\hat{s}_{i,2},\cdots,\hat{s}_{i,k}$ is also $\varepsilon$-DP, the continual release of $\hat{s}_1,\hat{s}_2,\cdots,\hat{s}_k$ is $(3\varepsilon)$-DP.
\end{proof}

\begin{lemma}\label{lem:accuracy_low_freq_members}
Consider an arbitrary timestamp $t\in[T]$. 
$\forall a\in\mathcal{U}$, let $f_a$ denote the frequency of $a$ in $a_1,a_2,\cdots,a_t$.
With probability at least $0.9$, $\forall j\in [k]$, the output $\hat{s}_j$ of Algorithm~\ref{alg:low_freq_large_universe} is a $\left(\alpha_2,\left(\alpha_2\eta+\frac{\eta^2\gamma_2}{32\lambda}\right)\cdot \|\mathcal{S}\|_0+6\gamma_1+128\lambda/\eta^2\right)$-approximation to $\left|\{a\in\mathcal{U}\mid f_a = j\}\right|$, where $\|\mathcal{S}\|_0$ is the number of distinct elements in $a_1,a_2,\cdots,a_t$.
\end{lemma}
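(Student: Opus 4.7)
The plan is to reduce the analysis to three high-probability events and then combine them for each $j \in [k]$.

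First I will define the good events.  Let $A_j = \{a \in \mathcal{U} \mid f_a = j\}$ and $G_i = \{a : g(a)=i, f_a \geq 1\}$, $G_{i,j} = \{a : g(a)=i, f_a=j\}$. Define
\begin{itemize}
\item $\mathcal{E}_d$: $\hat{d} \in \|\mathcal{S}\|_0/1.1 - \gamma_1,\ 1.1\|\mathcal{S}\|_0 + \gamma_1]$ (guarantee of the distinct-elements subroutine);
\item $\mathcal{E}_g$: for every $i \in [L]$ and every $j \in [k]$, the $\lambda$-wise independent random variable $|G_{i,j}| = \sum_{a \in A_j}\mathbf{1}_{g(a)=i}$ is concentrated around $|A_j|/2^i$, in the two regimes supplied by Lemma~\ref{lem:concentration}: a $(1\pm\eta)$-relative concentration when $|A_j|/2^i \geq 4\lambda/\eta^2$, and an additive error $4\lambda/\eta$ otherwise.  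Plugging $\lambda = 2\log(1000k)$ makes each failure probability $\leq 0.01/k$, so a union bound over $j$ (and over the single relevant $i=i^\ast$ chosen by the algorithm) gives $\Pr[\mathcal{E}_g] \geq 0.98$;
\item $\mathcal{E}_h$: $h$ has no collisions on $G_{i^\ast}$, so $\hat s_{i^\ast,j}$ is an $(\alpha_2,\gamma_2)$-approximation to $|G_{i^\ast,j}|$ (rather than to some compressed count).
\end{itemize}

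Next I will handle the trivial branch.  If $\hat{d} \leq \max(3\gamma_1,64\lambda/\eta^2)$, then under $\mathcal{E}_d$ we have $\|\mathcal{S}\|_0 \leq 1.1(\hat{d}+\gamma_1) \leq 6\gamma_1 + 128\lambda/\eta^2$, hence $|A_j| \leq \|\mathcal{S}\|_0 \leq 6\gamma_1 + 128\lambda/\eta^2$ for every $j$, and outputting $\hat{s}_j=0$ already meets the claimed additive tolerance.

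Otherwise I will work at the chosen level $i^\ast$, which satisfies $2^{i^\ast}\cdot 64\lambda/\eta^2 \leq \hat{d} < 2^{i^\ast+1}\cdot 64\lambda/\eta^2$.  Two facts follow: (i) from $\mathcal{E}_d$ and $\hat d \geq 3\gamma_1$, we get $\|\mathcal{S}\|_0/2^{i^\ast}\leq 1.1\hat{d}/2^{i^\ast} + \gamma_1/2^{i^\ast} = O(\lambda/\eta^2)$; (ii) combined with $\mathcal{E}_g$, this bounds $|G_{i^\ast}|=O(\lambda/\eta^2)$, so by the choice $m = 100\cdot(25600\lambda/\eta^2)^2$ and pairwise independence of $h$, the expected number of collisions in $G_{i^\ast}$ under $h$ is $O(|G_{i^\ast}|^2/m) \leq 0.01$, giving $\Pr[\mathcal{E}_h \mid \mathcal{E}_g, \mathcal{E}_d] \geq 0.99$.

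Finally, fix $j \in [k]$ and bound $\hat{s}_j = 2^{i^\ast}\hat{s}_{i^\ast,j}$.  Under $\mathcal{E}_h$, $\hat{s}_{i^\ast,j} \in [|G_{i^\ast,j}|/\alpha_2 - \gamma_2,\ \alpha_2|G_{i^\ast,j}|+\gamma_2]$.  I split on the size of $|A_j|/2^{i^\ast}$:
\begin{itemize}
\item If $|A_j|/2^{i^\ast} \geq 4\lambda/\eta^2$, then $|G_{i^\ast,j}| \in (1\pm\eta)|A_j|/2^{i^\ast}$ by $\mathcal{E}_g$, so after multiplying by $2^{i^\ast}$ and by $\alpha_2,\,1/\alpha_2$, we get $\hat{s}_j$ within $(\alpha_2, 2^{i^\ast}\gamma_2 + \alpha_2\eta|A_j|)$ of $|A_j|$.
\item If $|A_j|/2^{i^\ast} < 4\lambda/\eta^2$, then $|G_{i^\ast,j}| \in |A_j|/2^{i^\ast}\pm 4\lambda/\eta$, giving $|\hat{s}_j - \alpha_2^{\pm 1}|A_j|| \leq 2^{i^\ast}(\alpha_2\cdot 4\lambda/\eta + \gamma_2)$.
\end{itemize}
Using $2^{i^\ast} \leq 1.1(\|\mathcal{S}\|_0+\gamma_1)\eta^2/(64\lambda)$ from $\mathcal{E}_d$ and $\hat d \geq 3\gamma_1$, the additive contribution $2^{i^\ast}\gamma_2$ becomes at most $(\eta^2\gamma_2/(32\lambda))\|\mathcal{S}\|_0$ plus a lower-order term in $\gamma_1$, and $2^{i^\ast}\cdot 4\lambda\alpha_2/\eta$ is at most $\alpha_2\eta\|\mathcal{S}\|_0$ plus a lower-order term; together with the trivial bound $\alpha_2\eta|A_j|\leq \alpha_2\eta\|\mathcal{S}\|_0$ from the first case, and absorbing the $\gamma_1$ and $\lambda/\eta^2$ residuals into the $6\gamma_1 + 128\lambda/\eta^2$ slack, the claimed $(\alpha_2,\,(\alpha_2\eta + \eta^2\gamma_2/(32\lambda))\|\mathcal{S}\|_0 + 6\gamma_1 + 128\lambda/\eta^2)$-approximation follows.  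The principal obstacle is keeping the constants in the bookkeeping tight enough that the small-$|A_j|$ case and the $\gamma_1$ slop in $\mathcal{E}_d$ both fit inside the advertised error; everything else is a direct application of the previously cited concentration, distinct-count, and small-universe low-frequency lemmas combined by a union bound that gives overall success $\geq 0.9$.
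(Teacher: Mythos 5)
Your proposal follows essentially the same structure as the paper's proof: (i) a no-collision event for $h$ on $G_{i^\ast}$ (the paper's Claim~\ref{cla:collision_low_freq}), (ii) a concentration event for the $\lambda$-wise independent subsampling by $g$ (the paper's Claim~\ref{cla:approx_low_freq}), (iii) the trivial branch when $\hat d$ is small, and (iv) error propagation through the $(\alpha_2,\gamma_2)$ subroutine and conversion of $2^{i^\ast}$-scaled terms into $\|\mathcal{S}\|_0$-scaled terms. The reasoning is sound and reaches the same bound.

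One small point of imprecision worth fixing: to turn $2^{i^\ast}\gamma_2$ into the $\frac{\eta^2\gamma_2}{32\lambda}\|\mathcal{S}\|_0$ term you go through the \emph{upper} bound $\hat d\le 1.1\|\mathcal{S}\|_0+\gamma_1$, which leaves a $\gamma_1\gamma_2\eta^2/\lambda$ cross-term that you describe as a ``lower-order term in $\gamma_1$'' to be absorbed into the $6\gamma_1$ slack; this is not automatic, since $\gamma_2\eta^2/\lambda$ need not be $O(1)$. The paper avoids this by going the other direction: from $\hat d\ge 3\gamma_1$ one gets $\|\mathcal{S}\|_0\ge(\hat d-\gamma_1)/1.1\ge\hat d/2\ge 2^{i^\ast}\cdot 32\lambda/\eta^2$, hence $2^{i^\ast}\gamma_2\le\frac{\eta^2\gamma_2}{32\lambda}\|\mathcal{S}\|_0$ exactly, with no residual. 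Also, your two-regime split of the concentration in $\mathcal{E}_g$ is unnecessary: once $i^\ast$ is fixed by the algorithm one always has $s_j/2^{i^\ast}\le 256\lambda/\eta^2$, so a single application of the additive branch of Lemma~\ref{lem:concentration} with threshold $32\lambda/\eta$ suffices (this is what Claim~\ref{cla:approx_low_freq} does), which is slightly cleaner than carrying two cases through the final bookkeeping.
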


To prove Lemma~\ref{lem:accuracy_low_freq_members}, we need following intermediate statements.
Consider timestamp $t\in [T]$.
$\forall a\in\mathcal{U}$, let $f_a$ denote the frequency of $a$ in $a_1,a_2,\cdots,a_t$.
If $i^*$ exists in Algorithm~\ref{alg:low_freq_large_universe}, we define $G=\{a_j\mid g(a_j)=i^*,j\in[t]\}$.
Let $\|\mathcal{S}\|_0$ denote the number of distinct elements in $a_1,a_2,\cdots,a_t$.
For $l\in [k],$ define $G_l=\{a\in G\mid f_a = l\}$.

\begin{claim}\label{cla:collision_low_freq}
With probability at least $0.98$, $\forall a\not=a'\in G,h(a)\not=h(a')$.
\end{claim}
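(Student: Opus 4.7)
The plan is to control $|G|$ using the maximality of $i^*$ and the approximation quality of $\hat{d}$, and then apply a standard pairwise-independence union bound over pair-collisions under $h$. The two steps decouple: the first uses only the randomness of $g$ (via concentration of the subsampled count), and the second uses only the randomness of $h$ (via pairwise independence into $[m]$).

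For the first step, I would observe that $i^*$ is defined only in the branch where $\hat{d}>\max(3\gamma_1,64\lambda/\eta^2)$, and by maximality of $i^*$ we have $\hat{d}/2^{i^*}<128\lambda/\eta^2$. Since $\hat{d}$ is a $(1.1,\gamma_1)$-approximation to $\|\mathcal{S}\|_0$ and since $\gamma_1\leq \hat{d}/3$, we obtain $\|\mathcal{S}\|_0\leq 1.1(\hat{d}+\gamma_1)\leq \tfrac{4.4}{3}\hat{d}$, and therefore
\[ E[|G|] \;=\; \|\mathcal{S}\|_0\cdot 2^{-i^*} \;\leq\; \tfrac{4.4}{3}\cdot \tfrac{128\lambda}{\eta^2} \;<\; \tfrac{200\lambda}{\eta^2}. \]
Markov's inequality then gives $\Pr[|G|>25600\lambda/\eta^2]<200/25600<0.01$. (A variance bound coming from the $\lambda$-wise independence of $g$ via Lemma~\ref{lem:concentration} would work as well, but is not needed.)

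For the second step, condition on $|G|\leq 25600\lambda/\eta^2$. Since $h:\mathcal{U}\to[m]$ is pairwise independent and $m=100\cdot(25600\lambda/\eta^2)^2$, each unordered pair $\{a,a'\}\subseteq G$ has collision probability $\Pr[h(a)=h(a')]=1/m$. A union bound over at most $\binom{|G|}{2}\leq (25600\lambda/\eta^2)^2/2$ such pairs gives a total collision probability at most $1/200$. Combined with the failure probability $\leq 0.01$ from the first step, the desired event holds with probability at least $1-0.01-1/200>0.98$.

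The only real obstacle, which is minor, lives entirely in step one: relating $\|\mathcal{S}\|_0/2^{i^*}$ to a multiple of $\lambda/\eta^2$ requires carefully chaining the maximality condition on $i^*$, the multiplicative-plus-additive error guarantee on $\hat{d}$, and the entry condition $\hat{d}>3\gamma_1$ (which is precisely what lets us absorb $\gamma_1$ into a constant factor on $\hat{d}$). The pair-collision union bound in step two is routine, and the choice $m=100\cdot(25600\lambda/\eta^2)^2$ is calibrated exactly so that $|G|^2/m$ is a small absolute constant.
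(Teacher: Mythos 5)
Your proof is correct and follows essentially the same route as the paper's: bound $E[|G|]=\|\mathcal{S}\|_0/2^{i^*}$ by chaining the $(1.1,\gamma_1)$-approximation of $\hat d$, the entry condition $\hat d>3\gamma_1$, and the maximality of $i^*$; apply Markov to control $|G|$; then union-bound pair collisions under the pairwise-independent $h$ into $[m]$. The only difference is slightly tighter constants (and you count $\binom{|G|}{2}$ rather than $|G|^2$), which is immaterial.
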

\begin{proof}
Since $E\left[|G|\right]=\|\mathcal{S}\|_0/2^{i^*}$, we have $|G|\leq 100\cdot \|\mathcal{S}\|_0/2^{i^*}$ with probability at least $0.99$ by Markov's inequality.
Since $i^*$ can be found by Algorithm~\ref{alg:low_freq_large_universe}, we have $\|\mathcal{S}\|_0\leq 1.1\cdot  (\hat{d}+\gamma_1)\leq 2\hat{d}$.
By the choice of $i^*$, we have $2^{i^*}\geq \hat{d}/(128\lambda/\eta^2)$.
Therefore, $\|\mathcal{S}\|_0/2^{i^*}\leq 256\lambda/\eta^2$.
With probability at least $0.99$, $|G|\leq 25600\lambda/\eta^2$.
By union bound and Markov's inequality, $\Pr[\forall a\not=a'\in G,h(a)\not=h(a') \mid |G|\leq 25600\lambda/\eta^2]\geq 1-|G|^2/m\geq 0.99$.
\end{proof}

\begin{claim}\label{cla:approx_low_freq}
$\forall l\in [k], \Pr[2^{i^*}\cdot|G_l|\in |\{a\in U\mid f_a=l\}|\pm \eta\|\mathcal{S}\|_0]\geq 1-0.01/k$.
\end{claim}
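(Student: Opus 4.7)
The plan is to write $|G_l|$ as a sum of $\lambda$-wise independent indicators and apply the moment bound of Lemma~\ref{lem:concentration}. First I would fix $l\in[k]$, let $U_l=\{a\in\mathcal{U}\mid f_a=l\}$, and observe that
\[
|G_l|=\sum_{a\in U_l}\mathbf{1}[g(a)=i^*].
\]
Because the hash $g$ is $\lambda$-wise independent and is drawn independently of the randomness used by the distinct-elements subroutine that produces $\hat{d}$ (and hence $i^*$), after conditioning on $i^*$ these indicators are $\lambda$-wise independent Bernoullis with marginal $2^{-i^*}$. Consequently $\mu_l:=E[|G_l|\mid i^*]=|U_l|/2^{i^*}$, and the event in the claim is exactly $||G_l|-\mu_l|\le \eta\|\mathcal{S}\|_0/2^{i^*}$.

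Next I would pin down the size of $\|\mathcal{S}\|_0/2^{i^*}$, borrowing exactly the arithmetic used in the proof of Claim~\ref{cla:collision_low_freq}. Since $i^*$ is only defined when $\hat{d}>\max(3\gamma_1,64\lambda/\eta^2)$ and $\hat{d}$ is a $(1.1,\gamma_1)$-approximation to $\|\mathcal{S}\|_0$, we get $\|\mathcal{S}\|_0\le 1.1(\hat{d}+\gamma_1)\le 2\hat{d}$; combined with the defining upper bound $2^{i^*}> \hat{d}/(128\lambda/\eta^2)$ this gives $\mu_l\le \|\mathcal{S}\|_0/2^{i^*}\le 256\lambda/\eta^2$. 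Symmetrically, $2^{i^*}\cdot(64\lambda/\eta^2)\le\hat{d}$ and $\|\mathcal{S}\|_0\ge(\hat{d}-\gamma_1)/1.1\ge \hat{d}/2$ yield $\|\mathcal{S}\|_0/2^{i^*}\ge 32\lambda/\eta^2$, so $A:=\eta\|\mathcal{S}\|_0/2^{i^*}\ge 32\lambda/\eta$.

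Finally I would invoke Lemma~\ref{lem:concentration} with this $A$. Using $\lambda\mu_l+\lambda^2\le 257\lambda^2/\eta^2$ and $A^2\ge 1024\lambda^2/\eta^2$,
\[
\Pr\bigl[\,\bigl||G_l|-\mu_l\bigr|>A\,\bigm|\,i^*\bigr]\le 8\left(\frac{\lambda\mu_l+\lambda^2}{A^2}\right)^{\lambda/2}\le 8\cdot(1/3)^{\lambda/2}.
\]
Plugging in $\lambda=2\log(1000k)$ gives $8\cdot(1000k)^{-\log_2 3}\le 0.01/k$, which proves the claim after removing the conditioning on $i^*$ by averaging.

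I do not anticipate a substantive obstacle: the only subtlety is ensuring that $i^*$ can be treated as a constant when applying Lemma~\ref{lem:concentration}, which I would justify by the explicit independence between $g$ and the randomness of the distinct-elements subroutine. Everything else is arithmetic driven by the choice $\lambda=2\log(1000k)$, matching the template established by Claim~\ref{cla:good_concentration} in the distinct-elements analysis.
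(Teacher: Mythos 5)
Your proof is correct and follows essentially the same route as the paper: both apply Lemma~\ref{lem:concentration} to $|G_l|$ viewed as a sum of $\lambda$-wise independent indicators, and both pin down $\|\mathcal{S}\|_0/2^{i^*}\in[32\lambda/\eta^2, 256\lambda/\eta^2]$ using the same arithmetic from the $(1.1,\gamma_1)$-approximation and the definition of $i^*$; the paper merely invokes the lemma with $A=32\lambda/\eta$ and then rescales, whereas you invoke it directly with $A=\eta\|\mathcal{S}\|_0/2^{i^*}\ge 32\lambda/\eta$, which is equivalent. Your explicit remark about conditioning on $i^*$ (valid because $g$ is independent of the distinct-elements subroutine's randomness) is a correct clarification that the paper leaves implicit.
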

\begin{proof}
Let $s_l=|\{a\in U\mid f_a=l\}|$.
Since $i^*$ can be found, we have $\hat{d}\geq 3\gamma_1$.
Note that $s_l\leq \|\mathcal{S}\|_0\leq 1.1(\hat{d}+\gamma_1)\leq 2\hat{d}\leq 2^{i^*}\cdot 4\cdot 64\lambda/\eta^2$.
By applying Lemma~\ref{lem:concentration}, $\forall l\in [k]$ we have:
\begin{align*}
&\Pr\left[\left||G_l|-s_l/2^{i^*}\right|>32\lambda/\eta\right]\\
\leq & 8\cdot \left(\frac{\lambda\cdot s_l/2^{i^*}+\lambda^2}{(32\lambda/\eta)^2}\right)^{\lambda/2}\\
\leq & 0.01/k,
\end{align*}
where the last inequality follows from $s_l/2^{i^*}\leq 4\cdot 64\lambda/\eta^2$ and $\lambda = 2\cdot \log(1000k)$.
Since $\|\mathcal{S}\|_0\geq (\hat{d}-\gamma_1)/1.1\geq \hat{d}/2\geq 2^{i^*}\cdot 32\lambda/\eta^2$, we have 
\begin{align*}
\Pr\left[\left||G_l|-s_l/2^{i^*}\right|>\eta \|\mathcal{S}\|_0/2^{i^*}\right]\leq 0.01/k.
\end{align*}
\end{proof}
Next we show the proof of Lemma~\ref{lem:accuracy_low_freq_members}
\begin{proof}[Proof of Lemma~\ref{lem:accuracy_low_freq_members}]
Let $\mathcal{E}$ denote the event that $\forall a\not=a'\in G,h(a)\not=h(a')$.
Let $\forall l\in[k], s_l=|\{a\in U\mid f_a=l\}|$.
Let $\mathcal{E}'$ denote the event that $\forall l\in [k],2^{i^*}\cdot|G_l|\in s_l\pm \eta\|\mathcal{S}\|_0$.
According to Claim~\ref{cla:collision_low_freq} and Claim~\ref{cla:approx_low_freq}, the probability that both $\mathcal{E}$ and $\mathcal{E}'$ happen is at least $0.97$. 
In the remaining of the proof, we condition on both events $\mathcal{E}$ and $\mathcal{E}'$.

First, consider the case that $\hat{d}<\max(3\gamma_1,64\lambda/\eta^2)$.
Since $\hat{d}$ is a $(1.1,\gamma_1)$-approximation to $\|\mathcal{S}\|_0$, we have $\|\mathcal{S}\|_0\leq 6\gamma_1+128\lambda/\eta^2$.
In this case, $\forall l\in[k],\hat{s}_l=0$ and $\left|\{a\in\mathcal{U}\mid f_a = l\}\right|\leq \|\mathcal{S}\|_0\leq 6\gamma_1+128\lambda/\eta^2$.

Next, consider the case that $\hat{d}\geq \max(3\gamma_1,64\lambda/\eta^2)$.
For each $j\in [k]$, let $s_{i^*,l}$ denote the number of elements in $\mathcal{S}_l$ where each element has frequency exact $j$.
According to event $\mathcal{E}$, we have $\forall l\in[k],s_{i^*,l}=|G_{l}|$.
According to event $\mathcal{E}'$, $\forall l\in[k]$, we have:
\begin{align*}
\hat{s}_l&\geq 2^{i^*}\cdot \left(\frac{1}{\alpha_2}\cdot |G_l| - \gamma_2\right)\geq \frac{1}{\alpha_2}\cdot \left(s_l-\eta\|\mathcal{S}\|_0\right) - 2^{i^*}\cdot \gamma_2\\
&\geq \frac{1}{\alpha_2}\cdot s_l -   \left(\eta+\frac{\eta^2\gamma_2}{32\lambda}\right)\cdot \|\mathcal{S}\|_0,
\end{align*}
where the first inequality follows from that $\hat{s}_{i^*,l}$ is an $(\alpha_2,\gamma_2)$-approximation to $s_{i^*,l}$ and $s_{i^*,l}=|G_l|$, the second inequality follows from event $\mathcal{E}'$, and the third inequality follows from $\alpha_2\geq 1$ and $ \|\mathcal{S}\|_0\geq (\hat{d}-\gamma_1)/1.1\geq \hat{d}/2\geq 2^{i^*}\cdot 32\lambda/\eta^2$.
Similarly, $\forall l\in[k]$, we have:
\begin{align*}
\hat{s}_l&\leq 2^{i^*}\cdot \left(\alpha_2\cdot |G_l|+\gamma_2\right)
\leq \alpha_2\cdot (s_l+\eta\|\mathcal{S}\|_0) + 2^{i^*}\cdot \gamma_2\\
&\leq \alpha_2\cdot s_l +   \left(\alpha_2\eta+\frac{\eta^2\gamma_2}{32\lambda}\right)\cdot \|\mathcal{S}\|_0.
\end{align*}
\end{proof}

\begin{theorem}[Streaming continual release of count of low frequency elements]\label{thm:count_low_freq_large}
Let $k\geq 1,\varepsilon\geq 0,\xi\in(0,0.5),\eta\in(0,0.5)$.
There is an $\varepsilon$-DP algorithm in the streaming continual release model such that with probability at least $1-\xi$, it always outputs $k$ numbers $\hat{s}_1,\hat{s}_2,\cdots,\hat{s}_k$ for every timestamp $t$ such that $\forall i\in [k]$, $\hat{s}_i$ is an approximation to $|\{a\in \mathcal{U} \mid f_a = i\}|$ with additive error:
\begin{align*}
\left(\eta+\eta^2\cdot \frac{k}{\varepsilon}\cdot \poly\left(\log\left(\frac{Tk}{\xi}\right)\right)\right)\cdot \|\mathcal{S}\|_0+\frac{1}{\varepsilon} \cdot \poly\left(\log\left(\frac{T}{\xi}\right)\right)+O\left(\frac{\log k}{\eta^2}\right)
\end{align*}
The algorithm uses $\frac{1}{\eta^4}\cdot \poly\left(\frac{\log(T\cdot k/\xi)}{\min(\varepsilon,1)}\right)$ space.
\end{theorem}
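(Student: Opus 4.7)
The plan is to take Algorithm~\ref{alg:low_freq_large_universe} as the backbone, plug in Corollary~\ref{cor:distinct_large_universe} as the distinct-elements subroutine for $\hat{d}$, and plug in Theorem~\ref{thm:low_freq_small_universe} as the small-universe subroutine for each $\{\hat{s}_{i,1},\ldots,\hat{s}_{i,k}\}$ (noting that the hashed universe size there is the constant $m = O(\lambda^2/\eta^4) = O(\log^2(k)/\eta^4)$, independent of $|\mathcal{U}|$). Since Lemma~\ref{lem:accuracy_low_freq_members} only gives a constant success probability at each timestamp, I would then run $R = \Theta(\log(Tk/\xi))$ independent copies and output the coordinate-wise median of the $k$-tuples (Lemma~\ref{lem:median_trick}), so that by a union bound over all $T$ timestamps and all $k$ coordinates the output is accurate simultaneously with probability at least $1-\xi$.

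For the privacy accounting, inside each copy I would allocate budget $\varepsilon' = \varepsilon/(3R)$ to $\hat{d}$ and to each level's low-frequency counter. Lemma~\ref{lem:dp_guarantee_low_freq_large_universe} then makes each copy $3\varepsilon'$-DP, and the composition (Theorem~\ref{thm:composition}) of the $R$ copies gives the final $\varepsilon$-DP guarantee. I would also scale each subroutine's failure probability to be $O(\xi/(R L k))$, where $L = O(\log T)$ is the number of subsampling levels, so that a union bound over all subroutine calls within a copy, over the $k$ coordinates, and over the $R$ copies keeps the overall failure probability within the $\xi$ budget used for the median trick.

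For the accuracy I would substitute $\alpha_2 = 1$ and $\gamma_2 = (k/\varepsilon)\cdot\poly(\log(Tk/\xi))$ from Theorem~\ref{thm:low_freq_small_universe}, $\gamma_1 = \poly(\log(T/\xi))/\varepsilon$ from Corollary~\ref{cor:distinct_large_universe}, and $\lambda = 2\log(1000k)$ into the error formula
$$\bigl(\alpha_2\eta + \eta^2 \gamma_2 / (32\lambda)\bigr) \cdot \|\mathcal{S}\|_0 + 6\gamma_1 + 128\lambda/\eta^2$$
of Lemma~\ref{lem:accuracy_low_freq_members}. The $\lambda = \Theta(\log k)$ in the denominator is absorbed harmlessly into the $\poly$, yielding exactly the claimed three-term additive error. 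The space is then a straightforward product: the $O(\log^2(k)/\eta^4 + k\log T)$ per-level cost of Theorem~\ref{thm:low_freq_small_universe} multiplied by $L$ levels and by $R$ copies, plus the $R$ copies of the distinct-elements subroutine, giving the stated $\frac{1}{\eta^4}\cdot\poly(\log(Tk/\xi)/\min(\varepsilon,1))$ bound. The main piece to be careful about will be book-keeping the rescaling: the subroutine budgets shrink by $R = \Theta(\log(Tk/\xi))$ and the subroutine failure probabilities shrink by $O(RLk)$, and I must verify that both only inflate $\gamma_1,\gamma_2$ by further $\poly\log$ factors that the final bound already tolerates, rather than changing the form of the additive error.
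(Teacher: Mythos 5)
Your proposal is correct and matches the paper's own proof almost step for step: plug Corollary~\ref{cor:distinct_large_universe} and Theorem~\ref{thm:low_freq_small_universe} (with hashed universe of size $m = O(\lambda^2/\eta^4)$) into Algorithm~\ref{alg:low_freq_large_universe}, boost Lemma~\ref{lem:accuracy_low_freq_members} over all timestamps via $\Theta(\log(T/\xi))$ copies and the median trick, split the privacy budget by a factor $3R$ across the $\hat{d}$ and per-level subroutines using Lemma~\ref{lem:dp_guarantee_low_freq_large_universe} and composition, and substitute $(\alpha_2,\gamma_2),\gamma_1,\lambda$ into the error expression. Your extra $k$ factors inside the logarithms for $R$ and the per-subroutine failure probability are slightly more conservative than the paper's (Lemma~\ref{lem:accuracy_low_freq_members} and Theorem~\ref{thm:low_freq_small_universe} already union over the $k$ coordinates internally), but this changes nothing beyond the hidden constants in the $\poly\log$ factors.
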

\begin{proof}
According to Lemma~\ref{lem:accuracy_low_freq_members}, the approximation guarantee holds with probability at least $0.9$.
To boost the success probability to $1-\xi/3$ for all $t\in [T]$, we need to run $\lceil 50\log(3T/\xi)\rceil$ independent copies of Algorithm~\ref{alg:low_freq_large_universe}.
Since we run $\lceil 50\log(3T/\xi)\rceil$ copies of Algorithm~\ref{alg:low_freq_large_universe} and according to Lemma~\ref{lem:dp_guarantee_low_freq_large_universe}, if we want the final algorithm to be $\varepsilon$-DP, we need each subroutine of the streaming continual release of $\{\hat{s}_{i,1},\hat{s}_{i,2},\cdots,\hat{s}_{i,k}\}$ for each $i\in [k]$ to be $(\varepsilon/(3\cdot \lceil50\log(2T/\xi))\rceil)$-DP and we need the subroutine of the streaming continual release of $\hat{d}$ to be also $(\varepsilon/(3\cdot \lceil50\log(3T/\xi))\rceil)$-DP.
To simultaneously make the call of each subroutine of the streaming continual release of $\{\hat{s}_{i,1},\hat{s}_{i,2},\cdots,\hat{s}_{i,k}\}$ over all independent copies of Algorithm~\ref{alg:low_freq_large_universe} satisfy the desired $(\alpha_2,\gamma_2)$-approximation with probability at least $1-\xi/3$, we need to make $\{\hat{s}_{i,1},\hat{s}_{i,2},\cdots,\hat{s}_{i,k}\}$ satisfy the approximation guarantee for each particular $i\in[L]$ and a particular copy of Algorithm~\ref{alg:low_freq_large_universe} with probability at least $1-\xi/(3\cdot \lceil50\log(3T/\xi))\rceil\cdot L)$.
To simultaneously make the call of each subroutine of the streaming continual release of $\hat{d}$ over all independent copies of Algorithm~\ref{alg:low_freq_large_universe} satisfy the desired $(1.1,\gamma_1)$-approximation with probability at least $1-\xi/3$, we need to make $\hat{d}$ satisfy the approximation guarantee for each particular copy of Algorithm~\ref{alg:low_freq_large_universe} with probability at least $1-\xi/(3\cdot \lceil50\log(3T/\xi))\rceil)$.
Then, according to Corollary~\ref{cor:distinct_large_universe}, we have 
$
\gamma_1 = \frac{1}{\varepsilon}\cdot \poly\left(\log\left(\frac{T}{\xi}\right)\right).
$
According to Theorem~\ref{thm:low_freq_small_universe}, we have $\alpha_2=1$ and 
$
\gamma_2= \frac{k}{\varepsilon}\cdot \poly\left(\log\left(\frac{Tk
}{\xi}\right)\right).
$
By plugging above parameters into Lemma~\ref{lem:accuracy_low_freq_members}, we have $\forall i\in [k],\hat{s}_i$ is an approximation to $|\{a\in\mathcal{U}\mid f_a=i\}|$ with additive error:
\begin{align*}
\left(\eta+\eta^2\cdot \frac{k}{\varepsilon}\cdot \poly\left(\log\left(\frac{Tk}{\xi}\right)\right)\right)\cdot \|\mathcal{S}\|_0+\frac{1}{\varepsilon} \cdot \poly\left(\log\left(\frac{T}{\xi}\right)\right)+O\left(\frac{\log k}{\eta^2}\right).
\end{align*}

Next, consider the space usage. 
According to Corollary~\ref{cor:distinct_large_universe}, the total space needed to release $\hat{d}$ for all copies of Algorithm~\ref{alg:low_freq_large_universe} needs $\poly\left(\frac{\log(T/\xi)}{\min(\varepsilon,1)}\right)$.
According to Theorem~\ref{thm:low_freq_small_universe}, the total space needed to release $\{\hat{s}_{i,1},\hat{s}_{i,2},\cdots,\hat{s}_{i,k}\}$ for all $i\in [L]$ over all copies of Algorithm~\ref{alg:low_freq_large_universe} needs $O(\log(T/\xi)\cdot L \cdot (m+\log(T)))=\frac{1}{\eta^4}\cdot \poly\left(\log\left(\frac{T\cdot k}{\xi}\right)\right)$.
Thus, the overall space is at most $\frac{1}{\eta^4}\cdot \poly\left(\frac{\log(T\cdot k/\xi)}{\min(\varepsilon,1)}\right)$.
\end{proof}

\subsection{$\ell_p$ Moment Estimation}\label{sec:lp_moment}
In this section, we show how to use our $\ell_p$ heavy hitters and the estimator of number of low frequency elements to solve $\ell_p$ moment estimation problem.

\begin{algorithm}[H]\label{alg:lp_moment}
\small
	\KwIn{A stream $\mathcal{S}$ of elements $a_1,a_2,\cdots,a_T\in \mathcal{U}\cup \{\perp\}$, an error parameter $\eta\in (0,0.5)$.}
	\Parameter{A threshold parameter $\tau$ depending on the heavy hitter algorithm, a relative approximation factor $\alpha$ and an additive approximation factor $\gamma$ depending on the algorithm of estimating count of low frequency elements. \hfill{//See Theorem~\ref{thm:lp_heavyhitters}} and Theorem~\ref{thm:count_low_freq_large}.}
	\KwOut{Estimation of the $\ell_p$ frequency moment at every timestamp $t$.}
	Let $\beta'$ be drawn uniformly at random from $[1/2,1]$ and let $\beta\in \beta'\pm (\eta/T)^C$ for some sufficiently large constant $C>0$. \hfill{//Thus $\beta$ can be represented by $\Theta(\log(T/\eta))$ bits.}\\
	Let $q^*_1$ be the smallest integer that $\beta(1+\eta)^{q^*_1}>\tau $ and let $q^*_2$ be the smallest integer that $\beta(1+\eta)^{q^*_2+1}\geq T$, and for any $q\in[q^*_1,q^*_2],$ define the interval $I_q=(\beta(1+\eta)^q, \beta(1+\eta)^{q+1}]$.\\
	$\lceil L\gets \log(|\mathcal{U}|) \rceil,\lambda\gets 2\cdot \log (1000 (L+1)\log(4T)/\eta)$.\\
	Let $k$ be the largest integer such that $k\leq  \beta(1+\eta)^{q^*_1}$. Let $B\gets \left(\frac{\log(4T)}{\eta}\right)\cdot 100(L+1)\cdot \frac{32\lambda}{\eta^3}\cdot (1+\eta)^p$.\\
	Let $g:\mathcal{U}\rightarrow [L]\cup \{\perp\}$ be a $\lambda$-wise independent hash function and $\forall a\in \mathcal{U},i\in[L],\Pr[g(a)=i]=2^{-i},\Pr[g(a)=\perp]=2^{-L}$.\\
	Initialize empty streams $\mathcal{S}_0,\mathcal{S}_1,\mathcal{S}_2,\cdots,\mathcal{S}_L$.\\
	\For{ each $a_t$ in the stream $\mathcal{S}$}
	{
	    \If{ $a_t\not=\perp$ }
	    {
	        Append $a_t$ at the end of $\mathcal{S}_0$.\\
	        For $i\in[L]$, if $g(a_t)=i$, append $a_t$ to $\mathcal{S}_i$, otherwise append $\perp$ to $\mathcal{S}_i$.
	    }
	    \Else{
	    For $i\in[L]\cup \{0\}$, append $\perp$ at the end of $\mathcal{S}_i$.
	    }
	    For each $i\in [L]\cup \{0\}$, compute a set $H_i\subseteq\mathcal{U}$ together with a function $\hat{f}_i:H_i\rightarrow \mathbb{R}_{\geq 0}$ satisfying:
	    \begin{enumerate}
	        \item $\forall a\in H_i, (1-\eta')\cdot f_a\leq \hat{f}_i(a)\leq (1+\eta')\cdot f_a$, where $f_a$ is the frequency of $a$ in $a_1,a_2,\cdots,a_t$, and $\eta'$ satisfies $\eta' \leq \frac{\eta}{10000(L+1)|H_i|}$.
	        \item $\forall a \in \mathcal{U}$ that appears in $\mathcal{S}_i$, if $f_a\geq \tau$ and $f_a^p\geq \|\mathcal{S}_i\|_p^p/B$, $a\in H_i$.
	    \end{enumerate}
	    Compute $\hat{s}_1,\hat{s}_2,\cdots,\hat{s}_k$ where $\forall l\in[k]$, $\hat{s}_l$ is an $(\alpha,\gamma)$-approximation to $|\{a\in\mathcal{U}\mid f_a = l\}|$.\\
	    \For{$q\in[q_1^*,q_2^*]$}{
	        Initialize $\hat{z}_q=0$.\\
	        \For{$i\in [L]\cup\{0\}$}{
	            \If{$|\{a\in H_i\mid \hat{f}_i(a) \in I_q\}|\geq 8\lambda/\eta^2$ or $i=0$}{
	                $\hat{z}_q\gets \max(\hat{z}_q, |\{a\in H_i\mid \hat{f}_i(a) \in I_q\}|\cdot 2^i)$.
	            }
	        }
	    }
	    Output $\hat{F}_p =\sum_{l\in [k]} \hat{s}_l\cdot l^p + \sum_{q\in[q_1^*,q_2^*]} \hat{z}_q \cdot (\beta(1+\eta)^{q})^p$
	}
    \caption{$\ell_p$ Frequency Moment Estimation}
\end{algorithm}

\begin{lemma}\label{lem:lp_moment_dp_guarantee}
Consider the subroutines in Algorithm~\ref{alg:lp_moment}. 
If the algorithm that continually release $\hat{s}_1,\hat{s}_2,\cdots,\hat{s}_k$ is $\varepsilon$-DP and for every $i\in [L]\cup\{0\}$ the algorithm that continually release $(H_i,\hat{f}_i)$ is $\varepsilon$-DP.
Algorithm~\ref{alg:lp_moment} is $4\varepsilon$-DP in the continual release model.
\end{lemma}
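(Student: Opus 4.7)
The plan is to treat Algorithm~\ref{alg:lp_moment} as a composition of the three kinds of DP subroutines invoked, together with a post-processing step, and then bound the sensitivity of the stream mapping that feeds these subroutines. The overall bound of $4\varepsilon$ will come out of Theorem~\ref{thm:composition}.

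First, I will fix two neighboring input streams $\mathcal{S}=(a_1,\dots,a_T)$ and $\mathcal{S}'=(a_1',\dots,a_T')$ that differ only at a single timestamp $t$, and I will condition on the internal randomness of the algorithm (the pivot $\beta$ and the hash function $g$), since these are data-independent and may be shared between executions on $\mathcal{S}$ and $\mathcal{S}'$. I then define the stream mapping
\[
\mathcal{F}(\mathcal{S}) \;=\; \bigl(\mathcal{S},\; \mathcal{S}_0,\; \mathcal{S}_1,\; \mathcal{S}_2,\;\dots,\;\mathcal{S}_L\bigr),
\]
where $\mathcal{S}$ is fed to the low-frequency-count subroutine producing $\hat{s}_1,\dots,\hat{s}_k$, and each $\mathcal{S}_i$ is fed to the heavy-hitters subroutine producing $(H_i,\hat{f}_i)$.

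Next I bound $\dis$ coordinate-by-coordinate. For the first coordinate, $\dis(\mathcal{S},\mathcal{S}')=1$ by assumption. For $\mathcal{S}_0$, changing only $a_t$ to $a_t'$ alters at most the single position $t$ of $\mathcal{S}_0$, so $\dis(\mathcal{S}_0,\mathcal{S}_0')\le 1$. For each $i\in [L]$, the construction of $\mathcal{S}_i$ puts a non-$\perp$ entry at position $t$ only if $g(a_t)=i$ (resp.\ $g(a_t')=i$); hence $\mathcal{S}_i$ and $\mathcal{S}_i'$ can differ only at position $t$, and only when $i\in\{g(a_t),g(a_t')\}$. Consequently $\sum_{i=1}^{L}\dis(\mathcal{S}_i,\mathcal{S}_i')\le 2$. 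Summing all coordinates gives a total sensitivity of the mapping $\mathcal{F}$ of at most $1+1+2=4$.

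Finally, I invoke Theorem~\ref{thm:composition}: since the subroutine computing $(\hat{s}_1,\dots,\hat{s}_k)$ from $\mathcal{S}$ and each subroutine computing $(H_i,\hat{f}_i)$ from $\mathcal{S}_i$ is $\varepsilon$-DP by assumption, and the per-timestamp output $\hat{F}_p$ is a deterministic function of these subroutine outputs together with the data-independent randomness $(\beta,g)$, the entire continual-release transcript of Algorithm~\ref{alg:lp_moment} is a post-processing of the joint output of these $L+2$ subroutines applied along the mapping $\mathcal{F}$. Composition with sensitivity $4$ then gives $(4\varepsilon)$-DP, as claimed. The only subtlety to handle carefully is the argument that the thresholding and maximization steps that define $\hat{z}_q$ and $\hat{F}_p$ depend on the input stream only through the released $(H_i,\hat{f}_i)$'s and $\hat{s}_l$'s (and through the data-independent $\beta,g$), so post-processing applies; this is immediate from the pseudocode.
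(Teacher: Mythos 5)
Your proof is correct and takes essentially the same approach as the paper's: identify the $(L+2)$ subroutine inputs $(\mathcal{S},\mathcal{S}_0,\dots,\mathcal{S}_L)$, observe that a single change in the input can alter $\mathcal{S}$, $\mathcal{S}_0$, $\mathcal{S}_{g(a_t)}$, and $\mathcal{S}_{g(a_t')}$ each by at most one position (total sensitivity $4$), and invoke composition/post-processing. You phrase this more explicitly through Definition~\ref{def:sensitivity} and Theorem~\ref{thm:composition}, whereas the paper states the same argument informally, but there is no substantive difference.
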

\begin{proof}
Consider two neighboring stream $\mathcal{S}=(a_1,a_2,\cdots,a_T)$ and $\mathcal{S}'=(a'_1,a'_2,\cdots,a'_T)$ where they only differ at timestamp $t$, i.e., $a_t\not=a'_t$.
Consider the corresponding streams $\mathcal{S}_0,\mathcal{S}_1,\cdots,\mathcal{S}_L$ and $\mathcal{S}_0',\mathcal{S}_1',\cdots,\mathcal{S}_L'$.
$\forall i\in [L]\cup\{0\},j\not =t$, the $j$-th element of $\mathcal{S}_i$ should be the same as the $j$-th element of $\mathcal{S}'_i$.
Furthermore, $\forall i\in [L]$ with $i\not=g(a_t),i\not=g(a'_t)$, the $t$-th element of $\mathcal{S}_i$ is also the same as the $t$-th element of $\mathcal{S}'_i$.
Thus, at most $3$ streams $\mathcal{S}_0,\mathcal{S}_{g(a_t)}$ and $\mathcal{S}_{g(a_t')}$ might be different from $\mathcal{S}'_0,\mathcal{S}'_{g(a_t)}$ and $\mathcal{S}'_{g(a_t')}$ respectively.
Thus, the output $\{(H_0,\hat{f}_0),(H_1,\hat{f}_1),\cdots,(H_L,\hat{f}_L)\}$ is $3\varepsilon$-DP.
Since $(\hat{s}_1,\hat{s}_2,\cdots,\hat{s}_k)$ is $\varepsilon$-DP and the final output only depends on $(\hat{s}_1,\hat{s}_2,\cdots,\hat{s}_k)$ and $\{(H_0,\hat{f}_0),(H_1,\hat{f}_1),\cdots,(H_L,\hat{f}_L)\}$, the final output is $4\varepsilon$-DP.
\end{proof}

In the remaining of the section, let us analyze the approximation guarantee of Algorithm~\ref{alg:lp_moment}.
Let us consider a timestamp $t\in[T]$ and $\forall a\in\mathcal{U}$, let $f_a$ denote the frequency of $a$ in $a_1,a_2,\cdots,a_t$.
Let $\mathcal{S},\mathcal{S}_0,\mathcal{S}_1, \cdots, \mathcal{S}_L$ denote the streams up to timestamp $t$.
Let $\mathcal{I}=\{[1,1],[2,2],\cdots,[k,k],I_{q_1^*},I_{q_1^*+1},\cdots, I_{q_2^*}\}$. 
By our choice of $k,q_1^*,q_2^*$, we have the following observation:
\begin{observation}
$\sum_{I\in\mathcal{I}}\sum_{a\in\mathcal{U}:f_a\in I} f_a^p = \|\mathcal{S}\|_p^p$.
\end{observation}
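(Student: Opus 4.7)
The plan is to verify that the collection $\mathcal{I}$ partitions the set of positive integers $\{1,2,\dots,T\}$, so that summing $f_a^p$ over all elements with frequency in some interval of $\mathcal{I}$ recovers the full $\ell_p$ frequency moment. Since every frequency $f_a$ is a nonnegative integer bounded by $t\le T$, once we show that every positive integer in $[1,T]$ lies in exactly one interval of $\mathcal{I}$, the observation follows immediately from
$$\sum_{I\in\mathcal{I}}\sum_{a:f_a\in I}f_a^p \;=\; \sum_{a\in\mathcal{U}:\,f_a\ge 1} f_a^p \;=\; \|\mathcal{S}\|_p^p.$$

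First I would verify coverage. By definition, $k$ is the largest integer with $k\le \beta(1+\eta)^{q_1^*}$, so the singleton intervals $[1,1],\dots,[k,k]$ cover precisely the integers $\{1,\dots,k\}$. The intervals $I_{q_1^*},\dots,I_{q_2^*}$ form a consecutive chain of half-open intervals whose union equals $(\beta(1+\eta)^{q_1^*},\beta(1+\eta)^{q_2^*+1}]$. Since $k+1>\beta(1+\eta)^{q_1^*}$ (by maximality of $k$), every integer $m\ge k+1$ satisfies $m>\beta(1+\eta)^{q_1^*}$; and since $q_2^*$ was chosen so that $\beta(1+\eta)^{q_2^*+1}\ge T\ge f_a$, every such $m\le T$ also satisfies $m\le \beta(1+\eta)^{q_2^*+1}$. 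Thus the integers $\{k+1,\dots,T\}$ are all contained in $\bigcup_{q=q_1^*}^{q_2^*}I_q$, giving full coverage of $\{1,\dots,T\}$.

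Next I would check disjointness. The singleton intervals are pairwise disjoint by definition, and the intervals $I_q=(\beta(1+\eta)^q,\beta(1+\eta)^{q+1}]$ are pairwise disjoint because they tile $(\beta(1+\eta)^{q_1^*},\beta(1+\eta)^{q_2^*+1}]$ as consecutive half-open intervals. Finally, any singleton $[l,l]$ with $l\le k$ has $l\le \beta(1+\eta)^{q_1^*}$, hence lies outside every $I_q$ (each of which is strictly above $\beta(1+\eta)^{q_1^*}$). Therefore $\mathcal{I}$ is a disjoint cover of $\{1,\dots,T\}$.

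There is no real obstacle here; the only subtlety is the endpoint bookkeeping, specifically that the strict inequality $k+1>\beta(1+\eta)^{q_1^*}$ forces $k+1\in I_{q_1^*}$ rather than falling into a gap, and that $\beta(1+\eta)^{q_2^*+1}\ge T$ ensures no integer frequency escapes above the top interval. Once both are noted, every $a\in\mathcal{U}$ with $f_a\ge 1$ contributes $f_a^p$ to exactly one term of the double sum, which equals $\|\mathcal{S}\|_p^p$ by definition.
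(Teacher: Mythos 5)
Your proof is correct. The paper gives no explicit argument for this observation — it is simply asserted to follow ``by our choice of $k,q_1^*,q_2^*$'' — and your partition argument is exactly the bookkeeping that justification elides: the singleton intervals cover $\{1,\dots,k\}$, the $I_q$'s tile $\bigl(\beta(1+\eta)^{q_1^*},\beta(1+\eta)^{q_2^*+1}\bigr]$ as consecutive half-open intervals, maximality of $k$ forces $k+1>\beta(1+\eta)^{q_1^*}$, the choice of $q_2^*$ gives $\beta(1+\eta)^{q_2^*+1}\ge T$, and elements with $f_a=0$ contribute nothing. Since every integer frequency in $[1,t]\subseteq[1,T]$ thus falls into exactly one interval of $\mathcal{I}$, the double sum reorganizes $\sum_{a}f_a^p$ and the identity follows. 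There is no genuinely different route available here; you have simply made explicit what the paper takes as immediate.
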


\begin{definition}\label{def:contributing}
For any $I\in \mathcal{I}$, if $\sum_{a\in\mathcal{U}:f_a\in I} f_a^p\geq \eta\|\mathcal{S}\|_p^p/(q_2^*-q_1^*+1)$ or $I$ is $\{i\}$ for some $i\in[k]$, then interval $I$ is contributing.
\end{definition}

\subsubsection{Analysis of High Frequency Elements}
In this section, we show that $\sum_{q\in[q_1^*,q_2^*]} \hat{z}_q \cdot (\beta(1+\eta)^{q})^p$ is a good approximation to $\sum_{q\in[q_1^*,q_2^*]} \sum_{a\in \mathcal{U}:f_a\in I_q} f_a^p$ and $\sum_{q\in[q_1^*,q_2^*]:I_q\text{ is contributing}} \sum_{a\in \mathcal{U}:f_a\in I_q} f_a^p$.

The following lemma says that the frequency of any particular sampled element should be sufficiently far away from the boundary of intervals $I_{q^*_1}, I_{q^*_1+1},\cdots, I_{q^*_2}$ with a good probability.
\begin{lemma}[\cite{indyk2005optimal}]\label{lem:far_away_from_boundary}
Consider $\beta$ in Algorithm~\ref{alg:lp_moment}.
Consider any $f\in [T]$ and any $r\geq 2(\eta/T)^{C-1}$.
\begin{align*}
\Pr_{\beta'}\left[\min_{q\in \{q^*_1,q^*_1+1,\cdots,q^*_2,q^*_2+1\}} |f - \beta(1+\eta)^q| < r\right] \leq \frac{100r}{\eta \cdot f}
\end{align*}
\end{lemma}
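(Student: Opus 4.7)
The plan is to fix the index $q$, bound the probability for that single $q$ by a direct measure computation on $\beta$, then take a union bound restricted to the narrow range of $q$ for which the event can possibly occur, and close out with a geometric-sum estimate.

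First I would reduce to the cleaner case where $\beta$ is exactly uniform on $[1/2,1]$. Since $\beta\in\beta'\pm(\eta/T)^C$ and $\beta'$ is uniform on $[1/2,1]$, for any interval $J\subseteq\mathbb{R}$ we have
\[
\Pr_{\beta'}\!\bigl[\beta\in J\bigr]\ \leq\ 2\,|J|\ +\ O\!\bigl((\eta/T)^C\bigr),
\]
where the factor $2$ comes from the density $1/(1-1/2)=2$ of $\beta'$. For a fixed $q$, the event $|f-\beta(1+\eta)^q|<r$ is equivalent to $\beta\in\bigl(\frac{f-r}{(1+\eta)^q},\frac{f+r}{(1+\eta)^q}\bigr)$, an interval of length $2r/(1+\eta)^q$. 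Hence
\[
\Pr_{\beta'}\!\bigl[|f-\beta(1+\eta)^q|<r\bigr]\ \leq\ \frac{4r}{(1+\eta)^q}\ +\ O\!\bigl((\eta/T)^C\bigr).
\]

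Next I would restrict the union bound to the $q$'s that can possibly contribute. For the event to occur at all, there must exist some $\beta\in[1/2,1]$ (up to the $(\eta/T)^C$ slack) such that $\beta(1+\eta)^q\in(f-r,f+r)$. Since we may assume $r\leq f/2$ (otherwise the RHS $100r/(\eta f)\geq 50/\eta\geq 1$ and the lemma is vacuous), this forces $(1+\eta)^q\in[\,2(f-r),\,2(f+r)\,]\cdot[1/2,1]^{-1}\subseteq[f/2,\,3f]$. The number of integer $q$'s in this range is at most $\log_{1+\eta}(6)=O(1/\eta)$, and for every such $q$ we have $(1+\eta)^q\geq f/2$. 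Letting $q_0$ denote the smallest contributing index, a standard geometric-series bound yields
\[
\sum_{q\text{ contributing}}\frac{4r}{(1+\eta)^q}\ \leq\ \frac{4r}{(1+\eta)^{q_0}}\cdot\frac{1+\eta}{\eta}\ \leq\ \frac{8r}{f}\cdot\frac{1+\eta}{\eta}\ \leq\ \frac{24r}{\eta f}.
\]

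Finally I would handle the discretization slack: the total contribution from the $O(1/\eta)$ error terms is $O((\eta/T)^C/\eta)$, and since we assumed $r\geq 2(\eta/T)^{C-1}$ and $f\leq T$, this is dominated by $O(r/(\eta f))$ for $C$ a sufficiently large constant. Combining the two contributions gives a total at most $100r/(\eta f)$, as claimed. The only mildly delicate step is getting the constants in the geometric sum right and verifying that the valid range of $q$ really is $O(1/\eta)$ wide; everything else is a straightforward density argument. No step is a substantive obstacle since $\beta$ is essentially a continuous uniform variable and the level sets $\{\beta(1+\eta)^q\approx f\}$ are a geometric progression in $q$.
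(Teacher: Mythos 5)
The paper does not supply its own proof of this lemma; it simply cites \cite{indyk2005optimal} for it. Your argument is correct and is essentially the standard one used whenever a random boundary shift is analyzed: fix $q$, translate the event $|f-\beta(1+\eta)^q|<r$ into the event that $\beta$ lands in an interval $J_q$ of length $2r/(1+\eta)^q$, use that $\beta'$ has density $2$ on $[1/2,1]$ and that $|\beta-\beta'|\le(\eta/T)^C$ to get $\Pr[\beta\in J_q]\le 4r/(1+\eta)^q+O((\eta/T)^C)$, observe that a contributing $q$ needs $(1+\eta)^q\in[f/2,3f]$ (assuming WLOG $r\le f/2$, the lemma being vacuous otherwise since the right side is $\ge 50/\eta\ge 1$), sum the geometric series anchored at $(1+\eta)^{q_0}\ge f/2$ to get $O(r/(\eta f))$, and absorb the $O(1/\eta)$ discretization error terms using $r\ge 2(\eta/T)^{C-1}$ and $f\le T$.

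One small slip in notation: you wrote $(1+\eta)^q\in[2(f-r),2(f+r)]\cdot[1/2,1]^{-1}$, but the correct containment is $(1+\eta)^q\in(f-r,f+r)\cdot[1/2,1]^{-1}=(f-r,f+r)\cdot[1,2]\subseteq(f-r,2(f+r))\subseteq[f/2,3f]$ under $r\le f/2$; your intermediate interval is off by a factor of two, though the final window $[f/2,3f]$ and the $O(1/\eta)$ count of contributing $q$'s are right. Everything else, including the discretization accounting, checks out with room to spare under the $100$ in the stated bound.
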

By applying above lemma, we show that with high probability, we can use $\hat{f}_i(a)$ to correctly classify $a$ into right interval in $I_{q^*_1},I_{q^*_1+1},I_{q^*_1+2},\cdots,I_{q^*_2}$.
\begin{lemma}
With probability at least $0.99$, $\forall i\in [L]\cup\{0\}$, $\forall a \in H_i$, and $\forall q\in[q^*_1,q^*_2]$, $\hat{f}_i(a)\in I_q$ if and only if $f_a\in I_q$.
\end{lemma}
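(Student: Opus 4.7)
The plan is to reduce the claim to showing that with probability at least $0.99$, every $f_a$ (for $a\in H_i$, $i\in [L]\cup\{0\}$) is at distance more than $\eta' f_a$ from every boundary point $\beta(1+\eta)^q$ with $q\in\{q_1^*,\ldots,q_2^*+1\}$. Indeed, the intervals $\{I_q\}_{q_1^*\leq q\leq q_2^*}$, together with the two outer half-lines $(-\infty,\beta(1+\eta)^{q_1^*}]$ and $(\beta(1+\eta)^{q_2^*+1},\infty)$, partition $\mathbb{R}_{>0}$, and the boundary points of this partition are exactly those listed above. Since $|\hat{f}_i(a)-f_a|\leq \eta' f_a$, the distance condition forces $\hat{f}_i(a)$ to lie in the same cell of the partition as $f_a$, which is exactly the equivalence $\hat{f}_i(a)\in I_q \Leftrightarrow f_a\in I_q$ for every $q\in[q_1^*,q_2^*]$.

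For a fixed $a$, Lemma~\ref{lem:far_away_from_boundary} applied with $f=f_a\in[T]$ and $r=\eta' f_a$ gives
\[
\Pr_{\beta'}\left[\min_{q\in\{q_1^*,\ldots,q_2^*+1\}} |f_a-\beta(1+\eta)^q|<\eta' f_a\right]\leq \frac{100\,\eta' f_a}{\eta\, f_a}=\frac{100\,\eta'}{\eta}.
\]
A union bound over all $i\in [L]\cup\{0\}$ and all $a\in H_i$, combined with the algorithm's specification $\eta'\leq \eta/(10000(L+1)|H_i|)$, then bounds the total failure probability by
\[
\sum_{i=0}^{L}|H_i|\cdot\frac{100\,\eta'}{\eta}\;\leq\;\sum_{i=0}^{L}\frac{1}{100(L+1)}\;=\;\frac{1}{100},
\]
so with probability at least $0.99$ the far-from-boundary condition holds for every relevant $(i,a)$ and the reduction in the first paragraph concludes the proof.

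The only substantive design point is the choice of $\eta'$ in the heavy-hitter specification: scaling it with $1/((L+1)|H_i|)$ is precisely what makes the per-element failure $100\eta'/\eta$ telescope to a constant after multiplying by $|H_i|$ and summing over the $L+1$ levels. The sole technical caveat is the precondition $r\geq 2(\eta/T)^{C-1}$ in Lemma~\ref{lem:far_away_from_boundary}; since $\eta'$ is at most polynomially small in $\log T$, $|H_i|$, and $1/\eta$, and $f_a\geq 1$ for any $a\in H_i$, picking the constant $C$ in the bit-precision of $\beta$ sufficiently large makes $r=\eta' f_a$ satisfy this bound automatically, and no further obstacle arises.
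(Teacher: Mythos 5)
Your proof is correct and takes essentially the same route as the paper's: apply Lemma~\ref{lem:far_away_from_boundary} with $r = \eta' f_a$ to bound the per-element chance of $f_a$ landing within $\eta' f_a$ of any boundary $\beta(1+\eta)^q$ by $100\eta'/\eta \leq 1/(100(L+1)|H_i|)$, then union-bound over $i$ and $a\in H_i$ to get total failure probability at most $1/100$, noting that $|\hat f_i(a)-f_a|\le \eta' f_a$ then forces $\hat f_i(a)$ and $f_a$ into the same interval. Your explicit check of the precondition $r\ge 2(\eta/T)^{C-1}$ is a small bit of extra care the paper leaves implicit, but the argument is otherwise identical.
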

\begin{proof}
According to Lemma~\ref{lem:far_away_from_boundary}, for any $i\in [L]\cup \{0\}$ and any $a\in H_i$, we have:
\begin{align*}
\Pr\left[\min_{q\in \left\{q^*_1,q^*_1+1,\cdots,q^*_2+1\right\}} |f_a - \beta(1+\eta)^q| <  \eta' f_a\right] \leq \frac{1}{100\cdot (L+1)\cdot |H_i|},
\end{align*}
where the inequality follows from $\eta'\leq \frac{\eta}{10000(L+1)|H_i|}$.
By taking a union bound over all $i\in [L]\cup\{0\}$ and all $a\in H_i$, with probability at least $0.99$, the following event happens:
$\forall i\in [L]\cup \{0\},\forall a \in H_i,\forall q\in  \left\{q^*_1,q^*_1+1,\cdots,q^*_2+1\right\}$, \begin{enumerate}
    \item if $f_a\leq \beta(1+\eta)^q$, then $\hat{f}_i(a)\leq f_a+\eta'f_a\leq \beta(1+\eta)^q$,
    \item if $f_a> \beta(1+\eta)^q$, then $\hat{f}_i(a)\geq f_a - \eta'f_a >\beta(1+\eta)^q$.
\end{enumerate}
Therefore, with probability at least $0.99$, $\forall i\in [L]\cup\{0\}$, $\forall a \in H_i$, and $\forall q\in[q^*_1,q^*_2]$, $\hat{f}_i(a)\in I_q$ if and only if $f_a\in I_q$.
\end{proof}

Let $G_i$ denote the set of elements that appears in the stream $\mathcal{S}_i$ and let $G_{i,q}$ denote the set of elements that appear in the stream $\mathcal{S}_i$ and whose frequency is in the interval $I_q$.
Formally, $\forall i \in [L]$, let $G_i = \{a_j \mid g(a_j) = i, j\leq t\},G_0=\{a_j\mid j\leq t\}$, and $\forall q\in \{q^*_1,q^*_1+1,\cdots,q^*_2\},\forall i\in [L]\cup\{0\},$ let $G_{i,q}=\{a\in G_i\mid f_a\in I_q\}$.
For $q\in[q^*_1,q^*_2]$, let $z_q=|G_{0,q}|$, i.e., the number of elements that are in the stream $\mathcal{S}$ and have frequency in the range $I_q$.

\begin{lemma}\label{lem:freq_concentration}
$\forall i \in [L]\cup\{0\},q\in[q^*_1,q^*_2]$, if $i=0$ or $z_q\geq 2^i\cdot 4\lambda/\eta^2, \Pr[|G_{i,q}| \in (1\pm \eta) \cdot z_q/2^i]\geq 1 - 0.01/\left((L+1)\cdot \log(4T)/\eta\right)$.
Otherwise, $\Pr[||G_{i,q}|-z_q/2^i|\leq4\lambda/\eta] \geq 1 - 0.01/\left((L+1)\cdot \log(4T)/\eta\right)$.
\end{lemma}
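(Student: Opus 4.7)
The lemma is a direct analogue of Claim~\ref{cla:good_concentration} from the counting distinct elements analysis, and the plan is to reduce it to Lemma~\ref{lem:concentration} (the Bellare $\lambda$-wise concentration bound). The only new feature compared to Claim~\ref{cla:good_concentration} is that we now restrict attention to the elements whose true frequency lies in a fixed interval $I_q$, and that we need a stronger success probability to later union-bound over both the $L+1$ levels and the $O(\log(T)/\eta)$ intervals.

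\textbf{First, dispose of $i=0$.} By construction $\mathcal{S}_0$ keeps every non-$\perp$ input, so $G_0$ equals the set of all elements that appear in $\mathcal{S}$. Hence $G_{0,q} = \{a\in\mathcal{U} : f_a\in I_q\}$ deterministically, and $|G_{0,q}| = z_q = z_q/2^0$, which vacuously satisfies $|G_{0,q}|\in(1\pm\eta)\cdot z_q/2^0$ with probability $1$.

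\textbf{The main case $i\in[L]$.} Write $|G_{i,q}| = \sum_{a:\,f_a\in I_q} X_a$ where $X_a = \mathbf{1}[g(a)=i]$. Since $g$ is $\lambda$-wise independent with $\Pr[g(a)=i]=2^{-i}$, the $X_a$'s are $\lambda$-wise independent $\{0,1\}$ random variables and $\mu := \mathbb{E}[|G_{i,q}|] = z_q/2^i$. I then split on the two cases of the lemma and apply Lemma~\ref{lem:concentration}.

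If $z_q\geq 2^i\cdot 4\lambda/\eta^2$, i.e.\ $\mu\geq 4\lambda/\eta^2$, I take $A=\eta\mu$. Then $\lambda\mu+\lambda^2\leq 2\lambda\mu$ (because $\mu\geq\lambda$ once $\eta\leq 1/2$), so
\[
\frac{\lambda\mu+\lambda^2}{A^2} \;\leq\; \frac{2\lambda}{\eta^2\mu} \;\leq\; \frac12,
\]
and Lemma~\ref{lem:concentration} yields $\Pr[||G_{i,q}|-\mu|>\eta\mu]\leq 8\cdot 2^{-\lambda/2}$.
Otherwise $\mu<4\lambda/\eta^2$, and I take $A=4\lambda/\eta$; then
\[
\frac{\lambda\mu+\lambda^2}{A^2} \;\leq\; \frac{\eta^2(\mu+\lambda)}{16\lambda} \;\leq\; \frac{\eta^2\cdot 5\lambda/\eta^2}{16\lambda} \;=\; \frac{5}{16} \;<\; \frac12,
\]
again giving tail bound $8\cdot 2^{-\lambda/2}$. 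Finally, plugging in $\lambda = 2\log(1000(L+1)\log(4T)/\eta)$ gives $2^{\lambda/2} \geq 1000(L+1)\log(4T)/\eta$ and therefore $8\cdot 2^{-\lambda/2} \leq 0.01/((L+1)\log(4T)/\eta)$, which is exactly the claimed failure probability.

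\textbf{Expected difficulty.} There is no real obstacle: the argument is essentially the same bookkeeping as Claim~\ref{cla:good_concentration}, with the single extra observation that conditioning on $f_a\in I_q$ does not interfere with the $\lambda$-wise independence of the $X_a$'s, because the event $\{f_a\in I_q\}$ is determined by the input stream and is independent of the hash $g$. The only care needed is to verify both case thresholds cleanly and to check that the chosen $\lambda$ gives the stated probability with room to spare so that the later union bound over $i\in[L]\cup\{0\}$ and $q\in[q_1^*,q_2^*]$ in the subsequent lemmas goes through.
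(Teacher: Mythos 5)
Your proof is correct and follows essentially the same route as the paper: handle $i=0$ deterministically via $|G_{0,q}|=z_q$, write $|G_{i,q}|$ as a sum of $\lambda$-wise independent indicators, and apply Lemma~\ref{lem:concentration} with $A=\eta\mu$ (resp.\ $A=4\lambda/\eta$) in the two cases to get a tail of $8\cdot 2^{-\lambda/2}$, then read off the failure probability from the choice of $\lambda$. The only difference is that you make the intermediate $\leq 1/2$ bound and the final $2^{-\lambda/2}$ evaluation explicit, whereas the paper compresses these into a single line.
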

\begin{proof}
Consider any $i\in [L]\cup \{0\}$ and $q\in [q^*_1,q^*_2]$. If $i=0$, by definition $z_q=|G_{0,q}|$.

Suppose $z_q\geq 2^i\cdot 4\lambda/\eta^2$.
Due to Lemma~\ref{lem:concentration}, we have:
\begin{align*}
&\Pr\left[\left||G_{i,q}| - z_q / 2^i\right| > \eta\cdot z_q / 2^i\right]\\
\leq & 8\cdot \left(\frac{\lambda\cdot z_q/2^i + \lambda^2}{\left(\eta\cdot z_q/2^i\right)^2}\right)^{\lambda/2}\\
\leq & 0.01/\left((L+1)\cdot \log(4T)/\eta\right),
\end{align*}
where the last inequality follows from that $\lambda = 2\cdot \log(1000(L+1)\log(4T)/\eta)$ and $z_q\geq 2^i\cdot 4\lambda / \eta^2$.

Suppose $z_q \leq 2^i \cdot4\lambda/\eta^2$, By applying Lemma~\ref{lem:concentration} again, we have:
\begin{align*}
&\Pr\left[\left||G_{i,q}|- z_q/2^i\right| >4\lambda/\eta\right]\\
\leq & 8\cdot \left(\frac{\lambda\cdot z_q/2^i + \lambda^2}{(4\lambda/\eta)^2}\right)^{\lambda/2}\\
\leq & 0.01/\left((L+1)\cdot \log(4T)/\eta\right),
\end{align*}
where the last inequality follows from $z_q/2^i \leq 4\lambda / \eta^2$ and $\lambda = 2\cdot \log (1000 (L+1)\log(4T)/\eta)$.
\end{proof}

We define events $\mathcal{E}_1$ and $\mathcal{E}_2$ as the following:
     $\mathcal{E}_1$ denotes the event $\forall i\in[L]\cup \{0\},\forall a\in H_i,\forall q \in [q^*_1,q^*_2],\hat{f}_i(a)\in I_q$ if and only if $f_a\in I_q$.
     $\mathcal{E}_2$ denotes the event:
\begin{enumerate}
     \item $\forall i\in[L]\cup \{0\},\forall q\in [q^*_1,q^*_2],$ if $z_q\geq 2^i\cdot 4\lambda/\eta^2$ or $i=0$, $|G_{i,q}|\cdot 2^i\in (1\pm \eta) z_q$.
     \item $\forall i\in[L],\forall q\in [q^*_1,q^*_2],$ if $z_q< 2^i\cdot 4\lambda/\eta^2$, $|G_{i,q}|\in z_q/2^i\pm 4\lambda/\eta$.
\end{enumerate}
According to Lemma~\ref{lem:far_away_from_boundary}, $\mathcal{E}_1$ happens with probability at least $0.99$.
According to Lemma~\ref{lem:freq_concentration}, since $\forall q \in [q^*_1,q^*_2], z_q = |G_{0,q}|$ and $q^*_2-q^*_1+1 \leq \log(4T)/\eta$, by taking a union bound over all $i\in [L]\cup \{0\}$ and all $q\in[q^*_1,q^*_2]$, $\mathcal{E}_2$ happens with probability at least $0.99$.

\begin{lemma}[Upper bound of estimation of high frequency moment]\label{lem:ub_high_freq}
Condition on $\mathcal{E}_1$ and $\mathcal{E}_2$,
\begin{align*}
\sum_{q\in[q^*_1,q^*_2]} \hat{z}_q \cdot (\beta(1+\eta)^q)^p \leq (1+\eta) \cdot \sum_{q\in[q^*_1,q^*_2]} \sum_{a\in\mathcal{U},f_a\in I_q}f_a^p
\end{align*}
\end{lemma}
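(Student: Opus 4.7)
The plan is to bound $\hat{z}_q$ pointwise in $q$ and then compare it to the true contribution $\sum_{a:f_a\in I_q} f_a^p$. For each $q\in[q^*_1,q^*_2]$, let $i_q^\star\in[L]\cup\{0\}$ be the index attaining the maximum in the definition of $\hat{z}_q$. By $\mathcal{E}_1$, for every $i$ and every $a\in H_i$, $\hat{f}_i(a)\in I_q$ iff $f_a\in I_q$; in particular $\{a\in H_{i_q^\star}:\hat{f}_{i_q^\star}(a)\in I_q\}\subseteq G_{i_q^\star,q}$, so $\hat{z}_q\leq |G_{i_q^\star,q}|\cdot 2^{i_q^\star}$.

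Next I would split on $i_q^\star$. If $i_q^\star=0$, then $\mathcal{S}_0$ contains every non-$\perp$ element so $|G_{0,q}|=z_q$ and $\hat{z}_q\leq z_q$. If $i_q^\star\geq 1$, the algorithm's trigger condition gives $|\{a\in H_{i_q^\star}:\hat{f}_{i_q^\star}(a)\in I_q\}|\geq 8\lambda/\eta^2$, and by $\mathcal{E}_1$ this lower bound transfers to $|G_{i_q^\star,q}|$. I would then argue that the ``small'' branch $z_q<2^{i_q^\star}\cdot 4\lambda/\eta^2$ of $\mathcal{E}_2$ is incompatible with this lower bound, since in that branch $\mathcal{E}_2$ would force $|G_{i_q^\star,q}|\leq z_q/2^{i_q^\star}+4\lambda/\eta<4\lambda/\eta^2+4\lambda/\eta<8\lambda/\eta^2$ for $\eta\in(0,0.5)$. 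Hence we must be in the ``large'' branch where $z_q\geq 2^{i_q^\star}\cdot 4\lambda/\eta^2$, and $\mathcal{E}_2$ yields $|G_{i_q^\star,q}|\leq(1+\eta)z_q/2^{i_q^\star}$, i.e.\ $\hat{z}_q\leq(1+\eta)z_q$. In both cases $\hat{z}_q\leq (1+\eta)z_q$.

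To finish, every $a$ counted by $z_q$ satisfies $f_a>\beta(1+\eta)^q$, so $\sum_{a:f_a\in I_q} f_a^p \geq z_q\cdot(\beta(1+\eta)^q)^p$ (for $p=0$ both sides equal $z_q$). Multiplying the pointwise bound by $(\beta(1+\eta)^q)^p$ and summing over $q\in[q^*_1,q^*_2]$ gives the claim.

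The only delicate step is the incompatibility argument in the $i_q^\star\geq 1$ case: it is precisely the gap between the algorithm's cutoff $8\lambda/\eta^2$ and the concentration cutoff $4\lambda/\eta^2$ used in $\mathcal{E}_2$ that lets us rule out the ``small'' branch and thereby obtain a clean $(1+\eta)$-factor on the sampling side. Everything else is straightforward accounting using $\mathcal{E}_1$, $\mathcal{E}_2$, and the definition of $I_q$.
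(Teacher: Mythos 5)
Your proof is correct and follows exactly the paper's argument: use $\mathcal{E}_1$ to get $\{a\in H_i:\hat{f}_i(a)\in I_q\}\subseteq G_{i,q}$, handle $i'=0$ by $|G_{0,q}|=z_q$, and for $i'\geq 1$ use the trigger threshold $8\lambda/\eta^2$ to force the ``large'' branch of $\mathcal{E}_2$ so that $2^{i'}|G_{i',q}|\leq(1+\eta)z_q$, then sum against $(\beta(1+\eta)^q)^p\leq f_a^p$. The only difference is cosmetic: the paper asserts the implication from $|G_{i',q}|\geq 8\lambda/\eta^2$ to the $(1\pm\eta)$-bound in one clause, while you explicitly rule out the ``small'' branch via $z_q/2^{i'}+4\lambda/\eta<8\lambda/\eta^2$ for $\eta\in(0,0.5)$, which is the implicit step the paper glosses over.
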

\begin{proof}
Due to event $\mathcal{E}_1$, we have $\forall i \in [L]\cup \{0\}, \forall q \in [q^*_1,q^*_2], \{a\in H_i \mid \hat{f}_i(a)\in I_q\} \subseteq  G_{i,q}$.
Note that for $q\in[q^*_1,q^*_2],$ $\hat{z}_q$ is either $0$, or $\hat{z}_q = 2^{i'}\cdot |\{a\in H_{i'} \mid \hat{f}_{i'}(a)\in I_q\}|$ for some $i'$ satisfying $i'=0$ or $|\{a\in H_{i'} \mid \hat{f}_{i'}(a)\in I_q\}|\geq 8\lambda/\eta^2$.
Since $\forall q\in[q^*_1,q^*_2],\forall i\in [L]\cup \{0\}$, $|\{a\in H_i \mid \hat{f}_i(a)\in I_q\}|\leq |G_{i,q}|$, 
$\forall q\in [q^*_1,q^*_2],$ if $\hat{z}_q\not = 0$, there is some $i'\in [L]\cup \{0\}$ such that:
\begin{align*}
\hat{z}_q&= 2^{i'}\cdot |\{a\in H_{i'} \mid \hat{f}_{i'}(a)\in I_q\}|\\
&\leq 2^{i'}\cdot |G_{i',q}|\\
&\leq (1+\eta)\cdot z_q,
\end{align*}
where the second inequality follows from that $|G_{i',q}|\geq 8\lambda/\eta^2$ which implies that $|G_{i',q}|\cdot 2^{i'}\in (1\pm \eta)z_q$ according to event $\mathcal{E}_2$.

Therefore,
\begin{align*}
&\sum_{q\in[q^*_1,q^*_2]} \hat{z}_q \cdot (\beta(1+\eta)^q)^p \\
\leq & (1+\eta) \sum_{q\in[q^*_1,q^*_2]}  z_q \cdot (\beta(1+\eta)^q)^p \\
\leq & (1+\eta) \sum_{q\in[q^*_1,q^*_2]}  \sum_{a\in \mathcal{U}:f_a \in I_q} (\beta(1+\eta)^q)^p \\
\leq & (1+\eta) \sum_{q\in[q^*_1,q^*_2]} \sum_{a\in \mathcal{U}:f_a \in I_q} f_a^p, \\
\end{align*}
where the first inequality follows from $\hat{z}_q\leq (1+\eta) z_q$, the second inequality follows from the definition of $z_q$, i.e., $z_q=|\{a\in\mathcal{U}\mid f_a\in I_q\}|$, and the last inequality follows from that $f_a\geq \beta(1+\eta)^q$ if $f_a\in I_q$.
\end{proof}

Define the event $\mathcal{E}_3$ as: $\forall i\in [L]\cap \{0\}$, $\|\mathcal{S}_i\|_p^p\leq 100(L+1)\cdot\|\mathcal{S}\|_p^p/2^i$.
\begin{lemma}
$\mathcal{E}_3$ happens with probability at least $0.99$.
\end{lemma}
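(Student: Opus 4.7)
The plan is to prove the bound by computing the expectation of $\|\mathcal{S}_i\|_p^p$ for each level $i$, applying Markov's inequality, and then taking a union bound over all $L+1$ levels.

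First I would handle the trivial case $i = 0$: by construction $\mathcal{S}_0$ contains every non-empty element of $\mathcal{S}$, so $\|\mathcal{S}_0\|_p^p = \|\mathcal{S}\|_p^p$, which already satisfies the desired bound. So the work is in the levels $i \in [L]$.

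For $i \in [L]$, observe that an element $a \in \mathcal{U}$ contributes all of its $f_a$ copies to $\mathcal{S}_i$ if $g(a) = i$ and contributes nothing otherwise. Therefore
\begin{align*}
\|\mathcal{S}_i\|_p^p = \sum_{a \in \mathcal{U}} f_a^p \cdot \mathbf{1}_{g(a) = i}.
\end{align*}
By linearity of expectation and the marginal $\Pr[g(a) = i] = 2^{-i}$ (which holds for each fixed $a$, since a $\lambda$-wise independent hash family has correct marginals), we obtain $E[\|\mathcal{S}_i\|_p^p] = \|\mathcal{S}\|_p^p / 2^i$. Applying Markov's inequality gives
\begin{align*}
\Pr\!\left[\|\mathcal{S}_i\|_p^p > 100(L+1) \cdot \|\mathcal{S}\|_p^p / 2^i\right] \leq \frac{1}{100(L+1)}.
\end{align*}

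Finally I would take a union bound over the $L$ values $i \in [L]$ (the $i = 0$ case is deterministic, as noted). This yields a failure probability of at most $L/(100(L+1)) < 1/100$, so $\mathcal{E}_3$ holds with probability at least $0.99$, as claimed. There is no real obstacle here: the only subtlety is noticing that only the marginal distribution of $g(a)$ matters for the expectation computation, so $\lambda$-wise independence (for any $\lambda \geq 1$) is enough, and the union bound is tight by design of the constant $100(L+1)$.
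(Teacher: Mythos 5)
Your proof is correct and follows essentially the same route as the paper: compute $E[\|\mathcal{S}_i\|_p^p] = \|\mathcal{S}\|_p^p/2^i$ by linearity, apply Markov's inequality to each level, and union bound over the $L+1$ levels. The only (small) difference is that you single out $i=0$ as a deterministic case rather than folding it into the expectation formula, which is in fact slightly more careful than the paper's phrasing since $g$ never takes the value $0$.
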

\begin{proof}
Consider $i\in [L]\cup\{0\}$.
We have $E\left[\|\mathcal{S}_i\|_p^p\right] = \sum_{a\in \mathcal{U}} \Pr[g(a)=i]\cdot f_a^p=\|\mathcal{S}\|_p^p/2^i$.
By Markov's inequality, with probability at least $1-1/(100(L+1))$, $\|\mathcal{S}_i\|_p^p\leq 100(L+1)\cdot \|\mathcal{S}\|_p^p/2^i$.
By taking a union bound over all $i\in[L]\cup \{0\}$, $\mathcal{E}_3$ happens with probability at least $0.99$.
\end{proof}

In the remaining of the analysis, we condition on $\mathcal{E}_3$ as well.

\begin{lemma}[Lower bound of estimation of contributing high frequency moment]\label{lem:lb_high_freq}
Condition on $\mathcal{E}_1,\mathcal{E}_2,\mathcal{E}_3$.
\begin{align*}
\sum_{q\in[q^*_1,q^*_2]} \hat{z}_q \cdot (\beta(1+\eta)^q)^p \geq (1-\eta)^{p+1} \cdot \sum_{q\in[q^*_1,q^*_2]:I_q\text{ is contributing}} \sum_{a\in\mathcal{U},f_a\in I_q}f_a^p
\end{align*}
\end{lemma}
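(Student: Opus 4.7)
The plan is to show $\hat{z}_q \geq (1-\eta) z_q$ for every contributing $I_q$ and then convert this pointwise bound into the claimed inequality by noting that for any $a$ with $f_a \in I_q$ one has $(\beta(1+\eta)^q)^p \geq f_a^p/(1+\eta)^p \geq (1-\eta)^p f_a^p$, which gives an extra $(1-\eta)^p$ factor after summing. Combined with $\hat{z}_q \geq (1-\eta) z_q$, this delivers the $(1-\eta)^{p+1}$ factor in the statement.

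For each contributing $q$ I select a ``working level'' $i^*(q) \in \{0,1,\ldots,L\}$ as follows: if $z_q < 16\lambda/\eta^2$, set $i^*(q) = 0$; otherwise let $i^*(q)$ be the largest index with $z_q / 2^{i^*(q)} \geq 16\lambda/\eta^2$. Since $z_q \leq |\mathcal{U}| \leq 2^L$, the second branch forces $i^*(q) \leq L-1$ and therefore $z_q/2^{i^*(q)} < 32\lambda/\eta^2$. At this level $\mathcal{E}_2$ gives $|G_{i^*(q),q}| \cdot 2^{i^*(q)} \geq (1-\eta)z_q$ (exactly $z_q$ when $i^*(q)=0$) and moreover $|G_{i^*(q),q}| \geq 8\lambda/\eta^2$ when $i^*(q)\geq 1$, which is exactly the threshold checked by Algorithm~\ref{alg:lp_moment}; for $i^*(q)=0$ the threshold is waived by the algorithm.

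The crux is to verify the containment $G_{i^*(q),q} \subseteq H_{i^*(q)}$. The bound $f_a > \tau$ is immediate since $f_a > \beta(1+\eta)^{q^*_1} > \tau$. For the heavy-hitter condition $f_a^p \geq \|\mathcal{S}_{i^*(q)}\|_p^p / B$, I lower-bound $f_a^p \geq (\beta(1+\eta)^q)^p$, upper-bound $\|\mathcal{S}_{i^*(q)}\|_p^p \leq 100(L+1)\,\|\mathcal{S}\|_p^p / 2^{i^*(q)}$ via $\mathcal{E}_3$, and reduce the desired containment to
\[
2^{i^*(q)} \cdot (\beta(1+\eta)^q)^p \cdot B \;\geq\; 100(L+1)\,\|\mathcal{S}\|_p^p.
\]
Using the contributing property $z_q (\beta(1+\eta)^q)^p \geq \eta \|\mathcal{S}\|_p^p / ((q^*_2-q^*_1+1)(1+\eta)^p)$ and multiplying by the case-dependent bound $2^{i^*(q)}/z_q \geq \eta^2/(32\lambda)$ in Case~A (and $2^{i^*(q)}=1$, $1/z_q > \eta^2/(16\lambda)$ in Case~B), the inequality reduces to a comparison between the explicit constants in $B = (\log(4T)/\eta)\cdot 100(L+1)\cdot (32\lambda/\eta^3)\cdot(1+\eta)^p$ and $q^*_2 - q^*_1 + 1 = O(\log(4T)/\eta)$; the $B$ is calibrated to make the check hold with slack in both cases.

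Once the containment is established, $\mathcal{E}_1$ upgrades it to $\{a \in H_{i^*(q)} : \hat{f}_{i^*(q)}(a) \in I_q\} = G_{i^*(q),q}$, whence the update rule in Algorithm~\ref{alg:lp_moment} forces $\hat{z}_q \geq |G_{i^*(q),q}| \cdot 2^{i^*(q)} \geq (1-\eta) z_q$, completing the reduction. The main obstacle is choosing $i^*(q)$ to balance two opposing requirements: the level must be small enough that $z_q/2^{i^*(q)}$ is large enough for $\mathcal{E}_2$ to give a sharp concentration, yet large enough that after subsampling every element of $G_{i^*(q),q}$ becomes a $1/B$-$\ell_p$ heavy hitter inside $\mathcal{S}_{i^*(q)}$; this is precisely why the parameter $B$ carries both the $\log(4T)/\eta$ level factor and the $1/\eta^2$ concentration factor.
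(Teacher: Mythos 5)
Your proof follows essentially the same route as the paper: pick a working level $i^*(q)$ for each contributing interval so that $\mathcal{E}_2$ gives sharp concentration, establish $G_{i^*(q),q}\subseteq H_{i^*(q)}$ from the contributing property combined with $\mathcal{E}_3$ and the calibration of $B$, invoke $\mathcal{E}_1$ and the update rule to get the pointwise bound $\hat{z}_q\geq(1-\eta)z_q$, and then convert using $f_a\leq\beta(1+\eta)^{q+1}$ and $1/(1+\eta)\geq 1-\eta$. The only cosmetic difference is that you route the $i^*(q)=0$ branch through $\mathcal{E}_3$ as well, where the paper uses $\|\mathcal{S}_0\|_p^p=\|\mathcal{S}\|_p^p$ directly; this is slightly wasteful but harmless given the slack built into $B$.
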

\begin{proof}
Consider any contributing $q\in[q^*_1,q^*_2]$.

\xhdr{Case 1: $z_q\leq 16\lambda/\eta^2$.}
According to Definition~\ref{def:contributing}, we have $\sum_{a\in\mathcal{U}:f_a\in I_q} f_a^p\geq \eta \|\mathcal{S}\|_p^p/(q^*_2-q^*_1+1)$ which implies that $\eta \|\mathcal{S}_0\|_p^p/(q^*_2-q^*_1+1)\leq z_q \cdot (\beta(1+\eta)^{q+1})^p\leq 16\lambda/\eta^2\cdot (1+\eta)^p \cdot (\beta(1+\eta)^q)^p$.
Since $B\geq (q^*_2-q^*_1+1)\cdot 16\lambda\cdot (1+\eta)^p/\eta^3$, $\forall a \in\mathcal{U}$ with $f_a\in I_q$, $f_a^p\geq \|\mathcal{S}_0\|_p^p/B$ .
Therefore, $\forall a \in\mathcal{U}$ with $f_a\in I_q$, we have $a\in H_0$.
According to event $\mathcal{E}_1$, we have $|\{a\in H_0\mid \hat{f}_0(a)\in I_q\}| = |G_{0,q}|=z_q$.
Thus, we have $\hat{z}_q\geq |\{a\in H_0\mid \hat{f}_0(a)\in I_q\}|\geq z_q$.

\xhdr{Case 2: $z_q>16\lambda/\eta^2$.}
Let $i^*\in\{0\}\cup [L]$ be the largest value such that $z_q/2^{i^*}\geq 16\lambda/\eta^2$.
According to event $\mathcal{E}_2$, we have $|G_{i^*,q}|\geq 8\lambda/\eta^2$.

Since $q$ is contributing, we have:
\begin{align*}
&z_q \cdot (\beta(1+\eta)^{q+1})^p\\
\geq & \sum_{a\in\mathcal{U}: f_a\in I_q} f_a^p\\
\geq & \eta\|\mathcal{S}\|_p^p/(q^*_2-q^*_1+1)\\
\geq & 2^{i^*}\cdot \eta\|\mathcal{S}_{i^*}\|_p^p/((q^*_2-q^*_1+1)\cdot 100(L+1)),
\end{align*}
where the last inequality follows from event $\mathcal{E}_3$.
Therefore, we have 
\begin{align*}
&(\beta(1+\eta)^q)^p\\
\geq & \eta\|\mathcal{S}_{i^*}\|_p^p/((q^*_2-q^*_1+1)\cdot 100(L+1)\cdot (z_q/2^{i^*}) \cdot (1+\eta)^p)\\
\geq & \eta\|\mathcal{S}_{i^*}\|_p^p/((q^*_2-q^*_1+1)\cdot 100(L+1)\cdot (32\lambda/\eta^2) \cdot (1+\eta)^p),
\end{align*}
where the last inequality follows from $z_q/2^{i^*}\leq 32\lambda/\eta^2$.
Since $B\geq ((q^*_2-q^*_1+1)\cdot 100(L+1)\cdot (32\lambda/\eta^2) \cdot (1+\eta)^p)/\eta$, we have $\forall a\in\mathcal{U}$ with $a\in G_{i^*,q}$, $f_a^p\geq \|\mathcal{S}_{i^*}\|_p^p/B$ and $f_a>\beta(1+\eta)^q>\tau$ which implies that $a\in H_{i^*}$.
According to event $\mathcal{E}_1$, we have $\{a\in H_{i^*}\mid \hat{f}_{i^*}(a)\in I_q\}=G_{i^*,q}$.
Therefore, $|\{a\in H_{i^*}\mid \hat{f}_{i^*}(a)\in I_q\}|\geq 8\lambda/\eta^2$.
Finally, in addition, according to event $\mathcal{E}_2$, we have $\hat{z}_q\geq |\{a\in H_{i^*}\mid \hat{f}_{i^*}(a)\in I_q\}|\cdot 2^{i^*}\geq (1-\eta)z_q$.

Therefore, in any case, we have $\hat{z}_q\geq (1-\eta)z_q$, and we have:
\begin{align*}
&\sum_{q\in[q^*_1,q^*_2]} \hat{z}_q \cdot (\beta(1+\eta)^q)^p \\
\geq & \sum_{q\in[q^*_1,q^*_2]:I_q\text{ is contributing}} \hat{z}_q \cdot (\beta(1+\eta)^q)^p \\
\geq & \sum_{q\in[q^*_1,q^*_2]:I_q\text{ is contributing}} (1-\eta)z_q \cdot (\beta(1+\eta)^q)^p \\
= & \frac{1-\eta}{(1+\eta)^p}\sum_{q\in[q^*_1,q^*_2]:I_q\text{ is contributing}} \sum_{a\in\mathcal{U}:f_a\in I_q} (\beta(1+\eta)^{q+1})^p \\
\geq & \frac{1-\eta}{(1+\eta)^p}\sum_{q\in[q^*_1,q^*_2]:I_q\text{ is contributing}} 
\sum_{a\in\mathcal{U}:f_a\in I_q} f_a^p \\
\geq & (1-\eta)^{p+1}\sum_{q\in[q^*_1,q^*_2]:I_q\text{ is contributing}} \sum_{a\in\mathcal{U}:f_a\in I_q} f_a^p, \\
\end{align*}
where the equality follows from the definition that $z_q = |\{a\in\mathcal{U}\mid f_a\in I_q\}|$.
\end{proof}

\subsubsection{Analysis of Low Frequency Elements}
In this section, we show that $\sum_{l\in [k]}\hat{s}_l\cdot l^p$ is a good approximation to $\sum_{l\in [k]} \sum_{a\in\mathcal{U}:f_a=l} f_a^p$.

\begin{lemma}[Approximation of low frequency moments]\label{lem:approx_low_freq}
$\sum_{l\in [k]}\hat{s}_l\cdot l^p$ is a $(\alpha,\gamma\cdot (2\tau)^{p+1})$-approximation to $\sum_{l\in [k]} \sum_{a\in\mathcal{U}:f_a=l} f_a^p$.
\end{lemma}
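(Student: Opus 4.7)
The plan is to apply the $(\alpha,\gamma)$-approximation guarantee on $\hat{s}_l$ pointwise, weight by $l^p$, and sum. First I would use the identity
\[
\sum_{a\in\mathcal{U}:f_a=l} f_a^p \;=\; |\{a\in\mathcal{U}\mid f_a=l\}|\cdot l^p
\]
to rewrite the quantity to be approximated. Since $\hat{s}_l$ is an $(\alpha,\gamma)$-approximation to $|\{a\in\mathcal{U}\mid f_a=l\}|$, multiplying the two-sided bound by the nonnegative weight $l^p$ gives
\[
\tfrac{1}{\alpha}\,|\{a\in\mathcal{U}\mid f_a=l\}|\cdot l^p - \gamma\cdot l^p \;\leq\; \hat{s}_l\cdot l^p \;\leq\; \alpha\,|\{a\in\mathcal{U}\mid f_a=l\}|\cdot l^p + \gamma\cdot l^p .
\]
Summing over $l\in[k]$ reproduces the target sum with relative factor $\alpha$ and a total additive slack of $\gamma\sum_{l=1}^{k} l^p$.

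The remaining step is to bound this additive slack by $\gamma\cdot(2\tau)^{p+1}$. Crudely, $\sum_{l=1}^{k} l^p \leq k\cdot k^p = k^{p+1}$, so it suffices to show $k\leq 2\tau$. By definition $k$ is the largest integer with $k\leq \beta(1+\eta)^{q^*_1}$, and $q^*_1$ is the smallest integer with $\beta(1+\eta)^{q^*_1}>\tau$; hence $\beta(1+\eta)^{q^*_1-1}\leq \tau$, which yields $\beta(1+\eta)^{q^*_1}\leq (1+\eta)\tau\leq 2\tau$ using $\eta\in(0,0.5)$. Therefore $k\leq 2\tau$ and $\sum_{l=1}^{k} l^p\leq (2\tau)^{p+1}$, giving the claimed additive error $\gamma(2\tau)^{p+1}$.

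There is essentially no obstacle here: the argument is pure linearity of the per-term approximation combined with one-line upper bounds on $k$ and on $\sum_{l=1}^k l^p$. The only point requiring a moment of care is the inequality $k\leq 2\tau$, which follows directly from the definitions of $q^*_1$ and $k$ together with $\eta<0.5$.
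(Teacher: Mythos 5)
Your proof is correct and follows essentially the same route as the paper's: apply the per-term $(\alpha,\gamma)$-approximation, weight by $l^p$, sum over $l$, and bound the additive slack via $\sum_{l=1}^k l^p \leq k^{p+1} \leq (2\tau)^{p+1}$. The only difference is that you explicitly derive $k\leq 2\tau$ from the definitions of $q^*_1$ and $k$ (using $\beta(1+\eta)^{q^*_1-1}\leq\tau$ and $\eta<0.5$), whereas the paper asserts this inequality without elaboration.
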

\begin{proof}
For $l\in [k]$, let $s_l=|\{a\in\mathcal{U}\mid f_a=l\}|$.
We have:
\begin{align*}
&\sum_{l\in [k]}\hat{s}_l\cdot l^p\\
\geq & \sum_{l\in [k]}(\frac{1}{\alpha}\cdot s_l - \gamma) \cdot l^p \\
\geq  & \frac{1}{\alpha}\cdot \sum_{l\in [k]} \sum_{a\in \mathcal{U}:f_a = l} f_a^p - k\cdot \gamma \cdot k^p \\
\geq & \frac{1}{\alpha}\cdot \sum_{l\in [k]} \sum_{a\in \mathcal{U}:f_a = l} f_a^p - \gamma \cdot (2\tau)^{p+1},
\end{align*}
where the last inequality follows from that our choice of $k$ implies that $k\leq 2\tau$.
Similarly, we have:
\begin{align*}
&\sum_{l\in [k]}\hat{s}_l\cdot l^p\\
\leq & \sum_{l\in [k]}(\alpha\cdot s_l + \gamma) \cdot l^p \\
\leq & \alpha\cdot \sum_{l\in [k]} \sum_{a\in \mathcal{U}:f_a = l} f_a^p - \gamma \cdot (2\tau)^{p+1}.
\end{align*}
\end{proof}

\subsubsection{Putting High Frequency Moments and Low Frequency Moments Together}

\begin{lemma}\label{lem:contributing_cost}
$\sum_{\text{contributing }I\in\mathcal{I}} \sum_{a\in\mathcal{U}:f_a\in I}f_a^p\geq (1-\eta)\cdot\|\mathcal{S}\|_p^p$.
\end{lemma}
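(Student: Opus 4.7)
The plan is to split $\mathcal{I}$ into contributing and non-contributing intervals and bound the total mass of the non-contributing ones directly from their defining threshold. By the observation preceding Definition~\ref{def:contributing}, the $p$-th moment decomposes as
\[
\|\mathcal{S}\|_p^p \;=\; \sum_{I\in\mathcal{I}}\sum_{a\in\mathcal{U}:f_a\in I} f_a^p,
\]
so it suffices to upper bound the sum over the non-contributing intervals by $\eta\|\mathcal{S}\|_p^p$.

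First, I would note that by Definition~\ref{def:contributing} every singleton $\{i\}$ with $i\in[k]$ is automatically contributing. Hence the set of non-contributing intervals is a subset of $\{I_{q_1^*}, I_{q_1^*+1},\dots, I_{q_2^*}\}$, which has cardinality exactly $q_2^*-q_1^*+1$. For each non-contributing interval $I$, the definition guarantees $\sum_{a\in\mathcal{U}:f_a\in I} f_a^p < \eta\|\mathcal{S}\|_p^p/(q_2^*-q_1^*+1)$.

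Summing this bound over the at most $q_2^*-q_1^*+1$ non-contributing intervals yields
\[
\sum_{\text{non-contributing }I\in\mathcal{I}} \sum_{a\in\mathcal{U}:f_a\in I} f_a^p \;<\; (q_2^*-q_1^*+1)\cdot\frac{\eta\|\mathcal{S}\|_p^p}{q_2^*-q_1^*+1} \;=\; \eta\|\mathcal{S}\|_p^p.
\]
Subtracting this from the total decomposition gives the desired lower bound
\[
\sum_{\text{contributing }I\in\mathcal{I}} \sum_{a\in\mathcal{U}:f_a\in I} f_a^p \;\geq\; \|\mathcal{S}\|_p^p - \eta\|\mathcal{S}\|_p^p \;=\; (1-\eta)\|\mathcal{S}\|_p^p.
\]

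There is no real obstacle here; the statement is essentially a bookkeeping consequence of how the ``contributing'' threshold was calibrated to the number of $I_q$ intervals. The only subtlety worth spelling out explicitly is that the singleton intervals $\{1\},\{2\},\dots,\{k\}$ do not eat into the $\eta\|\mathcal{S}\|_p^p$ budget, since they are declared contributing by fiat in Definition~\ref{def:contributing}.
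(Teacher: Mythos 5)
Your proof is correct and follows essentially the same route as the paper: decompose $\|\mathcal{S}\|_p^p$ over $\mathcal{I}$, use that the non-contributing intervals all lie among $I_{q_1^*},\dots,I_{q_2^*}$ and each carries mass below $\eta\|\mathcal{S}\|_p^p/(q_2^*-q_1^*+1)$, and sum. You simply spell out more explicitly why the singleton intervals cannot be non-contributing, which the paper leaves implicit.
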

\begin{proof}
\begin{align*}
&\sum_{\text{contributing }I\in\mathcal{I}} \sum_{a\in\mathcal{U}:f_a\in I}f_a^p\\
= & \sum_{I\in\mathcal{I}} \sum_{a\in\mathcal{U}:f_a\in I}f_a^p - \sum_{\text{non-contributing }I\in\mathcal{I}} \sum_{a\in\mathcal{U}:f_a\in I}f_a^p\\
\geq & \|\mathcal{S}\|_p^p- (q^*_2-q^*_1+1)\cdot \eta \cdot \|\mathcal{S}\|_p^p/(q^*_2-q^*_1+1)\\
\geq & (1-\eta) \cdot \|\mathcal{S}\|_p^p.
\end{align*}
\end{proof}

\begin{lemma}\label{lem:acc_lp_moment}
Consider any timestamp $t\in [T]$.
$\forall a\in \mathcal{U}$, let $f_a$ denote the frequency of $a$ in $a_1,a_2,\cdots,a_t$.
With probability at least $0.9$, the output $\hat{F}_p$ of Algorithm~\ref{alg:lp_moment} is a $\left(\max\left(\frac{\alpha}{1-\eta},(1+2\eta)^{p+2}\right),\gamma\cdot (2\tau)^{p+1}\right)$-approximation to $\|\mathcal{S}\|_p^p$.
\end{lemma}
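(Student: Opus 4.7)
The plan is to derive this as a direct bookkeeping consequence of the three high-frequency lemmas (Lemmas~\ref{lem:ub_high_freq} and~\ref{lem:lb_high_freq}), the low-frequency approximation (Lemma~\ref{lem:approx_low_freq}), and the contributing-mass bound (Lemma~\ref{lem:contributing_cost}), after conditioning on the good events $\mathcal{E}_1,\mathcal{E}_2,\mathcal{E}_3$. Each of these events was already shown to hold with probability at least $0.99$, so a union bound gives probability at least $0.97 \geq 0.9$ that all three hold simultaneously, and we can condition on this throughout the rest of the argument.

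For the upper bound on $\hat F_p$, I would combine the low-frequency upper bound from Lemma~\ref{lem:approx_low_freq} with the high-frequency upper bound from Lemma~\ref{lem:ub_high_freq}, yielding
\[
\hat F_p \;\leq\; \alpha \sum_{l\in[k]}\sum_{a:f_a=l} f_a^p \;+\; (1+\eta)\!\!\sum_{q\in[q_1^*,q_2^*]}\sum_{a:f_a\in I_q} f_a^p \;+\; \gamma\cdot(2\tau)^{p+1}
\;\leq\; \max(\alpha,1+\eta)\cdot \|\mathcal{S}\|_p^p + \gamma\cdot(2\tau)^{p+1}.
\]
Since $\max(\alpha,1+\eta) \leq \max(\alpha/(1-\eta),(1+2\eta)^{p+2})$, this side gives the claimed upper bound.

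For the lower bound, I would invoke the lower bound from Lemma~\ref{lem:approx_low_freq} together with Lemma~\ref{lem:lb_high_freq}. Note that by Definition~\ref{def:contributing} every singleton interval $\{l\}$ for $l\in[k]$ is contributing, so the low-frequency part already accounts for all contributing singletons. This gives
\[
\hat F_p \;\geq\; \tfrac{1}{\alpha}\!\!\sum_{l\in[k]}\sum_{a:f_a=l} f_a^p + (1-\eta)^{p+1}\!\!\!\!\sum_{\substack{q\in[q_1^*,q_2^*]\\ I_q\text{ contributing}}}\sum_{a:f_a\in I_q} f_a^p - \gamma(2\tau)^{p+1}.
\]
Lower-bounding the two leading coefficients by $\min(1/\alpha,(1-\eta)^{p+1})$ and applying Lemma~\ref{lem:contributing_cost} to the sum of all contributing intervals gives $\hat F_p \geq \min(1/\alpha,(1-\eta)^{p+1})\cdot(1-\eta)\|\mathcal{S}\|_p^p - \gamma(2\tau)^{p+1}$. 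The target multiplicative factor $\max(\alpha/(1-\eta),(1+2\eta)^{p+2})$ then comes out correctly once we verify the elementary inequality $1/(1-\eta)^{p+2}\leq (1+2\eta)^{p+2}$, which reduces to $(1+2\eta)(1-\eta)\geq 1$ and holds on $\eta\in(0,0.5)$ since $(1+2\eta)(1-\eta)=1+\eta-2\eta^2\geq 1$.

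The step most likely to trip one up is \emph{not} any single calculation but rather making sure the two sides of the approximation are written with the \emph{same} multiplicative factor so that the $(\cdot,\cdot)$-approximation definition is actually satisfied. In particular, one must verify that the upper-bound factor $\max(\alpha,1+\eta)$ is dominated by the lower-bound factor $\max(\alpha/(1-\eta),1/(1-\eta)^{p+2})$; this follows from $1-\eta<1$ on the $\alpha$ term and from $(1-\eta^2)\leq 1$ on the other. Once those two routine inequalities are checked, combining the two inequalities under the common event $\mathcal{E}_1\cap\mathcal{E}_2\cap\mathcal{E}_3$ yields the claimed $(\max(\alpha/(1-\eta),(1+2\eta)^{p+2}),\gamma(2\tau)^{p+1})$-approximation with probability at least $0.9$.
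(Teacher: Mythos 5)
Your proof is correct and follows essentially the same route as the paper's: combine Lemmas~\ref{lem:ub_high_freq}, \ref{lem:lb_high_freq}, and \ref{lem:approx_low_freq} under the events $\mathcal{E}_1,\mathcal{E}_2,\mathcal{E}_3$, bound the coefficients by $\max(\alpha,1+\eta)$ on the upper side and $\min(1/\alpha,(1-\eta)^{p+1})$ on the lower side, apply Lemma~\ref{lem:contributing_cost}, and clean up with $1/(1-\eta)^{p+2}\leq (1+2\eta)^{p+2}$. The only addition you make beyond the paper is spelling out the two elementary comparisons between $\max(\alpha,1+\eta)$, $\min(1-\eta/\alpha,(1-\eta)^{p+2})$, and the target factor $\max(\alpha/(1-\eta),(1+2\eta)^{p+2})$, which the paper states without justification.
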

\begin{proof}
According to Lemma~\ref{lem:ub_high_freq} and Lemma~\ref{lem:lb_high_freq}, we have:
\begin{align*}
(1-\eta)^{p+1} \sum_{q\in[q^*_1,q^*_2]:I_q\text{ is contributing}} \sum_{a\in\mathcal{U},f_a\in I_q}f_a^p\leq \sum_{q\in[q^*_1,q^*_2]} \hat{z}_q \cdot (\beta(1+\eta)^q)^p \leq (1+\eta)  \sum_{q\in[q^*_1,q^*_2]} \sum_{a\in\mathcal{U},f_a\in I_q}f_a^p.
\end{align*}
According to Lemma~\ref{lem:approx_low_freq}, we have:
\begin{align*}
\frac{1}{\alpha} \cdot \sum_{l\in [k]}\sum_{a\in\mathcal{U}:f_a=l} f_a^p - \gamma\cdot (2\tau)^{p+1}\leq \sum_{l\in [k]}\hat{s}_l\cdot l^p \leq \alpha \cdot \sum_{l\in [k]}\sum_{a\in\mathcal{U}:f_a=l} f_a^p + \gamma\cdot (2\tau)^{p+1}
\end{align*}
Therefore, we have:
\begin{align*}
\hat{F}_p&\geq \min\left(\frac{1}{\alpha},(1-\eta)^{p+1}\right)\sum_{I\in\mathcal{I}:I\text{ is contributing}} \sum_{a\in\mathcal{U}:f_a\in I} f_a^p-\gamma\cdot (2\tau)^{p+1}\\
&\geq \min\left(\frac{1-\eta}{\alpha},(1-\eta)^{p+2}\right)\|\mathcal{S}\|_p^p - \gamma\cdot (2\tau)^{p+1}\\
&\geq \min\left(\frac{1-\eta}{\alpha},\frac{1}{(1+2\eta)^{p+2}}\right)\|\mathcal{S}\|_p^p - \gamma\cdot (2\tau)^{p+1}
\end{align*}
where the second step follows from Lemma~\ref{lem:contributing_cost}.
Similarly, we have:
\begin{align*}
\hat{F}_p & \leq  \max(\alpha,1+\eta)\sum_{I\in \mathcal{I}}\sum_{a\in\mathcal{U}:f_a\in I} f_a^p + \gamma\cdot (2\tau)^{p+1}\\
& = \max(\alpha,1+\eta)\cdot \|\mathcal{S}\|_p^p + \gamma\cdot (2\tau)^{p+1}
\end{align*}
\end{proof}

\begin{theorem}[Streaming continual release $\ell_p$ frequency moment estimation]\label{thm:lp_moment}
Let $p>0,\varepsilon\geq 0,\xi \in (0,0.5),\eta\in(0,0.5)$.
There is an $\varepsilon$-DP algorithm in the streaming continual release model such that with probability at least $1-\xi$, it always outputs an $\left(1+\eta, \left(\frac{\log(T|\mathcal{U}|/\xi)}{\eta\varepsilon}\right)^{O(\max(1,p))}\right)$-approximation to $\|\mathcal{S}\|_p^p$.
The algorithm uses space at most
\begin{align*}
\phi\cdot \left(\frac{\log(T|\mathcal{U}|/\xi)}{\eta\varepsilon}\right)^{O(\max(1,p))},
\end{align*}
where $\phi = \max(1,|\mathcal{U}|^{1-2/p})$.
\end{theorem}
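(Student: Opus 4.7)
The plan is to instantiate Algorithm~\ref{alg:lp_moment} with the heavy hitter subroutine from Theorem~\ref{thm:lp_heavyhitters} and the low-frequency-count subroutine from Theorem~\ref{thm:count_low_freq_large}, tune their parameters so that the additive errors collapse into the shape promised by Lemma~\ref{lem:acc_lp_moment}, account for privacy composition via Lemma~\ref{lem:lp_moment_dp_guarantee}, and boost per-timestamp success probability to a uniform $1-\xi$ guarantee using Lemma~\ref{lem:median_trick}.

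First I would fix the parameters exactly as Algorithm~\ref{alg:lp_moment} prescribes: $L = \lceil \log|\mathcal{U}|\rceil$, $\lambda = \Theta(\log((L+1)\log(T)/\eta))$, and $B = \mathrm{poly}(\log(T|\mathcal{U}|/\eta)) \cdot 2^{O(p)}$. For each of the $L+1$ substreams $\mathcal{S}_0,\ldots,\mathcal{S}_L$ I would invoke Theorem~\ref{thm:lp_heavyhitters} with parameter $k = B$; this forces the reporting threshold of the heavy-hitter routine to be $\tau = \left(\log(T|\mathcal{U}|/\xi)/(\eta\varepsilon)\right)^{O(1)}$, and we take this $\tau$ as the threshold consumed by Algorithm~\ref{alg:lp_moment}. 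The inner error $\eta'$ required by Algorithm~\ref{alg:lp_moment} from the heavy hitters is $\eta/(10000(L+1)|H_i|)$, which is still $1/\mathrm{poly}(\log(T|\mathcal{U}|/\xi)/\eta)$ and thus free to obtain from Theorem~\ref{thm:lp_heavyhitters}. For the low-frequency counter I would invoke Theorem~\ref{thm:count_low_freq_large} with its $k$ set to the integer $k \le 2\tau$ from Algorithm~\ref{alg:lp_moment} and inner error $\eta_{\mathrm{in}} = \eta/((2\tau)^{p+1}\cdot C)$ for a large constant $C$. The key identity $\|\mathcal{S}\|_0 \le \|\mathcal{S}\|_p^p$ lets me convert the $\eta_{\mathrm{in}}\|\mathcal{S}\|_0$ piece of the counter's additive error, after scaling by $l^p\le(2\tau)^p$ inside Lemma~\ref{lem:approx_low_freq}, into an $(\eta/C)\|\mathcal{S}\|_p^p$ relative term; the remaining $\mathrm{polylog}/\varepsilon$ piece, after scaling by $(2\tau)^{p+1}$, yields exactly the claimed $(\log(T|\mathcal{U}|/\xi)/(\eta\varepsilon))^{O(\max(1,p))}$ additive slack.

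Plugging these parameters into Lemma~\ref{lem:acc_lp_moment} gives, at any fixed timestamp, a $\bigl(\max(\alpha/(1-\eta),(1+2\eta)^{p+2}),\gamma\cdot(2\tau)^{p+1}\bigr)$-approximation with probability $0.9$. Rescaling the working $\eta$ by a factor $1/\Theta(p+2)$ absorbs the $(1+2\eta)^{p+2}$ into a single $1+\eta$ factor. To upgrade from per-timestamp probability $0.9$ to a uniform $1-\xi$ across all $t\in[T]$, I would run $R = \lceil 50\log(T/\xi)\rceil$ independent copies and take the median of their outputs (Lemma~\ref{lem:median_trick}), then union-bound over $T$. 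Privacy follows from Lemma~\ref{lem:lp_moment_dp_guarantee}: a single copy is $4\varepsilon'$-DP if every one of its $L+2$ subroutines is $\varepsilon'$-DP, so setting each subroutine's budget to $\varepsilon/\Theta(RL)$ makes the overall algorithm $\varepsilon$-DP by basic composition. These $\log$ losses in the privacy budget are absorbed back into $\tau$ and $\gamma\cdot(2\tau)^{p+1}$ without changing the stated order of the additive error. The space is a sum: each heavy-hitter subroutine costs $\phi B^{3}/\eta^{2}\cdot\mathrm{polylog}$, the low-frequency counter costs $\mathrm{polylog}/\eta_{\mathrm{in}}^{O(1)}$, and $O(LR)$ subroutines run in total, giving $\phi\cdot(\log(T|\mathcal{U}|/\xi)/(\eta\varepsilon))^{O(\max(1,p))}$.

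The hard part I expect is bookkeeping rather than conceptual. Theorem~\ref{thm:count_low_freq_large} charges an additive error proportional to $\|\mathcal{S}\|_0$ per level, and summing $l^p$ over $l\le k\approx\tau$ naively blows this up by $\tau^{p+1}$; shrinking $\eta_{\mathrm{in}}$ by $\tau^{p+1}$ to compensate inflates the counter's space by $\mathrm{poly}(\tau) = (\log(T|\mathcal{U}|)/(\eta\varepsilon))^{O(1)}$, and verifying that this blow-up, together with the $2^{O(p)}$ factor hidden in $B$, the $(1+\eta)^{O(p)}$ factor from contributing intervals, and the $R$ copies of the median trick, still fits inside the envelope $\phi\cdot(\log(T|\mathcal{U}|/\xi)/(\eta\varepsilon))^{O(\max(1,p))}$ is the tedious step. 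Everything else is routine plug-and-play between Lemma~\ref{lem:acc_lp_moment}, Lemma~\ref{lem:lp_moment_dp_guarantee}, and Lemma~\ref{lem:median_trick}.
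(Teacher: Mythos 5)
Your proposal follows essentially the same route as the paper's own proof: instantiate Algorithm~\ref{alg:lp_moment}, invoke Theorem~\ref{thm:lp_heavyhitters} on each substream $\mathcal{S}_i$ with $k=B$, invoke Theorem~\ref{thm:count_low_freq_large} for the low-frequency counts with a deliberately shrunken inner error parameter, apply Lemma~\ref{lem:acc_lp_moment} for accuracy, Lemma~\ref{lem:lp_moment_dp_guarantee} for privacy, and the median trick to go from per-timestamp to uniform guarantees, using $\|\mathcal{S}\|_0\le\|\mathcal{S}\|_p^p$ to convert the $\|\mathcal{S}\|_0$-proportional error into relative error. The argument is correct. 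One small inconsistency: you write that each subroutine needs budget $\varepsilon/\Theta(RL)$ ``by basic composition,'' but Lemma~\ref{lem:lp_moment_dp_guarantee} already shows a single copy of Algorithm~\ref{alg:lp_moment} is $4\varepsilon'$-DP when each of its $L+2$ subroutines is $\varepsilon'$-DP (the disjointness of the substreams means only three of them change under a neighboring input), so the paper only needs $\varepsilon/\Theta(R)$; your allocation is more conservative by a factor of $L=\log|\mathcal{U}|$, which is absorbed into the polylog anyway and does not affect the stated bounds.
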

\begin{proof}
To boost the probability of the approximation guarantee of Lemma~\ref{lem:acc_lp_moment} to $1-\xi/3$ and simultaneously for all timestamps $t\in T$, we run $\lceil 50\log(3T/\xi) \rceil$ independent copies of Algorithm~\ref{alg:lp_moment} and take the median of the outputs.
Since we run $\lceil 50\log(3T/\xi) \rceil$ independent copies of Algorithm~\ref{alg:lp_moment} and according to Lemma~\ref{lem:lp_moment_dp_guarantee}, if we want the final algorithm to be $\varepsilon$-DP, we need each subroutine of the streaming continual release of $(H_i,\hat{f}_i)$ for $i\in [L]\cup \{0\}$ to be $(\varepsilon/(4\cdot \lceil 50\log(3T/\xi) \rceil))$-DP and we need the subroutine of the streaming continual release of $\{\hat{s}_1,\hat{s}_2,\cdots,\hat{s}_k\}$ to be $(\varepsilon/(4\cdot \lceil 50\log(3T/\xi) \rceil))$-DP as well.
To simultaneously make the call of each subroutine of the streaming continual release of $(H_i,\hat{f}_i)$ over all independent copies of Algorithm~\ref{alg:lp_moment} satisfy the desired properties stated in Algorithm~\ref{alg:lp_moment} with probability at least $1-\xi/3$, we need to make $(H_i,\hat{f}_i)$ satisfy the property for each particular $i\in[L]\cup \{0\}$ and a particular copy of Algorithm~\ref{alg:lp_moment} with probability at least $1-\xi/(4\cdot \lceil50\log(3T/\xi)\rceil\cdot (L+1))$.
To simultaneously make the call of each subroutine of the streaming continual release of $\{\hat{s}_1,\hat{s}_2,\cdots,\hat{s}_k\}$ over all independent copies of Algorithm~\ref{alg:lp_moment} satisfy the desired property stated in Algorithm~\ref{alg:lp_moment} with probability at least $1-\xi/3$, we need to make $\{\hat{s}_1,\hat{s}_2,\cdots,\hat{s}_k\}$ satisfy the desired property for each particular $i\in[L]\cup \{0\}$ and a particular copy of Algorithm~\ref{alg:lp_moment} with probability at least $1-\xi/(3\cdot \lceil 50\log(3T/\xi)\rceil)$.
According to Algorithm~\ref{alg:lp_moment}, we have 
\begin{align*}
B=\Theta\left(\frac{\log(T)\log(|\mathcal{U}|)\log(\log(T|\mathcal{U}|)/\eta)\cdot (1+\eta)^p}{\eta^4}\right).
\end{align*}
Thus, according to Theorem~\ref{thm:lp_heavyhitters}, the size of $|H_i|$ in Algorithm~\ref{alg:lp_moment} for $i\in[L]\cup \{0\}$ is at most $\poly\left(\frac{\log(T|\mathcal{U}|/\xi)}{\eta}\right)\cdot 2^{O(p)}$.
Thus, we choose $\eta'=1/\left(\poly\left(\frac{\log(T|\mathcal{U}|/\xi)}{\eta}\right)\cdot 2^{O(p)}\right)$.
Then according to Theorem~\ref{thm:lp_heavyhitters}, we have 
\begin{align*}
\tau = \frac{1}{\varepsilon}\cdot \poly\left(\frac{\log(T\cdot |\mathcal{U}|/\xi)}{\eta}\right)\cdot 2^{O(p)}.
\end{align*}
According to Theorem~\ref{thm:count_low_freq_large}, we have $\alpha = 1$, and we choose $\gamma$ to be
\begin{align*}
\left(\eta''+{\eta''}^2\cdot \frac{\tau}{\varepsilon}\cdot \poly\left(\log\left(\frac{T\tau}{\xi}\right)\right)\right)\cdot \|\mathcal{S}\|_0+\frac{1}{\varepsilon} \cdot \poly\left(\log\left(\frac{T}{\xi}\right)\right)+O\left(\frac{\log \tau}{{\eta''}^2}\right),
\end{align*}
where we choose $\eta''$ to be 
\begin{align*}
\frac{\varepsilon\eta}{\tau^{O(\max(1,p))}\poly\left(\frac{\log(T|\mathcal{U}|/\xi)}{\eta}\right)} =\frac{1}{\left(\frac{\log(T|\mathcal{U}|/\xi)}{\eta\varepsilon}\right)^{O(\max(1,p))}}
\end{align*}
such that 
\begin{align*}
\gamma\cdot (2\tau)^{p+1} &\leq  \eta \|\mathcal{S}\|_0 + \left(\frac{\log(T|\mathcal{U}|/\xi)}{\eta\varepsilon}\right)^{O(\max(1,p))}\\
& \leq \eta\|\mathcal{S}\|_p^p + \left(\frac{\log(T|\mathcal{U}|/\xi)}{\eta\varepsilon}\right)^{O(\max(1,p))}.
\end{align*}
Note that $\eta\|\mathcal{S}\|_p^p$ becomes the relative error.
Thus, according to Lemma~\ref{lem:acc_lp_moment}, the output $\hat{F}_p$ is an $\left((1+\eta)^{O(\max(1,p))}, \left(\frac{\log(T|\mathcal{U}|/\xi)}{\eta\varepsilon}\right)^{O(\max(1,p))}\right)$-approximation.

Next, consider the space usage.
According to Theorem~\ref{thm:lp_heavyhitters}, the total space needed to run all heavy hitters subroutines is at most 
\begin{align*}
\phi\cdot 2^{O(\max(1,p))}\cdot \poly\left(\frac{\log(T|\mathcal{U}|/\xi)}{\eta}\right).
\end{align*}
According to Theorem~\ref{thm:count_low_freq_large}, the total space needed for computing $\{\hat{s}_1,\hat{s}_2,\cdots,\hat{s}_k\}$ for all running copies of Algorithm~\ref{alg:lp_moment} is at most 
\begin{align*}
\left(\frac{\log(T|\mathcal{U}|/\xi)}{\eta\varepsilon}\right)^{O(\max(1,p))}.
\end{align*}
Therefore, the overall space needed is at most
\begin{align*}
\phi\cdot \left(\frac{\log(T|\mathcal{U}|/\xi)}{\eta\varepsilon}\right)^{O(\max(1,p))}.
\end{align*}
\end{proof}

\section{Extension to Sliding Window Continual Release Algorithms}
In this section, we briefly review the smooth histogram~\cite{BravermanO07} technique which converts any (non-private) streaming algorithm into (non-private) sliding window algorithm.
The original framework only supports the approximation algorithm which only has relative error and no additive error.
In this section, we show how to extend it to support the additive error as well.

Suppose there are two streams $\mathcal{A}=(a_1,a_2,\cdots,a_{t_1})$ and $\mathcal{B}=(b_1,b_2,\cdots,b_{t_2})$.
We use $\mathcal{A}\cup\mathcal{B}$ to denote the concatenation of two streams, i.e., $\mathcal{A}\cup \mathcal{B}=(a_1,a_2,\cdots,a_{t_1},b_1,b_2,\cdots,b_{t_2})$.
If $\mathcal{B}$ is a suffix of $\mathcal{A}$, i.e., $\exists i \in [t_1]$ such that $a_i=b_1,a_{i+1}=b_2,\cdots,a_{t_1}=b_{t_2}$, then we denote it as $\mathcal{B}\subseteq_r \mathcal{A}$.

\begin{definition}[Smooth function~\cite{BravermanO07}]\label{def:smooth_function}
Let $g(\cdot)$ be a function over streams.
Function $g(\cdot)$ is $(\zeta,\beta)$-smooth if:
\begin{enumerate}
    \item $\forall \mathcal{A},0\leq g(\mathcal{A})\leq \poly(T)$.
    \item $\forall \mathcal{A},\mathcal{B}$ with $\mathcal{B}\subseteq_r\mathcal{A}$, $g(\mathcal{A})\geq g(\mathcal{B})$
    \item For any $\eta\in (0,1)$, there exists $\zeta(\eta,g)$ and $\beta(\eta,g)$ such that
    \begin{enumerate}
        \item $0<\beta\leq \zeta < 1$.
        \item If $\mathcal{B} \subseteq_r \mathcal{A}$ and $(1 - \beta)g(\mathcal{A}) \leq g(\mathcal{B})$ then $(1 - \zeta)g(\mathcal{A} \cup \mathcal{C}) \leq g(\mathcal{B} \cup \mathcal{C})$ for any stream $\mathcal{C}$.
    \end{enumerate}
\end{enumerate}
\end{definition}

\begin{lemma}[Smoothness of frequency moments~\cite{BravermanO07}]\label{lem:smoothness_moment}
For $p>1$, $\|\mathcal{S}\|_p^p$ is $\left(\eta,\left(\frac{\eta}{p}\right)^p\right)$-smooth.
For $0<p\leq 1$, $\|\mathcal{S}\|_p^p$ is $(\eta,\eta)$-smooth. 
$\|\mathcal{S}\|_0$ is $(\eta,\eta)$-smooth.
\end{lemma}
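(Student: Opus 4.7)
The plan is to verify each of the three conditions of Definition~\ref{def:smooth_function} for $g(\mathcal{S}) = \|\mathcal{S}\|_p^p$, treating the regimes $p \ge 1$, $0 < p \le 1$, and $p=0$ separately. The first two conditions are easy: boundedness follows from $\|\mathcal{S}\|_p^p \le T^{\max(p,1)}$, and for $\mathcal{B} \subseteq_r \mathcal{A}$ we have $f_a(\mathcal{A}) \ge f_a(\mathcal{B})$ for every $a$, which gives monotonicity termwise. So all the work is in property~(3).

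For $p \ge 1$, my plan is to exploit Minkowski's inequality. Write $\mathcal{A} = (\mathcal{A}\setminus\mathcal{B}) \cup \mathcal{B}$, where $\mathcal{A}\setminus\mathcal{B}$ is the prefix of $\mathcal{A}$ preceding the suffix $\mathcal{B}$, so that $f_a(\mathcal{A}) = f_a(\mathcal{A}\setminus\mathcal{B}) + f_a(\mathcal{B})$. The super-additivity $(y+z)^p \ge y^p + z^p$ for $y,z \ge 0, p \ge 1$ gives $\|\mathcal{A}\setminus\mathcal{B}\|_p^p \le \|\mathcal{A}\|_p^p - \|\mathcal{B}\|_p^p \le \beta\|\mathcal{A}\|_p^p$. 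Viewing frequency vectors in $\mathbb{R}^{\mathcal{U}}$ and applying Minkowski to the decomposition $\mathcal{A}\cup\mathcal{C} = (\mathcal{A}\setminus\mathcal{B}) \cup (\mathcal{B}\cup\mathcal{C})$ (the corresponding frequency vectors add) gives
\[
\|\mathcal{A}\cup\mathcal{C}\|_p - \|\mathcal{B}\cup\mathcal{C}\|_p \;\le\; \|\mathcal{A}\setminus\mathcal{B}\|_p \;\le\; \beta^{1/p}\|\mathcal{A}\|_p \;\le\; \beta^{1/p}\|\mathcal{A}\cup\mathcal{C}\|_p.
\]
Setting $\beta = (\eta/p)^p$ and raising to the $p$-th power yields $\|\mathcal{B}\cup\mathcal{C}\|_p^p \ge (1-\eta/p)^p\|\mathcal{A}\cup\mathcal{C}\|_p^p \ge (1-\eta)\|\mathcal{A}\cup\mathcal{C}\|_p^p$ by Bernoulli's inequality, matching the claimed parameters.

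For $0 < p \le 1$, the function $x \mapsto x^p$ is concave with value $0$ at $0$, so the marginal $x \mapsto (x+c)^p - x^p$ is nonincreasing in $x$ for each $c \ge 0$. Applied with $x=f_a(\mathcal{B}) \le f_a(\mathcal{A})$ and $c=f_a(\mathcal{C})$ and summed over $a$, this gives
\[
\|\mathcal{B}\cup\mathcal{C}\|_p^p - \|\mathcal{B}\|_p^p \;\ge\; \|\mathcal{A}\cup\mathcal{C}\|_p^p - \|\mathcal{A}\|_p^p,
\]
so $\|\mathcal{B}\cup\mathcal{C}\|_p^p \ge \|\mathcal{A}\cup\mathcal{C}\|_p^p - (\|\mathcal{A}\|_p^p - \|\mathcal{B}\|_p^p) \ge \|\mathcal{A}\cup\mathcal{C}\|_p^p - \beta\|\mathcal{A}\|_p^p \ge (1-\beta)\|\mathcal{A}\cup\mathcal{C}\|_p^p$, using $\|\mathcal{A}\|_p^p \le \|\mathcal{A}\cup\mathcal{C}\|_p^p$, so $\zeta = \beta = \eta$ works. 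For $p = 0$, a direct counting argument suffices: an element contributes to $\|\mathcal{A}\cup\mathcal{C}\|_0$ but not to $\|\mathcal{B}\cup\mathcal{C}\|_0$ only if it appears in $\mathcal{A}$ but not in $\mathcal{B}$ or $\mathcal{C}$, so $\|\mathcal{A}\cup\mathcal{C}\|_0 - \|\mathcal{B}\cup\mathcal{C}\|_0 \le \|\mathcal{A}\|_0 - \|\mathcal{B}\|_0 \le \beta\|\mathcal{A}\|_0 \le \beta\|\mathcal{A}\cup\mathcal{C}\|_0$, again with $\zeta=\beta=\eta$.

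The main obstacle is the $p \ge 1$ case: the asymmetry between $\beta = (\eta/p)^p$ and $\zeta = \eta$ arises from passing through a $p$-th root after Minkowski, and one has to check that super-additivity of $x^p$ (used on $\|\mathcal{A}\|_p^p$) and the triangle inequality for the $p$-norm (used on $\|\mathcal{A}\cup\mathcal{C}\|_p$) combine in the right directions. The other two regimes are mechanical once the correct monotonicity of the marginal $(x+c)^p - x^p$ is identified.
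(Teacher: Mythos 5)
Your proof is correct. The paper itself imports Lemma~\ref{lem:smoothness_moment} directly from \cite{BravermanO07} without reproducing the argument, so there is no in-paper proof to compare against; your argument is the standard one behind the cited result. The three ingredients all land the right way: for $p\ge 1$, super-additivity of $x\mapsto x^p$ gives $\|\mathcal{A}\setminus\mathcal{B}\|_p^p\le\|\mathcal{A}\|_p^p-\|\mathcal{B}\|_p^p\le\beta\|\mathcal{A}\|_p^p$, Minkowski on $\mathcal{A}\cup\mathcal{C}=(\mathcal{A}\setminus\mathcal{B})\cup(\mathcal{B}\cup\mathcal{C})$ together with $\|\mathcal{A}\|_p\le\|\mathcal{A}\cup\mathcal{C}\|_p$ gives $\|\mathcal{B}\cup\mathcal{C}\|_p\ge(1-\beta^{1/p})\|\mathcal{A}\cup\mathcal{C}\|_p$, and Bernoulli turns $\beta=(\eta/p)^p$ into $\zeta=\eta$ after raising to the $p$-th power (and $0<(\eta/p)^p\le\eta<1$ as required); for $0<p\le 1$ the decreasing marginal $(x+c)^p-x^p$ gives $\zeta=\beta=\eta$; and for $p=0$ the inclusion argument via $B\subseteq A\Rightarrow|A\cup C|-|B\cup C|\le|A|-|B|$ is exactly right.
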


\begin{lemma}\label{lem:smoothness_additive_moment}
Let $0\leq Z\leq \poly(T)$.
If $g(\mathcal{S}):=\|\mathcal{S}\|_p^p+Z$, then $g(\cdot)$ is $\left(\eta,\left(\frac{\eta}{p}\right)^p\right)$-smooth if $p>1$ and $g(\cdot)$ is $(\eta,\eta)$-smooth if $0<p\leq 1$.
If $g(\mathcal{S})=\|\mathcal{S}\|_0+Z$, $g(\cdot)$ is $\left(\eta,\eta\right)$-smooth.
\end{lemma}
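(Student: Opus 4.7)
The plan is to verify the three clauses of Definition~\ref{def:smooth_function} for $g$. Clause 1 is immediate: $\|\mathcal{S}\|_p^p$ and $\|\mathcal{S}\|_0$ are both bounded by $\poly(T)$, and $Z\leq \poly(T)$ by hypothesis, so $g(\mathcal{S})\leq \poly(T)$; nonnegativity is preserved since $Z\geq 0$. Clause 2 inherits directly from Lemma~\ref{lem:smoothness_moment}: the suffix monotonicity $\|\mathcal{A}\|_p^p\geq \|\mathcal{B}\|_p^p$ (resp.\ $\|\mathcal{A}\|_0\geq \|\mathcal{B}\|_0$) is unaffected by adding the constant $Z$ to both sides.

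For Clause 3 in the case $p>0$, the plan is to reduce to Lemma~\ref{lem:smoothness_moment} via an embedding that absorbs $Z$ into the frequency moment. Introduce an auxiliary symbol $*\notin \mathcal{U}$ and define the frequency vector $\vec{z}$ supported only on $*$ with value $Z^{1/p}$, so that $\|\vec{z}\|_p^p=Z$. Writing $\vec{v}_\mathcal{S}$ for the frequency vector of a stream $\mathcal{S}$ over $\mathcal{U}$, one has $\|\vec{v}_\mathcal{S}+\vec{z}\|_p^p=\|\vec{v}_\mathcal{S}\|_p^p+Z=g(\mathcal{S})$ because the supports are disjoint. The smoothness hypothesis $(1-\beta)g(\mathcal{A})\leq g(\mathcal{B})$ then translates into $(1-\beta)\|\vec{v}_\mathcal{A}+\vec{z}\|_p^p\leq \|\vec{v}_\mathcal{B}+\vec{z}\|_p^p$, and $\mathcal{B}\subseteq_r\mathcal{A}$ gives $\vec{v}_\mathcal{B}\leq \vec{v}_\mathcal{A}$ componentwise, hence $\vec{v}_\mathcal{B}+\vec{z}\leq \vec{v}_\mathcal{A}+\vec{z}$. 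Since the smoothness of $\|\cdot\|_p^p$ in Lemma~\ref{lem:smoothness_moment} depends only on the frequency vector (not on the temporal order of the stream), the lemma applies verbatim to the augmented vectors with added $\vec{v}_\mathcal{C}$ and yields the matching inequality for $g(\mathcal{A}\cup\mathcal{C})$ versus $g(\mathcal{B}\cup\mathcal{C})$, with the same parameters: $(\eta,(\eta/p)^p)$-smoothness for $p>1$ and $(\eta,\eta)$-smoothness for $0<p\leq 1$.

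For $g(\mathcal{S})=\|\mathcal{S}\|_0+Z$ the embedding above is awkward for non-integer $Z$, so the plan is to verify Clause 3 directly. The key observation is that for $\mathcal{B}\subseteq_r\mathcal{A}$ and any stream $\mathcal{C}$,
\[
\|\mathcal{A}\cup\mathcal{C}\|_0-\|\mathcal{B}\cup\mathcal{C}\|_0 \;\leq\; \|\mathcal{A}\|_0-\|\mathcal{B}\|_0,
\]
since any element accounting for the left-hand side difference must be distinct in $\mathcal{A}\setminus\mathcal{B}$, and some such elements may already appear in $\mathcal{C}$ and thus fail to contribute on the left. Combined with the hypothesis $(1-\eta)g(\mathcal{A})\leq g(\mathcal{B})$, which rearranges to $\|\mathcal{A}\|_0-\|\mathcal{B}\|_0\leq \eta(\|\mathcal{A}\|_0+Z)$, and with the monotonicity $\|\mathcal{A}\|_0\leq \|\mathcal{A}\cup\mathcal{C}\|_0$, this gives $\|\mathcal{A}\cup\mathcal{C}\|_0-\|\mathcal{B}\cup\mathcal{C}\|_0\leq \eta(\|\mathcal{A}\cup\mathcal{C}\|_0+Z)$, i.e., $(1-\eta)g(\mathcal{A}\cup\mathcal{C})\leq g(\mathcal{B}\cup\mathcal{C})$.

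The main obstacle is conceptual rather than computational: one has to recognize that Lemma~\ref{lem:smoothness_moment} really operates at the level of frequency vectors, so that appending a disjoint virtual coordinate is transparent to its proof and lets the additive $Z$ be reabsorbed exactly into the $p$-th moment. With that point in hand, the $p>0$ case costs no new inequality, and the $p=0$ case is dispatched by the short direct argument above, which sidesteps the non-integrality of $Z$.
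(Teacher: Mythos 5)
Your proof is correct, and for $p>0$ it takes essentially the same route as the paper: both absorb the additive $Z$ into the $p$-th moment by adjoining a ``fresh'' coordinate disjoint from $\mathcal A,\mathcal B,\mathcal C$ and then invoke Lemma~\ref{lem:smoothness_moment} on the augmented object. The paper realizes the fresh coordinate as an actual auxiliary stream $\mathcal X$ with $\|\mathcal X\|_p^p=Z$, whereas you work at the frequency-vector level with a single real-valued coordinate $Z^{1/p}$; this is a touch more robust, since a stream with integer multiplicities may not hit $Z$ exactly for a general real $Z$, but it requires the (true, and implicit in the paper) observation that Lemma~\ref{lem:smoothness_moment} is really a statement about nonnegative vectors, not about temporal orderings.

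Where you genuinely diverge from the paper is in the $p=0$ case. The paper repeats the embedding and asserts the existence of $\mathcal X$ with $\|\mathcal X\|_0=Z$, but $\|\cdot\|_0$ of a stream is always an integer, so that construction fails outright when $Z\notin\mathbb Z$; this is a small gap in the paper's own proof. Your direct verification of Clause 3, via the inequality
\(
\|\mathcal A\cup\mathcal C\|_0-\|\mathcal B\cup\mathcal C\|_0 \le \|\mathcal A\|_0-\|\mathcal B\|_0
\)
together with monotonicity $\|\mathcal A\|_0\le\|\mathcal A\cup\mathcal C\|_0$, is elementary, avoids the achievability issue entirely, and recovers $(\eta,\eta)$-smoothness with $\zeta=\beta=\eta$ as claimed. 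This is a cleaner argument for the $p=0$ case than the one in the paper, and arguably what the paper should have written.
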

\begin{proof}
For $g(\cdot)$ stated in the lemma statement, the first two requirements in Definition~\ref{def:smooth_function} are satisfied obviously. 
Therefore, we only need to justify the third requirement of Definition~\ref{def:smooth_function}.

Let us consider $\mathcal{A}$ and $\mathcal{B}$ such that $\mathcal{B}\subseteq_r \mathcal{A}$ and $(1-\beta)g(\mathcal{A})\leq g(\mathcal{B})$.
Consider any stream $\mathcal{C}$.
We are going to prove $(1-\zeta)g(\mathcal{A}\cup \mathcal{C})\leq g(\mathcal{B}\cup \mathcal{C})$.
Consider the case $g(\mathcal{S})=\|\mathcal{S}\|_p^p+Z$.
We construct an auxiliary stream $\mathcal{X}$ such that the elements of $\mathcal{X}$ do not appear in any of $\mathcal{A},\mathcal{B},\mathcal{C}$ and $\|\mathcal{X}\|_p^p=Z$.
Then, we have
$(1-\beta)\|\mathcal{A}\cup\mathcal{X}\|_p^p = (1-\beta)g(\mathcal{A})\leq g(\mathcal{B})=\|\mathcal{B}\cup\mathcal{X}\|_p^p$.
Due to the smoothness of $\|\mathcal{S}\|_p^p$ (Lemma~\ref{lem:smoothness_moment}), we have $(1-\zeta)g(\mathcal{A}\cup\mathcal{C})=(1-\zeta)\|\mathcal{A}\cup \mathcal{X}\cup\mathcal{C}\|_p^p\leq \|\mathcal{B}\cup\mathcal{X}\cup \mathcal{C}\|_p^p=g(\mathcal{B}\cup\mathcal{C})$.
According to Lemma~\ref{lem:smoothness_moment},  $g(\cdot)$ is $\left(\eta,\left(\frac{\eta}{p}\right)^p\right)$-smooth if $p>1$ and $g(\cdot)$ is $(\eta,\eta)$-smooth if $0<p\leq 1$.

Similarly, using the similar argument by constructing $\|\mathcal{X}\|_0=Z$, we can show that $g(\mathcal{S})=\|\mathcal{S}\|_0+Z$ is $\left(\eta,\eta\right)$-smooth.

\end{proof}

\begin{lemma}\label{lem:additive_to_relative_approx}
Let $\alpha\geq 1,\gamma\geq 0$.
If $g'$ is an $(\alpha,\gamma)$-approximation to $g$, then $g'+Z$ is an $\alpha$-approximation to $g+Z$ if $Z\geq \frac{\alpha}{\alpha-1}\cdot \gamma$.
\end{lemma}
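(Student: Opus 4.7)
The plan is to verify the two inequalities defining an $\alpha$-approximation directly from the definition of an $(\alpha,\gamma)$-approximation, and then read off exactly which lower bound on $Z$ is needed for each. Recall that by the preliminaries, $g'$ being an $(\alpha,\gamma)$-approximation to $g$ means
\begin{align*}
\tfrac{1}{\alpha}\, g - \gamma \;\leq\; g' \;\leq\; \alpha\, g + \gamma,
\end{align*}
and what we want to show is $\tfrac{1}{\alpha}(g+Z) \leq g' + Z \leq \alpha(g+Z)$.

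For the upper bound, I would add $Z$ to both sides of $g' \leq \alpha g + \gamma$ to obtain $g' + Z \leq \alpha g + \gamma + Z$, and then note that this is $\leq \alpha(g+Z)$ iff $\gamma \leq (\alpha-1)Z$. Since $\alpha \geq 1$ gives $\frac{1}{\alpha-1} \leq \frac{\alpha}{\alpha-1}$, the hypothesis $Z \geq \frac{\alpha}{\alpha-1}\gamma$ is more than enough. (If $\alpha=1$, the claim degenerates to $g' = g \pm \gamma$ with $Z = \infty$, so one should interpret the convention $\frac{\alpha}{\alpha-1}=\infty$ in that edge case; I would add a one-line remark to that effect.)

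For the lower bound, add $Z$ to both sides of $\tfrac{1}{\alpha} g - \gamma \leq g'$ to get $\tfrac{1}{\alpha} g - \gamma + Z \leq g' + Z$, so it suffices to show $\tfrac{1}{\alpha} g - \gamma + Z \geq \tfrac{1}{\alpha}(g+Z)$, which rearranges to $\bigl(1 - \tfrac{1}{\alpha}\bigr) Z \geq \gamma$, i.e., $Z \geq \tfrac{\alpha}{\alpha-1}\gamma$. This is exactly the hypothesis, and in fact this is where the threshold $\tfrac{\alpha}{\alpha-1}\gamma$ comes from — the upper bound needs only the weaker bound $\tfrac{1}{\alpha-1}\gamma$.

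There is no real obstacle here; the lemma is a short algebraic manipulation whose only content is pinning down the right constant. The value of the lemma is in how it will be combined with Lemma~\ref{lem:smoothness_additive_moment}: by shifting an $(\alpha,\gamma)$-approximate streaming estimator by a sufficiently large constant $Z = O(\gamma/(\alpha-1))$, one converts it into a purely multiplicative $\alpha$-approximation of the shifted smooth function $\|\mathcal{S}\|_p^p + Z$, which is exactly the form the smooth histogram framework of~\cite{BravermanO07} consumes.
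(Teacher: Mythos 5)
Your proof is correct and follows essentially the same algebraic route as the paper: add $Z$ to both sides of each defining inequality of the $(\alpha,\gamma)$-approximation and check that $Z \geq \frac{\alpha}{\alpha-1}\gamma$ (resp.\ its weaker consequence $\gamma \leq (\alpha-1)Z$) absorbs the additive slack into the multiplicative factor. Your remark that the tight constraint comes from the lower bound while the upper bound only needs $Z \geq \gamma/(\alpha-1)$, and your note on the degenerate case $\alpha=1$, are accurate observations not spelled out in the paper but do not change the argument.
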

\begin{proof}
We have:
\begin{align*}
g'+Z&\geq \frac{1}{\alpha}\cdot g - \gamma + Z \\
&\geq \frac{1}{\alpha}\cdot g -\frac{\alpha-1}{\alpha}\cdot Z + Z\\
&\geq \frac{1}{\alpha}\cdot (g+ Z).
\end{align*}
On the other hand,
\begin{align*}
g'+Z&\leq \alpha\cdot g + \gamma + Z\\
& \leq \alpha \cdot g + (\alpha-1) Z + Z\\
& \leq \alpha\cdot (g+Z).
\end{align*}
\end{proof}

\begin{theorem}[Smooth histogram algorithmic framework~\cite{BravermanO07}]\label{thm:smooth_histogram_original}
Let $\eta\in(0,0.5)$.
Let $g(\cdot)$ be an $(\zeta,\beta)$-smooth function.
If there exists a streaming algorithm $\Lambda$ which maintains an $(\frac{1}{1-\eta})$-approximation of $g(\cdot)$  simultaneously for all timestamps $t\in [T]$ with probability at least $1-\xi$, using space $h(\eta,\xi)$, then there is a sliding window algorithm $\Lambda'$ that maintains a $\left(\frac{1}{1-\eta-\zeta}\right)$-approximation of $g(\cdot)$ over sliding windows simultaneously for all timestamps $t\in[T]$ with probability at least $1-\xi$ and uses space $O\left(\frac{\log T}{\beta}\cdot h(\eta,\xi\beta/\log(T))\right)$.

Furthermore, at any timestamp $t\in[T]$, $\Lambda'$ starts a new instance of $\Lambda$ which regards the $t$-th element in the stream as the beginning of the stream, and $\Lambda'$ only keeps at most $O(\log(T)/\beta)$ past instances of $\Lambda$ (started from different timestamps).
The output of $\Lambda'$ at any timestamp $t$ only depends on the outputs of its maintained instances $\Lambda$, and the decision of whether keeping an instance $\Lambda$ to timestamp $t+1$ only depends on the outputs of its maintained instances $\Lambda$ at timestamp $t$ as well.
\end{theorem}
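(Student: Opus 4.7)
The plan is to implement the classical smooth histogram construction. I would maintain a list of running instances of $\Lambda$ indexed by their start times $t_1 < t_2 < \cdots < t_s = t$, where instance $i$ processes the suffix $(a_{t_i}, a_{t_i+1}, \ldots, a_t)$ and outputs a number $\hat{v}_i$ that is a $1/(1-\eta)$-approximation to $v_i := g(a_{t_i},\ldots,a_t)$. On each new timestamp $t$, I would feed $a_t$ to every surviving instance, start a fresh instance with $t_{s+1}=t$, and then repeatedly scan the list: as long as there is a consecutive triple $(i, i+1, i+2)$ with $\hat{v}_{i+2} \geq (1-\beta')\hat{v}_i$ for a suitably chosen threshold $\beta' \approx \beta$ (slightly smaller than $\beta$ so as to absorb the $1/(1-\eta)$ slack between $\hat v$ and $v$), delete the middle instance $i+1$. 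To answer a query at time $t$, find the smallest index $j$ with $t_j \geq t-W+1$ and return $\hat v_j$.

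The space bound follows from the pruning invariant: after pruning, every consecutive triple of surviving instances satisfies $\hat v_{i+2} < (1-\beta')\hat v_i$, and since $0 \leq v_i \leq \mathrm{poly}(T)$, the number of surviving instances at any time is $O(\log T / \beta)$. Multiplying by the per-instance space $h(\eta, \xi')$ with $\xi'$ small enough to handle a union bound over all started instances (each must remain accurate throughout its lifetime) yields the claimed total space; a careful bookkeeping of which instances must succeed simultaneously gives the stated $h(\eta, \xi\beta/\log T)$ factor.

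The main obstacle will be the error analysis. Let $g^\star = g(a_{t-W+1},\ldots,a_t)$. Monotonicity immediately gives $v_j \leq g^\star$ because the stream of instance $j$ is a suffix of the window. For the reverse direction I split into two cases. If $t_j = t-W+1$, then $v_j = g^\star$ exactly. Otherwise $t_{j-1} \leq t-W$ and $t_j > t-W+1$, which forces $t_j \geq t_{j-1}+2$; because $\Lambda'$ starts a new instance at every timestamp, each of $t_{j-1}+1, \ldots, t_j-1$ was created and later pruned. Consider the last such pruning: at its timestamp $\tau$, the rule must have fired on the triple whose outer endpoints were exactly $t_{j-1}$ and $t_j$ (since by choice of ``last'' no other surviving instance lay between them at that moment), giving $(1-\beta)\, g(a_{t_{j-1}},\ldots,a_\tau) \leq g(a_{t_j},\ldots,a_\tau)$ after absorbing estimation error via the choice of $\beta'$. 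Applying the $(\zeta,\beta)$-smoothness condition of Definition~\ref{def:smooth_function} with $\mathcal{A} = (a_{t_{j-1}},\ldots,a_\tau)$, $\mathcal{B} = (a_{t_j},\ldots,a_\tau)$, and $\mathcal{C} = (a_{\tau+1},\ldots,a_t)$ yields $(1-\zeta)\, g(a_{t_{j-1}},\ldots,a_t) \leq g(a_{t_j},\ldots,a_t) = v_j$, and monotonicity gives $g(a_{t_{j-1}},\ldots,a_t) \geq g^\star$, so $v_j \geq (1-\zeta)g^\star$. Combining with $\hat v_j$ being a $1/(1-\eta)$-approximation to $v_j$ and the identity $(1-\eta)(1-\zeta) \geq 1-\eta-\zeta$ gives the claimed $1/(1-\eta-\zeta)$-approximation.

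The remaining structural properties are automatic from the construction: a new instance is launched at every timestamp, pruning leaves at most $O(\log T/\beta)$ instances alive, and the output at $t$ together with every pruning decision read only the current outputs $\hat v_i$ of the surviving instances.
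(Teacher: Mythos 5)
The paper invokes Theorem~\ref{thm:smooth_histogram_original} as a citation to \cite{BravermanO07} and does not reproduce a proof, so there is no ``paper's proof'' to compare against; your proposal is a reconstruction of the standard Braverman--Ostrovsky smooth-histogram construction, and its outline is correct: one instance of $\Lambda$ per start time, delete the middle of a triple whose endpoints are multiplicatively close, answer with the first surviving instance inside the window, then transfer the ``close at pruning time'' relation forward using $(\zeta,\beta)$-smoothness and sandwich via monotonicity. The ``furthermore'' structural claims (one fresh instance per timestamp, $O(\log T/\beta)$ survivors, decisions read only from current instance outputs) all follow from your description, which is exactly what the paper needs for Theorem~\ref{thm:continual_release_smooth_histogram}.

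Two places are lighter than they should be. First, the step ``the rule must have fired on the triple whose outer endpoints were exactly $t_{j-1}$ and $t_j$'' needs care when several prunings occur at the same timestamp: one must argue that the last such pruning time $\tau$ satisfies $\tau\ge t_j$ (otherwise the instance started at $\tau$ itself lies in $(t_{j-1},t_j)$, survives $\tau$, and would be pruned later, contradicting maximality of $\tau$), and then that by the time the last middle is removed its neighbors are precisely $t_{j-1}$ and $t_j$. Once this is pinned down, the smoothness step with $\mathcal{A}=(a_{t_{j-1}},\dots,a_\tau)$, $\mathcal{B}=(a_{t_j},\dots,a_\tau)$, $\mathcal{C}=(a_{\tau+1},\dots,a_t)$ is exactly right. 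Second, ``$\beta'$ slightly smaller than $\beta$'' hides a genuine constraint: since the blackbox estimates satisfy $\hat v\in[(1-\eta)v,\,v/(1-\eta)]$, the condition $\hat v_{i+2}\ge(1-\beta')\hat v_i$ only yields $v_{i+2}\ge(1-\eta)^2(1-\beta')\,v_i$, so invoking $(\zeta,\beta)$-smoothness requires $(1-\eta)^2(1-\beta')\ge 1-\beta$, i.e.\ $\eta$ must be substantially smaller than $\beta$. This is a real hypothesis (implicit in \cite{BravermanO07}) and especially relevant here because for $p>1$ we have $\beta=(\eta/p)^p\ll\eta$, so one cannot run the blackbox at accuracy $\eta$ and prune at threshold $\approx\beta$; the blackbox accuracy has to be taken below $\beta$. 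Your write-up should state this requirement rather than absorb it silently into $\beta'$. Relatedly, the per-instance failure probability in the space bound ($\xi\beta/\log T$ rather than $\xi/T$) deserves the ``careful bookkeeping'' you defer: only the $O(\log T/\beta)$ simultaneously-live instances' estimates enter any pruning or output decision, which is what saves the $T$ vs.\ $\log T/\beta$ factor in the union bound.
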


We are able to extend the above smooth histogram framework to the differentially private continual release setting.

\begin{theorem}[Smooth histogram for differentially private continual release model]\label{thm:continual_release_smooth_histogram}
Let $g(\cdot)$ be an $(\zeta,\beta)$-smooth function.
If there exists a $\varepsilon'$-DP streaming continual release algorithm $\Lambda$ which maintains an $\left(\frac{1}{1-\eta}\right)$-approximation of $g(\cdot)$ simultaneously for all timestamps $t\in [T]$ with probability at least $1-\xi$, using space $h(\eta,\xi)$, then there is a $\varepsilon$-DP sliding window continual release algorithm $\Lambda'$ with $\varepsilon = O(\varepsilon'\beta/\log(T))$ which maintains a $\left(\frac{1}{1-\eta-\zeta}\right)$-approximation of $g(\cdot)$ over sliding windows for all timestamps $t\in [T]$ with probability at least $1-\xi$ and uses space $O\left(\frac{\log T}{\beta}\cdot h(\eta,\xi\beta/\log(T))\right)$.
\end{theorem}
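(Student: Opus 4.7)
The plan is to apply the non-private smooth histogram framework of Theorem~\ref{thm:smooth_histogram_original} verbatim, but with each internal streaming estimator replaced by an independently-randomized copy of the $\varepsilon'$-DP continual release algorithm $\Lambda$; the only new content is verifying that privacy degrades by only an $O(\log(T)/\beta)$ factor under this substitution.

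First I would instantiate $\Lambda'$: it maintains at most $O(\log(T)/\beta)$ active instances of $\Lambda$ at any time; at every timestamp $\tau \in [T]$ a fresh instance starting at $\tau$ is spawned, every active instance ingests $a_\tau$, the instances report their continual-release outputs, and the smooth-histogram pruning rule of Theorem~\ref{thm:smooth_histogram_original} deletes instances based on these reported values. The approximation guarantee is inherited directly from Theorem~\ref{thm:smooth_histogram_original}; to control the failure probability, each of the at most $O(T\log(T)/\beta)$ instance–timestamp evaluations must succeed, so each instance is instantiated with failure probability $\xi\beta/\log(T)$, which costs $h(\eta,\xi\beta/\log(T))$ space per instance and gives total space $O(\tfrac{\log T}{\beta}\cdot h(\eta,\xi\beta/\log(T)))$.

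The core of the proof is the privacy analysis. Fix neighboring streams $\mathcal{S}, \mathcal{S}'$ differing only at position $t$. For any instance started at time $s > t$, the two input substreams coincide, so its output distribution is identical in the two executions. For instances started at $s \le t$ the input substreams differ in exactly one position, so by $\varepsilon'$-DP of $\Lambda$ in the continual-release model, the full output history of each such instance satisfies $\varepsilon'$-DP. A crucial further observation is that at every timestamp $\tau < t$ the set of alive instances is a deterministic function of quantities identical in the two executions; hence the set $A_t$ of instances alive just before time $t$ is identical in $\mathcal{S}$ and $\mathcal{S}'$, and any instance pruned strictly before time $t$ is pruned at the same timestamp in both executions and contributes identically to the output. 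Consequently, only the $|A_t| + 1 = O(\log(T)/\beta)$ instances in $A_t$ together with the one spawned at $t$ can ever see the differing element. Applying the composition theorem (Theorem~\ref{thm:composition}) across these $O(\log(T)/\beta)$ affected instances yields that the joint history of instance outputs is $O(\varepsilon' \log(T)/\beta)$-DP, and since $\Lambda'$'s entire released sequence (together with its internal pruning schedule) is a deterministic post-processing of those outputs, it is $O(\varepsilon' \log(T)/\beta)$-DP as well; taking $\varepsilon' = \Theta(\varepsilon \beta /\log(T))$ yields the stated $\varepsilon$-DP guarantee.

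The main obstacle I anticipate is making rigorous the claim that the pruning schedule before time $t$ is deterministically identical across the two executions, so that the privacy loss can be charged only to $O(\log(T)/\beta)$ instances rather than to all instances started at $s \leq t$. For this I would rely on the structural property promised at the end of Theorem~\ref{thm:smooth_histogram_original} that at each timestamp both the output and the surviving subset of instances are deterministic functions of the reported values of currently live instances on past inputs; coupling the randomness of each instance identically in the two executions, both runs then behave identically through time $t-1$, pinning down the affected set to $A_t \cup \{t\}$ and completing the composition argument above.
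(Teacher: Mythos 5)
Your proposal is correct and takes essentially the same route as the paper: both observe that only the $O(\log T/\beta)$ instances of $\Lambda$ alive at the position $r$ of the changed element actually ingest that element, and then charge the privacy loss only to these instances. The paper makes this rigorous by fixing an output transcript $\{o_{i,j}\}$, letting $I = \{i : o_{i,r} \neq \perp\}$, and factoring the joint probability over independent instances: instances outside $I$ contribute a probability ratio of exactly $1$ (and their conditional law given the $I$-part matches), while each instance in $I$ contributes at most $e^{\varepsilon'}$. Your coupling argument plays the same role as this per-transcript factorization — it is what legitimately pins down the alive set $A_t$ to a common value in both runs — but note that a bare invocation of Theorem~\ref{thm:composition} does not quite suffice on its own, since that theorem assumes a fixed, data-independent sensitivity, whereas the set of affected instances here is determined by the (private, random) pruning history; you would want to make the coupling/factorization step explicit rather than routing through the composition theorem as a black box.
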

\begin{proof}
The approximation guarantee, space guarantee and success probability follows from Theorem~\ref{thm:smooth_histogram_original} directly.
In the remaining of the proof, we prove the DP guarantee.

Let $\Lambda_1,\Lambda_2,\cdots,\Lambda_T$ be instances of $\Lambda$ where $\Lambda_t$ is started at timestamp $t$.
Let $o_{t,1},o_{t,2},\cdots,o_{t,T}$ be the outputs of $\Lambda_t$, if at timestamp $j$, $\Lambda_t$ is not started yet or is already kicked out by $\Lambda'$, then $o_{t,j}=\perp$.
According to Theorem~\ref{thm:smooth_histogram_original}, the outputs of $\Lambda'$ over all timestamps $t\in [T]$ is determined by $\{o_{i,j}\mid i,j\in[T]\}$.
Thus, we only need to show that $\{o_{i,j}\mid i,j\in[T]\}$ is $\varepsilon$-DP.
Consider two neighboring streams $\mathcal{S}$ and $\mathcal{S}'$ where only the $r$-th elements are different.
Let us fix a possible configuration $\{o_{i,j}\mid i,j\in[T]\}$.
According to Theorem~\ref{thm:smooth_histogram_original}, there are at most $O(\log(T)/\beta)$ different $i\in[T]$ such that $o_{i,r}\not=\perp$.
Let such set of $i$ to be $I$.
Since each $\Lambda_i$ for $i\in I$ is $\varepsilon'$-DP in the streaming continual releasing setting,
we have:
\begin{align*}
\Pr\left[\forall i\in I,j\in [T], o_{i,j}(\mathcal{S})=o_{i,j}\right]&\leq \exp(\varepsilon'\cdot |I|)\cdot \Pr\left[\forall i\in I,j\in [T], o_{i,j}(\mathcal{S}')=o_{i,j}\right]\\
&\leq \exp(\varepsilon)\cdot \Pr\left[\forall i\in I,j\in [T], o_{i,j}(\mathcal{S}')=o_{i,j}\right].
\end{align*}
On the other hand, we have:
\begin{align*}
&\Pr\left[\forall i\not\in I,j\in [T],o_{i,j}(\mathcal{S})=o_{i,j}\mid o_{i,j}(\mathcal{S})=o_{i,j}\forall i\in I,j\in[T] \right]\\
=& \Pr\left[\forall i\not\in I,j\in [T],o_{i,j}(\mathcal{S}')=o_{i,j}\mid o_{i,j}(\mathcal{S}')=o_{i,j}\forall i\in I,j\in[T] \right]
\end{align*}
Therefore, the algorithm is $\varepsilon$-DP.
\end{proof}

By combining Theorem~\ref{thm:summing_non_negative} with Lemma~\ref{lem:smoothness_additive_moment}, Lemma~\ref{lem:additive_to_relative_approx} and Theorem~\ref{thm:continual_release_smooth_histogram}, we are able to obtain the following sliding window continual release algorithm for summing non-negative numbers.
\begin{corollary}[Sliding window summing of a non-negative numbers]\label{cor:sliding_window_summing}
Let $\eta\in(0,0.5),\varepsilon\geq 0,\xi\in(0,0.5)$, there is an $\varepsilon$-DP algorithm for summing in the sliding window continual release model.
If the input numbers are guaranteed to be non-negative, with probability at least $1-\xi$, the output is always a $\left(1+\eta, O\left(\frac{\log(T/(\eta\xi))\log(T)}{\varepsilon\eta^3}\right)\right)$-approximation to the summing problem at any timestamp $t\in [T]$.
The algorithm uses space $O(\log(T)/\eta)$.
\end{corollary}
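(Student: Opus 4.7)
The proof plan is to compose three tools already developed in the paper: the streaming DP summing algorithm of Theorem~\ref{thm:summing_non_negative}, the additive-to-multiplicative conversion of Lemma~\ref{lem:additive_to_relative_approx}, and the DP smooth histogram framework of Theorem~\ref{thm:continual_release_smooth_histogram}. The difficulty is that Theorem~\ref{thm:continual_release_smooth_histogram} requires a purely multiplicative approximation to a smooth function, while the streaming algorithm delivers a multiplicative factor together with an additive slack; the shift-by-constant trick made available by Lemma~\ref{lem:smoothness_additive_moment} and Lemma~\ref{lem:additive_to_relative_approx} is precisely the bridge we need.

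First, I would verify that on streams of non-negative numbers the function $g(\mathcal{S}) := \mathrm{sum}(\mathcal{S}) + Z$ is $(\eta, \eta)$-smooth in the sense of Definition~\ref{def:smooth_function} for any constant $Z$ with $Z \leq \mathrm{poly}(T)$. This is the direct analogue of Lemma~\ref{lem:smoothness_additive_moment}: monotonicity and boundedness are clear, and if $\mathcal{B} \subseteq_r \mathcal{A}$ satisfies $(1-\eta)(\mathrm{sum}(\mathcal{A}) + Z) \leq \mathrm{sum}(\mathcal{B}) + Z$, then for any suffix $\mathcal{C}$ the fact that $(1-\eta)\,\mathrm{sum}(\mathcal{C}) \leq \mathrm{sum}(\mathcal{C})$ combined with additivity of summation gives $(1-\eta)(\mathrm{sum}(\mathcal{A} \cup \mathcal{C}) + Z) \leq \mathrm{sum}(\mathcal{B} \cup \mathcal{C}) + Z$.

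Next, I would instantiate Theorem~\ref{thm:summing_non_negative} with parameters $\eta' = \Theta(\eta)$, $\varepsilon' = \Theta(\varepsilon \eta / \log T)$, and $\xi' = \Theta(\xi \eta / \log T)$, obtaining an $\varepsilon'$-DP streaming continual release algorithm whose output, with probability $1-\xi'$, is always a $(1+\eta', \gamma')$-approximation to the running sum with $\gamma' = O(\log(T/\xi')/(\varepsilon' \eta'))$. Fixing the deterministic constant $Z := \lceil (1+\eta')\gamma'/\eta' \rceil$ and replacing each reported value $\hat{s}$ by $\hat{s} + Z$, Lemma~\ref{lem:additive_to_relative_approx} converts the output into a purely multiplicative $(1+\eta')$-approximation to $g(\mathcal{S}) = \mathrm{sum}(\mathcal{S}) + Z$. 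Since $Z$ is a public constant, the post-processed algorithm remains $\varepsilon'$-DP in the continual release model.

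Finally, I would feed this shifted streaming algorithm, together with the smooth function $g$, into Theorem~\ref{thm:continual_release_smooth_histogram}. With smoothness parameters $\zeta = \beta = \Theta(\eta)$ the composed approximation factor becomes $1/(1-\eta'-\zeta) = 1 + O(\eta)$, and the framework retains only $O(\log T / \beta) = O(\log T / \eta)$ streaming instances, each using $O(1)$ space by Theorem~\ref{thm:summing_non_negative}. The resulting sliding window algorithm is $\varepsilon$-DP and outputs $\hat g$ with $\hat g = (1 \pm O(\eta))\,g(\text{window})$ with probability $1-\xi$; reporting $\hat g - Z$ as the final sliding-window estimate introduces at most $O(\eta Z)$ additive slack on top of the $(1+O(\eta))$ multiplicative factor for the windowed sum. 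Substituting the chosen $\varepsilon', \eta', \xi'$ into $\gamma'$ and $Z$ recovers the bound stated in the corollary. The main obstacle is the careful parameter bookkeeping across the three ingredients: tightening the per-instance privacy budget by a $\Theta(\log T / \eta)$ factor for composition across the retained smooth-histogram instances, and shrinking the per-instance failure probability by the same factor for the union bound.
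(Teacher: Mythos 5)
Your proposal is correct and follows exactly the route the paper indicates: shift by a public constant $Z$ so that Lemma~\ref{lem:additive_to_relative_approx} turns the $(1+\eta',\gamma')$-approximation of Theorem~\ref{thm:summing_non_negative} into a purely multiplicative one, check $(\eta,\eta)$-smoothness of $\mathrm{sum}(\cdot)+Z$ as in Lemma~\ref{lem:smoothness_additive_moment}, and invoke Theorem~\ref{thm:continual_release_smooth_histogram} with the $\Theta(\log T/\eta)$ rescaling of $\varepsilon'$ and $\xi'$. The parameter bookkeeping you carry out is the substance of the (uninlined) paper proof and lands on the stated $O\!\left(\frac{\log(T/(\eta\xi))\log T}{\varepsilon\eta^{3}}\right)$ additive error and $O(\log T/\eta)$ space.
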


By combining Corollary~\ref{cor:distinct_large_universe_better_additive} with Lemma~\ref{lem:smoothness_additive_moment}, Lemma~\ref{lem:additive_to_relative_approx} and Theorem~\ref{thm:continual_release_smooth_histogram}, we are able to obtain the following sliding window continual release algorithm for number of distinct elements.

\begin{corollary}[Sliding window continual release distinct elements]\label{cor:sliding_window_distinct_elements}
For $\eta\in(0,0.5),\varepsilon\geq 0,\xi\in (0,0.5)$ there is an $\varepsilon$-DP algorithm for the number of distinct elements of streams with element universe $\mathcal{U}$ in the sliding window continual release model.
With probability at least $1-\xi$, the output is always a $(1+\eta,O\left(\frac{\log^2(T/(\eta\xi))\log(T)}{\eta^4\varepsilon}\right))$-approximation for every timestamp $t\in [T]$. 
The algorithm uses $\poly\left(\frac{\log(T/\xi)}{\eta\min(\varepsilon,1)}\right)$ space.
\end{corollary}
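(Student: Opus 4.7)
The plan is to combine the streaming continual release distinct elements algorithm of Corollary~\ref{cor:distinct_large_universe_better_additive} with the DP smooth histogram framework of Theorem~\ref{thm:continual_release_smooth_histogram}. Since the smooth histogram framework requires a purely relative approximation while our streaming algorithm has both relative and additive error, I will first use the shift trick from Lemma~\ref{lem:additive_to_relative_approx} to convert the additive error into relative error by adding a suitable constant $Z$ to both the true value and its estimate.

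More concretely, first I would run the streaming algorithm of Corollary~\ref{cor:distinct_large_universe_better_additive} with privacy parameter $\varepsilon_0 = \Theta(\varepsilon \eta / \log T)$ and failure probability $\xi_0 = \Theta(\xi \eta / \log T)$, obtaining a $(1+\eta, \gamma_0)$-approximation to $\|\mathcal{S}\|_0$ where $\gamma_0 = O(\log^2(T/\xi_0)/(\eta^2 \varepsilon_0)) = O(\log^3(T/(\eta\xi))/(\eta^3 \varepsilon))$. Setting $Z = \lceil(1+\eta)/\eta \cdot \gamma_0\rceil$, Lemma~\ref{lem:additive_to_relative_approx} tells me that the shifted estimator $\hat{N} + Z$ is a pure $(1+\eta)$-approximation to the shifted function $g(\mathcal{S}) := \|\mathcal{S}\|_0 + Z$. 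By Lemma~\ref{lem:smoothness_additive_moment}, this shifted function is $(\eta,\eta)$-smooth (note that $Z = \poly(T)$ as required), so the preconditions of Theorem~\ref{thm:continual_release_smooth_histogram} are met with $(\zeta,\beta) = (\eta,\eta)$.

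Next I would invoke Theorem~\ref{thm:continual_release_smooth_histogram} on this shifted streaming algorithm: the blowup in privacy budget is $O(\log T/\beta) = O(\log T/\eta)$, which matches my choice of $\varepsilon_0$, so the resulting sliding window algorithm is $\varepsilon$-DP; the space blowup is also $O(\log T/\eta) \cdot h(\eta, \xi_0)$, which remains $\poly(\log(T/\xi)/(\eta \min(\varepsilon,1)))$. The sliding window algorithm maintains a $(1-\eta-\zeta)^{-1} = (1-2\eta)^{-1}$-approximation to $g$ over sliding windows with failure probability at most $\xi$. Finally, subtracting $Z$ from its output produces an estimate $\hat{D}$ of the true windowed count $\|\mathcal{S}_W\|_0$; by the reverse direction of Lemma~\ref{lem:additive_to_relative_approx} (or direct calculation, since $g = \|\mathcal{S}_W\|_0 + Z$ and a $(1+O(\eta))$ relative error on $g$ becomes at worst a $(1+O(\eta), Z)$ hybrid error on $\|\mathcal{S}_W\|_0$), I recover a $(1+O(\eta), O(Z))$-approximation to $\|\mathcal{S}_W\|_0$, which after absorbing constants into $\eta$ matches the stated bound $(1+\eta, O(\log^2(T/(\eta\xi))\log(T)/(\eta^4 \varepsilon)))$.

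The main conceptual obstacle is ensuring that the shift trick composes properly with smooth histogram: the smooth histogram algorithm internally compares approximate values from different sub-stream instances to decide which to discard, and I must verify that all instances use a consistent $Z$ so that these comparisons remain valid and do not corrupt either the smoothness argument of Theorem~\ref{thm:smooth_histogram_original} or the privacy accounting of Theorem~\ref{thm:continual_release_smooth_histogram}. Once that is observed — which is easy since $Z$ depends only on the public parameters $T, \eta, \xi, \varepsilon$ and not on the data — the rest is bookkeeping: tracking how $\varepsilon_0$ and $\xi_0$ propagate through the $\log T/\eta$ factor, and confirming that both the additive error and the space stay within the claimed bounds.
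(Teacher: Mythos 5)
Your proposal is correct and follows exactly the recipe the paper itself uses: shift $\|\mathcal{S}\|_0$ by a data-independent constant $Z$ chosen via Lemma~\ref{lem:additive_to_relative_approx} so the streaming estimator becomes a pure multiplicative approximation, invoke the $(\eta,\eta)$-smoothness of the shifted function from Lemma~\ref{lem:smoothness_additive_moment}, and run it through the DP smooth histogram of Theorem~\ref{thm:continual_release_smooth_histogram} with the privacy and failure-probability budgets scaled by $\Theta(\eta/\log T)$. Your parameter accounting ($\gamma_0 = O(\log^2(T/(\eta\xi))\log T/(\eta^3\varepsilon))$, $Z = O(\gamma_0/\eta)$, hence additive error $O(\log^2(T/(\eta\xi))\log T/(\eta^4\varepsilon))$) reproduces the stated bound, and your observation that $Z$ must be the same public constant across all smooth-histogram instances is the correct point to flag.
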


By combining Theorem~\ref{thm:count_sketch} with Lemma~\ref{lem:smoothness_additive_moment}, Lemma~\ref{lem:additive_to_relative_approx} and Theorem~\ref{thm:continual_release_smooth_histogram}, we are able to obtain the following sliding window continual release algorithm for $\ell_2$ frequency moment.

\begin{corollary}[Sliding window continual release $\ell_2$ frequency moments]\label{cor:sliding_window_l2}
Let $\varepsilon>0,\eta\in(0,0.5),\xi\in(0,0.5)$.
There is an $\varepsilon$-DP algorithm in the sliding window continual release model such that with probability at least $1-\xi$, it always outputs $\hat{F}_2$ for every timestamp $t\in [T]$
     such that $|\hat{F}_2-\|\mathcal{S}\|_2^2|\leq \eta \|\mathcal{S}\|_2^2+O\left(\frac{(\log(T/(\xi\eta))+\log(|\mathcal{U}|))^2\log^2(T)}{\varepsilon^2\eta^8}\cdot \log^{5}(T)\cdot \log^2\left(\frac{\log(T/\xi)+\log(|\mathcal{U}|)}{\xi\eta}\right)\right)$, where $\mathcal{S}$ denotes the sub-stream corresponding to the latest $W$ elements at timestamp $t$.
The algorithm uses $O\left(\frac{\log(T/(\xi\eta))+\log(|\mathcal{U}|)}{\eta^4}\cdot \log^2(T)\right)$ space.
\end{corollary}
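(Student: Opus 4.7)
The plan is to plug the streaming continual release $\ell_2$ moment estimator from Theorem~\ref{thm:count_sketch} into the smooth-histogram framework of Theorem~\ref{thm:continual_release_smooth_histogram}, using the shifting trick (Lemma~\ref{lem:additive_to_relative_approx}) to bridge the gap between the additive-plus-relative error guarantee of the streaming algorithm and the purely multiplicative guarantee needed by smooth histograms.

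First, I would instantiate Theorem~\ref{thm:count_sketch} with a relative error parameter $\eta' = \Theta(\eta)$ and a privacy parameter $\varepsilon' = \Theta(\varepsilon\eta^2/\log T)$, obtaining a streaming continual release algorithm which is $\varepsilon'$-DP and returns $\hat{F}_2$ satisfying $|\hat{F}_2 - \|\mathcal{S}\|_2^2| \leq \eta' \|\mathcal{S}\|_2^2 + \gamma$ at every timestamp with probability $1-\xi$, where $\gamma$ denotes the additive error from Theorem~\ref{thm:count_sketch} with the above parameter choices. Let $Z := \tfrac{1+\eta'}{\eta'}\cdot\gamma = O(\gamma/\eta)$ and define the shifted function $g(\mathcal{S}) := \|\mathcal{S}\|_2^2 + Z$. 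By Lemma~\ref{lem:additive_to_relative_approx} the estimator $\hat{F}_2 + Z$ is then a pure $(1+\eta')$-approximation of $g(\mathcal{S})$ for every timestamp. By Lemma~\ref{lem:smoothness_additive_moment} applied with $p=2$, this shifted $g$ is $(\eta', (\eta'/2)^2)$-smooth, so $\beta = \Theta(\eta^2)$ and $\zeta = \Theta(\eta)$.

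Next, I would feed this shifted $\varepsilon'$-DP streaming continual release algorithm into Theorem~\ref{thm:continual_release_smooth_histogram} with the smoothness parameters above. This yields a sliding window continual release algorithm for $g(\mathcal{S})$ which is $\varepsilon$-DP (since $\varepsilon' \cdot \log(T)/\beta = \Theta(\varepsilon)$ by our choice of $\varepsilon'$), uses space $O\!\left(\tfrac{\log T}{\beta} \cdot h(\eta,\xi\beta/\log T)\right) = O\!\left(\tfrac{(\log(T/(\xi\eta))+\log|\mathcal{U}|)\log^2 T}{\eta^4}\right)$, and produces a $\tfrac{1}{1-\eta'-\zeta}$-approximation, i.e.\ a $(1+O(\eta))$-multiplicative approximation, of $g(\mathcal{S})$ over the sliding window. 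Finally, subtracting $Z$ from this sliding window estimator produces the output $\hat{F}_2$ for $\|\mathcal{S}\|_2^2$; a direct computation shows the multiplicative-on-$g$ guarantee translates into an $(1+O(\eta), O(Z\cdot\eta)) = (1+O(\eta), O(\gamma))$-approximation of $\|\mathcal{S}\|_2^2$, and substituting $\varepsilon'$ into the bound on $\gamma$ from Theorem~\ref{thm:count_sketch} matches (up to constants and rescaling of $\eta$) the additive error stated in the corollary.

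The main obstacle is purely one of parameter bookkeeping. The only substantive issue is that smooth histograms require a pure multiplicative approximation, yet $\|\mathcal{S}\|_2^2$ over a sliding window can be arbitrarily small—in particular smaller than the additive error $\gamma$ of the underlying streaming estimator—so a naive application fails; the shift by $Z$ is precisely what removes this difficulty, at the cost of a $Z\cdot(\eta+\zeta)$ additive blow-up upon subtraction, which is exactly what drives the final additive error. After this, the $\log T/\beta = O(\log T/\eta^2)$ slack that smooth histograms charge to both privacy and space accounts for the extra $\log T$ and $1/\eta^2$ factors separating this corollary's bounds from those of Theorem~\ref{thm:count_sketch}.
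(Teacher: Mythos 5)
Your proposal is correct and follows exactly the route the paper takes: instantiate the streaming $\ell_2$ estimator of Theorem~\ref{thm:count_sketch} with rescaled relative-error and privacy parameters, shift by $Z = \Theta(\gamma/\eta)$ to turn the $(1+\eta',\gamma)$-approximation into a pure multiplicative one via Lemma~\ref{lem:additive_to_relative_approx}, invoke the smoothness of the shifted function (Lemma~\ref{lem:smoothness_additive_moment}, $p=2$, giving $\zeta=\Theta(\eta)$, $\beta=\Theta(\eta^2)$), feed this into the DP smooth-histogram framework (Theorem~\ref{thm:continual_release_smooth_histogram}), and subtract $Z$ at the end. Your parameter bookkeeping actually comes out a factor of $\eta$ tighter than the corollary's stated $\eta^{-8}$, but since the corollary is an upper bound this is consistent; the paper's (one-line) proof is precisely the combination of the four results you identified.
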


By combining Theorem~\ref{thm:lp_moment} with Lemma~\ref{lem:smoothness_additive_moment}, Lemma~\ref{lem:additive_to_relative_approx} and Theorem~\ref{thm:continual_release_smooth_histogram}, we are able to obtain the following sliding window continual release algorithm for $\ell_p$ frequency moment.

\begin{corollary}[Sliding window continual release $\ell_p$ frequency moments]\label{cor:sliding_window_lp_moment}
Let $p>0,\varepsilon\geq 0,\xi \in (0,0.5),\eta\in(0,0.5)$.
There is an $\varepsilon$-DP algorithm in the sliding window continual release model such that with probability at least $1-\xi$, it always outputs an $\left(1+\eta, \left(\frac{\log(T|\mathcal{U}|/\xi)}{\eta\varepsilon}\right)^{O(p)}\right)$-approximation to $\|\mathcal{S}\|_p^p$, where $\mathcal{S}$ denotes the sub-stream corresponding to the latest $W$ elements at timestamp $t$.
The algorithm uses space at most
\begin{align*}
\phi\cdot \left(\frac{\log(T|\mathcal{U}|/\xi)}{\eta\varepsilon}\right)^{O(p)},
\end{align*}
where $\phi = \max(1,|\mathcal{U}|^{1-2/p})$.
\end{corollary}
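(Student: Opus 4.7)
The plan is to invoke the DP streaming continual release $\ell_p$ frequency moment algorithm from Theorem~\ref{thm:lp_moment} as the base algorithm and feed it into the smooth histogram framework of Theorem~\ref{thm:continual_release_smooth_histogram}. The only subtlety is that the smooth histogram framework is stated for functions with purely relative error, whereas Theorem~\ref{thm:lp_moment} gives a $(1+\eta, \gamma)$-approximation with additive term $\gamma = (\log(T|\mathcal{U}|/\xi)/(\eta\varepsilon))^{O(\max(1,p))}$. This will be handled exactly as in the derivations of Corollary~\ref{cor:sliding_window_summing}, Corollary~\ref{cor:sliding_window_distinct_elements}, and Corollary~\ref{cor:sliding_window_l2}, via Lemma~\ref{lem:smoothness_additive_moment} and Lemma~\ref{lem:additive_to_relative_approx}.

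First, I would set $Z = \Theta(\gamma/\eta)$, so that an $(1+\eta,\gamma)$-approximation to $\|\mathcal{S}\|_p^p$ becomes, after adding $Z$, a $(1+O(\eta))$-relative approximation to $g(\mathcal{S}) := \|\mathcal{S}\|_p^p + Z$ by Lemma~\ref{lem:additive_to_relative_approx}. By Lemma~\ref{lem:smoothness_additive_moment}, the shifted function $g(\cdot)$ is $(\eta,\beta)$-smooth with $\beta = (\eta/p)^p$ when $p > 1$ and $\beta = \eta$ when $0 < p \leq 1$. Simulating the Theorem~\ref{thm:lp_moment} algorithm and adding $Z$ to its output therefore yields an $\varepsilon'$-DP streaming continual release algorithm maintaining a $(1/(1-O(\eta)))$-approximation to $g(\cdot)$ using space $\phi \cdot (\log(T|\mathcal{U}|/\xi)/(\eta\varepsilon'))^{O(\max(1,p))}$.

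Next, I would apply Theorem~\ref{thm:continual_release_smooth_histogram} with this base algorithm, the smooth function $g$, and the smoothness parameters above. This produces a sliding window continual release algorithm that is $\varepsilon$-DP for $\varepsilon = \Theta(\varepsilon' \beta / \log T)$, i.e.\ we need to run the base algorithm with privacy parameter $\varepsilon' = \Theta(\varepsilon \log(T)/\beta) = \Theta(\varepsilon \log(T) \cdot (p/\eta)^p)$; the failure probability $\xi$ is adjusted by a factor $\beta/\log T$ which only costs poly-logarithmic factors inside the $\log(T|\mathcal{U}|/\xi)$ term. The resulting algorithm maintains a $(1/(1-O(\eta)))$-approximation, i.e.\ a $(1+O(\eta))$-approximation, to $g(\mathcal{S}_W) = \|\mathcal{S}_W\|_p^p + Z$ over the last $W$ items, with probability at least $1-\xi$, and uses space $O(\log(T)/\beta)$ times the base algorithm's space. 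After subtracting $Z$, the $(1+O(\eta))$ relative error on $g$ translates into $(1+O(\eta))$ relative error on $\|\mathcal{S}_W\|_p^p$ plus an additive error of $O(\eta Z) = O(\gamma)$. Rescaling $\eta$ by a constant factor then gives the claimed $(1+\eta, (\log(T|\mathcal{U}|/\xi)/(\eta\varepsilon))^{O(p)})$ approximation.

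The main bookkeeping step (rather than a technical obstacle) is folding the blow-ups into the final bound. The privacy rescaling replaces $\varepsilon$ by $\varepsilon \cdot (\eta/p)^p/\log T$ in the base algorithm, which inflates the additive error from Theorem~\ref{thm:lp_moment} by a factor $(\log(T)(p/\eta)^p)^{O(\max(1,p))}$; since Theorem~\ref{thm:lp_moment} already has the form $(\log(T|\mathcal{U}|/\xi)/(\eta\varepsilon))^{O(\max(1,p))}$, this extra factor is absorbed into a new $O(p)$ exponent, yielding the stated $(\log(T|\mathcal{U}|/\xi)/(\eta\varepsilon))^{O(p)}$. Similarly, the $O(\log(T)/\beta) = (\log T)\cdot (p/\eta)^p$ multiplicative overhead in space combined with the base space of $\phi \cdot (\log(T|\mathcal{U}|/\xi)/(\eta\varepsilon'))^{O(\max(1,p))}$ collapses into $\phi \cdot (\log(T|\mathcal{U}|/\xi)/(\eta\varepsilon))^{O(p)}$. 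This matches the conclusion of the corollary.
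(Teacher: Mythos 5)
Your proposal takes exactly the route the paper intends: invoke Theorem~\ref{thm:lp_moment} as the base streaming algorithm, absorb its additive error into a shift $Z=\Theta(\gamma/\eta)$ so that Lemma~\ref{lem:additive_to_relative_approx} converts it to a purely relative approximation of $g(\mathcal{S})=\|\mathcal{S}\|_p^p+Z$, apply Lemma~\ref{lem:smoothness_additive_moment} to certify $g$ is $(\eta,(\eta/p)^p)$-smooth (resp.\ $(\eta,\eta)$-smooth for $p\le 1$), run it through Theorem~\ref{thm:continual_release_smooth_histogram}, and subtract $Z$ at the end. That is precisely the combination the paper's one-line proof invokes, and your accounting of the $\log T/\beta$ overhead in both space and additive error is the right bookkeeping.

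One internal slip worth flagging: in the middle you write that to reach final privacy $\varepsilon$ you must run the base with $\varepsilon'=\Theta(\varepsilon\log(T)/\beta)$ (i.e.\ a \emph{larger} $\varepsilon'$), but your subsequent error analysis uses the opposite and correct rescaling $\varepsilon'=\varepsilon\cdot(\eta/p)^p/\log T=\varepsilon\beta/\log T$. The first version appears to be inherited from the literal reading of the displayed relation ``$\varepsilon=O(\varepsilon'\beta/\log T)$'' in Theorem~\ref{thm:continual_release_smooth_histogram}, whose proof (composing over $|I|=O(\log T/\beta)$ active instances, each $\varepsilon'$-DP) actually gives final privacy $\varepsilon'\cdot|I|$; to achieve a target $\varepsilon$ one must therefore set $\varepsilon'=\Theta(\varepsilon\beta/\log T)$, which is what your later paragraph uses and what drives the $(\log T\cdot(p/\eta)^p)^{O(\max(1,p))}$ inflation of the additive error. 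So the conclusion is right; only the one sentence solving for $\varepsilon'$ has the direction inverted.
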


\bibliographystyle{plainnat}
\bibliography{bib}

\appendix
\section{Missing Details of Section~\ref{sec:count}}
\subsection{Proof of Lemma~\ref{lem:dp_output_stream}}\label{sec:proof_of_grouping_dp}
\begin{proof}
We first show that the output groups $G_1,G_2,\cdots,G_m$ is $\varepsilon_0$-DP.
Let $O_1,O_2,\cdots,O_m$ be any fixed grouping.
Let $c'_1,c'_2,\cdots,c'_T$ be any neighboring stream, i.e., $\exists q\in [T]$ such that $|c_q-c'_q|\leq 1$ and $\forall j\not=q,c_j=c'_j$.
Let $G'_1,G'_2,\cdots,G'_{m'}$ be the output groups of the neighboring stream.
Suppose $q\in O_r$ for some $r\in[m]$.
Let us consider $\Pr\left[(G_1,G_2,\cdots,G_m)=(O_1,O_2,\cdots,O_m)\right]$ and $\Pr\left[(G'_1,G'_2,\cdots,G'_{m'})=(O_1,O_2,\cdots,O_m)\right]$ (in this case $m'=m$).
We have:
\begin{align*}
&\frac{\Pr\left[(G_1,G_2,\cdots,G_m)=(O_1,O_2,\cdots,O_m)\right]}{\Pr\left[(G'_1,G'_2,\cdots,G'_m)=(O_1,O_2,\cdots,O_m)\right]}\\
=&\frac{\Pr\left[(G_1,G_2,\cdots,G_{r-1})=(O_1,O_2,\cdots,O_{r-1})\right]}{\Pr\left[(G'_1,G'_2,\cdots,G'_{r-1})=(O_1,O_2,\cdots,O_{r-1})\right]} \cdot \frac{\Pr\left[G_r=O_r\mid(G_1,G_2,\cdots,G_{r-1})=(O_1,O_2,\cdots,O_{r-1})\right]}{\Pr\left[G'_r=O_r\mid(G'_1,G'_2,\cdots,G'_{r-1})=(O_1,O_2,\cdots,O_{r-1})\right]}\\
&\cdot \frac{\Pr\left[(G_{r+1},G_{r+2},\cdots,G_m)=(O_{r+1},O_{r+2},\cdots,O_{m})\mid(G_1,G_2,\cdots,G_{r})=(O_1,O_2,\cdots,O_{r})\right]}{\Pr\left[(G'_{r+1},G'_{r+2},\cdots,G'_m)=(O_{r+1},O_{r+2},\cdots,O_m)\mid(G'_1,G'_2,\cdots,G'_{r})=(O_1,O_2,\cdots,O_{r})\right]}
\end{align*}
Since $\forall j\in O_1\cup O_2\cup \cdots \cup O_{r-1}, c_j = c'_j$, the behavior of running Algorithm~\ref{alg:DPgroup} on the prefix $O_1\cup O_2\cup \cdots \cup O_{r-1}$ of $c_1,c_2,\cdots,c_T$ is the same as the behavior of running it on the prefix $O_1\cup O_2\cup \cdots \cup O_{r-1}$ of $c'_1,c'_2,\cdots,c'_T$.
Therefore, we have $\frac{\Pr\left[(G_1,G_2,\cdots,G_{r-1})=(O_1,O_2,\cdots,O_{r-1})\right]}{\Pr\left[(G'_1,G'_2,\cdots,G'_{r-1})=(O_1,O_2,\cdots,O_{r-1})\right]}=1$.
Similarly, since $\forall j \in O_{r+1}\cup O_{r+2}\cup\cdots\cup O_m,c_j=c'_j$, when $G_r=G'_r=O_r$, the behavior of running Algorithm~\ref{alg:DPgroup} on the suffix $O_{r+1}\cup O_{r+2}\cup\cdots\cup O_m$ of $c_1,c_2,\cdots,c_T$ is the same as the behabior of running it on the same suffix of $c'_1,c'_2,\cdots,c'_T$.
Therefore, we have $\frac{\Pr\left[(G_{r+1},G_{r+2},\cdots,G_m)=(O_{r+1},O_{r+2},\cdots,O_{m})\mid(G_1,G_2,\cdots,G_{r})=(O_1,O_2,\cdots,O_{r})\right]}{\Pr\left[(G'_{r+1},G'_{r+2},\cdots,G'_m)=(O_{r+1},O_{r+2},\cdots,O_m)\mid(G'_1,G'_2,\cdots,G'_{r})=(O_1,O_2,\cdots,O_{r})\right]}$.
Thus, we have:
\begin{align*}
&\frac{\Pr\left[(G_1,G_2,\cdots,G_m)=(O_1,O_2,\cdots,O_m)\right]}{\Pr\left[(G'_1,G'_2,\cdots,G'_m)=(O_1,O_2,\cdots,O_m)\right]}\\
=&\frac{\Pr\left[G_r=O_r\mid(G_1,G_2,\cdots,G_{r-1})=(O_1,O_2,\cdots,O_{r-1})\right]}{\Pr\left[G'_r=O_r\mid(G'_1,G'_2,\cdots,G'_{r-1})=(O_1,O_2,\cdots,O_{r-1})\right]}.
\end{align*}
Suppose $O_r=\{x+1,\cdots,x+k\}$.

Consider the first case where $r\not=m$.
In this case, we have 
\begin{align*}
&\Pr\left[G_r=O_r\mid (G_1,G_2,\cdots,G_{r-1})=(O_1,O_2,\cdots,O_{r-1})\right]\\
=&\Pr\left[G_r=O_r\mid G_{r-1}=O_{r-1}\right]\\
=&\Pr\left[\left(\forall j\in [k-1], \nu_{x+j}+\sum_{b=1}^j c_{x+b} < \tau_r \right)\bigwedge \left(\nu_{x+k}+\sum_{b=1}^k c_{x+b}\geq \tau_r\right) \right].
\end{align*}
Now, let us fix $\nu_{x+1},\nu_{x+2},\cdots,\nu_{x+k-1}$ and let $g=\max_{j\in[k-1]}\nu_{x+j}+\sum_{b=1}^j c_{x+b}$.
Then,
\begin{align}
&\Pr_{\nu_{x+k},\tau_r}\left[G_r = O_r\mid G_{r-1}=O_{r-1}\right]\notag\\
=&\Pr_{\nu_{x+k},\tau_r}\left[\tau_r\in(g,\nu_{x+k}+\sum_{b=1}^k c_{x+b}]\right]\notag\\
=&\int_{-\infty}^{\infty}\int_{-\infty}^{\infty} p_{\nu_{x+k}}(v)\cdot p_{\tau_r}(\tau)\cdot \mathbf{1}\left(\tau \in (g, v + \sum_{b=1}^k c_{x+b})\right) \mathrm{d}v\mathrm{d}\tau \label{eq:sparse_vec_tech}
\end{align}
where $p_{\nu_{x+k}}(\cdot)$ and $p_{\tau_r}(\cdot)$ are density functions of $\nu_{x+k}$ and $\tau_r$ respectively.
Let $g' = \max_{j\in[k-1]}\nu_{x+j}+\sum_{b=1}^j c'_{x+b}$.
Let $v' = v + g - g'+\sum_{b=1}^k c'_{x+b} - \sum_{b=1}^k c_{x+b}$.
Let $\tau' = \tau + g - g'$.
Since $|c_q-c'_q|\leq 1$, it is easy to see that $|v'-v|\leq 2$ and $|\tau-\tau'|\leq 1$.
Note that $\mathrm{d}v'=\mathrm{d}v$ and $\mathrm{d}\tau'=\mathrm{d}\tau$.
Therefore, Equation~\eqref{eq:sparse_vec_tech} is equal to the following:
\begin{align*}
&\int_{-\infty}^{\infty}\int_{-\infty}^{\infty}p_{\nu_{x+k}}(v')\cdot p_{\tau_r}(\tau')\cdot \mathbf{1}\left(\tau+g-g'\in\left(g,v+g-g'+\sum_{b=1}^k c'_{x+b}\right)\right)\mathrm{d}v\mathrm{d}\tau\\
=&\int_{-\infty}^{\infty}\int_{-\infty}^{\infty}p_{\nu_{x+k}}(v')\cdot p_{\tau_r}(\tau')\cdot \mathbf{1}\left(\tau\in\left(g',v+\sum_{b=1}^k c'_{x+b}\right)\right)\mathrm{d}v\mathrm{d}\tau\\
\leq & \int_{-\infty}^{\infty}\int_{-\infty}^{\infty}\exp(\varepsilon_0/2)\cdot p_{\nu_{x+k}}(v)\cdot\exp(\varepsilon_0/2)\cdot p_{\tau_r}(\tau)\cdot \mathbf{1}\left(\tau\in\left(g',v+\sum_{b=1}^k c'_{x+b}\right)\right)\mathrm{d}v\mathrm{d}\tau\\
=&\exp(\varepsilon_0)\cdot\Pr_{\nu_{x+k},\tau_r}\left[\tau_r\in(g',\nu_{x+k}+\sum_{b=1}^k c'_{x+b}]\right]\\
=&\exp(\varepsilon_0)\cdot \Pr_{\nu_{x+k},\tau_r}\left[G'_r=O_r\mid G'_{r-1}=O_{r-1}\right]\\
=&\exp(\varepsilon_0)\cdot \Pr_{\nu_{x+k},\tau_r}\left[G'_r=O_r\mid (G'_1,G'_2,\cdots,G'_{r-1})=(O_1,O_2,\cdots,O_{r-1})\right].
\end{align*}

Next, consider the second case where $r=m$. 
In this case, we have
\begin{align*}
&\Pr\left[G_r=O_r\mid (G_1,G_2,\cdots,G_{r-1})=(O_1,O_2,\cdots,O_{r-1})\right]\\
=&\Pr\left[G_r=O_r\mid G_{r-1}=O_{r-1}\right]\\
=&\Pr\left[\forall j\in [k],\nu_{x+j}+\sum_{b=1}^j c_{x+b}<\tau_r\right].
\end{align*}
Now, let us fix $\nu_{x+1},\nu_{x+2},\cdots \nu_{x+k}$ and let $g=\max_{j\in[k]} \nu_{x+j} + \sum_{b=1}^j c_{x+b}$.
Let $g'=\max_{j\in[k]}\nu_{x+j} +\sum_{b=1}^j c'_{x+b}$.
Since $|c_q-c'_q|\leq 1$, we have $|g-g'|\leq 1$.
Then,
\begin{align*}
\Pr_{\tau_r}\left[\tau_r>g\right]\leq \exp(\varepsilon_0)\cdot \Pr_{\tau_r}\left[\tau_r>g'\right].
\end{align*}
Thus, we have 
\begin{align*}
&\Pr\left[G_r=O_r\mid (G_1,G_2,\cdots,G_{r-1})=(O_1,O_2,\cdots,O_{r-1})\right]\\
\leq& \exp(\varepsilon_0)\cdot \Pr\left[G'_r=O_r\mid (G'_1,G'_2,\cdots,G'_{r-1})=(O_1,O_2,\cdots,O_{r-1})\right].
\end{align*}
Therefore, we can conclude that $(G_1,G_2,\cdots,G_m)$ is always $\varepsilon_0$ DP.

Notice that, given any fixed $(O_1,O_2,\cdots,O_m)$ and condition on $(G_1,G_2,\cdots,G_m)=(O_1,O_2,\cdots,O_m)$, $\hat{c}_1,\hat{c}_2,\cdots,\hat{c}_T$ is $\varepsilon_0$-DP by Laplace mechanism.
Thus, by composition theorem, the output stream $\hat{c}_1,\hat{c}_2,\cdots,\hat{c}_T$ is $\varepsilon$-DP.
\end{proof}

\subsection{Proof of Lemma~\ref{lem:acc_output_stream}}\label{sec:proof_of_acc_grouping}

Let $G_1,G_2,\cdots,G_m$ be the groups produced during Algorithm~\ref{alg:DPgroup}.
Let $\tilde{c}_1,\tilde{c}_2,\cdots,\tilde{c}_m$ be that $\forall i\in [m],\tilde{c}_i=\hat{c}_{\max(G_i)}$, i.e., $\tilde{c}_i$ is the noisy count of the group $G_i$.

\begin{lemma}\label{lem:group_auc}
With probability at least $1-\xi$, the output stream of Algorithm~\ref{alg:DPgroup} satisfies the following properties:
\begin{enumerate}
    \item $\forall i\in [m-1],\sum_{j\in G_i\setminus \{\max_{j'\in G_i}j'\}} c_j \leq \frac{7}{\eta\varepsilon_0}\cdot \ln\left(3\cdot T/\xi\right)+\frac{13}{\varepsilon_0}\cdot \ln(3\cdot T/\xi)$.\label{it:prop_ub}
    \item $\sum_{j\in G_m} c_j \leq \frac{7}{\eta\varepsilon_0}\cdot \ln\left(3\cdot T/\xi\right)+\frac{13}{\varepsilon_0}\cdot \ln(3\cdot T/\xi)$. \label{it:prop_last_group}
    \item $\forall i\in[m-1],(1-\eta)\sum_{j\in G_i}c_j\leq \tilde{c}_i\leq (1+\eta)\sum_{j\in G_i}c_j$. \label{it:prop_relative_approx}
\end{enumerate}
\end{lemma}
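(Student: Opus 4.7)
The plan is to first obtain uniform tail bounds on all Laplace variables drawn by Algorithm~\ref{alg:DPgroup}, and then read off the three stated properties from the trigger condition of the sparse-vector-like grouping rule.

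\textbf{Step 1 (concentration).} The algorithm draws at most $3T$ independent Laplace variables over the entire run: the perturbations $\nu_t \sim \lap(4/\varepsilon_0)$ for $t \in [T]$, the threshold noises that form each $\tau_i$ with scale $2/\varepsilon_0$ for $i \in [m]$, and the output noises $\lap(1/\varepsilon_0)$ used to form each $\tilde{c}_i = \hat{c}_{\max G_i}$. Using $\Pr[|\lap(b)| > b\ln(3T/\xi)] = \xi/(3T)$ and a union bound, with probability at least $1-\xi$ the following event $\mathcal{E}$ holds: every $|\nu_t| \leq (4/\varepsilon_0)\ln(3T/\xi)$, every threshold noise has magnitude at most $(2/\varepsilon_0)\ln(3T/\xi)$ (so each $\tau_i \in (1/\eta+1)\cdot (7/\varepsilon_0)\ln(3T/\xi) \pm (2/\varepsilon_0)\ln(3T/\xi)$), and every output noise has magnitude at most $(1/\varepsilon_0)\ln(3T/\xi)$. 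The rest of the proof conditions on $\mathcal{E}$.

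\textbf{Step 2 (property~\ref{it:prop_ub}).} Fix $i \in [m-1]$ and let $t$ be the largest index in $G_i$ that is \emph{not} the last one. Because the group did not close at time $t$, the algorithm's test failed: $\nu_t + \sum_{j \in G_i,\, j \leq t} c_j < \tau_i$. Rearranging and using $\mathcal{E}$,
\[
\sum_{j \in G_i \setminus \{\max G_i\}} c_j \;\leq\; \tau_i - \nu_t \;\leq\; \Bigl(\tfrac{1}{\eta}+1\Bigr)\tfrac{7}{\varepsilon_0}\ln\!\tfrac{3T}{\xi} + \tfrac{2}{\varepsilon_0}\ln\!\tfrac{3T}{\xi} + \tfrac{4}{\varepsilon_0}\ln\!\tfrac{3T}{\xi},
\]
which simplifies to $\tfrac{7}{\eta\varepsilon_0}\ln(3T/\xi) + \tfrac{13}{\varepsilon_0}\ln(3T/\xi)$ as required.

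\textbf{Step 3 (property~\ref{it:prop_last_group}).} The group $G_m$ is never closed, so for the last index $t = \max G_m$ the trigger also fails, giving $\nu_t + \sum_{j \in G_m} c_j < \tau_m$. The same arithmetic as in Step~2 yields the same upper bound, establishing property~\ref{it:prop_last_group}.

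\textbf{Step 4 (property~\ref{it:prop_relative_approx}).} For $i \in [m-1]$ the group did close at $t = \max G_i$, so $\nu_t + \sum_{j \in G_i} c_j \geq \tau_i$. Under $\mathcal{E}$,
\[
\sum_{j \in G_i} c_j \;\geq\; \tau_i - \nu_t \;\geq\; \Bigl(\tfrac{1}{\eta}+1\Bigr)\tfrac{7}{\varepsilon_0}\ln\!\tfrac{3T}{\xi} - \tfrac{6}{\varepsilon_0}\ln\!\tfrac{3T}{\xi} \;\geq\; \tfrac{7}{\eta\varepsilon_0}\ln\!\tfrac{3T}{\xi}.
\]
Since $\tilde{c}_i - \sum_{j \in G_i} c_j$ is a single $\lap(1/\varepsilon_0)$ draw, $\mathcal{E}$ gives $|\tilde{c}_i - \sum_{j\in G_i} c_j| \leq (1/\varepsilon_0)\ln(3T/\xi)$. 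Dividing by the lower bound above shows this additive slack is at most $\eta \cdot \sum_{j\in G_i} c_j$, which is exactly the claimed $(1\pm\eta)$ multiplicative guarantee.

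The only potentially delicate piece is making sure the union bound in Step~1 covers every Laplace variable the algorithm ever samples (including those never used in the output because the run ends early); allocating $\xi/(3T)$ failure probability per variable and unioning over the trivially-bounded total count $3T$ handles this cleanly, so no step presents a real obstacle.
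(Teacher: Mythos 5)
Your proposal is correct and follows essentially the same route as the paper's proof: condition on a global high-probability event bounding all Laplace draws, then read property~\ref{it:prop_ub} and property~\ref{it:prop_last_group} off the failed trigger at the penultimate (or last) index of a group, and property~\ref{it:prop_relative_approx} off the successful trigger at the group's closing index combined with the small output-noise bound. The arithmetic matches the paper exactly (e.g.\ $\tau_i \le \frac{7}{\eta\varepsilon_0}\ln(3T/\xi)+\frac{9}{\varepsilon_0}\ln(3T/\xi)$ and $|\nu_t|\le \frac{4}{\varepsilon_0}\ln(3T/\xi)$ giving the $\frac{13}{\varepsilon_0}$ term, and $\sum_{j\in G_i}c_j\ge\frac{7}{\eta\varepsilon_0}\ln(3T/\xi)$ turning the $\frac{1}{\varepsilon_0}\ln(3T/\xi)$ output noise into a relative error at most $\eta/7\le\eta$).
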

\begin{proof}

Let $\mathcal{E}$ denote the event that 
\begin{enumerate}
    \item $\forall i\in [m], \left|\tau_i - \left(\frac{1}{\eta}+1\right)\cdot \frac{7}{\varepsilon_0}\cdot\ln\left(3\cdot T/\xi\right)\right|\leq \frac{2}{\varepsilon_0}\cdot \ln\left(3\cdot T/\xi\right)$.
    \item $\forall t \in [T],|\nu_t|\leq \frac{4}{\varepsilon_0}\cdot \ln\left(3\cdot T/\xi\right)$.
    \item $\forall i \in [m-1], |\tilde{c}_i - \sum_{j\in G_i} c_j|\leq \frac{1}{\varepsilon_0}\cdot \ln(3\cdot T/\xi)$
\end{enumerate}
According to the CDF of Laplace noise, it is easy to show that $\mathcal{E}$ happens with probability at least $1-\xi$ by a union bound over all $i\in [m],t\in[T]$.
In the remaining of the proof we condition on the event $\mathcal{E}$.



Consider property~\ref{it:prop_ub}.
Consider any $i\in[m-1]$.
Let $t=\max_{j\in G_i} j - 1$.
Then we have $\nu_t + \sum_{j\in G_i,j\leq t} c_j\leq \tau_i$.
Since $|\nu_t|\leq \frac{4}{\varepsilon_0}\cdot \ln\left(3\cdot T/\xi\right)$ and $\tau_i\leq \frac{7}{\eta\cdot \varepsilon_0}\cdot \ln\left(3\cdot T/\xi\right)+\frac{9}{\varepsilon_0}\cdot \ln(3\cdot T/\xi)$, we have $\sum_{j\in G_i,j\leq t} c_j\leq \frac{7}{\eta\varepsilon_0}\cdot \ln(3\cdot T/\xi)+\frac{13}{\varepsilon_0}\cdot \ln(3\cdot T/\xi)$.

The proof of property~\ref{it:prop_last_group} is the same as the proof of property~\ref{it:prop_ub}.

Consider property~\ref{it:prop_relative_approx}.
Consider any $i\in[m-1]$.
Let $t=\max_{j\in G_i} j$.
Then we have $\nu_t + \sum_{j\in G_i} c_j\geq \tau_i$.
Since $|\nu_t|\leq \frac{4}{\varepsilon_0}\cdot \ln\left(3\cdot T/\xi\right)$ and $\tau_i\geq \frac{7}{\eta\cdot \varepsilon_0}\cdot \ln\left(3\cdot T/\xi\right)+\frac{5}{\varepsilon_0}\cdot \ln(3\cdot T/\xi)$, we have $\sum_{j\in G_i} c_j\geq \frac{7}{\eta\varepsilon_0}\cdot \ln(3\cdot T/\xi)$.
Note that $|\tilde{c}_i-\sum_{j\in G_i} c_j| \leq \frac{1}{\varepsilon_0}\cdot \ln\left(3\cdot T/\xi\right)$.
Thus, we know that $\sum_{j\in G_i}c_j \geq \frac{7}{\eta\varepsilon_0}\cdot \ln\left(3\cdot T/\xi\right)$.
Thus, $\forall i\in [m-1]$ we have $(1-\eta)\sum_{j\in G_i}c_j\leq \tilde{c}_i\leq (1+\eta)\sum_{j\in G_i}c_j$.
\end{proof}

Now, we are able to prove Lemma~\ref{lem:acc_output_stream}.
\begin{proof}[Proof of Lemma~\ref{lem:acc_output_stream}]
With probability at least $1-\xi$, the properies listed in Lemma~\ref{lem:group_auc} hold.

Let $l'\in[m]$ be the smallest index such that $\max_{j\in G_{l'}} j \geq l$ and $r'\in [m]$ be the largest index such that $\max_{j \in G_{r'}} j \leq r$.

We have:
\begin{align*}
\sum_{j=l}^r \hat{c}_j
= &\sum_{j'=l'}^{r'} \tilde{c}_{j'}\\
\geq & (1-\eta)\sum_{j'=l'}^{r'} \sum_{j\in G_{j'}}c_j  - \left(\frac{7}{\eta\varepsilon_0}\cdot \ln\left(3\cdot T/\xi\right)+\frac{13}{\varepsilon_0}\cdot \ln(3\cdot T/\xi)\right)\\
\geq & (1-\eta)\sum_{j = l}^{\max_{b\in G_{r'}} b} c_j -\left(\frac{7}{\eta\varepsilon_0}\cdot \ln\left(3\cdot T/\xi\right)+\frac{13}{\varepsilon_0}\cdot \ln(3\cdot T/\xi)\right)\\
\geq & (1-\eta)\sum_{j = l}^{r} c_j  - \frac{7}{\eta\varepsilon_0}\cdot \ln\left(3\cdot T/\xi\right) - \frac{26}{\varepsilon_0}\cdot \ln\left(3\cdot T/\xi\right),
\end{align*}
where the first inequality follows from property~\ref{it:prop_relative_approx} and property~\ref{it:prop_last_group} of Lemma~\ref{lem:group_auc}, the second inequality follows from the choice of $l'$, and the third inequality follows from property~\ref{it:prop_ub} of Lemma~\ref{lem:group_auc}.

Similarly, we can show
\begin{align*}
\sum_{j=l}^r \hat{c}_j 
=&\sum_{j'=l'}^{r'} \tilde{c}_{j'}\\
\leq & (1+\eta)\sum_{j'=l'}^{r'} \sum_{j\in G_{j'}}c_j  + \left(\frac{7}{\eta\varepsilon_0}\cdot \ln\left(3\cdot T/\xi\right)+\frac{13}{\varepsilon_0}\cdot \ln(3\cdot T/\xi)\right)\\
\leq &  (1+\eta)\sum_{j = \min_{b\in G_{l'}} b}^{r} c_j + \left(\frac{7}{\eta\varepsilon_0}\cdot \ln\left(3\cdot T/\xi\right)+\frac{13}{\varepsilon_0}\cdot \ln(3\cdot T/\xi)\right)\\
\leq & (1+\eta)\sum_{j = l}^{r} c_j + \frac{7}{\eta\varepsilon_0}\cdot \ln\left(3\cdot T/\xi\right) + \frac{26}{\varepsilon_0}\cdot \ln\left(3\cdot T/\xi\right),
\end{align*}
where the first inequality follows from property~\ref{it:prop_relative_approx} and property~\ref{it:prop_last_group} of Lemma~\ref{lem:group_auc}, the second inequality follows from the choice of $r'$, and the third inequality follows from property~\ref{it:prop_ub} of Lemma~\ref{lem:group_auc}.
\end{proof}

\end{document}